\theoremstyle{acmplain}
\newtheorem{claim}[theorem]{Claim}
\newenvironment{claimproof}[1]{\par\noindent \textit{Proof of claim:}\space#1}{\hfill $\blacksquare$}
\theoremstyle{acmdefinition}
\newtheorem{remark}[theorem]{Remark}
\newtheorem*{remark*}{Remark}
\newcommand*{\eg}{e.g.\@\xspace}
\newcommand*{\ie}{i.e.\@\xspace}
\newcommand{\bu}{\mathbf{u}}
\newcommand{\bv}{\mathbf{v}}
\newcommand{\bw}{\mathbf{w}}
\newcommand{\bx}{\mathbf{x}}
\newcommand{\by}{\mathbf{y}}
\newcommand{\ba}{\mathbf{a}}
\newcommand{\bb}{\mathbf{b}}
\newcommand{\bc}{\mathbf{c}}
\newcommand{\bs}{\mathbf{s}}
\newcommand{\Aa}{\mathcal{A}}
\newcommand{\Bb}{\mathcal{B}}
\newcommand{\Cc}{\mathcal{C}}
\newcommand{\X}{\mathcal{X}}
\newcommand{\V}{\mathcal{V}}
\newcommand{\Mm}{\mathcal{M}}
\newcommand{\bbN}{\mathbb{N}}
\newcommand{\N}{\bbN}
\newcommand{\Npos}{\N \setminus \set{0}}
\newcommand{\bbZ}{\mathbb{Z}}
\newcommand{\Z}{\bbZ}
\newcommand{\bbQ}{\mathbb{Q}}
\newcommand{\Q}{\bbQ}
\newcommand{\Qpos}{\Q_{\ge 0}}
\newcommand{\Qposs}{\Q_{> 0}}
\newcommand{\bbS}{\mathbb{S}}
\renewcommand{\S}{\bbS}
\newcommand{\bbR}{\mathbb{R}}
\newcommand{\R}{\bbR}
\newcommand{\Rpos}{\R_{>0}}
\newcommand{\affine}{\text{affine}}
\newcommand{\LQ}{\mathbb{LogQ}}
\renewcommand{\leq}{\leqslant}
\renewcommand{\geq}{\geqslant}
\renewcommand{\le}{\leqslant}
\renewcommand{\ge}{\geqslant}
\newcommand{\semiringS}{\S(\oplus,\odot)}
\newcommand{\semiringQ}{\texorpdfstring{\ensuremath{\Qpos(+,\cdot)}}{Q\_+(+,·)}}
\newcommand{\semiringQmax}{\Qpos(\max,\cdot)}
\newcommand{\semiringZ}{\Z(\min,+)}
\newcommand{\semiringZR}{\LQ(\min,+)}
\newcommand{\zero}{\mathbb{0}}
\newcommand{\one}{\mathbb{1}}
\newcommand{\set}[1]{\left\{ #1 \right\}}
\newcommand{\logbr}[1]{\log\left(#1\right)}
\newcommand{\sem}[1]{\left\llbracket #1 \right\rrbracket}
\newcommand{\transpose}{\mathsf{T}}
\newcommand{\simpleCRA}{Independent-CRA\xspace}
\newcommand{\simpleCRAlong}{Independent cost register automaton\xspace}
\newcommand{\ssla}{simple linearly-ambiguous\xspace}
\newcommand{\Acc}{Acc}
\newcommand{\Runs}[1][0]{Runs_{>#1}}
\newcommand{\rununique}[1]{\overrightarrow{(#1)}}
\newcommand{\val}{\operatorname{val}}
\newcommand{\valinf}{\operatorname{val}_\leftrightarrow}
\newcommand{\valpref}{\operatorname{val}_\rightarrow}
\newcommand{\poly}{\operatorname{poly}}
\newcommand{\pos}{\operatorname{pos}}
\newcommand{\corr}{\operatorname{corr}}
\newcommand{\reach}{\operatorname{reach}}
\newcommand{\1}{\mathbb{1}}
\newcommand{\runs}{\Acc}
\newcommand{\orthants}{\mathcal{O}_d}
\newcommand{\orthmax}[1]{A_{#1}^+}
\newcommand{\orthmin}[1]{A_{#1}^-}
\newcommand{\orthantpath}[1]{\mathcal{O}_{#1}}
\newcommand{\steps}{\rightarrow^*}
\newcommand{\stepsc}{\steps_c}
\newcommand{\stepsca}{\steps_{c,A}}
\newcommand{\trans}[1]{\stackrel{#1}{\longrightarrow}}
\newcommand{\eps}{\varepsilon}
\newcommand{\exptime}{\textsc{ExpTime}\xspace}
\newcommand{\norm}[1]{\left\lVert#1\right\rVert}
\newcommand{\abs}[1]{\left\lvert#1\right\rvert}
\newcommand{\incsf}{\mathsf{inc}}
\newcommand{\decsf}{\mathsf{dec}}
\newcommand{\zerosf}{\mathsf{zero}}
\newcommand{\sfstart}{\mathsf{start}}
\newcommand{\sfend}{\mathsf{end}}
\title{The boundedness and zero isolation problems for weighted automata over nonnegative rationals}
\author{Wojciech Czerwi\'{n}ski}
\affiliation{University of Warsaw
\country{Poland}
}
\email{wczerwin@mimuw.edu.pl}
\author{Engel Lefaucheux}
\affiliation{Université de Lorraine, Inria, LORIA, Nancy
\country{France}
}
\email{engel.lefaucheux@inria.fr}
\author{Filip Mazowiecki}
\affiliation{Max Planck Institute for Software Systems, Saarland Informatics Campus
\city{Saarbrücken}
\country{Germany}
}
\email{filipm@mpi-sws.org}
\author{David Purser}
\affiliation{Max Planck Institute for Software Systems, Saarland Informatics Campus
\city{Saarbrücken}
\country{Germany}
}
\email{dpurser@mpi-sws.org}
\author{Markus A. Whiteland}
\affiliation{Max Planck Institute for Software Systems, Saarland Informatics Campus
\city{Saarbrücken}
\country{Germany}
}
\affiliation{University of Liège
\country{Belgium}
}
\email{mwhiteland@uliege.be}
\keywords{Weighted automata, vector addition systems, boundedness problem, isolation problem}
\begin{document}

\begin{abstract}\sloppy
We consider linear cost-register automata (equivalent to weighted automata) over the semiring of nonnegative rationals, which generalise probabilistic automata. The two problems of boundedness and zero isolation ask whether there is a sequence of words that converge to infinity and to zero, respectively. 
In the general model both problems are undecidable so we focus on the copyless linear restriction. There, we show that the boundedness problem is decidable. 

As for the zero isolation problem we need to further restrict the class. We obtain a model, where zero isolation becomes equivalent to universal coverability of orthant vector addition systems (OVAS), a new model in the VAS family interesting on its own. In standard VAS runs are considered only in the positive orthant, while in OVAS every orthant has its own set of vectors that can be applied in that orthant. Assuming Schanuel's conjecture is true, we prove decidability of universal coverability for three-dimensional OVAS, which implies decidability of zero isolation in a model with at most three independent registers.
\end{abstract}

\maketitle

\section{Introduction}
\label{sec:introduction}

Weighted automata are a natural model of computation that generalise finite automata~\cite{droste2009handbook} and linear recursive sequences~\cite{BarloyFLM20}. They have various equivalent presentations: \eg finite automata, rational series, matrix representation~\cite{Schutzenberger61b,BerstelR88}; or recently linear cost-register automata (linear CRA)~\cite{AlurDDRY13}. A typical example is a probabilistic automaton $\Aa$ that assigns to each word $w$ its probability of acceptance, denoted $\Aa(w)$~\cite{paz71,GimbertO10,DBLP:conf/concur/FijalkowR017,DaviaudJLMP021}. More generally, weighted automata are defined with respect to a semiring: a domain with two binary operations. In the example of probabilistic automata the domain is the nonnegative rationals (thus $\Aa(w) \in \Qpos)$ with the usual operations: $+$ and $\cdot$.

Depending on the context, different semirings for weighted automata have been studied. For instance when considering learning, the semirings are usually fields, like the rationals or reals~\cite{fliess1974matrices,BeimelBBKV00}. Most results on learning weighted automata depend on Sch{\"{u}}tzenberger's polynomial time algorithm deciding the equivalence problem of weighted automata over fields~\cite{Schutzenberger61b}. 
On the other hand, when considering regular expressions, weighted automata are usually studied over the tropical semiring, \ie $\N \cup \set{+\infty}$ with the operations: $\min$ and $+$. The star height problem for regular languages can for instance 
be reduced to the boundedness problem
of such automata. Hashiguchi showed that this problem is decidable~\cite{Hashiguchi88}. Due to Hashiguchi's proof being difficult, many alternative proofs of this result appeared; among them: via Simon's factorisation trees~\cite{Simon94}; and via games~\cite{Bojanczyk15}.

This paper is primarily interested in weighted automata over the semiring of nonnegative rationals with $+$~and~$\cdot$, denoted $\semiringQ$. This is the minimal weighted automata model that captures probabilistic automata, but does not impose any restrictions on the model. Probabilistic automata assign only probabilities to words, \ie values in the interval $[0,1]$. This requires some restrictions, \eg the transitions are defined by probabilistic distributions. Similar generalisations of probabilistic automata were studied \eg in~\cite{turakainen1969generalized,DBLP:conf/concur/ChistikovKMP20}.

One of the most natural questions for such automata are the threshold problems: \ie given an automaton $\Aa$ and a constant $c$, decide whether $\Aa(w) \le c$ or whether $\Aa(w) \ge c$ for all words $w$.
We study existential variants of these problems, where only $\Aa$ is given in the input:
the \emph{zero isolation} asks whether there exists $c > 0$ such that for all words $w$ it holds $\Aa(w) \ge c$ and \emph{boundedness} asks whether there exists $c < +\infty$ such that for all words $w$ it holds $\Aa(w) \le c$.
More intuitively, the complements of the two problems ask whether there exist a sequence of words $w_1, w_2,\ldots$ such that $\lim_{i \to +\infty}\Aa(w_i)$ equals $0$ and $+\infty$, respectively.

Notice that most of the mentioned problems are well-defined already for probabilistic automata. Moreover, since probabilistic automata are known to be closed under complement (it is easy to define $\Bb(w) = 1 - \Aa(w)$) the two threshold problems are equivalent and undecidable~\cite{paz71}.
In probabilistic automata the zero isolation problem, due to complementation, is equivalent to the value-1 problem: this is also undecidable~\cite{GimbertO10}, but decidable for the special class of \emph{leaktight} probabilistic automata~\cite{DBLP:journals/corr/FijalkowGKO15}.
The boundedness problem is not interesting for probabilistic automata (since the output is always bounded by $1$), but a folklore argument shows that it is undecidable for $\semiringQ$ (see \cref{sec:results}).

Since the above problems are undecidable in general, we are interested in these problems on subclasses of weighted automata. A common restriction is bounding the ambiguity, \ie the number of accepting runs. The two most interesting classes are finitely-ambiguous and polynomially-ambiguous automata; when the number of accepting runs is bounded by: a constant (universal for all words), and by a polynomial (in the size of the input word), respectively. Both classes have nice characterisations, by excluding some simple patterns in the automata~\cite{WeberS91}. In particular, it is easy to check if an automaton is finitely-ambiguous or polynomially-ambiguous.

Both threshold problems are undecidable for polynomially-ambiguous probabilistic automata~\cite{DBLP:conf/concur/FijalkowR017,DaviaudJLMP021}. 
In the finitely-ambiguous case they are decidable~\cite{DBLP:conf/concur/FijalkowR017}, and one can infer that they remain decidable in the general setting of finitely-ambiguous weighted automata over $\semiringQ$~\cite{DaviaudJLMP021}. Unlike for probabilistic automata, the two threshold problems are different (the closure under complement is not true in general over $\semiringQ$), and while one of the inequalities is trivial to decide, the other one is known to be decidable~\cite{DaviaudJLMP021} only assuming Schanuel's conjecture~\cite{schanuelsconj}. 
Similarly, for boundedness and zero isolation, even though one could suspect they are equivalent problems, we also see a difference.  One can show that for finitely-ambiguous weighted automata over $\semiringQ$ the boundedness problem is trivially decidable; and exploiting~\cite{Chistikov21} we show that zero isolation is decidable subject to Schanuel's conjecture (see \cref{sec:results}). The argument in the latter case is more involved. The aforementioned decidability results for zero isolation on leaktight probabilistic automata do not hold over $\semiringQ$.

The decidability border between the finitely-ambiguous and polynomially-ambiguous classes is not surprising. It is often the case that undecidable problems for weighted automata are decidable for the finitely-ambiguous class~\cite{FiliotMR19}; and remain undecidable even for very restricted variants of polynomially-ambiguous automata, \eg \emph{copyless} linear CRA~\cite{AlmagorCMP20}. However, it is not always the case, for example the $\epsilon$-gap threshold problem is decidable for polynomially-ambiguous probabilistic automata~\cite{DaviaudJLMP021}, and undecidable in general~\cite{CondonL89}.
For zero isolation and boundedness the undecidability reductions do not work for polynomially-ambiguous automata, which is the starting point of our paper.

\paragraph*{Our contributions and techniques}
We study boundedness and zero isolation for \emph{copyless linear CRA}, introduced in~\cite{AlurDDRY13}, and known to be strictly contained in polynomially-ambiguous weighted automata~\cite{AlmagorCMP20}. We show that boundedness is decidable for copyless linear CRA. Our proof shows that unboundedness can be detected with simple patterns in the style of patterns for finitely-ambiguous and polynomially-ambiguous automata in~\cite{WeberS91}. Intuitively, an automaton is unbounded if and only if either there is a loop of value larger than $1$ or there is a pattern that generates unboundedly many runs of the same value. Like in~\cite{WeberS91} the patterns are easy to detect even in polynomial time, the difficulty is to prove correctness of the characterisation. Similarly, as in one of the mentioned proofs of Hashiguchi's theorem~\cite{Simon94}, we find a way to abstract the set of generated matrices into a finite monoid, that allows us to exploit Simon's factorisation trees. Otherwise, the proof is rather different from~\cite{Simon94}, as we need to exploit the particular shapes of the matrices (imposed by the copyless restriction), while the proof in~\cite{Simon94} works for the general class of matrices. We conjecture that our pattern characterisation works for the whole class of polynomially-ambiguous automata.

For the zero isolation problem we have to further restrict the class of copyless linear CRA to a class in which the registers do not interact, that we call \simpleCRA. A similar model of CRA with independent registers was already defined in \cite{DaviaudJRV17}. We start with a chain of reductions to equivalent problems. Firstly, we show that zero-isolation over $\semiringQ$ is essentially equivalent to the boundedness problem over the semiring $\semiringZ$, \ie the same problem as in Hashiguchi's theorem with the exception that the domain includes negative numbers. This problem is known to be undecidable for the full class of weighted automata~\cite{AlmagorCMP20}, but for polynomially-ambiguous, or even copyless linear CRA, decidability was left as an open problem in the same paper. Secondly, we further reduce this problem to a variant of the coverability problem for a new class of \emph{orthant vector addition systems} (OVAS). 

The OVAS class lies between the standard VAS~\cite{CzerwinskiLLLM21} and its integer relaxation~\cite{HaaseH14}. Intuitively, in the standard VAS runs are considered only in the positive orthant, while in the integer relaxation runs go through the whole space. In OVAS every orthant has its own set of vectors that can be applied in that orthant. The \emph{universal coverability} problem asks whether from any starting point the positive orthant can be reached.
We prove that
universal coverability is decidable in dimension $3$.
The proof is nontrivial and relies on a notion of a \emph{separator} between the reachability set and the positive orthant that can be expressed in the first order logic over the reals. Depending on the encoding of the numbers, we can either rely on Tarski's theorem~\cite{DBLP:journals/jsc/Grigorev88}, or the formula might require the exponential function. In the latter case decidability depends on Schanuel's conjecture~\cite{schanuelsconj}.
Since most of the proof works in any dimension, we believe that this is an important step to prove the theorem for arbitrary dimensions. 
Interestingly, the proof relies on results about reachability for continuous VAS~\cite{BlondinFHH17}. From universal coverability we infer decidability of zero isolation for copyless linear CRA with $3$ independent registers.
More importantly, we establish a nontrivial connection between: zero isolation over $\semiringQ$; boundedness over $\semiringZ$; and our new model OVAS. We are convinced that the latter model is of independent interest. Interestingly, we show that the usual coverability problem (with a fixed initial point) in undecidable.

We leave as an open problem decidability of zero isolation for polynomially ambiguous weighted automata over $\semiringQ$.
Nevertheless we show that the problem is undecidable for copyless CRA (nonlinear). The latter class is known to be: strictly between the finitely-ambiguous and the full class of weighted automata~\cite{MazowieckiR15}; and incomparable with the polynomially-ambiguous class~\cite{MazowieckiR18,MazowieckiR19}.

The closest results to our work are presented in~\cite{DaviaudJLMP021} and~\cite{DBLP:conf/concur/ChistikovKMP20,Chistikov21}. In the first mentioned paper the authors study the containment problem for finitely-ambiguous probabilistic automata, where one of the automata is unambiguous. The latter restriction makes the problem essentially equivalent to the threshold problems for the general class of weighted automata over $\semiringQ$. The papers~\cite{DBLP:conf/concur/ChistikovKMP20,Chistikov21} deal with the Big-O problem for finitely-ambiguous weighted automata over $\semiringQ$, which given two automata asks if there is a constant $C > 0$ such that $\Aa(w) \le C \cdot \Bb(w)$ for all words $w$. By fixing $\Aa$ or $\Bb$ to a positive constant we get the zero isolation and boundedness problems, respectively. 
Boundedness is sometimes also called limitedness, but should not be confused with the finiteness problem. Finiteness asks whether the range of a weighted automaton over the rationals is finite, and is known to be decidable~\cite{BumpusHKST20,MandelS77}.

We state and organise our results in \cref{sec:results}, after formally defining the setting in \cref{sec:preliminaries}.
 
\section{Preliminaries}
\label{sec:preliminaries}

We write $\Q$, $\Q_{\ge 0}$, $\Q_{>0}$, $\Q_{\le 0}$, $\Q_{<0}$ for the sets of rationals, nonnegative rationals, etc; and we use similar notation for other domains.
Throughout the paper we assume that the base of the logarithm is $2$ unless otherwise stated.
By $\LQ$ we denote the set of logarithms of positive rational numbers: $\LQ = \{\log(q) \mid q \in \Q_{>0}\}$. Observe that $\Q \subseteq \LQ$ and that $\LQ$ is closed under addition.
For $a, b \in \N$, $a \leq b$ we write $[a,b]$ as a shorthand for $\set{a,\dots,b}$.
Given a vector $\bv = (v_1,\ldots,v_d) \in \R^d$ we write $\bv[i] = v_i$ for every $i \in [1,d]$.
For $\bv, \bw \in \R^d$ we write $\bv \le \bw$ if $\bv[i] \le \bw[i]$ for all $i \in [1,d]$.
The norm of a vector $\bv$ is defined as $\norm{\bv} = \max_{i \in [1,d]} |\bv[i]|$.
Given a finite set $Q$, where $|Q| = d$ sometimes we consider vectors in $\R^Q$ understood as vectors in $\R^d$ for some implicit bijection between $Q$ and $[1,d]$.

Let $\semiringS$ be a commutative semiring with the sum $\oplus$ and product operations $\odot$.
 We will use $\S$ to denote the domain of the semiring $\semiringS$.
In this paper most of the time we will consider two types of semirings.
The \emph{standard semiring}, where the domain is nonnegative rational numbers $\semiringQ$
with the standard sum and product operations.
The \emph{tropical semirings} $\semiringZ$ and $\semiringZR$ with domains $\Z \cup \set{+\infty}$ and $\LQ \cup \set{+\infty}$, respectively, where $\oplus$ is $\min$ and $\odot$ is $+$.
Whenever the semiring is not specified we write $\zero$ and $\one$ for the zero and one of the semiring.
Over $\semiringQ$ these are as expected $\zero = 0$ and $\one = 1$; but over $\semiringZR$ and $\semiringZ$ these are $\zero = +\infty$ and $\one = 0$.

\subsection{Weighted automata}

A \emph{weighted automaton} (WA) over a semiring $\semiringS$ is a tuple $\Aa = (\Sigma, I, F, (M_a)_{a\in \Sigma})$, where: $\Sigma$ is a finite alphabet; $I,F \in \S^d$ are $d$-dimensional vectors; and $M_a \in \S^{d\times d}$ are $d$-dimensional square matrices for some fixed $d \in \N$.
For every word $w = w_1\cdots w_n \in \Sigma^*$ we define the matrix $M_w = M_{w_1} M_{w_2}\cdots M_{w_n}$, where the matrices are multiplied with respect to the sum and the product of $\S$.
If $w$ is the empty word then $M$ is the identity matrix.
For every word $w \in \Sigma$ the automaton outputs $\Aa(w) = I^\transpose M_{w} F \in \S$.
Thus $\Aa$ can be seen as a function $\Sigma^* \to \S$.
Whilst formally $\Aa$ does not have states, one can think that coordinates in $I$, $F$, and the matrices $(M_a)_{a\in\Sigma}$ are indexed by states rather than natural numbers.
In which case, we write $q\xrightarrow{w|r}q'$ if $M_w[q,q'] = r$ (regardless of whether $w$ is a word or character).
We also say that $I[q]$ and $F[q]$ are the initial and the final value of $q$ for every state $q$.

A run $\rho$ over a word $w=w_1 \cdots w_n$ in $\Aa$ is a sequence of states interleaved with values: $q_0,v_1,q_1,\dots,v_n,q_n$ such that $q_{i-1}\xrightarrow{w_i|v_i}q_i$ for $i\in\{1,\dots,n\}$.
We then associate the \emph{value of the run} $\val({\rho}) = I[q_0] \odot  v_1\odot \ldots\odot v_n \odot  F[q_n]$.
We say that $\rho$ is an \emph{accepting run} if $\val(\rho) \neq \zero$. 
Equivalently all elements in the product $I[q_0]$, $v_1$, \ldots, $v_n$, $F[q_n]$ are different from $\zero$ (for the semirings in this paper).
We denote the set of all accepting runs of $\Aa$ over $w$ by $\Acc(\Aa,w)$.
Then $\Aa(w) = \bigoplus_{\rho_i \in \Acc(\Aa,w)}\val(\rho_i)$.
The equivalence with the matrix definition is clear for all commutative semirings since runs that are not accepting contribute $\zero$ to the sum.

Consider a weighted automaton $\Aa$.
We write that $\Aa$ is:
\begin{itemize}[leftmargin=*]
 \item \emph{unambiguous} if $|\Acc(\Aa,w)| \le 1$ for all $w \in \Sigma^*$;
 \item \emph{finitely-ambiguous} if there exists $k \in \N$ such that \\ $|\Acc(\Aa,w)| \le k$ for all $w \in \Sigma^*$;
 \item \emph{polynomially-ambiguous} if there exists a polynomial function $p$ such that $|\Acc(\Aa,w)| \le p(|w|)$ for all $w \in \Sigma^*$.
If $p$ is linear we also say that $\Aa$ is linearly-ambiguous.
\end{itemize}

Below we show two examples of weighted automata over the semiring $\semiringQ$.

\begin{example}\label{example:weightedmod2}
Consider $\Aa = (\set{a}, I, F, M_a)$, where $I = (1,0)$, $F = (0,1)$, and $M_a = \left(\begin{smallmatrix}
0&1\\
1&0
\end{smallmatrix}\right)
$.
Then $\Aa(a^n) = n \mod 2$.
The automaton $\Aa$ is unambiguous (see \Cref{fig:weightedautomata}).
\end{example}

\begin{example}\label{example:weightedn}
Consider $\Bb = (\set{a}, I, F, M_a')$, the same as $\Aa$ except that $M_a' = \left(\begin{smallmatrix}
1&1\\
0&1
\end{smallmatrix}\right)
$.
Then $\Bb(a^n) = n$.
The automaton $\Bb$ is linearly-ambiguous (see \Cref{fig:weightedautomata}).
Moreover, it can be shown that the function defined by $\Bb$ cannot be defined by a finitely-ambiguous automaton (see \eg~\cite[Lemma~12]{BarloyFLM20}).
\end{example}

\begin{figure*}
\centering
\includegraphics[width=\textwidth]{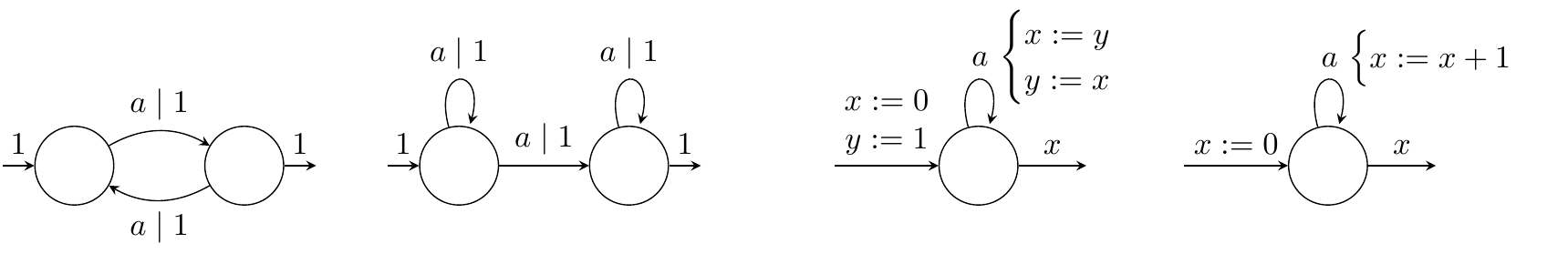}
\Description{Four automata. The first has two states with alternation between them. The second has two states with self loops and a path from the first to the second. The third has a single self loop state which swaps two registers. The fourth has a single self loop state which increments a register by 1.}
\caption{On the left: pictures for~\Cref{example:weightedmod2} and~\Cref{example:weightedn}, respectively. Vectors $I$ and $F$ are presented with ingoing and outgoing edges, respectively; and the elements of $M_a$ by edges between states. In all cases $0$ values are omitted. It is easy to see that the left automaton is unambiguous (even deterministic) and that the automaton on the right has $n$ accepting runs on $a^n$, and thus is linearly-ambiguous. On the right: pictures of stateless CRA equivalent to the ones on the left. The initial and final coefficients are presented by the ingoing and outgoing arrows, respectively. For example: on the left $F(x) = 1$, $F(y) = 0$; and on the right $F(x)=1$. Both CRA are linear and copyless.}\label{fig:weightedautomata}
\end{figure*}

\subsection{Cost-register automata and its restrictions}
For a semiring $\S$ and a set of registers $\X$ we write $\affine(\S,\X)$ for the set of affine expressions, \ie, expressions of the form $c \oplus \bigoplus_{x \in \X} (s_x \odot x)$, where $s_x , c\in\S$ and $x \in \X$.
A different presentation of weighted automata are \emph{linear cost-register automata} (linear CRA).
A linear CRA $\Bb$ is defined as a tuple $(\Sigma, Q, q_0, I, F , \X, \delta)$, where: $\Sigma$ is a finite alphabet; $Q$ is a finite set of states, with $q_0\in Q$ the designated initial state; $\X$ is a finite set of registers; $I : \X \to \S$ and $F : Q \times \X \to \S$ are, respectively, the initial values and final coefficients of registers; and $\delta\colon Q\times \Sigma \to Q \times (\X\to \affine(\S,\X))$ is a deterministic transition function.\footnote{Linear CRA were originally defined with linear updates (rather than affine).
Affine updates can be simulated by linear updates by introducing one extra register with value fixed to $\one$.
We use affine updates because the register constraints we introduce later do not apply to this special register.
}

A configuration of a CRA is a pair $(q,\sigma)$, consisting of a state and a valuation of the registers $\sigma \in  \S^{\X}$.
The initial configuration is $(q_0,I)$.
For every letter $a \in \Sigma$ we write $(q,\sigma) \xrightarrow{a} (q',\sigma')$ if  $\delta(q,a) = (q',\sigma_{q,a})$ and $\sigma'(x) = \sigma_{q,a}(x)(\sigma)$, that is $\sigma_{q,a}(x)$, where every register is substituted with its previous valuation $\sigma$.
Since $\Bb$ is deterministic for every input word $w = w_1\cdots w_n$ there is a unique \emph{run}, defined as a sequence of configurations: $(q_0,\sigma_0),\ldots, (q_n,\sigma_n)$, where $(q_0,\sigma_0)$ is the initial configuration and $(q_{i-1},\sigma_{i-1}) \trans{w_i} (q_i,\sigma_i)$ for every $i \in \set{1,\ldots,n}$.
In such a case, we write $(q_0,\sigma_0) \trans{w} (q_n,\sigma_n)$.
Finally, if the configuration after reading $w$ is $(q,\sigma_w)$ then the output of $\Bb(w)$ is $\bigoplus_{x \in \X} F[q,x] \odot \sigma_w(x)$.
Thus $\Bb$ is a function $\Sigma^* \to \S$.

Our linear CRA are defined with states $Q$, as per their first introduction~\cite{AlurDDRY13}.
However, in the general case it is not hard to see that the \emph{stateless} (or single state) model is equivalent.
Indeed, it suffices to encode states into registers and consider $Q \times \X$ as registers.
Note that this construction does not have to hold for restricted CRAs.
Thus a stateless linear CRA is defined as a tuple $\Bb = (\Sigma, I, F, \X, \delta)$.
All the notations are the same as for CRAs, but we will omit states in the stateless case, \eg a configuration of $\Bb$ is a valuation of the registers $\sigma : \X \to \S$.

We say that two automata are equivalent if they define the same function $\Sigma^* \to \S$.
Note that for every weighted automaton $\Aa$ there exists an equivalent (stateless) linear cost-register automaton $\Bb$, and conversely too.
Indeed, it suffices to identify the dimension $d$ in weighted automata with the size $|\X|$.
Then $\sigma_a(x)$ in $\delta$ can be seen as rows of matrices in $\S^{|\X| \times |\X|}$.
Formally, this is proved, \eg in~\cite[Theorem~9]{AlurDDRY13}. In the remainder of the paper we will work with linear CRA and its subclasses.

We are also interested in automata where the linear update functions are \emph{copyless}.
A valuation $\sigma : \X \to \affine(\S,\X)$ is copyless if every register $x\in \X$ occurs at most once across all affine expressions.
Formally, using the notation $\sigma(x) = \bigoplus_{z \in \X} (s_{x,z} \odot z) \oplus c_{x,z}$, for every $z \in \X$ at most one $s_{x,z}$ is different from $\zero$.
A CRA $\Bb$ is \emph{copyless} if $\sigma_{q,a}$ is copyless for every transition $\delta(q,a) = (q',\sigma_{q,a})$.

\begin{example}\label{example:CRA}
In~\Cref{fig:weightedautomata} we show that for both $\Aa(a^n) = n \mod 2$ in \Cref{example:weightedmod2} and $\Bb(a^n) = n$ in \Cref{example:weightedn} there are equivalent stateless copyless linear CRA. 
\end{example}

In general it is known that the classes are related as follows: copyless linear CRA are contained in copyless CRA, and the latter is contained in linear CRA~\cite{MazowieckiR15}.
Moreover, copyless linear CRA are contained in the class of linearly-ambiguous weighted automata~\cite[Remark 4]{AlmagorCMP20}.
A detailed presentation showing how CRA and weighted automata compare in terms of expressiveness is in~\Cref{fig:lattice}.

\subsection{\simpleCRA, a more restricted CRA}\label{subsec:simplecra}
We say that $\Bb = (\Sigma, I, F, \X, \delta)$ is an \emph{\simpleCRA} if $\Bb$ is a stateless linear CRA such that $\sigma_a(x) = (c_{a,x}\odot x) \oplus d_{a,x}$, where $c_{a,x}, d_{a,x} \in \S$ for every $\delta(a) = \sigma_a$.
In other words the new value of every register does not depend on other registers.
Observe that \simpleCRA{} are a subclass of stateless copyless linear CRA. A similar model was studied in~\cite{DaviaudJRV17}.

\begin{example}\label{example:simpleCRA}
The right automaton in~\Cref{fig:weightedautomata} is an example \simpleCRA.
It is not hard to show that there is no \simpleCRA that is equivalent to the automaton $\Aa(a^n) = n \mod 2$.
Indeed, it follows immediately from the definition that if $\Cc$ is an \simpleCRA and $\Cc(\epsilon) = \Cc(a^2) = 0$ then $\Cc(a^n) = 0$ for all $n \in \N$.
\end{example}

\subsection{Decision problems}
We define decision problems with respect to semirings.
The problems are well-defined for functions $\Sigma^* \to \S$, in particular for weighted automata, linear CRA and \simpleCRA.
The decision problems are well-defined with respect to all considered semirings, but we will mostly focus on the semiring $\semiringQ$

The \emph{$\le$-threshold problem}: given an automaton $\Aa$ and a number $c$ (from the domain of the semiring) is it the case that $\Aa(w) \le c$ for all $w \in \Sigma^*$.
The \emph{$\ge$-threshold problem} is defined similarly, where $\Aa(w) \le c$ is replaced with $\Aa(w) \ge c$.

The \emph{boundedness} problem: given an automaton $\Aa$ does there exist a finite number $c$ such that $\Aa(w) \le c$ for all $w \in \Sigma^*$.
A sequence $w_1,w_2,\ldots$ of words with $\lim_{i \to \infty}\Aa(w_i) = +\infty$ is a witness of unboundedness.

The \emph{zero isolation} problem: given an automaton $\Aa$ is it the case that there exists a positive rational number $c > 0$ such that $\Aa(w) \ge c$ for all $w \in \Sigma^*$.
A sequence $w_1,w_2,\ldots$ of words with $\displaystyle\lim_{i \to \infty}\Aa(w_i) = 0$ is a witness of nonisolated zero.
 
\section{Detailed state of the art and our results}
\label{sec:results}

We already remarked that for probabilistic automata both the $\le$-threshold and $\ge$-threshold problems are well-known to be undecidable~\cite{paz71}, even when the model is restricted to linearly ambiguous~\cite[Theorem~2]{DaviaudJLMP021}.
The zero isolation problem is also undecidable for probabilistic automata~\cite{GimbertO10}.
Hence, these three problems are also undecidable for weighted automata over $\semiringQ$. 

We remarked that the boundedness problem is not interesting for probabilistic automata, since all words have value bounded by $1$.
For weighted automata over $\semiringQ$ the problem is undecidable; we are not aware whether this fact is stated in the literature.
Nevertheless, it can be proven within this paragraph (a similar argument appears \eg in the proof of \cite[Theorem~1]{BlondelT2000boundedness}).
Consider the undecidable $\le$-threshold problem for probabilistic automata: given a probabilistic automaton $\Aa$ is it the case that $\Aa(w) \le \frac{1}{2}$ for all $w \in \Sigma^*$.
One can easily define $\Bb$ (which is no longer probabilistic, but over $\semiringQ$) such that $\Bb(w_1\#\dots\#w_n) = 2\Aa(w_1)\cdot\ldots\cdot 2\Aa(w_n)$, where $ \# $ is some fresh symbol, which intuitively restarts the automaton.
Then $\Bb$ is bounded if and only if $\Aa(w) \le \frac{1}{2}$ for all words $w$.

\begin{corollary}\label{cor:stateofart}
The $\le$-threshold, $\ge$-threshold, zero isolation, and boundedness problems are undecidable for weighted automata over the semiring $\semiringQ$.
The first two problems are undecidable even for linearly ambiguous models.
\end{corollary}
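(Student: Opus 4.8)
The plan is to collect the ingredients that have already been developed in the text and then package them into the four claimed undecidability statements. The first two problems, the $\le$-threshold and $\ge$-threshold problems, are the easiest: the excerpt already records that both are undecidable for probabilistic automata, even when the automaton is linearly ambiguous~\cite{paz71,DaviaudJLMP021}. Since every probabilistic automaton is in particular a weighted automaton over $\semiringQ$, these undecidability results transfer verbatim, and the linear-ambiguity refinement is preserved because the reduction does not change the automaton. So for these two cases I would simply cite the known results and observe that the model class only grows when passing from probabilistic automata to weighted automata over $\semiringQ$.

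For zero isolation the argument is again a direct transfer: it is undecidable already for probabilistic automata by~\cite{GimbertO10} (via the value-$1$ problem and complementation), hence undecidable for the larger class of weighted automata over $\semiringQ$. The only genuinely new point is boundedness, and here I would reproduce the folklore reduction sketched in the paragraph immediately preceding the corollary. Start from the undecidable instance ``given a probabilistic automaton $\Aa$, is $\Aa(w)\le\frac12$ for all $w$?''. Build $\Bb$ over $\semiringQ$ on the alphabet $\Sigma\cup\set{\#}$ whose transition matrices for letters in $\Sigma$ are $2M_a$ (equivalently, scale the final vector, or scale the matrices — any placement of the factor $2$ works since $\semiringQ$ has multiplication), and whose matrix for $\#$ first collapses the current value into a single ``restart'' register holding the accumulated product and then reinitialises. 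Then $\Bb(w_1\#\cdots\# w_n) = \prod_{i=1}^n 2\Aa(w_i)$. If every $\Aa(w)\le\frac12$, each factor is $\le 1$ and the supremum of $\Bb$ over all inputs is at most $1$, so $\Bb$ is bounded. Conversely, if some $w^\star$ has $2\Aa(w^\star) = 1+\eps > 1$, then $\Bb((w^\star\#)^n) = (1+\eps)^n \to +\infty$, so $\Bb$ is unbounded. Hence $\Bb$ is bounded iff $\Aa$ meets the threshold, and boundedness over $\semiringQ$ is undecidable.

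I do not expect a serious obstacle; the corollary is an assembly of known or elementary facts. The one point that needs a line of care is the construction of the $\#$-transition in $\Bb$: one must check that a weighted automaton (or equivalently a linear CRA) over $\semiringQ$ can indeed implement ``multiply all the running values together and restart'', which is immediate since on reading $\#$ we may route every state's value into a fresh initial state with the appropriate coefficients, and the product over a block is exactly what the matrix product $M_{w_i}$ followed by the final/initial vectors computes. I would also note explicitly, for the record, that the reduction for boundedness does \emph{not} obviously preserve linear ambiguity — the repeated $\#$-restarts create polynomially many runs — which is why the corollary only claims the linear-ambiguity strengthening for the two threshold problems.
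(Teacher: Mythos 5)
Your overall plan is exactly the paper's: cite the known undecidability of the two threshold problems for (linearly ambiguous) probabilistic automata, cite the undecidability of zero isolation for probabilistic automata, observe that probabilistic automata are a subclass of weighted automata over $\semiringQ$, and supply the folklore reduction for boundedness (which the paper gives in the paragraph preceding the corollary). Your closing remark that the boundedness reduction need not preserve linear ambiguity is also a correct reading of why the corollary restricts the strengthening to the two threshold problems.

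That said, there is a genuine, if local, error in the one place where you flesh out the paper's sketch. You propose to realise $\Bb(w_1\#\cdots\#w_n)=\prod_i 2\Aa(w_i)$ by taking the transition matrices for $a\in\Sigma$ to be $2M_a$, and you assert in parentheses that ``any placement of the factor $2$ works''. That is false: scaling every $M_a$ by $2$ gives $\Bb(w_1\#\cdots\#w_n)=\prod_i 2^{|w_i|}\Aa(w_i)$, not $\prod_i 2\Aa(w_i)$. With that semantics the reduction breaks: $2^{|w|}\Aa(w)$ can exceed $1$ even when $\Aa(w)\le\frac{1}{2}$ for all $w$ (take any $w$ with $|w|\ge 2$ and $\Aa(w)$ near $\frac{1}{2}$), so $\Bb$ would be unbounded regardless of whether $\Aa$ meets the threshold. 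The factor $2$ must enter exactly once per block, e.g.\ by scaling the final vector $F$ or by putting the factor on the $\#$-transition --- the correct alternative that you yourself list. A cosmetic second slip: the witness $(w^\star\#)^n$ ends with a trailing empty block whose contribution $2\Aa(\eps)$ may be zero, so use $n$ copies of $w^\star$ separated by $n-1$ occurrences of $\#$. With those two repairs your argument coincides with the paper's.
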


On the positive side, when ambiguity is restricted to be finitely ambiguous we can infer some decidability results for the $\le$-threshold and $\ge$-threshold problems from~\cite{DaviaudJLMP021}.

\begin{restatable}{proposition}{PropStateofArt}\label{prop:finambig:zeroboundedness}
For finitely ambiguous weighted automata over $\semiringQ$ the $\le$-threshold problem and the boundedness problem are decidable, and the $\ge$-threshold problem and the zero isolation problem are decidable assuming Schanuel's conjecture is true.
\end{restatable}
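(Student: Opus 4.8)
The plan is to reduce the four claims to known results about finitely-ambiguous weighted automata over $\semiringQ$ taken from~\cite{DaviaudJLMP021} (and the $\ge$-type statements to results conditional on Schanuel's conjecture therein), using simple semiring-level gadgets that turn boundedness into a $\le$-threshold instance and zero isolation into a $\ge$-threshold instance, while keeping finite ambiguity.

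\medskip

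\noindent\textbf{Step 1: the two threshold problems.} First I would observe that the $\le$-threshold and $\ge$-threshold problems for finitely-ambiguous weighted automata over $\semiringQ$ are \emph{exactly} the decidability results already established in~\cite{DaviaudJLMP021}: the containment problem studied there, where one automaton is unambiguous, is essentially equivalent to the threshold problems for the general class over $\semiringQ$, and specialising to a finitely-ambiguous $\Aa$ and a constant target gives precisely these problems. So $\le$-threshold is unconditionally decidable and $\ge$-threshold is decidable assuming Schanuel's conjecture, and nothing further is needed for these two.

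\medskip

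\noindent\textbf{Step 2: boundedness $\Rightarrow$ $\le$-threshold.} For the boundedness problem I would give a direct argument using the structure of finite ambiguity rather than a reduction. If $\Aa$ is $k$-ambiguous, then $\Aa(w)$ is a sum of at most $k$ run-values, each of which is a product $I[q_0]\cdot v_1 \cdots v_n\cdot F[q_n]$ of rationals. The plan is to show $\Aa$ is unbounded if and only if there is a simple cycle whose product of transition weights is $>1$ that lies on an accessible and co-accessible part of the automaton; detecting such a cycle is a routine reachability computation (one can even take logs and look for a positive-weight cycle). Correctness: if no such cycle exists, every run's multiplicative contribution is at most a fixed constant (depending only on $|Q|$ and the largest weight), because along any accepting run every cycle contributes a factor $\le 1$ and the acyclic part is bounded in length; summing $k$ such bounded values keeps $\Aa(w)$ bounded. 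Conversely a strictly-expanding accessible, co-accessible cycle can be pumped to drive a single run-value, hence $\Aa(w)$, to $+\infty$. This makes boundedness decidable without invoking~\cite{DaviaudJLMP021} at all; alternatively one reduces to $\le$-threshold by noting that $\Aa$ bounded iff $\Aa(w)\le c$ for the explicit constant $c$ above, and then appeals to Step~1.

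\medskip

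\noindent\textbf{Step 3: zero isolation $\Rightarrow$ $\ge$-threshold, via \cite{Chistikov21}.} The more delicate item is zero isolation. Here I would use the Big-O machinery of~\cite{DBLP:conf/concur/ChistikovKMP20,Chistikov21}: as noted in the introduction, fixing one automaton to a positive constant turns the Big-O problem for finitely-ambiguous weighted automata over $\semiringQ$ into exactly the zero isolation problem. Since the Big-O problem is decidable (subject to Schanuel's conjecture) in the finitely-ambiguous case, zero isolation inherits this conditional decidability. Concretely, zero is \emph{not} isolated iff there is a sequence $w_i$ with $\Aa(w_i)\to 0$; because $\Aa$ is $k$-ambiguous this can be analysed as: either some accepting run's value can be driven to $0$ by pumping a cycle with product $<1$ while the other at most $k-1$ runs do not compensate, or all runs die out. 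The bookkeeping of ``which runs survive and how fast they decay'' is exactly what the results of~\cite{Chistikov21} (and the Schanuel-conditional transcendence arguments behind the $\ge$-threshold result of~\cite{DaviaudJLMP021}) are designed to handle.

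\medskip

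\noindent\textbf{Main obstacle.} The genuinely hard part is not the reductions, which are elementary gadgets, but is entirely outsourced: it is the transcendental number-theoretic content behind the $\ge$-threshold / Big-O decidability, where comparing a rational threshold with a limit of products of rationals requires deciding (in)equalities among values of the exponential function at rational points — this is why Schanuel's conjecture enters. In this proposition I would simply cite~\cite{DaviaudJLMP021} and~\cite{Chistikov21} for that content and keep the proof to the gadget constructions and the cycle-pumping characterisation for boundedness.
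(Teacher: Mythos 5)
Your plan matches the paper's proof in outline: the cycle-pumping characterisation for boundedness and the reduction of zero isolation to the Big-O problem of~\cite{Chistikov21} with a constant-$1$ comparison automaton are exactly what the paper does. However, Step~1 has a genuine gap: you claim "nothing further is needed" beyond citing~\cite{DaviaudJLMP021}, but the results there (their Proposition~16 and Theorem~17) are stated for containment of \emph{probabilistic} automata, not for threshold problems over general weighted automata over $\semiringQ$. Appealing to the introduction's remark that the two are "essentially equivalent" is circular, because the content of that equivalence is precisely what the proof of this proposition must supply. The paper fills this in with an explicit normalisation gadget: set $C = \max(c, N)$ where $N$ is the sum of all constants appearing in $\Aa$, define $\Aa/C$ by dividing every entry of $I$, $F$, and each $M_a$ by $C$ so that $\Aa/C$ is a (sub-stochastic) probabilistic automaton with $(\Aa/C)(w) = \Aa(w)/C^{|w|+2}$, and build an unambiguous probabilistic automaton $\Bb$ with $\Bb(w) = c/C^{|w|+2}$; the threshold question for $\Aa$ then becomes a containment question between $\Aa/C$ and $\Bb$, one of which is unambiguous, which is what~\cite{DaviaudJLMP021} actually decides. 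Without this reduction your Step~1 is a citation, not a proof. Steps~2 and~3 are fine and coincide with the paper's argument.
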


In this paper we are mostly interested in the boundedness and zero isolation problems over $\semiringQ$ for copyless linear CRA and \simpleCRA.
Below we state our main results.

\begin{restatable}{theorem}{ThmBoundednessRat}\label{theorem:boundedness_rational}
Boundedness for copyless linear CRA over $\semiringQ$ is decidable in polynomial time.
\end{restatable}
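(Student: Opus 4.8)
The plan is to prove that unboundedness of a copyless linear CRA over $\semiringQ$ is witnessed by one of two simple, polynomial-time checkable patterns in the automaton, following the spirit of the ambiguity patterns of Weber--Seidel and the use of Simon's factorisation forests as in one of the proofs of Hashiguchi's theorem. Recall that a copyless linear CRA is linearly-ambiguous, so on a word $w$ the output $\Bb(w)$ is a sum of at most $|w|$ run-values, each run-value being a product of initial coefficient, a chain of transition weights along a path, and a final coefficient. Hence $\Bb(w) \to +\infty$ can happen for only two structurally different reasons: either a single run-value grows without bound (a ``heavy loop''), or individual run-values stay bounded but their number, together with bounded-but-nonzero values, forces the sum to infinity (a ``forking pattern'' producing unboundedly many runs of roughly equal, bounded-away-from-zero value). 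So I would define:

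\textbf{Pattern 1 (heavy loop).} There is a register $x$, states/configurations of the register-dependency graph, and a word $u$ such that reading $u$ scales the contribution passing through $x$ by a multiplicative factor strictly greater than $1$ while $x$ remains reachable from the initial values and co-reachable to a final coefficient. Iterating $u$ then sends the output to infinity.

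\textbf{Pattern 2 (unbounded forking).} There is a word $u$ and a register $x$ that, along the run on a suitable power $u^k$, spawns (via the copyless updates — note copylessness forces the ``copying'' to happen only through the additive constants $d_{a,x}$, i.e.\ through fresh injections of value, not through duplicating a register's content) linearly many independent contributions to the output, each of value bounded below by a positive constant independent of $k$; summing these forces unboundedness even though no single contribution is large.

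The core of the argument is the converse: if neither pattern occurs, the automaton is bounded. Here I would first pass to a finite-monoid abstraction of the generated matrices $M_w$. The matrices have a very restricted shape because of the copyless and linear restrictions (each column has at most one nonzero off-diagonal-type entry, so the ``support'' dynamics is essentially a partial-function / forest structure on registers plus scalar multipliers). I would quotient the scalars into the three classes $\{=0,\ \in(0,1],\ >1\}$ — or more carefully, track for each entry whether it is $0$, or a rational in a bounded range, or ``large'' — so that the resulting object lives in a finite monoid $\mathcal{M}$. Then apply Simon's factorisation forest theorem: every word $w$ admits a factorisation tree of height bounded by a constant depending only on $|\mathcal{M}|$, whose internal nodes are either binary or idempotent (a node all of whose children map to the same idempotent $e$ of $\mathcal{M}$). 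The absence of Pattern 1 should translate into: every idempotent node corresponds to a power of a word whose loops all have multiplier $\le 1$, hence cannot blow up a single run-value; the absence of Pattern 2 should translate into: idempotent nodes do not increase, up to a constant factor, the number of surviving nonzero contributions, i.e.\ the forest structure on registers stabilises. An induction on the height of the factorisation tree — the height being bounded by a constant — then yields a uniform bound on $\Bb(w)$ in terms of the (constantly many) ``primitive'' building blocks, hence boundedness.

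The main obstacle, I expect, is getting the monoid abstraction to be simultaneously (i) finite, (ii) faithful enough that ``bounded in the abstraction'' implies ``bounded in $\semiringQ$'', and (iii) compatible with the copyless shape so that the idempotent-node analysis actually closes. The delicate point is controlling the interaction between the additive constants $c_{a,x}, d_{a,x}$ and the multiplicative part: a loop with all multipliers $\le 1$ can still inject a bounded additive amount each iteration, which is fine for a single run, but one must make sure the number of such injections that ultimately reach the output is not itself growing — that is exactly what Pattern 2 is meant to capture, and matching the combinatorial ``forking'' definition to what the idempotent nodes of the factorisation forest can do is where the real work lies. Separating ``loop value $> 1$'' from ``too many equal-value runs'' cleanly, and handling the fact that over the factorisation tree these effects can compound, is the technical heart; once the abstraction is right, the polynomial-time decidability is immediate because both patterns are local properties of bounded-length witnesses searchable by standard reachability/cycle-detection in the product of the state space with the register-dependency structure.
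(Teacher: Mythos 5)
Your proposal follows essentially the same route as the paper: you characterise unboundedness by the same two patterns (a loop of multiplicative value $>1$, and a value-$1$ loop fed by fresh injections from the additive constants — which is exactly the paper's pattern \ref{pattern2} after its translation into a ``simple linearly-ambiguous'' WA with a distinguished injection state $p$), you propose the same finite-monoid abstraction of the matrices $M_w$ that collapses small entries while keeping large ones exact, and you invoke Simon's factorisation forests with an induction on tree height to bound the output when the patterns are absent. The points you flag as ``the real work'' are indeed where the paper's effort goes — making the abstraction finite and faithful (\cref{claim:finitely}, \cref{claim:monoid}) and turning the idempotent-node analysis into the quantitative counting bound $|\Runs[\frac{1}{k}](w)| \le \poly\log(k)$ of \cref{lemma:boundednesslemma} — but the plan you outline is the right one.
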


\begin{theorem}\label{theorem:isolation}
Zero isolation for \simpleCRA in dimension~$3$ over $\semiringQ$ is decidable, subject to Schanuel's conjecture. For copyless CRA zero isolation is undecidable.
\end{theorem}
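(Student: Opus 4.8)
The plan is to prove the two halves of \Cref{theorem:isolation} separately, since they go in opposite directions.

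\textbf{Decidability of zero isolation for \simpleCRA in dimension 3.} First I would set up the chain of reductions announced in the introduction. Given a \simpleCRA $\Bb$ over $\semiringQ$ with $3$ registers, each register $x$ evolves independently under affine updates $\sigma_a(x) = c_{a,x}\cdot x + d_{a,x}$, and the output on $w$ is $\sum_{x} F[x]\cdot\sigma_w(x)$. The key observation is that a sequence of words witnesses nonisolated zero iff the output can be driven arbitrarily close to $0$, which (since the output is a sum of finitely many nonnegative register contributions) forces every register contribution $F[x]\cdot\sigma_w(x)$ to go to $0$ simultaneously; tracking $-\log$ of each register value turns the multiplicative behaviour into an additive one and moves us into the semiring $\semiringZR$. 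I would make precise the claim from the introduction that zero isolation over $\semiringQ$ is ``essentially equivalent'' to boundedness over $\semiringZ$ (more precisely $\semiringZR$), being careful about the additive constants $d_{a,x}$ (which, after the $\log$ transform, correspond to taking $\min$ with a threshold), and about the degenerate cases where some $\sigma_w(x)$ can actually become $0$ exactly. Then I would encode the resulting $\min$-plus boundedness question as a \emph{universal coverability} instance for a $3$-dimensional orthant VAS: one coordinate per register, the multiplicative loop behaviour becoming the additive vectors, and the ``which orthant'' data recording whether each register is currently above or below its threshold (i.e.\ whether the additive constant or the multiplicative part dominates). Universal coverability asks whether from every starting point the positive orthant is reachable; a ``no'' answer — a starting point from which one can never cover — corresponds exactly to a family of words pushing the output to $0$.

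\textbf{The OVAS ingredient.} The technical heart is the promised theorem that universal coverability for $3$-dimensional OVAS is decidable (subject to Schanuel's conjecture). I would cite/develop the separator argument sketched in the introduction: one exhibits a \emph{separator}, a region expressible in the first-order theory of the reals (possibly with exponentiation) that contains the reachability set and is disjoint from the positive orthant, using results on continuous/integer VAS reachability~\cite{BlondinFHH17} to argue that such a semialgebraic (or sub-exponential-definable) separator exists whenever coverability fails. Decidability of the resulting sentence then follows from Tarski's theorem~\cite{DBLP:journals/jsc/Grigorev88} in the purely rational-coefficient case, and from the decidability of the first-order theory of the reals with exponentiation under Schanuel's conjecture~\cite{schanuelsconj} otherwise. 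Combining this with the reduction above yields decidability of zero isolation for \simpleCRA in dimension $3$. I expect the \emph{main obstacle} here to be the reduction's fidelity in the presence of the additive constants $d_{a,x}$: a register value is a sum of a geometric part and an accumulated-constant part, so its $\log$ is not simply linear in the $\log$s of the coefficients, and one must argue that up to the asymptotics relevant for convergence to $0$ only the dominant term matters — this is exactly what forces the orthant structure (different ``regimes'' per register) and is the reason a plain VAS does not suffice.

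\textbf{Undecidability for copyless CRA.} For the negative half I would give a reduction from a known undecidable problem for copyless CRA; the natural source is the undecidability of threshold/containment-type problems for copyless (nonlinear) CRA from~\cite{AlmagorCMP20}, or, via the $\log$-correspondence, undecidability of boundedness over $\semiringZ$ for an expressive-enough class. Concretely, I would show that a copyless CRA can simulate (on suitably delimited inputs, using a fresh restart symbol $\#$ as in the boundedness undecidability argument of \Cref{sec:results}) products of probabilistic-automaton acceptance values, or equivalently a two-counter machine computation with the counter values stored in register exponents; nonlinear (copyful within a single update is forbidden, but products of registers are allowed in the nonlinear model) updates let one multiply register contents, which is precisely the power needed to make the value approach $0$ iff the machine halts. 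Wrapping the simulation so that the output tends to $0$ exactly when the underlying undecidable instance is positive gives undecidability of zero isolation for copyless CRA. The main care needed here is to respect the copyless restriction throughout the gadget construction while still obtaining enough arithmetic power, and to ensure the ``approaches $0$'' behaviour is genuinely non-isolated (a single word of small value does not suffice — one needs an infinite sequence), which is typically arranged by padding the simulated computation to arbitrary length.
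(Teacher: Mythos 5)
Your proposal follows essentially the same route as the paper for both halves: $-\log$-transform to the min-plus semiring $\semiringZR$, then reduce to universal coverability for $3$-OVAS (with the additive constants $d_{a,x}$ giving rise to the orthant structure), then decide via a definable separator in FO over the reals with $\exp$ under Schanuel; and, for the negative half, reduce the halting problem for two-counter machines to a $<$-threshold problem for copyless nonlinear CRA, then boost it to zero isolation with a restart symbol. This matches the paper's chain \Cref{thm:rational_to_tropical} $\to$ \Cref{theorem:boundedness_ovass} $\to$ \Cref{theorem:logq-universal}, and \Cref{sec:copylessundecidable} for undecidability.

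The one place where your sketch is too quick is the first reduction. You observe that each summand $F[x]\cdot\sigma_w(x)$ must tend to $0$, and then say that ``tracking $-\log$ of each register value turns the multiplicative behaviour into additive.'' But $\sigma_w(x)$ is itself a sum $\sum_i d_{a_{i-1},x}\prod_{j\ge i} c_{a_j,x}$ (\cref{eq:simple2}), so $-\log\sigma_w(x)$ is not the additive quantity you want. The paper's \Cref{thm:equivalence} first passes through the max-product semiring, showing that for \simpleCRA{} the aggregate $\sum$ over runs tends to $0$ iff the $\max$ over runs does (this is the pumping argument \Cref{claim:technical}, using word repetition $a_1^i\cdots a_m^i$ plus the fact that there are only linearly many runs); only the $\max$ version is genuinely linearised by $-\log$, giving the boundedness problem over $\semiringZR$. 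Your caveat about ``being careful with the $d_{a,x}$'' is pointing in the right direction, but the real content is the $\semiringQ$-versus-$\semiringQmax$ equivalence, not the handling of the constants per se. With that step made precise, your proposal is the paper's proof.
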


As mentioned in the introduction, the main contribution of the results are: the techniques in the decidability result that we believe might generalise to arbitrary dimension; and the nontrivial connections with other problems.
To prove \Cref{theorem:isolation} we will show that the zero isolation problem is essentially equivalent to the boundedness problem over $\semiringZR$. Thus the positive part of \Cref{theorem:isolation} will be a corollary of the following (the negative part is deferred to the appendix).

\begin{theorem}\label{theorem:boundedness_min}
Zero isolation for \simpleCRA in dimension~$3$ over $\semiringZR$ is decidable, subject to Schanuel's conjecture.
For \simpleCRA in dimension~$3$ over $\semiringZ$ the boundedness problem is decidable in \exptime (independent of Schanuel's conjecture).
\end{theorem}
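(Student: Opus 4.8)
The plan is to reduce both halves of the statement to one geometric question --- \emph{universal coverability} of a three-dimensional orthant vector addition system (OVAS) --- and then to decide that question in dimension~$3$.

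For the reduction, fix an \simpleCRA $\Bb$ over $\semiringZR$ with three registers $x_1,x_2,x_3$ (the $\semiringZ$ case is identical with integer data). Each register evolves independently under a letter $a$ by an update $\sigma_a(x_i)=\min(c_{a,i}+x_i,\,d_{a,i})$ with $c_{a,i},d_{a,i}\in\LQ\cup\{+\infty\}$, and the quantity governing failure of zero isolation is whether the triple $\bigl(\sigma_w(x_1),\sigma_w(x_2),\sigma_w(x_3)\bigr)\in\R^3$ can be driven to $-\infty$ in all three coordinates along a single sequence of words; this is where the problem becomes genuinely three-dimensional. I would first translate each coordinate by a constant so that each $\min$ splits into its two branches --- ``$c_{a,i}+x_i$ wins'' versus ``$d_{a,i}$ wins'' --- according only to the sign of the shifted value of $x_i$. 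Hence within each of the $2^3$ sign orthants a letter acts as a fixed affine map, the ``reset'' branches only ever push a register towards the positive orthant, so that the obstruction to bounded-below-ness is carried entirely by the translation part, and $\Bb$ is turned into an OVAS $\V$ in dimension~$3$. Using that the configurations reachable by prefixes of runs suffice to feed any relevant start point, $\Bb$ fails zero isolation exactly when $\V$ is \emph{not} universally coverable, \ie when from some start point the positive orthant is unreachable. When the input is over $\semiringZ$ all constants are integers, so $\V$ is an integer OVAS.

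For deciding universal coverability in dimension~$3$, the complement asks for a start point $\bs$ with $\reach_\V(\bs)$ disjoint from the positive orthant. I would certify such a $\bs$ by a \emph{separator}: a region $S\subseteq\R^3$ with $\reach_\V(\bs)\subseteq S$ that avoids the interior of the positive orthant, taken from a class of sets of bounded descriptive complexity --- concretely, a finite union, one piece per sign orthant, of polyhedral cones and shifted half-spaces. The existence of a pair $(\bs,S)$ is then a first-order sentence over the ordered field of reals, enriched in the $\semiringZR$ case with the exponential so that the constants $\log q$ can be named by $e^{y}=q$. Soundness requires that $\reach_\V(\bs)\subseteq S$ be expressible; here I would pass to \emph{continuous} reachability and invoke the semilinear, first-order description of reachability sets of continuous VAS from~\cite{BlondinFHH17}, together with the observation that in our setting the discrete and continuous reachability relations differ only by a bounded perturbation, which cannot affect covering a full-dimensional orthant. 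The core of the argument is completeness: whenever the positive orthant is not coverable from $\bs$, a separator of the prescribed shape genuinely exists. This is the one place where dimension~$3$ is used, via a finite case analysis on which sign orthants $\V$ can ever enter from $\bs$ and on the directions of the translations available there, showing the obstruction can always be ``straightened'' into the required polyhedral form.

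It remains to account for resources. For $\semiringZ$ the separator can be taken with integer data of small bit-size --- a bound I would read off from the construction --- and the resulting sentence over $(\R,+,\cdot)$ has a bounded number of quantifier alternations, so Tarski's theorem with Grigoriev's bound~\cite{DBLP:journals/jsc/Grigorev88} yields the claimed \exptime bound, independently of Schanuel. For $\semiringZR$ the same sentence lies in the first-order theory of the reals with exponentiation, which is decidable assuming Schanuel's conjecture~\cite{schanuelsconj}; this settles the first half of the statement, and the positive part of \Cref{theorem:isolation} follows by composing with the reduction from zero isolation over $\semiringQ$ to zero isolation over $\semiringZR$. I expect the main obstacle to be the completeness of the separator characterisation: a priori $\reach_\V(\bs)$ is an infinite, non-convex set whose shape depends on the unknown $\bs$, so it is far from clear that failure of covering can be witnessed by a \emph{simple} region at all; controlling this needs both the continuous-VAS geometry and the restriction to dimension~$3$, and it is precisely the step we do not know how to push to higher dimensions.
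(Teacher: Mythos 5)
Your plan matches the paper's at the coarse level: reduce to universal coverability of a $3$-OVAS, pass to continuous reachability in order to borrow the first-order description from~\cite{BlondinFHH17}, and decide the existence of a finitely-described separator via Tarski (for $\semiringZ$) or Schanuel (for $\semiringZR$). You also correctly identify completeness of the separator class as the crux. But that is precisely where a genuine gap remains, and the paper closes it with two structural observations you do not make. First, there is no need to quantify over a start point $\bs$: by the scaling lemma for continuous OVAS runs, universal coverability is equivalent to whether the reachability set $\reach(V)$ from the \emph{entire strictly negative orthant} meets $\R_{\ge 0}^3$, and this set is closed under positive scaling, hence a cone. Your certificate $(\bs,S)$ loses this invariance, and your proposed ``shifted half-spaces'' are incompatible with $S$ being a cone --- which is exactly the property that makes a bounded-size description possible. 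Second, the paper restricts attention to $\reach(V)^\downarrow \cap W$, where $W$ is the set of walls $\{\bv : \exists i,\ \bv_i = 0\}$; in dimension $3$ this is a union of $12$ quarter-planes, and a scale-invariant, downward-closed set closed under in-orthant continuous reachability intersects each quarter-plane in either nothing, everything, or a half-plane bounded by a line through the origin (\cref{claim:line}). That gives the $18$-real-number parametrisation of separators and a sentence with a bounded number of quantifier alternations. Your ``one polyhedral piece per sign orthant in $\R^3$'' gives no such bound; and the place dimension~$3$ enters is the $2$-dimensionality of the walls, not merely the $3$-dimensionality of the ambient space.

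A few smaller points. The discrete/continuous transfer (\cref{theorem:continuous_semantics}) is not a bounded-perturbation argument: one direction shifts by a vector of norm $\norm{\V}$, but the other scales the entire run by an integer $m$ obtained from a simultaneous Dirichlet approximation so that the scaled real coefficients round to integers, and $m$ is not bounded --- it is the universal quantification over start points that absorbs this. In the reduction from \simpleCRA{} to OVAS, the transitions corresponding to a word are applied in the reverse order of the letters, and the divergence witnessing unboundedness over $\semiringZR$ is to $+\infty$ in all coordinates simultaneously, not $-\infty$, since $-\log$ sends $q \to 0^+$ to $+\infty$.
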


\begin{proof}[\textbf{Proof of \cref{prop:finambig:zeroboundedness}}]
\emph{Threshold problems:}
Consider the following containment problem: given two probabilistic automata $\Aa$ and $\Bb$ is it the case that $\Aa(w) \le \Bb(w)$ for all words $w$.
When $\Aa$ is finitely ambiguous and $\Bb$ is unambiguous then the problem is decidable~\cite[Proposition~16]{DaviaudJLMP021}.
When $\Aa$ is unambiguous and $\Bb$ is finitely ambiguous then the problem is decidable, assuming Schanuel's conjecture is true~\cite[Theorem~17]{DaviaudJLMP021}.

\begin{sloppypar}Consider an input for one of the threshold problems: a finitely ambiguous weighted automaton $\Aa = (\Sigma, I, F, (M_a)_{a\in \Sigma})$ over $\semiringQ$ and $c \in \Qpos$.
Let $N$ be the sum of all constants that appear in $I$, $F$, and $M_a$ for all $a \in \Sigma$ and let $C = \max(c,N)$.
We define the automaton  $\Aa/C = (\Sigma, I', F', (M_a')_{a\in \Sigma})$, where $I'(q) = I(q)/C$, $F(q) = F'(q)/C$, and $M_a'(p,q) = M_a(p,q)/C$.
It is easy to see that $\Aa/C$ is a probabilistic automaton and that $\Aa/C(w) = \Aa(w)/C^{|w|+2}$ for all $w \in \Sigma^*$.
It remains to observe that it is easy to define an unambiguous probabilistic automaton $\Bb$ such that $\Bb(w) = c/C^{|w| + 2}$ for all $w \in \Sigma^*$.
Thus the threshold problems can be reduced to the containment problems between $\Aa$ and $\Bb$.
We conclude by the mentioned results from~\cite{DaviaudJLMP021}.
\end{sloppypar}

\emph{Boundedness:}
Since there are finitely many runs, check that at least one run is unbounded, which occurs if and only if some accessible cycle has weight greater than one.

\emph{Zero isolation:}
We reduce to the Big-O problem, which asks whether there exists $C>0$ such that for all $w\in \Sigma^*$ $\Aa(w)\le C\cdot \Bb(w)$. The problem is decidable for finitely-ambiguous $\Aa,\Bb$ assuming Schanuel's conjecture is true~\cite[Theorem 9.2]{Chistikov21}. Let $\Aa(w) = 1$ for all $w\in \Sigma^*$.  Then there exists $C>0$ such that $\Bb(w) \ge \frac{1}{C}$ for all $w\in \Sigma^*$ (\emph{zero isolation}) if and only if $\Aa(w)$ is big-O of $\Bb(w)$.
\end{proof}

\paragraph{Organisation}
In the following sections we will prove Theorems~\ref{theorem:boundedness_rational}, \ref{theorem:isolation} and \ref{theorem:boundedness_min}.
\Cref{sec:boundedness} proves decidability of the boundedness problem for copyless linear CRA over $\semiringQ$ (\Cref{theorem:boundedness_rational}). 
\Cref{0-isolation} shows the chain of reductions from zero isolation for weighted automata over $\semiringQ$, through boundedness for weighted automata over $\semiringZR$, up to universal coverability in OVAS. 
Finally, \Cref{sec:coverability} shows that universal coverability is decidable in dimension three, proving \Cref{theorem:isolation} and \Cref{theorem:boundedness_min}. 
In~\Cref{fig:lattice} we present the results also explaining how \simpleCRA{} and copyless linear CRA relate to other classes of weighted automata in terms of expressiveness.

\begin{figure*}
\includegraphics[width=\textwidth]{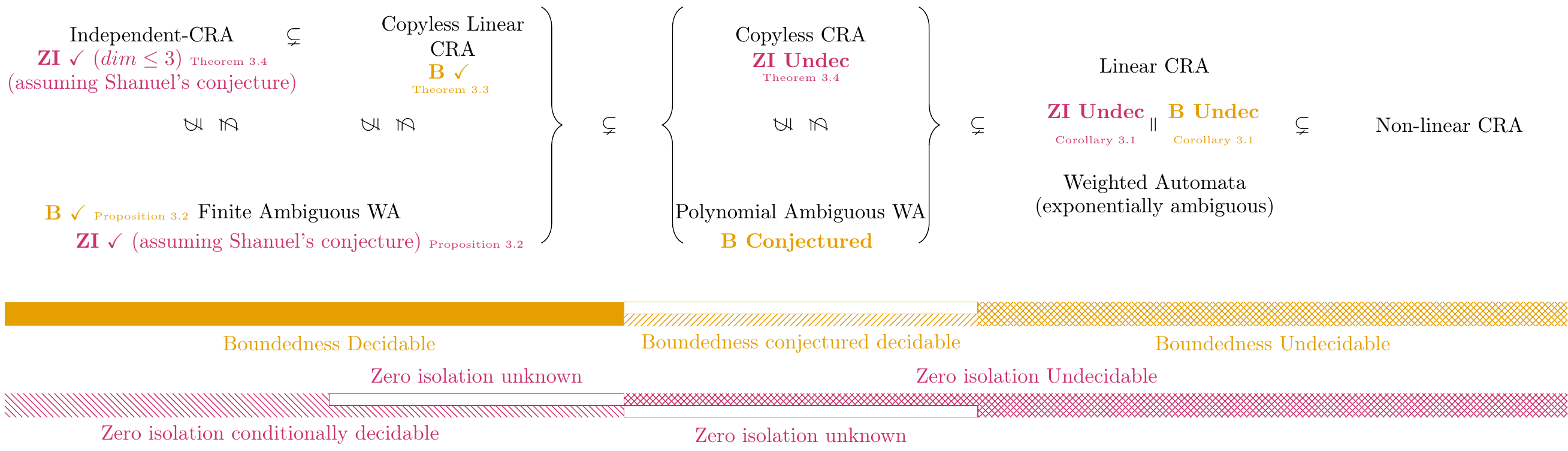}
\Description{Lattice between subclasses of weighted automata based on ambiguity and register update restrictions in the CRA setting. Annotated by the decidability status.}
\caption{Decidability status of the boundedness (B) and zero isolation (ZI) problems in the semiring $\semiringQ$.
Notation $X\subsetneq Y$ indicates that $Y$ recognises every function of $X$ in every semiring, but there exists a semiring where $Y$ recognises at least one other function not recognised by $X$.
Notations $X\not\subseteq Y, Y\not\subseteq X$ mean that there exists some semiring where a function from $X$ is not in $Y$ and visa-versa.
By~\Cref{example:weightedmod2} and~\Cref{example:simpleCRA} we see that the function $a^n \to n \mod 2$ is in the class of finitely-ambiguous weighted automata but not in \simpleCRA.
Conversely, by~\Cref{example:weightedn} and~\Cref{example:simpleCRA} we see that the function $a^n \to n$ is in \simpleCRA but not in the class of finitely-ambiguous weighted automata.
Other examples of non-inclusions go beyond copyless linear CRA and they are not always over $\semiringQ$.
See~\cite{MazowieckiR15,MazowieckiR18,MazowieckiR19,AlmagorCMP20,BarloyFLM20}. }
\label{fig:lattice}
\end{figure*}

\section{Boundedness for copyless linear CRA over \semiringQ}
\label{sec:boundedness}

The goal of this section is to establish that the boundedness problem for copyless linear CRA 
over $\semiringQ$ is decidable in polynomial time, that is, 
\Cref{theorem:boundedness_rational}.

Our first step is to translate copyless linear CRA into WA with certain properties.
More precisely, the WA will be nearly deterministic, except for a single state introducing ambiguity.
The resulting automaton will be linearly ambiguous.

To operate this transformation, we need some additional notations for WA.
Fix a WA $\Aa = (\Sigma, I, F, (M_a)_{a\in \Sigma})$ and let $\rho = q_0,v_1,q_1,\dots,v_n,q_n$ be a run in $\Aa$.
We say that $\rho$ \emph{starts} in $q_0$ and \emph{ends} in $q_n$ to indicate the first and the last state, respectively. 
We define $\valinf(\rho) = v_1 \cdots v_n$ ($\valinf(\rho) = 1$ if $n = 0$).
Thus $\val(\rho) = I(q_0) \cdot \valinf(\rho) \cdot F(q_n)$.

The run $\rho$ is a \emph{$q_0$-cycle} (or simply a cycle) if $q_0 = q_n$.
A cycle is \emph{simple} if $q_0,q_1\ldots,q_{n-1}$ are all different 
(\ie only the first and the last states are the same).
We say that $\rho'$ is a \emph{subrun} of $\rho$ if $\rho' = q_i,v_{i+1},q_{i+1},\dots,v_j,q_j$ for some $1 \le i \le j \le n$.
If $\rho'$ is also a cycle then as $q_i = q_j$
the sequence $q_0,v_1,q_1,\dots, q_{i-1}, v_{i}, q_j, v_{j+1}, \ldots v_n,q_n$ obtained
by removing $\rho'$ from $\rho$ is a run of $\Aa$.

We translate copyless linear CRA into a new subclass of WA.
Note that similar observations to the following definition and lemma were made in \cite[Proposition 2]{AlmagorCMP20}.

\begin{definition}\label{def:simple_linear}
A WA $\Bb$ is a \emph{\ssla weighted automaton} if its set of states can be written as $\{p,q_1,\dots,q_n\}$ with the following properties:
\begin{enumerate}[label=(\roman*),leftmargin=0.5cm]
\item for all $a\in\Sigma$ there is a transition $p\xrightarrow{a|1} p$ and
no other transition loops on, or enters, $p$;
and \item the automaton $\Bb$ restricted to the states $\{q_1,\dots,q_n\}$ is deterministic.
\end{enumerate}
\end{definition}

We refer to the state $p$ in \Cref{def:simple_linear} as the \emph{distinguished} state of the \ssla WA.
Notice that the only ambiguity in the automaton comes from the transitions that leave the distinguished state $p$ to some other state.

\begin{restatable}{lemma}{LemSimpleLinearAmb}\label{lemma:stupidlysimplylinearlyambiguous}
Let $\Aa$ be a copyless linear CRA over $\semiringQ$.
One can build in polynomial time a \ssla WA $\Bb$ over $\semiringQ$ such that $\Aa$ is bounded if and only if $\Bb$ is bounded.
\end{restatable}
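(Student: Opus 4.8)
The plan is to convert the copyless linear CRA $\Aa = (\Sigma, Q, q_0, I, F, \X, \delta)$ into the equivalent stateless linear CRA by tracking $Q \times \X$ as registers, then exploit the copyless structure to reorganise the resulting WA. Recall that a stateless linear CRA over a set of registers $R$ is literally a WA whose dimension equals $|R|$, the matrix $M_a$ being the linear parts of the update $\sigma_a$ and $I$, $F$ being the initial/final coefficient vectors (we handle the affine constants with the usual extra register fixed to $\one$, which for $\semiringQ$ is $1$; the copyless discipline does not constrain this register). So after this first reduction we have a WA $\Aa'$, of polynomial size, equivalent to $\Aa$, whose transition matrices $M_a$ have the property that \emph{each column contains at most one nonzero entry} — this is exactly the copyless condition, which says that each register $z$ is used (with a nonzero coefficient) in at most one update expression $\sigma_a(x)$, hence $M_a[x,z]\ne 0$ for at most one $x$.

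The key structural observation is that a copyless matrix is a scaled partial-injection: for each $M_a$ there is a partial injective function $f_a$ on registers and positive rationals $c_{a}(\cdot)$ with $M_a[x,z] = c_a(z)$ if $f_a(z) = x$ and $0$ otherwise. Because partial injections compose to partial injections, the reachable part of the WA, when started from a single register (i.e.\ from a run that has "forked off" from the copyless constant-$1$ register), is \emph{deterministic}: from any state $q$ and letter $a$ there is at most one successor. The ambiguity is confined to the single column corresponding to the constant register fixed to $\one$: this column may have several nonzero entries (many updates may read the constant register). Therefore I would take $p$ to be the state corresponding to that distinguished constant register. By construction $p \xrightarrow{a|1} p$ for every $a$ (the constant is preserved), and the only non-determinism in the automaton is realised by the transitions leaving $p$; all other states behave deterministically, which is precisely Definition~\ref{def:simple_linear}. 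One must also normalise so that no transition \emph{enters} $p$ (the constant register is never the target of another register's update, so no $M_a[p,z]\ne 0$ for $z\ne p$, giving condition (i) after possibly discarding unreachable states and renaming), and so that $I$ is supported on $p$ only — if $\Aa$ starts several registers with nonzero initial values one splits into the $p$-part plus a deterministic part, which only affects the automaton by finitely many initial transitions and does not change boundedness.

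The remaining point is that this transformation preserves boundedness, not necessarily the exact function: in fact $\Aa'(w) = \Aa(w)$ already holds from the standard CRA-to-WA equivalence (\cite[Theorem~9]{AlurDDRY13}), and the normalisations above either preserve the function outright or change it only up to a bounded factor / finitely many words, which suffices since boundedness is insensitive to such changes. Finally, $\Bb$ is linearly ambiguous because the only branching happens when leaving $p$, which happens at most once per run, after which the run is deterministic, so the number of accepting runs on $w$ is at most $|w|\cdot|Q\times\X|$. The main obstacle I anticipate is the bookkeeping in the structural observation — verifying carefully that composition of copyless (scaled-partial-injection) matrices stays copyless with the constant column being the \emph{unique} source of extra nonzero entries, and handling the affine constants and the initial vector cleanly so that conditions (i) and (ii) hold exactly; the boundedness-equivalence itself is then routine.
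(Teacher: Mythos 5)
Your plan omits the key move that the paper uses, and without it the construction does not actually yield a \ssla WA. The stateless encoding with registers $Q\times\X$ plus a constant register does produce a WA whose \emph{non-constant} columns have at most one nonzero entry (this part of your structural observation is right, since copylessness of $\sigma_{q,a}$ and determinism of $\delta$ together imply that each register $(q,y)$ is read by at most one target $(q',x)$). But the constant register cannot be a single self-looping state $p$. If you tag it with states and use $|Q|$ registers $(q,z)$, the constants track the current state correctly, yet you get $|Q|$ ``constant'' states with transitions $(q,z)\to(q',z)$, not one state with a self-loop, so Condition~(i) of \cref{def:simple_linear} fails. If instead you keep a single untagged register $z$ that self-loops, then when reading $a$ the update of $(q',x)$ must aggregate the constants $c^{q,a}_x$ over \emph{every} $q$ with $\delta(q,a)$ landing in $q'$, not only the state $\Aa$ is actually in, which introduces spurious contributions that are not bounded. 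Concretely, take $Q=\{q_1,q_2\}$, $\delta(q_1,a)=(q_1,x\!\leftarrow\! x)$, $\delta(q_2,a)=(q_1,x\!\leftarrow\! x+1)$, $\delta(q_1,b)=\delta(q_2,b)=(q_2,x\!\leftarrow\!0)$, $I(x)=0$, $F(q_1,x)=1$, $F(q_2,x)=0$: then $\Aa$ is bounded by $1$, but your single-$z$ WA outputs $n$ on $a^n$ because the $+1$ from $q_2$ is added at every step even though $\Aa$ never leaves $q_1$. So this version does not even preserve boundedness.

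The paper fixes exactly this by enriching the alphabet to $\Sigma_\Bb\subseteq\Sigma\times Q\times Q$: a letter $(a,q,q')$ carries the state transition, so the transitions leaving the unique state $p$ know which constants $c^{q,a}_x$ to emit, and $p$ genuinely self-loops with weight $1$ on every letter. The price is that words over $\Sigma_\Bb$ may be ``unfaithful'' (not encoding a real run of $\Aa$), and most of the paper's proof is devoted to showing that unfaithful words cannot increase the supremum: whenever the letter sequence breaks ($p'_{i-1}\neq p_i$), all active runs that had already forked off $p$ collapse to value $0$, and one can stitch a faithful prefix that dominates the remaining value via \cref{claim:monotonicity_active}. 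So ``the boundedness-equivalence is then routine'' is not accurate either --- it is the substantive part of the argument. Your decomposition into a deterministic non-$p$ part and a single ambiguity source is the right intuition, but you need the alphabet trick (or an equivalent device) to turn the $|Q|$ constant components into the single state $p$ demanded by the definition, and then a genuine argument for supremum preservation in the presence of unfaithful runs.
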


Relying on \cref{lemma:stupidlysimplylinearlyambiguous} we may focus on \ssla WA.
We prove that unboundedness of such an automaton is characterised by certain patterns occurring in it.
\Cref{lemma:boundednesslemma} shows what happens when such patterns are not present, and it is the key technical contribution in the proof.
Then to prove \cref{theorem:boundedness_rational} we only need to detect patterns violating the assumptions of \cref{lemma:boundednesslemma}.

We define specific sets of runs based on whether they exceed a given threshold:
given $r \in \Qpos$ we set
\begin{align*}
\Runs[r](w) = \set{\rho \mid \rho \text{ is a run over } w, \valinf(\rho) > r}.
\end{align*}

\begin{lemma}\label{lemma:boundednesslemma}
Let $\Bb=(\Sigma, I, F, (M_a)_{a\in \Sigma})$ be a \ssla WA with distinguished state $p$.
Assume that for every word $u$ and every $q$-cycle $\rho$ over $u$, where $q\neq p$, both conditions hold:
\begin{enumerate}[nolistsep,label=(\roman*)]
 \item\label{pattern1} $\valinf(\rho) \leq 1$;
 \item\label{pattern2} if $\valinf(\rho) = 1 $ then $M_{u}[p,q] = 0$.
\end{enumerate}
Then $|\Runs[\frac{1}{k}](w)| \le \poly\log(k)$ for every $k\ge 2$ and every word $w$.
\end{lemma}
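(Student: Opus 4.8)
The plan is to reduce to counting one special kind of run and then to run an induction over a Simon factorisation forest of the input word, built over a finite monoid that abstracts the matrices $M_u$; the two hypotheses enter, respectively, as ``every deterministic run has bounded value'' (from (i)) and ``cycles reachable from $p$ have value uniformly below $1$'' (from (i) together with (ii)).

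First I would analyse the shape of runs. Since no transition enters $p$ except its self-loops and $\Bb$ is deterministic off $p$, a run over $w$ is exactly one of: the run staying forever in $p$ (value $1$); one of at most $n$ deterministic runs starting in $\{q_1,\dots,q_n\}$; or a \emph{branching run}, which stays in $p$ along a prefix, takes a single transition $p\xrightarrow{a|v}q$ with $q\ne p$ (the \emph{branch step}), and then runs deterministically inside $\{q_1,\dots,q_n\}$. Only branching runs can be numerous, so it suffices to bound $N(w,k)$, the number of branching runs over $w$ with $\valinf>\tfrac1k$; then $|\Runs[1/k](w)|\le N(w,k)+n+1$. I would also record an absolute bound: decomposing any walk in the deterministic part into a simple path (of length $<n$) together with finitely many closed subwalks, each of which is a $q$-cycle with $q\ne p$ and hence of value $\le1$ by (i), shows that every deterministic run has value at most $B:=\max(1,M_{\max})^{\,n}$, where $M_{\max}$ is the largest entry of $\Bb$. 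Hence every branching run has value $\le M_{\max}\cdot B$, and the whole difficulty lies in the \emph{number} of runs above the threshold.

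Next I would set up the induction. Define $\phi\colon\Sigma^*\to S$ by $\phi(u)=(f_u,Y_u)$, where $f_u(q_i)=q_j$ iff $M_u[q_i,q_j]\ne0$ (a partial function on $\{q_1,\dots,q_n\}$, as $\Bb$ is deterministic there), and $Y_u=\{q_i : M_u[p,q_i]\ne0\}$. One checks that $\phi(uv)$ depends only on $\phi(u),\phi(v)$ (composition of partial functions in the first coordinate, $Y_{uv}=Y_v\cup f_v(Y_u)$ in the second), so $S$ is a finite monoid of size $\le(n+1)^n2^n$, and by Simon's factorisation forest theorem~\cite{Simon94} every word has a factorisation forest of height $H\le3|S|$ whose internal nodes are binary or idempotent. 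I would then prove by induction on forest height $h$ that $N(w,k)\le(C(\log k+D))^{h}$ for constants $C,D$ depending only on $\Bb$. Leaves are trivial ($N\le n$). At a binary node $w=w_1w_2$, a branching run either branches in $w_2$ (its restriction to $w_2$ is a branching run over $w_2$ of the same value, and determines it, so $\le N(w_2,k)$ of them) or branches in $w_1$ and then continues deterministically over $w_2$ with value $\le B$ (so its restriction to $w_1$ is a branching run over $w_1$ of value $>\tfrac1{kB}$, and determines it, so $\le N(w_1,kB)$ of them); taking $D\ge\log B$ and $C$ large enough (in terms of $n$, $H$) absorbs the threshold change.

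The crux is the idempotent node $w=u_1\cdots u_\ell$ with $\phi(u_1)=\dots=\phi(u_\ell)=e=e^2$; write $f:=f_e$, $Y:=Y_e$. A branching run $\rho$ over $w$ stays in $p$ over $u_1\cdots u_{j-1}$ and branches inside some block $u_j$, reaching a state $q\in\{q_1,\dots,q_n\}$ at the end of $u_j$; reading $u_{j+1}$ it moves deterministically to $q':=f(q)$, which is a fixed point of $f$ (image points of an idempotent partial function are fixed) and lies in $Y$ (since $p$ reaches $q'$ over $u_ju_{j+1}$ and $\phi(u_ju_{j+1})=e$). Hence for every later block $u_i$ ($i\ge j+2$) there is a $q'$-cycle over $u_i$ with $M_{u_i}[p,q']\ne0$, so by (ii) together with (i) its value is \emph{strictly} below $1$; decomposing that cycle into simple cycles (each of value $\le1$ by (i), each a product of at most $n$ matrix entries) shows at least one factor is $<1$, hence $\le\beta$, where $\beta<1$ is the largest product of at most $n$ matrix entries that is $<1$ (in the degenerate case where no such product exists, a direct application of (i)+(ii) shows $f$ has no fixed point in $Y$, forcing $j\ge\ell-1$). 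Therefore $\valinf(\rho)\le M_{\max}B^{2}\,\beta^{\ell-j-1}$, and $\valinf(\rho)>\tfrac1k$ forces $j$ to lie among the last $O(\log k)$ blocks; for each admissible $j$ the run $\rho$ is determined by its restriction to $u_j$, a branching run over $u_j$ of value $>\tfrac1{kB}$, so $N(w,k)\le O(\log k)\cdot\max_iN(u_i,kB)$, closing the induction (again with $C,D$ chosen to absorb $\log B$). Putting everything together, $|\Runs[1/k](w)|\le N(w,k)+n+1\le(C(\log k+D))^{3|S|}+n+1=\poly\log(k)$. The main obstacle is precisely this last part: isolating the right finite monoid so that Simon's theorem applies, and — most delicately — extracting from hypotheses (i) and (ii) the \emph{uniform} bound $\beta<1$ on the value of cycles ``seen from $p$'', which is what turns an idempotent block into geometric decay and hence into an $O(\log k)$ bound on the number of surviving branch positions.
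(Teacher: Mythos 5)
Your proposal is correct in substance and follows the same high-level strategy as the paper: reduce to counting runs leaving $p$, apply Simon's factorisation forest theorem to a finite monoid derived from $\Bb$, and induct on forest height with a binary/idempotent case split. Within that shared skeleton, you take a genuinely different route on the key technical ingredient: the paper's monoid $\Mm$ consists of \emph{abstracted matrices} $\overline{M}_w$ over $\Qpos\cup\{\varepsilon\}$, where the constants $s_q,e_q$ are defined as infima over run values and entries below the threshold $e_q s_{q'}$ are collapsed to $\varepsilon$; establishing that $\Mm$ is a monoid and that the $\varepsilon$-cells mean ``value strictly below a fixed $b<1$'' costs the paper several claims (4.4--4.8). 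You instead use a coarse, purely combinatorial monoid $\phi(u)=(f_u,Y_u)$ — the deterministic partial transition function and the set of states positively reachable from $p$ — which is trivially a finite monoid and trivially a morphism. The quantitative information that the paper bakes into the $\varepsilon$-abstraction you recover locally at the idempotent node: the stabilised state $q'=f(q)$ is a fixed point of the idempotent $f$ lying in $Y$, so hypotheses (i)+(ii) force each $q'$-cycle over a later block to have value strictly below $1$, and decomposing it into simple cycles (each $\le 1$ by (i), at least one $<1$) yields a uniform $\beta<1$; this $\beta$ plays exactly the role of the paper's $b$. Your handling of the threshold ($k \mapsto kB$ at each level, with $D\ge\log B$) corresponds to the paper's height-dependent threshold $\tfrac1k a^{i-H}$. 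Overall your version is a bit more elementary (no $s_q,e_q$, no separate lemma that $\otimes$ is well-defined on abstractions), at the price of redoing the ``small self-loop'' argument by hand inside the idempotent case rather than having it pre-packaged in the monoid. One small slip to fix when writing it up: the inductive bound $(C(\log k+D))^h$ fails at leaves ($h=0$ gives $1$ while $N\le n$); use exponent $h+1$ as the paper does, or start the induction at $h=1$.
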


The constants implied by the $\poly\log$ in \cref{lemma:boundednesslemma}
depend on the rational numbers occurring in the transitions of $\Bb$.
However, it is crucial that the bound on $|\Runs[\frac{1}{k}](w)|$ does not dependent on $w$.
To get some intuition we show how the lemma concludes the proof of \cref{theorem:boundedness_rational}.

\begin{proof}[Sketch of \cref{theorem:boundedness_rational}]
If the automaton violates the assumptions of \cref{lemma:boundednesslemma}, one can construct a witness for unboundedness.
Conversely, divide all runs into $P_i = \{\rho \mid \frac{1}{c^{i}} < \valinf(\rho) \le \frac{1}{c^{i-1}}\}$ for $i \in \set{1,\ldots,n}$ and some constant $c$. 
By \cref{lemma:boundednesslemma} we have $|P_i| \le \poly(\log c^i ) = \poly(i\log c) =  \poly(i)$.
We obtain
\begin{multline*}
\Bb(w) \le \sum_{\rho \in \Runs(w)}  \val(\rho)  \le  \sum_{i=1}^n \frac{|P_i|}{c^{i-1}}  \le  \sum_{i=1}^\infty \frac{\poly(i)}{c^{i-1}}. \end{multline*}
Notice that the series converges, independent of $w$.
\end{proof}

Before establishing \cref{lemma:boundednesslemma} we need to introduce some notation and intermediary results. Roughly, our goal is to obtain a finite representation of the set of matrices $M_{w}$. This will allows us to invoke Simon's Factorisation Forest Theorem that gives a tree representation on runs on $w$, such that nodes (corresponding to subwords of $w$) have height independent on $w$. Then, intuitively, the degree of $\poly \log$ in \cref{lemma:boundednesslemma} corresponds to the height of the node.

Let $\Bb$ be as in \cref{lemma:boundednesslemma}.
As usual we will identify the dimensions of the vectors and matrices with the set of states $Q = \set{p,q_1,\ldots,q_n}$, where $p$ is the distinguished state.
Recall that: $\Bb$ is deterministic when restricted to $Q \setminus \set{p}$; $M_w[p,p] = 1$; and $M_w[q,p] = 0$ for every $q \in Q \setminus \set{p}$ and every word $w \in \Sigma^*$.
Thus for every $q \neq p$ and every $w \in \Sigma^*$, there exists at most one $q'$ such that $M_w[q,q'] > 0$.
We further observe that for every pair of states $q,q' \in Q \setminus \set{p}$ and every word $w \in \Sigma^*$ there is at most one run $\rho$ over $w$ starting in $q$ and ending in $q'$ such that $\valinf(\rho) >0$.
If there is such a run then we will denote $\rho = \rununique{q,w,q'}$ and $\valinf(\rho) = M_w[q,q']$.
We say that $r \in \Qpos$ is an admissible weight if there exists a word $w$ and a run $\rho$ over $w$ such that $\valinf(\rho) = r$.

For every $q \in Q \setminus \set{p}$ let
\begin{align}\label{eqsq}
\begin{split}
s_{q} &= \min_{w \in \Sigma^*} \set{\frac{1}{\valinf(\rho)} \mid \rho \in \Runs(w), \rho \text{ starts in } q };\\
e_{q} &= \min_{w \in \Sigma^*} \set{\frac{1}{\valinf(\rho)} \mid \rho \in \Runs(w), \rho \text{ ends in } q}.
\end{split}
\end{align}

\begin{restatable}{claim}{ClaimLower}\label{claim:lower}
$0 < s_q, e_q \le 1$ and both are computable rationals.
\end{restatable}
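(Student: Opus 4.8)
\textbf{Proof plan for \cref{claim:lower}.}

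The plan is to treat the two quantities $s_q$ and $e_q$ symmetrically, since the argument for $s_q$ (minimising $1/\valinf(\rho)$ over runs starting in $q$) is the same as the one for $e_q$ (minimising over runs ending in $q$) by reading the automaton backwards; so I will describe it for $s_q$. First I would note that the set over which the minimum is taken is nonempty: since $q\neq p$ is a genuine state of $\Bb$ there is at least one word $w$ and one run $\rho$ starting in $q$ with $\valinf(\rho)>0$ (indeed, any single-letter run if one exists, or the empty run which has $\valinf = 1 > 0$), so that $1/\valinf(\rho)$ is a well-defined positive rational and the set of candidate values is a nonempty subset of $\Qpos$. Hence an infimum exists in $\Rposz$; the content of the claim is that it is attained, is strictly positive, is at most $1$, and is computable.

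For the upper bound $s_q \le 1$: the empty run $\rho_\eps$ starting (and ending) in $q$ has $\valinf(\rho_\eps) = 1$, hence $1/\valinf(\rho_\eps) = 1$ is in the candidate set, so the minimum is at most $1$. For strict positivity and attainment, the key observation is that, under the hypotheses \ref{pattern1}--\ref{pattern2} of \cref{lemma:boundednesslemma}, we may restrict attention to runs with no nontrivial cycles. Concretely: given any run $\rho$ over some word $w$ starting in $q$ with $\valinf(\rho)>0$, decompose it by repeatedly excising simple $q'$-cycles (for $q'\neq p$, which is all of them since no transition enters $p$ except the self-loops, and a run starting in $q\neq p$ never reaches $p$). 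By hypothesis \ref{pattern1} each excised cycle has $\valinf \le 1$, so excising it does not decrease $\valinf$, i.e. does not increase $1/\valinf$. After removing all cycles we obtain an acyclic run $\rho_0$ starting in $q$ with $\valinf(\rho_0) \ge \valinf(\rho) > 0$, hence $1/\valinf(\rho_0) \le 1/\valinf(\rho)$. Therefore the minimum over \emph{all} runs equals the minimum over acyclic runs. There are only finitely many acyclic runs (their length is bounded by $|Q|$, since a run through states $q_0,\dots,q_m$ with all states distinct has $m < |Q|$, and for each state sequence and each letter there is at most one positive transition weight by determinism off $p$), so this is a minimum over a finite nonempty set of positive rationals: it is attained, strictly positive, and rational. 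Computability follows because one can enumerate all acyclic runs (bounded length, finite alphabet, finitely many states), read off their weights $\valinf(\rho_0)$ — each a product of finitely many rational transition weights appearing in $\Bb$ — and take the maximum of these weights; $s_q$ is the reciprocal.

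The step I expect to be the main (though modest) obstacle is the cycle-excision argument: one must check carefully that excising a simple cycle from a run starting at $q$ again yields a \emph{run} of $\Bb$ starting at $q$ (this is exactly the subrun-removal observation recorded just after \cref{def:simple_linear}), that the excised piece is indeed a $q'$-cycle with $q'\neq p$ so that hypothesis \ref{pattern1} applies, and that iterating the excision terminates with an acyclic run of length $< |Q|$ — for which a simple induction on run length suffices. Hypothesis \ref{pattern2} is not needed for this claim; it will be used later, when the contribution of the distinguished state $p$ is taken into account. For $e_q$ the identical argument applies to runs \emph{ending} in $q$, excising cycles and noting $\valinf$ only increases, giving a finite nonempty set of positive rationals over which the minimum is taken.
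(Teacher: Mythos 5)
Your proposal is correct and takes essentially the same approach as the paper: the paper proves $s_q, e_q \le 1$ via the empty run, and then remarks that positivity, attainment, and computability follow by the same cycle-excision argument used for \cref{claim:upper} (remove cycles, which by hypothesis~\ref{pattern1} have $\valinf \le 1$, until the word has length $\le |Q|$, then take the maximum over a finite set). The only point you gloss over — that for $e_q$ a run ending in $q$ may pass through $p$, and cycles at $p$ are not covered by hypothesis~\ref{pattern1} — is harmless since $p$-cycles have $\valinf$ exactly $1$ in a \ssla WA; the paper elides this too.
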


\begin{restatable}{claim}{ClaimFinitely}\label{claim:finitely}
Let $x > 0$.
There are finitely many admissible weights larger than $x$.
\end{restatable}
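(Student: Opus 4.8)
\textbf{Proof plan for Claim~\ref{claim:finitely}.}
The plan is to show that any admissible weight $r > x$ can be produced by a run whose length is bounded by a constant depending only on $\Bb$ and $x$; since there are only finitely many words of bounded length, this immediately gives finitely many such weights.

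First I would recall the structure of runs in a \ssla WA. A run $\rho$ over $w$ with $\valinf(\rho) > 0$ either stays entirely inside the deterministic part $Q \setminus \set{p}$, or it spends a prefix looping on $p$ (contributing weight $1$), takes a single transition leaving $p$, and then continues deterministically inside $Q \setminus \set{p}$. In the first case, if $\rho$ is not simple it contains a $q$-cycle $\rho'$ for some $q \neq p$; by assumption~\ref{pattern1} of \cref{lemma:boundednesslemma} we have $\valinf(\rho') \le 1$, so removing $\rho'$ from $\rho$ does not decrease $\valinf$. Hence we may iteratively excise all cycles and be left with a run $\rho_0$ of length at most $|Q|$ with $\valinf(\rho_0) \ge \valinf(\rho) = r$. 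Wait --- this only bounds the \emph{reduced} run, not $r$ itself in terms of $x$. So the argument needs to go the other way: I would instead bound how many cycles can be traversed while keeping $\valinf > x$.

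So the key step is the following. Let $w_{\min}$ be the maximal weight of a simple cycle that is $< 1$ (if all simple cycles have weight exactly $1$ or there are none, the situation is even simpler and handled separately), and note $0 < w_{\min} < 1$. Decompose $\rho$ into a ``skeleton'' run of length at most $|Q|$ together with a collection of simple cycles inserted along it (this is the standard loop decomposition of a walk in a finite graph). Each inserted simple cycle has weight $\le 1$ by~\ref{pattern1}. Those of weight $= 1$ can be removed without changing $\valinf$, so assume all inserted cycles have weight $\le w_{\min} < 1$. If there are $m$ such cycles then $\valinf(\rho) \le M \cdot w_{\min}^m$, where $M$ is the largest weight of a run of length at most $|Q|$ (a finite quantity). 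For $\valinf(\rho) = r > x$ this forces $m \le \log_{w_{\min}}(x/M)$, a constant. Together with the skeleton, $\rho$ has length at most $|Q| + m \cdot |Q| = \Oh(1)$, depending only on $\Bb$ and $x$.

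It remains to handle the case where $\rho$ begins with the $p$-loop: there the prefix on $p$ contributes weight exactly $1$, so $r = \valinf(\rho)$ equals the weight of the remaining part of $\rho$, which lives in the deterministic part and to which the previous bound applies verbatim. The main obstacle I anticipate is the bookkeeping around cycles of weight exactly $1$: one must argue that they can be deleted (which is fine for the \emph{value} $\valinf$) even though this changes the underlying run. Since the claim only concerns the set of admissible \emph{weights}, not runs, this is harmless --- after deleting all weight-$1$ cycles we obtain a possibly different run with the same $\valinf = r$, and it is this run whose length we bound. Thus every admissible $r > x$ arises from a run of length $\le N$ for a computable constant $N = N(\Bb, x)$, and there are at most $|\Sigma|^{N}$ such runs, hence finitely many such weights. \qedhere
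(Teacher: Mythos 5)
Your proposal is correct and takes essentially the same approach as the paper's proof: both remove cycles of weight exactly $1$ (justified by assumption~\ref{pattern1} of \cref{lemma:boundednesslemma} and the fact that this does not change $\valinf$), fix a constant strictly smaller than $1$ that dominates every simple-cycle weight below $1$ (your $w_{\min}$, the paper's $\alpha$), bound the number of remaining cycles in terms of $x$ and a finite upper bound on short-run weights (your $M$, the paper's $a$ from \cref{claim:upper}), and conclude that all admissible weights above $x$ arise from runs of bounded length, of which there are finitely many.
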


Recall that $M_w \in (\Qpos)^{Q \times Q}$ for every $w \in \Sigma^*$. Let $\varepsilon \not \in \Qpos$ be a fresh symbol.
We define the abstraction $\overline{M} \in (\Qpos \cup \set{\varepsilon})^{Q \times Q}$ as follows
\begin{align}\label{eq:matrix}
\mkern-17mu\overline{M}[q,q'] \;=\; \begin{cases}
                        0 & \text{ if } M[q,q'] = 0\\
                        M[q,q'] &\text{ if } q\neq p \text{ and } M[q,q'] \ge e_{q} \cdot s_{q'}\\
                        \varepsilon & \text{ otherwise},
                       \end{cases}
\end{align}
for all $q,q'\in Q$, and $e_q$, $s_{q'}$ as defined in \cref{eqsq}.
In words, $\overline{M}$ is the same as $M$, but some positive entries are replaced with $\varepsilon$. Notice that by \cref{claim:finitely} this set of matrices is finite, as intended.
The special symbol $\varepsilon$
appears within $\overline{M}_w$ in two cases: it replaces $M_w[q,q']$ if $q \neq p$ and this value is small enough; and it is used to indicate whether there are any positive runs from $p$ to $q'$ (their exact values are not important). 
In particular, all non-zero weights of transitions from $p$ are set to $\varepsilon$.
An example translation is in \cref{fig:translation}.
The claim below states the purpose of $\varepsilon$ formally.

\begin{figure*}
\centering
\begin{subfigure}[t]{(\textwidth-\columnsep)/2}\centering
\includegraphics[height=2.5cm]{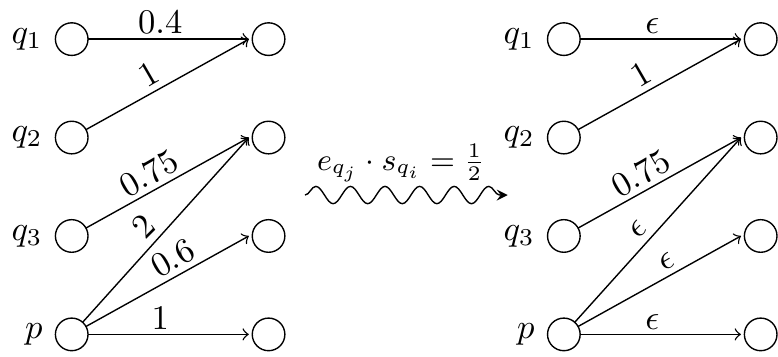}
\Description{An element translated into the monoid element.}
\caption{Translation from $M_w$ to $\overline{M}_w$. We assume that $e_{q_i} \cdot s_{q_j} = \frac{1}{2}$ for all $i$ and $j$ for simplicity.}
\end{subfigure} \hfill
\begin{subfigure}[t]{(\textwidth-\columnsep)/2}\centering\includegraphics[height=2.5cm]{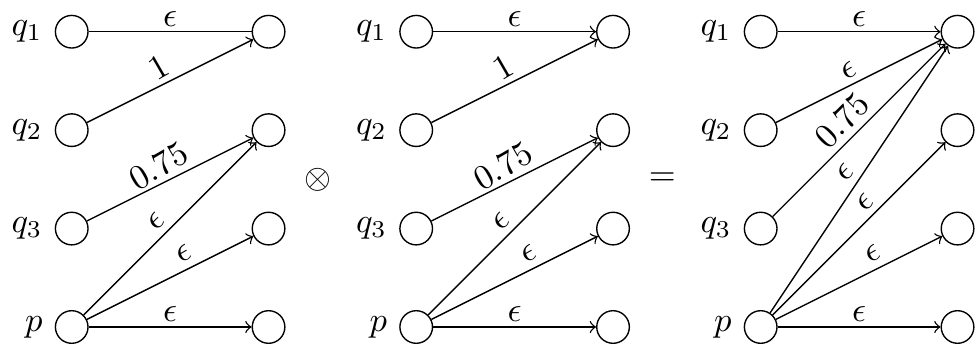}
\Description{An example of monoid multiplication.}
\caption{An example product of two elements.}
\end{subfigure}
\caption{Abstracted matrices $\Mm = \set{\overline{M}_w \mid w \in \Sigma^*}$.}
\label{fig:translation}
\end{figure*}

\begin{restatable}{claim}{ClaimEps}\label{claim:epsilon}
For every $w \in \Sigma^*$ and $q,q' \in Q \setminus \set{p}$:
\begin{enumerate}[nolistsep,leftmargin=*]
  \item if $\overline{M}_w[q,q'] \neq 0$,
then $\overline{M}_w[q,q'] = \varepsilon$ if and only if, for every run $\rho$ such that $\rununique{q,w,q'}$ is its subrun, $\valinf(\rho) < 1$.
 \item $\overline{M}_w[p,q'] = \varepsilon$ if and only if $M_w[p,q'] > 0$.
\end{enumerate}
\end{restatable}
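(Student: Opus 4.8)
The plan is to unwind the definition of $\overline{M}_w$ in \cref{eq:matrix} and connect the threshold $e_q \cdot s_{q'}$ with the definitions of $s_q$ and $e_q$ in \cref{eqsq}. For part (1), fix $q,q' \in Q \setminus \set{p}$ with $\overline{M}_w[q,q'] \neq 0$, so there is a unique positive run $\rununique{q,w,q'}$ over $w$ with $\valinf(\rununique{q,w,q'}) = M_w[q,q'] > 0$. By definition, $\overline{M}_w[q,q'] \neq \varepsilon$ precisely when $M_w[q,q'] \ge e_q \cdot s_{q'}$; so I must show this inequality holds if and only if there exists a run $\rho$ extending $\rununique{q,w,q'}$ as a subrun with $\valinf(\rho) \ge 1$. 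For the forward direction, I would take the witnesses achieving (or approximately achieving) the minima defining $e_q$ and $s_{q'}$: a word $u_1$ and run $\rho_1$ ending in $q$ with $\valinf(\rho_1) = 1/e_q$, and a word $u_2$ and run $\rho_2$ starting in $q'$ with $\valinf(\rho_2) = 1/s_{q'}$ — here I should be careful whether the minima are actually attained, but \cref{claim:lower} guarantees $s_q, e_q$ are computable rationals, and the $\Runs$-based definitions together with \cref{claim:finitely} (finitely many admissible weights above any positive threshold) should ensure attainment. Concatenating $\rho_1$, $\rununique{q,w,q'}$, and $\rho_2$ gives a run over $u_1 w u_2$ having $\rununique{q,w,q'}$ as a subrun with value $\frac{1}{e_q} \cdot M_w[q,q'] \cdot \frac{1}{s_{q'}} \ge 1$. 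For the converse, if some run $\rho$ with $\rununique{q,w,q'}$ as a subrun satisfies $\valinf(\rho) \ge 1$, write $\rho$ as the concatenation of a prefix $\rho'$ ending in $q$, the middle $\rununique{q,w,q'}$, and a suffix $\rho''$ starting in $q'$. Then $\rho'$ witnesses $\frac{1}{\valinf(\rho')} \ge s_q$ wait — rather $\valinf(\rho') \le 1/e_q$ is false in general; instead $\rho'$ is a run ending in $q$, so $\frac{1}{\valinf(\rho')} \ge e_q$, i.e.\ $\valinf(\rho') \le 1/e_q$; similarly $\valinf(\rho'') \le 1/s_{q'}$; hence $1 \le \valinf(\rho) = \valinf(\rho') \cdot M_w[q,q'] \cdot \valinf(\rho'') \le \frac{1}{e_q} \cdot M_w[q,q'] \cdot \frac{1}{s_{q'}}$, which rearranges to $M_w[q,q'] \ge e_q s_{q'}$, as required. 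Negating both sides of the equivalence turns "$M_w[q,q'] < e_q s_{q'}$, i.e.\ $\overline{M}_w[q,q'] = \varepsilon$" into "every such $\rho$ has $\valinf(\rho) < 1$", which is exactly the statement. (One must double-check the strict-versus-nonstrict inequalities line up; using $\Runs = \Runs[0]$, which requires $\valinf > 0$, should make this clean, with the "$<1$" on one side matching "$<e_q s_{q'}$" on the other.)

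For part (2), this is essentially immediate from the definition: the second and third cases of \cref{eq:matrix} are distinguished by the condition "$q \neq p$", so for $q = p$ every nonzero entry $M_w[p,q']$ falls into the "otherwise" case and is set to $\varepsilon$, while zero entries stay $0$. Thus $\overline{M}_w[p,q'] = \varepsilon \iff M_w[p,q'] \neq 0 \iff M_w[p,q'] > 0$ (recall all matrix entries are nonnegative). This needs only a one-line remark.

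The main obstacle I expect is the attainment of the minima in \cref{eqsq}: the definitions take an infimum over all words, and a priori nothing forces it to be realised. The resolution should come from \cref{claim:finitely}: since $s_q, e_q \le 1$ by \cref{claim:lower}, any candidate run contributing to the minimum has $\valinf(\rho) \ge 1$ wait, rather $\frac{1}{\valinf(\rho)} $ close to $s_q \le 1$ means $\valinf(\rho) \ge$ roughly $1$, hence $\valinf(\rho)$ is an admissible weight bounded below by a fixed positive constant, and there are only finitely many such; so the minimum is attained. I would state this carefully as part of the proof, perhaps already folding it into the proof of \cref{claim:lower}. A secondary subtlety is making sure that "$\rununique{q,w,q'}$ is a subrun of $\rho$" is used consistently — in particular that prepending/appending runs keeps positivity (no zero weights introduced), which holds because concatenation of positive-valued runs through matching states stays positive in $\semiringQ$. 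With these two points handled, the rest is the routine inequality manipulation sketched above.
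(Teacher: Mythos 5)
Your proposal is correct and takes essentially the same route as the paper's proof: translate $\overline{M}_w[q,q']=\varepsilon$ into the threshold inequality $M_w[q,q']<e_q\cdot s_{q'}$, decompose any extending run $\rho$ into a prefix $\rho'$ ending at $q$, the middle piece $\rununique{q,w,q'}$, and a suffix $\rho''$ starting at $q'$, and use $\valinf(\rho')\cdot\valinf(\rho'')\le \frac{1}{e_q\cdot s_{q'}}$. The one thing you add is spelling out that the minima in \cref{eqsq} are attained (via \cref{claim:finitely}), which the paper leaves implicit but which is indeed needed for the converse direction where you must exhibit a run $\rho$ with $\valinf(\rho)\ge 1$ when $M_w[q,q']\ge e_q\cdot s_{q'}$.
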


We define the sum, product and order of $\varepsilon$ with rationals.
One can think that $\varepsilon$ represents a number above zero but `smaller' than the positive rationals. The only operation where this intuition breaks is addition, where $\varepsilon$ is an absorbing element.
This will be explained later.
Formally:
$
0 < \varepsilon < r;
\varepsilon \cdot 0 = 0 \cdot \varepsilon  = 0,\ \varepsilon \cdot r = r \cdot \varepsilon = \varepsilon \cdot \varepsilon = \varepsilon;
 \varepsilon + x = x + \varepsilon = \varepsilon + \varepsilon = \varepsilon
$
for every $x \in \Qpos$ and $r \in \Qposs$.

We define the product of abstracted matrices.
For every $M,N \in (\Qpos \cup \set{\varepsilon})^{Q \times Q}$ let $M N$ be the usual product of matrices, \ie $M N[q,q'] = \sum_{q'' \in Q} M[q,q''] \cdot N[q'',q']$.
Then we define 
\begin{align*}M \otimes N[q,q'] \;=\; \begin{cases}
                     \varepsilon & \text{ if } 0 < M  N[q,q'] < e_{q} \cdot s_{q'} \\
                     M N[q,q'] & \text{ otherwise}.
                    \end{cases}
\end{align*}
For matrices $\overline{M}_w, \overline{M}_u$ the states in $Q\setminus \set{p}$ are deterministic, thus to define $\overline{M}_w \otimes \overline{M}_u[q,q']$ for $q \neq p$ we will need to sum elements at most one of which is nonzero.
In case of $q = p$, we sum several positive elements.
However, in this case we will only be interested in whether
the transition is positive or zero; this explains our definition
of addition with $\varepsilon$.

\begin{restatable}{claim}{ClaimMonoid}\label{claim:monoid}
The set $\set{\overline{M}_w \mid w \in \Sigma^*}$ is finite and $\overline{M}_{wu} = \overline{M}_w \otimes \overline{M}_u$ for every $w,u\in \Sigma^*$.
Thus $\set{\overline{M}_w \mid w \in \Sigma^*}$ is a finite monoid with the product $\otimes$.
\end{restatable}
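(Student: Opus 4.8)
The plan is to prove the two assertions separately: first that the set $\Mm = \set{\overline{M}_w \mid w \in \Sigma^*}$ is finite, and second the multiplicativity identity $\overline{M}_{wu} = \overline{M}_w \otimes \overline{M}_u$; the monoid statement then follows since $\otimes$ is easily checked to be associative (it is derived from ordinary matrix product followed by a pointwise thresholding that only depends on the source/target coordinates) and $\overline{M}_\epsilon$, the thresholded identity, is a two-sided unit. For finiteness, I would argue that every entry of $\overline{M}_w$ lies in a fixed finite set. An entry $\overline{M}_w[q,q']$ is either $0$, or $\varepsilon$, or equals $M_w[q,q']$ in the case $q\neq p$ together with $M_w[q,q'] \ge e_q \cdot s_{q'}$. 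In that last case $M_w[q,q'] = \valinf(\rununique{q,w,q'})$ is an admissible weight, and it is bounded below by the strictly positive constant $\min_{q,q'\in Q\setminus\set p} e_q \cdot s_{q'}$ (each $e_q, s_{q'} > 0$ by \cref{claim:lower}). By \cref{claim:finitely} there are only finitely many admissible weights above this constant, so each of the finitely many entries ranges over a finite set, hence $\Mm$ is finite.

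For the multiplicativity identity, fix $w,u$ and put $x = M_{wu}[q,q'] = (M_w M_u)[q,q'] = \sum_{q''} M_w[q,q'']\,M_u[q'',q']$. The goal is to show $\overline{M}_{wu}[q,q'] = (\overline{M}_w \otimes \overline{M}_u)[q,q']$, i.e. that applying the threshold $e_q\cdot s_{q'}$ to $x$ gives the same result as first abstracting $M_w$ and $M_u$, then multiplying with ordinary matrix product (using the arithmetic of $\varepsilon$), then applying the same threshold. I would split on whether $q = p$. If $q = p$: both sides only record whether the $(p,q')$-entry is $0$ or positive (recall all positive $(p,\cdot)$-entries in $\overline M$ are $\varepsilon$, and $\varepsilon$ is absorbing under $+$); and $x > 0$ iff at least one summand $M_w[p,q'']\,M_u[q'',q']$ is positive iff at least one product $\overline M_w[p,q'']\,\overline M_u[q'',q']$ is nonzero — here one uses that abstraction never turns a positive entry into $0$ — so the two sides agree. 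If $q \neq p$: by determinism of $\Bb$ on $Q\setminus\set p$ there is at most one $q''$ with $M_w[q,q''] > 0$, and then at most one contributes to the sum, so $x = M_w[q,q'']\,M_u[q'',q']$ is a single product (or $x = 0$, in which case both sides are $0$ since abstraction preserves zeros). Now I would check: if either factor was abstracted to $\varepsilon$, then by definition of $\otimes$-arithmetic the product on the abstracted side is $\varepsilon$, and I must verify that $\overline{M}_{wu}[q,q'] = \varepsilon$ as well, i.e. that $0 < x < e_q\cdot s_{q'}$; this is where one invokes \cref{claim:epsilon} and the definitions of $s_{q''}, e_{q''}$ in \cref{eqsq} — a factor being $\varepsilon$ means the corresponding middle-weight is ``too small to extend to a run of value $\ge 1$'', which forces $x = M_w[q,q'']M_u[q'',q'']\cdots$ below the threshold $e_q \cdot s_{q'}$. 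Conversely, if both factors survived abstraction (so the abstracted product is the honest rational product $x$), then $\overline{M}_{wu}[q,q']$ and $(\overline M_w\otimes \overline M_u)[q,q']$ apply exactly the same threshold to the same value $x$, so they coincide.

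The main obstacle I expect is the $q\neq p$, ``$\varepsilon$-factor'' case: one has to argue that whenever the middle weight $M_w[q,q'']$ is below $e_q\cdot s_{q''}$ (resp. $M_u[q'',q']$ below $e_{q''}\cdot s_{q'}$), the composed weight $x = M_w[q,q'']\cdot M_u[q'',q']$ is genuinely below $e_q\cdot s_{q'}$, so that the abstraction ``detects'' smallness compositionally. This amounts to showing $e_q \cdot s_{q''} \cdot e_{q''} \cdot s_{q'} \le e_q \cdot s_{q'}$, which reduces to $s_{q''}\cdot e_{q''} \le 1$; but $s_{q''} \le 1$ and $e_{q''}\le 1$ by \cref{claim:lower}, so this holds. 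The remaining subtlety is to make sure no \emph{spurious} $\varepsilon$ is produced, i.e. that if the abstracted product is $\varepsilon$ then indeed $x>0$ (not $x=0$): this is immediate because an $\varepsilon$ on the abstracted side can only arise from a product of two nonzero abstracted entries, and abstraction turns positive to positive. Assembling these observations entry-by-entry over all $(q,q')$ yields $\overline{M}_{wu} = \overline{M}_w\otimes\overline{M}_u$, and finiteness together with associativity of $\otimes$ and the unit $\overline{M}_\epsilon$ completes the proof that $(\Mm,\otimes)$ is a finite monoid.
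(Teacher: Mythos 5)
Your overall structure mirrors the paper's proof (determinism gives a single middle state $q''$, abstraction preserves zeros and positives, both-rational factors reduce to the same threshold test), and your finiteness argument is correct. But the step you yourself flag as the ``main obstacle''—the $\varepsilon$-factor case for $q\neq p$—is where the argument has a real gap. Your reduction to $e_q\cdot s_{q''}\cdot e_{q''}\cdot s_{q'}\le e_q\cdot s_{q'}$, i.e.\ $s_{q''}\,e_{q''}\le 1$, only upper-bounds $x=M_w[q,q'']\,M_u[q'',q']$ when \emph{both} factors lie below their respective thresholds. The case that actually needs work is when exactly one factor is $\varepsilon$, say $M_w[q,q'']<e_q\,s_{q''}$ but $M_u[q'',q']\ge e_{q''}\,s_{q'}$ (an honest rational). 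Then the only bound your inequality yields is $x < e_q\,s_{q''}\cdot M_u[q'',q']$, and $s_{q''}\,e_{q''}\le 1$ does not let you continue to $e_q\,s_{q'}$; combined with $M_u[q'',q']\ge e_{q''}s_{q'}$ it gives a lower bound on $s_{q''}M_u[q'',q']$, not the upper bound you need, and the global admissible-weight bound $a$ is far too coarse.

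What is actually needed is the sharper fact $s_{q''}\cdot M_u[q'',q']\le s_{q'}$ (and symmetrically $M_w[q,q'']\cdot e_{q''}\le e_q$), which follows directly from the definitions in~\eqref{eqsq}: prepending $\rununique{q'',u,q'}$ to a run starting at $q'$ that attains $\tfrac{1}{s_{q'}}$ yields a run from $q''$ of value $M_u[q'',q']\cdot\tfrac{1}{s_{q'}}$, so $\tfrac{1}{s_{q''}}\ge M_u[q'',q']\cdot\tfrac{1}{s_{q'}}$. The paper instead sidesteps the arithmetic entirely by invoking \cref{claim:epsilon} twice: $\overline M_w[q,q'']=\varepsilon$ means every run containing $\rununique{q,w,q''}$ has $\valinf<1$; any run containing $\rununique{q,wu,q'}$ contains $\rununique{q,w,q''}$ (and $\rununique{q'',u,q'}$) as a subrun, hence also has $\valinf<1$, so $\overline M_{wu}[q,q']=\varepsilon$. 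Either fix is short, but as written your inequality is genuinely insufficient for the single-$\varepsilon$ case.
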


We let $\Mm = \set{\overline{M}_w \mid w \in \Sigma^*}$  denote the finite monoid of \cref{claim:monoid}.
An element $M \in \Mm$ is \emph{idempotent} if $M \otimes M = M$.

Consider a sequence of elements $e_1, e_2, \ldots e_n$ from $\Mm$.
A factorisation of these elements is a labelled tree whose set of nodes is a subset of $\set{(i,j) \mid 1 \le i \le j \le n}$.
Intuitively, a node $(i,j)$ corresponds to an infix $e_i,\ldots,e_j$.
Formally: the leaves are $(1,1), \ldots, (n,n)$; the root is $(1,n)$; and for every $(i,j)$ its children are $(i_0+1,i_1)$, $(i_1+1, i_2)$, \ldots, $(i_{s-1}+1, i_s)$, where $i -1 = i_0 < i_1 < i_2 \ldots < i_s = j$.
The index $i_0 = i-1$ is chosen so that even the first pair $(i_0 + 1, i_1) = (i,i_1)$ can be expressed as $(i_{x-1}+1,i_x)$.
Every node $(i,j)$ is labelled with $e_i \otimes e_{i+1} \otimes \ldots \otimes e_j \in \Mm$.
Notice that the label of every parent is equal to the product of the labels of its children in the right order.
We say that a node is idempotent if its label is idempotent.
We will use the following result from \cite{Simon90}.

\begin{lemma}[Simon's Factorisation Forest Theorem]\label{simon}
Consider a sequence of elements $e_1, e_2, \ldots, e_n$ from a finite monoid $S$.
There exists a factorisation into a tree of height at most $9|S|$ such that every inner node has either two children, or all its children are idempotents with the same label.
\end{lemma}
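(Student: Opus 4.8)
The final statement is Simon's Factorisation Forest Theorem, which the paper uses as a black box; the plan would be to reprove it following Simon's original argument \cite{Simon90} (or the later streamlined accounts of Colcombet and of Kufleitner), by induction on $|S|$. Write $h(S)$ for the least height $d$ such that every finite sequence over $S$ admits a factorisation forest of height $\le d$ all of whose inner nodes are either binary or have all children idempotent with one common label. For $|S| = 1$ one has $h(S) \le 1$, and the goal is $h(S) \le 9|S|$ in general. Two facts about finite monoids drive the whole proof: Green's preorders $\le_{\mathcal{R}},\le_{\mathcal{L}},\le_{\mathcal{J}}$, and \emph{stability}, i.e.\ $s \mathrel{\mathcal{J}} st \Rightarrow s \mathrel{\mathcal{R}} st$ together with its left--right dual.

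First I would split a general sequence into \emph{stable} blocks. Given $e_1,\dots,e_n$, look at the prefix products $p_i = e_1 \otimes \cdots \otimes e_i$; since each product is $\le_{\mathcal{J}}$ all of its factors, the $p_i$ are non-increasing for $\le_{\mathcal{J}}$, so they pass through the (at most $|S|$) $\mathcal{J}$-classes in a strictly decreasing, contiguous fashion. Cutting at the $\le |S|-1$ points where the $\mathcal{J}$-class of $p_i$ drops and hanging the resulting $\le |S|$ blocks off a left comb of binary nodes costs $\le |S|$ in height, and leaves one to factorise each \emph{$J$-stable} block --- one all of whose prefix products lie in a single $\mathcal{J}$-class $J$ --- within height roughly $8|S|$.

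For a $J$-stable block, stability forces all prefix products into one $\mathcal{R}$-class of $J$ and (dually) all suffix products into one $\mathcal{L}$-class; such a $J$ must be regular, so $J \cup \{0\}$ has the Rees matrix form $\mathcal{M}^0(G;I,\Lambda;P)$ over a maximal subgroup $G$. Modulo the $\le |\Lambda|$ possible column coordinates and the letters lying strictly $\mathcal{J}$-above $J$ (which act transparently on prefix products and can be grouped away cheaply), the block reduces to a sequence over the group $G$; and a sequence over a group admits a short factorisation forest, since whenever a prefix product repeats the intervening factor evaluates to the identity --- the unique, hence common, idempotent of the group --- so one peels off idempotent-children nodes greedily. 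The auxiliary monoids ($G$ and the row/column bookkeeping) are all of size $< |S|$ when $|S| > 1$, so the induction hypothesis applies to them, and Simon's careful accounting of the constants adds up to the claimed $h(S) \le 9|S|$. The precise constant is irrelevant for how \Cref{simon} is used here --- only independence of the height from $n$ matters.

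The main difficulty is exactly the $J$-stable case: converting the slogan ``a stable block behaves like a word over a group'' into an honest factorisation forest of height linear in $|S|$, which needs the Rees--Suschkewitsch structure theorem, the stability lemma, a correct placement of the idempotent-children nodes for the group part, and an induction organised so that every auxiliary monoid is strictly smaller than $S$. The reduction to stable blocks along the $\mathcal{J}$-chain of prefix products is routine by comparison, as is the final constant-chasing once the recursive structure is in place.
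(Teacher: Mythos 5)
The paper does not prove this lemma: it is imported as a black box from Simon's 1990 paper \cite{Simon90}, so there is no internal proof to compare against. Your sketch is a faithful outline of the standard argument from that reference — induction on $|S|$, cutting along the $\mathcal{J}$-chain of prefix products, stability and the Rees--Suschkewitsch structure for the stable blocks, and the pigeonhole/idempotent argument for the group case — and you correctly flag the stable-block case as the crux and the exact constant as immaterial for how the paper uses the result. One technical caveat worth noting: cutting at the positions where the $\mathcal{J}$-class of the prefix products drops yields blocks whose \emph{global} prefix products lie in one $\mathcal{J}$-class, but not blocks whose own infix products are $\mathcal{J}$-smooth, so the recursion must be organised around smooth factors (all infix products $e_i\cdots e_j$ in a single $\mathcal{J}$-class) as in \cite{Simon90} and the later accounts of Colcombet and Kufleitner; as written, your plan glosses over this step.
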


We can now establish \cref{lemma:boundednesslemma}.

\begin{proof}[Proof of \cref{lemma:boundednesslemma}]
Fix $k \ge 2$, $w = w_1\ldots w_n \in \Sigma^*$ and let $s_q, e_q$ be defined as in \cref{eqsq} for all $q\in Q\setminus\{p\}$.
Given $1 \le s \le t \le n$ we will denote the infix $w_{s,t} = w_s \ldots w_t$.
Let $a = \max\set{\frac{1}{e_q \cdot s_{q'}} \mid q,q' \in Q \setminus \set{p}}$.
Notice that $a \ge 1$ and that all admissible weights are bounded by $a$.
Let $b$ be some rational number such that: $b < 1$; and for all $w \in \Sigma^*$ and $q,q' \in Q \setminus \set{p}$ if $\overline{M}_w[q,q'] = \varepsilon$ then $M_w[q,q'] < b$.
Notice that by \cref{claim:finitely} $b$ is well-defined, unless $\varepsilon$ does not occur in any matrix in $\Mm$ and then $b$'s value will not be relevant (fix \eg $b = \frac{1}{2}$ then).
Consider a factorisation from \cref{simon} of $\overline{M}_{w_1}$, \ldots, $\overline{M}_{w_n}$ and let $H \le 9|\Mm|$ be its height. 
We also fix a constant $\theta = \max(2, |Q|, 1 + H\log_{\frac{1}{b}}a)$
(the choice will become clear in the following).

\begin{restatable}{claim}{ClaimIdemp}\label{claim:idempotents}
Let $M \in \Mm$ be an idempotent and let $w^1, \ldots, w^m$ be words such that $\overline{M}_{w^i} = M$ for all $i \in \set{1,\ldots,m}$.
Let $\rho \in \Runs(w^1 \ldots w^m)$ that starts in $p$ and let $\rho_1,\ldots, \rho_m$ be subruns of $\rho$ on the corresponding words $w^1,\ldots, w^m$.
If $q_1$, \ldots, $q_m$ is the sequence of states where the $\rho_1$,\ldots, $\rho_m$ end, respectively, then there exists $i \in \set{1,\ldots,m}$ such that: $q_1 = q_2 \ldots = q_{i-1} = p$ and $q_{i+1} = q_{i+2} \ldots = q_m$.
Moreover, either $i = m$ or $M[q_{i+1},q_{i+1}] = \varepsilon$.
\end{restatable}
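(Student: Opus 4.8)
The plan is to combine the rigid shape of \ssla automata --- the distinguished state $p$ can never be re-entered --- with the fact that $M$ is idempotent in two senses at once: as an element of the monoid $\Mm$, and through the partial map it induces on $Q\setminus\{p\}$. First I would locate the ``turning point'' of $\rho$. By \cref{def:simple_linear} no transition enters $p$, so along any run the visited states follow the pattern ``$p,\dots,p$, then never $p$ again''. Since $\rho$ starts in $p$, letting $i$ be the least index with $q_i\neq p$ (or $i=m$ if there is no such index) gives $q_1=\dots=q_{i-1}=p$ for free; if $i=m$ we are done, so assume $i<m$. Then $\rho_i$ runs from $p$ to $q_i\neq p$ and $q_i,\dots,q_m\in Q\setminus\{p\}$.

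Next I would collapse the tail $q_{i+1},\dots,q_m$. As $\Bb$ is deterministic on $Q\setminus\{p\}$ and every $\overline{M}_{w^j}$ equals $M$, the matrix $M$ induces a partial map $f\colon Q\setminus\{p\}\rightharpoonup Q\setminus\{p\}$ sending $q$ to the unique $q'$ with $M[q,q']\neq 0$ (using that $\overline{M}_{w^j}$ and $M_{w^j}$ have the same zero pattern). By \cref{claim:monoid} and $M\otimes M=M$ we get $\overline{M}_{w^jw^k}=M$, and reading off supports yields $f\circ f=f$; so $f$ is an idempotent partial map, and every element of its image is a fixed point. For $j>i$ the subrun $\rho_j$ has positive weight and starts in the non-$p$ state $q_{j-1}$, so $q_j=f(q_{j-1})$. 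Hence $q_{i+1}=f(q_i)$ lies in the image of $f$, is a fixed point, and an induction gives $q_{i+1}=q_{i+2}=\dots=q_m$.

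It remains to show $M[q_{i+1},q_{i+1}]=\varepsilon$; write $q:=q_{i+1}$, so $f(q)=q$ and $M[q,q]\neq 0$. On the row $q\neq p$ the product $MM$ has a single nonzero term at position $(q,q)$, namely $M[q,q]\cdot M[q,q]$, so the identity $(M\otimes M)[q,q]=M[q,q]$ forces: if $M[q,q]$ were a positive rational $c$ then $c^2=c$ (the alternative $c=\varepsilon$ being impossible), i.e.\ $c=1$. To exclude $c=1$ I would use that $q$ is reachable from $p$ \emph{inside} the idempotent: the prefix $\rho_1\cdots\rho_{i+1}$ of $\rho$ is a positive run from $p$ to $q$ over $w^1\cdots w^{i+1}$, whose abstracted matrix is $\overline{M}_{w^1\cdots w^{i+1}}=M$ by \cref{claim:monoid} and idempotency, so $M[p,q]=\varepsilon\neq 0$ by \cref{claim:epsilon}. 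But if $M[q,q]=1$, then (since $\overline{M}_{w^{i+1}}=M$ and this entry is a rational, hence the true weight) $\rununique{q,w^{i+1},q}$ is a $q$-cycle of weight $M_{w^{i+1}}[q,q]=1$, so hypothesis \ref{pattern2} of \cref{lemma:boundednesslemma} yields $M_{w^{i+1}}[p,q]=0$, i.e.\ $M[p,q]=0$ --- a contradiction. Hence $M[q,q]$ is nonzero and not rational, so $M[q,q]=\varepsilon$.

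The delicate point is this last step: one has to use two incarnations of idempotency simultaneously --- the entrywise identity $M\otimes M=M$, and the structural fact that $q$ is reachable from $p$ within the idempotent block --- and only then can hypothesis \ref{pattern2} be invoked to rule out a self-loop of weight exactly $1$. Hypothesis \ref{pattern1} is not needed here; it serves elsewhere to guarantee that such ``$\varepsilon$-loops'' are the only obstruction to boundedness.
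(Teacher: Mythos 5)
Your proof is correct and follows the same path as the paper's: both locate $i$ as the first index at which $\rho$ leaves $p$ (using that $p$ can never be re-entered), both derive the constancy of the tail from determinism on $Q\setminus\{p\}$ together with the idempotency $M\otimes M=M$, and both rule out $M[q_{i+1},q_{i+1}]=1$ by noting that $M[p,q_{i+1}]=\varepsilon$ (because $\rho$ witnesses a positive run from $p$ to $q_{i+1}$ inside the idempotent block) would contradict hypothesis~\ref{pattern2}. Your presentation is a bit more explicit --- packaging the deterministic support of $M$ as an idempotent partial map $f$ and spelling out why the $(q,q)$-entry can only be $\varepsilon$ or $1$ --- but the key ideas and their order of use are the same as in the paper.
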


\begin{claim}\label{claim:countingruns}
Let $(s,t)$ be a node in the factorisation of height $i$, $0 \le i \le H$.
Then
\[\abs{\Runs[\frac{1}{k}a^{i-H}](w_{s,t})} \le \left(\theta (1 + \log_{\frac{1}{b}} k)\right)^{i+1} + |Q|.\]
\end{claim}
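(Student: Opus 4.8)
The plan is to proceed by induction on the height $i$ of the node $(s,t)$ in the factorisation tree. The statement bounds the number of runs over $w_{s,t}$ whose value exceeds $\frac1k a^{i-H}$; note the threshold grows (gets weaker) as we move up the tree, which is what makes the induction go through — a run that is ``large'' at a child need only remain ``not too small'' at the parent, and we lose at most a factor $a$ (the maximal admissible weight) per level of descent through the subrun structure.

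For the base case $i = 0$ the node is a leaf $(s,s)$, corresponding to a single letter $w_s$, and we must bound $|\Runs[\frac1k a^{-H}](w_s)|$. Since there are at most $|Q|$ states, there are at most $|Q|^2$ pairs $(q,q')$; for $q \ne p$ determinism of $\Bb$ restricted to $Q\setminus\{p\}$ gives at most one positive run for each such pair, and the runs starting in $p$ all have $\valinf = 1$ on the self-loop or go to some other state. A crude count gives a bound well below $\theta(1+\log_{1/b}k) + |Q|$, using $\theta \ge \max(2,|Q|)$. For the inductive step, a node $(s,t)$ of height $i$ has children of height $\le i-1$, and there are two cases by \cref{simon}.

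\emph{Binary node.} Here $(s,t)$ has two children $(s,r)$ and $(r+1,t)$, each of height $\le i-1$. A run $\rho \in \Runs[\frac1k a^{i-H}](w_{s,t})$ decomposes into subruns $\rho_1$ over $w_{s,r}$ and $\rho_2$ over $w_{r+1,t}$ with $\valinf(\rho) = \valinf(\rho_1)\cdot\valinf(\rho_2)$. Since both factors are admissible they are each $\le a$, hence each is $\ge \frac1k a^{i-H}/a = \frac1k a^{(i-1)-H}$, so $\rho_1 \in \Runs[\frac1k a^{(i-1)-H}](w_{s,r})$ and likewise $\rho_2$. If the split state (the state where $\rho_1$ ends and $\rho_2$ starts) is some $q \ne p$, then by determinism $\rho_1$ is determined by its starting state and $\rho_2$ by its ending state, so the number of such $\rho$ is at most the product of the two inductive bounds — but we want a bound linear, not quadratic, in the child bounds. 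The fix is the usual one: at least one of $\rho_1, \rho_2$ starts or ends at $p$, or else the whole of $\rho$ avoids $p$ entirely and then $\rho$ itself is one of the at most $|Q|$ deterministic runs (contributing the additive $+|Q|$). Runs passing through $p$ are counted by: choosing which child contains the ``$p$-part'' and then the deterministic tail/head, giving roughly $2 \cdot (\text{child bound})$, and $2 \le \theta$ absorbs this into the $(i+1)$-st power. I will have to be careful to organise the case analysis so that the bound comes out as $(\theta(1+\log_{1/b}k))^{i+1}+|Q|$ rather than its square; the key point is that ambiguity is only created at $p$, so once we fix the segment containing the visit(s) to $p$ everything else is deterministic.

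\emph{Idempotent node.} Here all children $(s_0{+}1,s_1),\dots,(s_{m-1}{+}1,s_m)$ carry the same idempotent label $M \in \Mm$ and each has height $\le i-1$. This is where \cref{claim:idempotents} does the real work. A run $\rho \in \Runs[\frac1k a^{i-H}](w_{s,t})$ that starts at $p$ is, by that claim, of the form: it stays at $p$ for a prefix of the child-blocks, then at some block $i_0$ it leaves $p$ into a state $q$, and from then on it ends every subsequent block in the same state $q$ (with $M[q,q] = \varepsilon$ unless it left at the very last block). The subrun $\rho'$ on the blocks $i_0+1,\dots,m$ therefore consists of $M[q,q]=\varepsilon$ repeated some $\ell$ times followed possibly by one more block; since each $\varepsilon$-labelled factor has true value $< b < 1$, we get $\valinf(\rho') < b^{\ell-1}\cdot a$ (bounding the final block by $a$), and for $\rho$ to survive the threshold we need $b^{\ell-1} a \ge \frac1k a^{i-H}$, i.e. $\ell - 1 \le \log_{1/b}k + (H-i)\log_{1/b}a \le \log_{1/b}k + H\log_{1/b}a$. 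Hence the number of blocks the run can ``actively traverse'' after leaving $p$ is $O(1 + \log_{1/b}k)$ — this is exactly where the logarithmic dependence on $k$ and the constant $H\log_{1/b}a$ in the definition of $\theta$ enter. For runs \emph{not} starting at $p$: they never visit $p$ (nothing enters $p$), so determinism on $Q\setminus\{p\}$ makes each such run unique given its start state, contributing at most $|Q|$ — the additive term. For runs starting at $p$: we choose the block $i_0$ where it leaves $p$ (at most $m$ choices, but only the ``active suffix'' of length $L := O(1+\log_{1/b}k)$ matters, and the behaviour on the inactive prefix/suffix is forced), then within the active suffix of $L$ blocks each block is a node of height $\le i-1$ and we apply the induction hypothesis to count its internal runs, multiplying the per-block bounds; the count is at most $(\text{child bound})^{L}$, and combined with the $\theta$ factors and the additive $|Q|$ this is $\le (\theta(1+\log_{1/b}k))^{i+1}+|Q|$.

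\textbf{Main obstacle.} The delicate point is bookkeeping the constants so that the bound stays $(\theta(1+\log_{1/b}k))^{i+1}+|Q|$ and does not blow up to a tower or to a power with exponent depending on $k$. In the idempotent case the exponent $L$ on the child bound is itself $O(\log k)$, so naively we would get $(\text{child bound})^{O(\log k)}$, which is polynomial in $k$, not polylog — far too weak. The resolution, which I expect to be the crux of the argument, is that the child bound is applied with the \emph{stronger} (smaller) threshold $\frac1k a^{(i-1)-H}$ inherited correctly, and more importantly that the runs across the $L$ active blocks are \emph{not} independent: by \cref{claim:idempotents} they all end in the \emph{same} state $q$, so the run through each intermediate block is a $q$-to-$q$ run, which by determinism on $Q\setminus\{p\}$ is \emph{unique} once we are inside a block (the only ambiguity in a block is again at $p$, which the active suffix no longer visits). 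Hence the number of runs contributed by the active suffix is really governed by a single block (the first active one, where the departure from $p$ happens — at most the child bound) times the choices of where the extra non-$\varepsilon$ final block sits, not by a product over all $L$ blocks. Getting this independence-versus-determinism accounting exactly right, and checking that the ``$+|Q|$'' and the ``$\theta \ge |Q|$, $\theta \ge 1 + H\log_{1/b}a$'' slack suffice to close the induction, is the part that needs care; everything else is routine.
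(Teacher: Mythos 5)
Your plan follows the same structure as the paper's proof: induction on node height, with the $+|Q|$ term absorbing the runs that never visit $p$, the binary case resolved by noting that determinism outside $p$ makes one of the two pieces forced (so the two sub-cases add, not multiply), and the idempotent case resolved via \cref{claim:idempotents} together with the crucial observation you isolate in your "Main obstacle" paragraph — that after the departure block all further blocks are deterministic $\varepsilon$-loops, so one pays a single child bound times the $O(1+\log_{1/b}k)$ choices of departure index rather than a product over $L$ blocks. Modulo small indexing slop (e.g.\ in the idempotent case all of $\rho_{x+1},\dots,\rho_m$ have value $<b$, with no exceptional final block), this is precisely the paper's argument.
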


Intuitively, either a node has not many children, then the number of runs cannot increase by a lot; or if there are many children then most runs will have a small value.

\begin{claimproof}
For simplicity we will write $\log$ for $\log_{\frac{1}{b}}$.
Since the automaton restricted to $Q \setminus \set{p}$ is deterministic, it suffices to prove that the number of runs starting in $p$ is bounded by $(\theta (1 + \log k))^{i+1}$.
We proceed by induction on $i$.
In the base case, when $i = 0$, $(s,t)$ is a leaf and $w_{s,t}$ is a letter.
Then there are at most $|Q|$ runs from $p$.
We conclude since $(\theta (1 + \log k))^1  \ge |Q|$ for $k \ge 2$ by the choice of $\theta$. 

For the induction step assume that the claim holds for all $0,\ldots, i$ and we prove it true for $i+1$.
Since $i+1 > 0$ $(s,t)$ is an inner node.
Let $(s_0+1,s_1)$, $(s_1+1, s_2), \ldots (s_{m-1}+1, s_m)$ be the children of $(s,t)$ such that $s -1= s_0 < s_1 < s_2 < \ldots < s_m = t$ and the height of every child is at most $i$.
Consider a run $\rho \in \Runs[\frac{1}{k}a^{i+1-H}](w_{s,t})$ starting in $p$.
Then $\rho$ can be decomposed into $m$ runs $\rho_1$, \ldots, $\rho_m$ over $w_{s_0+1,s_1}$, \ldots, $w_{s_{m-1}+1,s_m}$, respectively.
Notice that $\valinf(\rho) = \prod_{j=1}^m \valinf(\rho_j)$.
As $\valinf(\rho)>\frac{1}{k}a^{i+1-H}$ this means that
$\rho_x \in \Runs[\frac{1}{k}a^{i-H}](w_{s_{x-1}+1,s_x})$ for all
$x \in \set{1,\ldots,m}$. Indeed, by the choice of $a$ we know that
\[
a \ge \prod_{j=1}^{x-1} \valinf(\rho_j) \cdot \prod_{j=x+1}^{m} \valinf(\rho_j),
\]
hence
\[
\valinf(\rho_x)a \ge \valinf(\rho) >\frac{1}{k}a^{i+1-H}.
\]
We denote by $q_j$ the ending state of $\rho_j$ for $j \in \set{1,\ldots, m}$ (which is also the starting state of $\rho_{j+1}$ for $j < m$).
We consider two cases depending on the number of children $m$.

First, suppose there are two children, \ie $m =2$.
Let us count the number of possible $\rho$, depending on whether $q_1 = p$ or $q_1 \neq p$.
In the first case since there is exactly one run from $p$ to $p$, the runs $\rho$ differ only on $\rho_2$ and thus the number of such runs is bounded by
$|\Runs[\frac{1}{k}a^{i-H}](w_{s_1+1,s_2})|$.
In the second case since the transitions from $Q \setminus \set{p}$ are deterministic the number of runs is bounded by $|\Runs[\frac{1}{k}a^{i-H}](w_{s_0+1,s_1})|$.
By the induction assumption altogether this is bounded $2(\theta (1 + \log k))^{i+1} \le (\theta (1 + \log k))^{i+2}$ by the choice of $\theta$.

Second, by \cref{simon} suppose that all children are idempotents with the same label, denote it $M$.
By \cref{claim:idempotents}, there is an index $x \in \set{1,\ldots, m}$ such that $q_1 = q_2 \ldots = q_{x-1} = p$, $q_{x+1} = q_{x+2} \ldots = q_m$, and if $x < m$ then $M(q_{x+1}, q_{x+1}) =\varepsilon$.
By definition of $b$ we get that $\valinf(\rho_y) \le b$ for $x < y \le m$.
Thus $\valinf(\rho) \le a \cdot b^{m-x}$ and since $\rho\in\Runs[\frac{1}{k}a^{(i+1)-H}](w_{s,t})$ we get $a \cdot b^{m-x} \ge \frac{a^{i+1-H}}{k}$, which implies $m-x \le H\log a + \log k$.
Thus there are at most $\theta + \log k \le \theta(1 + \log k)$ valid indices for $x$.
Let us count all possible $\rho$, depending on the value $x$.
For a fixed $x$ the number of possible $\rho$ is bounded by $|\Runs[\frac{1}{k}a^{i-H}](w_{s_{x-1}+1, s_x})|$.
This is because the automaton is deterministic on $Q \setminus \set{p}$.
Thus by the induction assumption the number of all possible $\rho$ is bounded by $\theta(1 + \log k) \cdot (\theta (1 + \log k))^{i+1} = (\theta (1 + \log k))^{i+2}$.
\end{claimproof}

\cref{lemma:boundednesslemma} follows by applying \cref{claim:countingruns} with $i = H$.
\end{proof}

We conjecture that the results can be generalised to polynomially ambiguous weighted automata.
 
\section{From \simpleCRA to OVAS}
\label{0-isolation}

\label{sec:vass}

The proof of the following is delegated to the appendix.

\begin{theorem}\label{thm:rational_to_tropical}
For \simpleCRA{} the problems of zero-isolation over $\semiringQ$ and boundedness over $\semiringZR$ are interreducible in polynomial time.
\end{theorem}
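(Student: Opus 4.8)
The plan is to establish the two reductions separately but using the same dictionary between the two semirings, namely the order-reversing bijection $q \mapsto -\log q$ sending $\semiringQ$-values in $(0,1]$ to $\semiringZR$-values in $[0,+\infty)$ (equivalently $\log$ itself, up to a sign convention that swaps the roles of $\max$/$\min$ and large/small). Given an \simpleCRA{} $\Bb = (\Sigma, I, F, \X, \delta)$ over $\semiringQ$ with register updates $\sigma_a(x) = (c_{a,x} \cdot x) + d_{a,x}$, I would first normalise so that every coefficient $c_{a,x}$ and every reachable register value lies in $(0,1]$ (this is where the \emph{independence} of registers is essential: each register evolves as a one-dimensional affine recurrence $v \mapsto c v + d$, so I can reason about them one at a time, and rescale or truncate as needed). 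The first direction: zero of $\Bb$ over $\semiringQ$ is \emph{not} isolated iff there is a sequence $w_i$ with $\Bb(w_i) = \bigoplus_{x} F[x]\cdot \sigma_{w_i}(x) \to 0$; applying $-\log$ turns the outer semiring sum $\bigoplus = +$ into $\min$ (because the smallest $\semiringQ$ summand is the one closest to $0$, which maps to the largest $-\log$, but since $\Bb(w) = \sum_x F[x]\sigma_w(x) \ge \max_x F[x]\sigma_w(x)$ and also $\le |\X|\cdot\max_x$, the sum and the max differ only by a bounded multiplicative factor, hence an additive constant after $\log$) and turns each inner product $F[x]\cdot\sigma_w(x)$ into a sum. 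So $\Bb(w_i)\to 0$ corresponds exactly to the $\semiringZR$-automaton value going to $+\infty$, i.e.\ unboundedness. The converse direction is symmetric: an unbounded sequence over $\semiringZR$ pulls back, under $q = 2^{-(\cdot)}$, to a sequence whose $\semiringQ$-value tends to $0$.

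The key technical step I expect to spend the most care on is making the $\semiringZ$/$\semiringZR$ automaton a genuine \emph{\simpleCRA} again (not merely a weighted automaton), and doing so in polynomial time. Concretely, a $\semiringQ$ update $v \mapsto cv + d$ with $c,d \in \Qpos$ does \emph{not} translate to an affine update over $(\min,+)$, because $\log(cv+d)$ is not an affine function of $\log v$. The trick available here is that independence lets each register be handled in isolation, and that for the zero-isolation / boundedness questions only the \emph{limiting behaviour} of each register matters: a one-dimensional recurrence $v\mapsto cv+d$ either converges to the fixed point $d/(1-c)$ (if $c<1$), stays put (if $c=1$, growing linearly via repeated $+d$), or blows up (if $c>1$). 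One can therefore replace each register's family of updates by a small set of \emph{affine-over-$(\min,+)$} updates — essentially $\log v \mapsto \log v + \log c$ for the multiplicative part and a saturating/resetting update for the additive part — capturing exactly whether the register value tends to $0$, stays bounded away from $0$ and $\infty$, or tends to $\infty$, which is all the two decision problems can observe. I would carry this out update-by-update and register-by-register, checking that the resulting \simpleCRA{} over $\semiringZR$ (resp.\ the truncated-to-$\Z$ version over $\semiringZ$) has the same answer, and that the dimension $|\X|$ is preserved — this dimension preservation is what will later let Theorem~\ref{theorem:boundedness_min} feed into Theorem~\ref{theorem:isolation}.

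The second obstacle, subsidiary to the first, is bookkeeping the additive constants: the passage between $\sum_x$ and $\max_x$, and between $\semiringZR$ (rationals' logs) and $\semiringZ$ (integers), each introduces bounded errors that are irrelevant for "tends to $0$ / tends to $\infty$" but must be tracked so the reduction is exact rather than approximate. I would package this as a lemma: for any finite family of sequences $(a^{(j)}_i)_i$ in $\Rposz$, $\min_j a^{(j)}_i \to \infty$ iff $\sum_j 2^{-a^{(j)}_i} \to 0$, and more refined two-sided bounds relating $\log\sum_j 2^{-a^{(j)}_i}$ to $-\min_j a^{(j)}_i$ up to $\pm\log|\X|$. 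With that lemma and the register-by-register translation in hand, both reductions and their polynomial-time bounds follow by direct inspection, so the writeup would be: (1) the $\log$-dictionary and the min/sum lemma; (2) $\semiringQ \to \semiringZR$ translation of an \simpleCRA, with correctness of "zero not isolated $\Leftrightarrow$ unbounded"; (3) the reverse translation $\semiringZR \to \semiringQ$; (4) a remark that the same construction, composed with truncation of logarithms to integers, yields the $\semiringZ$ statement used in Theorem~\ref{theorem:boundedness_min}.
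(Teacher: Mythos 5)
Your diagnosis of where the difficulty lies is inverted. The syntactic translation $c \mapsto -\log c$ applied to the constants $c_{a,x}, d_{a,x}, I(x), F(x)$ already produces a perfectly valid \simpleCRA{} over $\semiringZR$: the update $\sigma_a(x) = c_{a,x} \cdot x + d_{a,x}$ becomes $\sigma_a(x) = \min\bigl((-\log c_{a,x}) + x,\ -\log d_{a,x}\bigr)$, which is exactly an affine update over $(\min,+)$. This is precisely the automaton $\Aa_{\log}$ that the paper works with; no re-encoding of updates, no fixed-point trichotomy, no normalisation into $(0,1]$ is needed, and the dimension is preserved for free. The fixed-point analysis you propose (converge to $d/(1-c)$, stay put, blow up) is also not sound here, because the word controls which $(c_{a,x},d_{a,x})$ is applied at each step; the adversary chooses the sequence of updates, so the register is not governed by any single one-dimensional recurrence and cannot be abstracted to its limit behaviour under a fixed map.

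The genuine difficulty — and the one your proposal does not close — is semantic. You correctly observe that $-\log(cv+d)$ is not $\min(-\log c + (-\log v), -\log d)$, which means $\Aa_{\log}(w)$ is \emph{not} equal to $-\log \Aa_{(+,\cdot)}(w)$. Concretely, $\Aa_{(+,\cdot)}(w)$ is a sum over all accepting runs, whereas $\Aa_{\log}(w)$ is the min over the same runs of the $-\log$-run-values. The number of accepting runs of an \simpleCRA{} is up to $|\X|(|w|+1)$, so the gap between $-\log(\sum_\rho \val(\rho))$ and $\min_\rho(-\log\val(\rho))$ is of order $\log|w|$, not a constant depending only on $|\X|$. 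Your ``bounded multiplicative factor $|\X|$'' bound applies only to the outer $\bigoplus_{x}$ over registers; the inner sum over runs inside each $\sigma_w(x)$ is unbounded and your lemma about finite families of sequences does not address it. The missing idea is the paper's stretching argument: starting from a word $w = a_1\cdots a_m$ on which the maximum run value is strictly below $1/(RS)$, consider $w_i = a_1^i a_2^i \cdots a_m^i$; then the max run value decays geometrically in $i$ while the number of runs grows only linearly in $i$, so the sum still tends to $0$. Without a step of this kind (Claim~\ref{claim:technical} in the paper), the passage from ``$\Aa_{\log}$ unbounded'' (equivalently, a single word with small max run value) to ``$\inf_w \Aa_{(+,\cdot)}(w) = 0$'' does not go through.
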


The rough intuition is that for a weighted automaton $\Aa$ over $\semiringQ$ one can define $\Aa_{\log}$, where every weight $c$ is replaced with $-\log c$. Notice that $c_i \to 0$ iff $-\log c_i \to +\infty$. If $\Aa$ would be considered over $\semiringQmax$ (\ie when accepting runs are aggregated with $\max$ instead of $+$) then this theorem is essentially a syntactic translation. 
Thus the crux of \cref{thm:rational_to_tropical} is to show that it is equivalent to consider the maximum run, rather than the aggregation with $+$.

We recall some definitions to define a new VAS model.
Given a positive integer $d \in \N$ an \emph{orthant} in $\R^d$
is a subset of the form
$\{\bx = (x_1,\ldots,x_d) \colon \epsilon_1 x_1 \ge 0$, \ldots,
$\epsilon_d x_d \ge 0 \}$ for some $\epsilon_i \in \set{-1,1}$.
We write $\orthants$ for the set of all orthants. Notice that
$|\orthants| = 2^d$. For example when $d = 2$ there are four
orthants also called quadrants. Let $A_\heartsuit,A_\Diamond \in \orthants$,
where $A_s = \set{\bx \mid \epsilon_1^s x_1 \ge 0, \ldots, \epsilon_d^s x_d \ge 0}$
for $s \in \set{\heartsuit,\Diamond}$. 
We write $A_\heartsuit \preceq A_\Diamond$ if $\epsilon_i^\heartsuit \le \epsilon_i^\Diamond$ for all $i \in \set{1,\ldots,d}$. 
This is a partial order on
$\orthants$, where the \emph{negative orthant}, defined by
$x_1 \le 0$, \ldots, $x_d \le 0$, is the smallest element; and
the \emph{positive orthant}, defined by $x_1 \ge 0$, \ldots,
$x_d \ge 0$, is the largest element. Given an orthant
$A = \set{\bx \mid \epsilon_1 x_1 \ge 0, \ldots, \epsilon_d x_d \ge 0}$
we will be often interested in points
$A^{\preceq} = \set{\bx \mid \exists A_\bx \in\orthants \text{ s.t. }  A \preceq A_\bx \text{ and }  \bx \in A_\bx}$.
Let $i_1,\ldots,i_k$ be all indices such that
$\epsilon_{i_j} = 1$. Notice that
$A^{\preceq}= \set{\bx \mid \epsilon_{i_1} x_{i_1} \ge 0, \ldots, \epsilon_{i_k} x_{i_k} \ge 0}$.

Notice that some vectors belong to more than one orthant, when some of their coordinates are zero. Given a vector $\bv$ we denote by $\orthmax{\bv}$ and $\orthmin{\bv}$ the largest and the smallest orthants that contain $\bv$, respectively. Notice that these are well defined since $\preceq$ induces a lattice on $\orthants$.

We define a model related to vector addition systems over integers~\cite{HaaseH14}. Consider a positive integer $d \in \N$. A \emph{$d$-dimensional orthant vector addition system} ($d$-OVAS or OVAS if $d$ is irrelevant) is $\V = (T_A)_{A \in \orthants}$, where every $T_A$ is a finite set of vectors $T_A \subseteq \R^d$ with the following property.
If $A \preceq B$ then $T_A \subseteq T_B$. We will refer to this property as \emph{monotonicity} of $\V$.
It will be convenient to denote $T = \bigcup_{A \in \orthants} T_A$.
We define the norm of $\V$ as $\norm{\V} = \max \set{\norm{\bv} \mid \bv \in T}$. The transitions in $\V$ are encoded efficiently, \ie for every $\bv \in T$ it suffices to store the minimal orthants $A$ such that $\bv \in T_A$. Note that $\bv$ may be minimal for multiple incomparable orthants.

A \emph{run} from $\bv_0$ over $\V$ is a sequence $\bv_0,\bv_1,\ldots,\bv_n$ such that $\bv_{i+1} - \bv_i \in T_{\orthmax{v_i}}$ for all $i \in  \set{0,\ldots,n-1}$. 
If such a run exists then we write $\bv_0 \steps \bv_n$. We allow $n = 0$ and thus $\bv \steps \bv$ for every vector $\bv$.

The \emph{universal coverability} problem is defined as follows. Given a $d$-OVAS $\V$ decide if for every vector $\bv \in \R^d$, there exists $\bw$ in the positive orthant (\ie $\bw \in \R^d_{\ge 0}$) such that there is a run $\bv \steps \bw$. If there are such runs then we say that $\V$ is a positive instance of universal coverability.

The \emph{coverability} problem is similar but the initial point is fixed. Formally, given a $d$-OVAS $\V$ and a vector $\bv \in \R^d$ decide if there is a run $\bv \steps \bw$ for some $\bw\in \R^d_{\ge 0}$. 

\begin{restatable}{theorem}{ThmBoundOvas}\label{theorem:boundedness_ovass}
The boundedness problem for \simpleCRA over $\semiringZR$ and the universal coverability problem for OVAS are interreducible
in polynomial time.
\end{restatable}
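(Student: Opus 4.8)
The plan is to set up two polynomial-time reductions, establishing that boundedness of an \simpleCRA over $\semiringZR$ and universal coverability of an OVAS are equivalent. First I would fix an \simpleCRA $\Bb = (\Sigma, I, F, \X, \delta)$ with $d$ registers over $\semiringZR$. Since each register $x$ is updated by $\sigma_a(x) = (c_{a,x} \odot x) \oplus d_{a,x}$, over the $(\min,+)$-semiring this means $\sigma_a(x) = \min(c_{a,x} + x, d_{a,x})$, so each register evolves essentially independently, interacting with the others only at the very end through the aggregation $\Bb(w) = \bigoplus_{x} F[x] \odot \sigma_w(x) = \min_x (F[x] + \sigma_w(x))$. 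The key observation is that the additive behaviour $x \mapsto c_{a,x} + x$ of a register is exactly a translation; the "reset" alternative $d_{a,x}$ in a $\min$ only matters when it is smaller than the accumulated value, i.e.\ this is precisely the phenomenon that a translation cannot bring a coordinate "below" a certain floor. The natural encoding is therefore: the configuration vector $(\sigma_w(x))_{x\in\X}$ is a point in $\R^d$ (values in $\LQ \cup \{+\infty\}$, but we can track the finite part), and reading a letter $a$ applies the translation vector $(c_{a,x})_{x \in \X}$; the resets $d_{a,x}$ control, per orthant, which translations are available, which is exactly what the $(T_A)_{A}$ structure of an OVAS records. I would arrange coordinates (possibly after an affine shift by the initial values $I$ and by the $d_{a,x}$ thresholds) so that a register being "not yet reset below the floor" corresponds to a sign condition, with $\Bb$ unbounded iff some coordinate can be driven to $-\infty$, i.e.\ (after negating, to match the $\le 0$ / positive-orthant convention) the positive orthant can be covered from every starting point.

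More concretely, for the forward direction (\simpleCRA $\to$ OVAS) I would: (1) for each letter $a$, form the translation vector $\bt_a = (c_{a,x})_x$; (2) determine, for each register $x$ and each letter $a$, the orthant condition under which the update is "purely additive" (the translation dominates the reset) versus "clamped" — the clamped case forbids decreasing that coordinate further, which is modelled by making $\bt_a$ unavailable in the orthants where coordinate $x$ has already crossed its floor; (3) take $T_A$ to be the set of translation vectors available in orthant $A$, and check monotonicity $A \preceq B \Rightarrow T_A \subseteq T_B$ holds because crossing more floors only removes options. One then argues that $\Bb$ is unbounded iff there is a sequence of words pushing $\min_x(F[x] + \sigma_w(x))$ to $+\infty$ iff (via the negation $\bv \mapsto -\bv$) from every point the positive orthant is coverable. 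For the backward direction (OVAS $\to$ \simpleCRA) I would reverse the construction: given $\V = (T_A)_{A}$ with $T = \{\bt_1,\dots,\bt_m\}$, build an \simpleCRA with $d$ registers and alphabet $T$, where reading $\bt_j$ adds $-\bt_j$ to each register (a translation), and uses the $d_{a,x}$ reset terms to enforce exactly the orthant availability constraints of $\V$ — the monotonicity of $\V$ is what makes this expressible with per-register independent resets. Boundedness of the resulting \simpleCRA then mirrors the negation of universal coverability.

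The main obstacle I anticipate is faithfully capturing the \emph{run semantics} of OVAS — where the applicable translation set $T_{\orthmax{\bv_i}}$ depends on the \emph{current} configuration $\bv_i$, and in particular on which coordinates are zero (the distinction between $\orthmax{\bv}$ and $\orthmin{\bv}$) — using only the rigidly local, input-driven updates of an \simpleCRA, whose transition choice depends on the input letter and not on the register values. The resolution is that the \simpleCRA does not need to \emph{enforce} that a translation is legal at runtime; instead, an illegal application (one that would take a coordinate strictly past its floor) gets silently clamped by the $\min$ with $d_{a,x}$, and one shows this clamping does not help make the automaton bounded — any unbounded OVAS run can be simulated by a word that only ever applies legal translations, and conversely any word of the \simpleCRA yields a legal OVAS run after deleting the clamped (hence non-progressing) steps, using monotonicity to stay within the permitted transition sets. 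Handling the $+\infty$ / $\zero$ value and the boundary (zero-coordinate) orthant cases carefully, and checking that all constructions are polynomial in size (the OVAS stores only minimal orthants per vector, and $|\orthants| = 2^d$ with $d$ the register count, which is fine since $d$ is part of the input but the encoding is succinct), will require attention but should be routine once the simulation correspondence is pinned down.
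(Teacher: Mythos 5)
Your high-level intuition is right — over $\semiringZR$ each register evolves as $x\mapsto\min(x+c_{a,x},d_{a,x})$, so $c_{a,x}$ is a translation and $d_{a,x}$ should control orthant availability, and universal coverability corresponds to reaching arbitrarily large register values from arbitrarily negative starting points. But your reduction reads the word forward, and that is where the proposal breaks: you are missing the paper's key trick, which is to have the OVAS process the letters of $w$ in \emph{reverse} order. After normalising $\Aa$ so that $d_{a,x}\in\{0,+\infty\}$ and $I=F=0$, the register value is $\Aa_x(w)=\min_i\bigl(d_{a_{i-1},x}+\sum_{j\ge i}c_{a_j,x}\bigr)$, a minimum over ``restart points'' $i$, and the restart value at position $i$ is a \emph{suffix} sum. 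A forward OVAS run at step $i$ cannot check a property of the suffix $a_{i+1}\cdots a_n$. By contrast, starting the OVAS at $(-k,\dots,-k)$ and applying $\bc_{a_n},\bc_{a_{n-1}},\dots,\bc_{a_1}$ in that order, the configuration after $m$ steps has $x$-coordinate $-k+\sum_{j>n-m}c_{a_j,x}$, so the orthant constraint ``$\ge 0$ on coordinates $x$ with $d_{a_{n-m},x}=0$'' is \emph{exactly} the condition that the restart at position $n-m$ contributes a value $\ge k$, and the final check $\bv_n\in\R_{\ge0}^d$ is the $i=0$ restart. This yields $\Aa(w)\ge k$ iff the reversed run survives, with no slack. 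There is no analogous exact correspondence in forward direction.

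Two further problems follow from the same gap. First, in a forward reading the OVAS would track the unclamped sum $\sum_{j\le i}c_{a_j,x}$, which is an \emph{over}-estimate of $\sigma_i(x)$, and forbidding a transition by orthant is not the same as clamping a register (the \simpleCRA never blocks a letter — it just resets). Your repair, ``delete the clamped steps,'' changes the word and therefore shifts the trajectory of \emph{every} register, not only the clamped one, so it does not preserve the output value and does not reduce the problem to clamp-free runs. Second, a small but telling slip: you write that $\Bb$ is unbounded iff \emph{some} coordinate can be driven to $-\infty$. Since $\Bb(w)=\min_x(F[x]+\sigma_w(x))$, unboundedness requires \emph{all} registers (with $F[x]\ne+\infty$) to grow, matching the requirement $\bw\in\R_{\ge0}^d$ (\emph{all} coordinates nonnegative) in the coverability target. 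Relatedly, $\min(x+c,d)$ caps the register from \emph{above}, so ``forbids decreasing that coordinate'' has the direction reversed. Once you adopt the reversed encoding, the normal-form reduction (to $d_{a,x}\in\{0,+\infty\}$ and $I=F=0$), the tight claim $\Aa(w)\ge k\iff(-k,\dots,-k)\stepsc\R_{\ge0}^d$ via the reversed transitions, and the converse translation from OVAS to \simpleCRA over alphabet $T$ all fall into place.
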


\begin{proof}[Proof sketch]We only give an intuition.
Given an \simpleCRA $\Aa = (\Sigma, I, F, \X, (\delta_a)_{a \in \Sigma})$ the dimension of the OVAS  $\V$ is $|\X| = d$. 
For every letter $a \in \Sigma$ let $\delta_a(x) = \min(x + c_{a,x}, d_{a,x})$. 
The idea is that $c_{a,x}$ are the coordinates of a corresponding vector $\bc_a$ in $\V$, while $d_{a,x}$ determine the orthants in which it is available. Intuitively, $\Aa$ consumes letters in the reversed order compared to applying the corresponding vectors in $\V$. Then $d_{a,x} = +\infty$ does not impose any restrictions, while $d_{a,x} < + \infty$, means that the value of register $x$ needs to be big enough for the transition to be fired.
\end{proof}

\subsection{OVAS with continuous semantics}
\label{sec:covass}
Let $\V = (T_A)_{A \in \orthants}$ be a $d$-OVAS. A \emph{continuous run}
from $\bv$ over $\V$ is a sequence $\bv_0,\bv_1,\ldots,\bv_n$ such that for every $i \in \set{0,\ldots,n-1}$ there exists an orthant $A_i$, where: $\bv_i, \bv_{i + 1} \in A_i$;
and there exists $\delta_i \in \Rpos$ such that $\delta_i(\bv_{i+1} - \bv_i) \in T_{A_i}$.
Notice that the former implies that orthants are crossed only by pausing on the boundaries. If such a run exists then we write $\bv_0 \stepsc \bv_n$.

We say $A_i$ is the orthant witnessing the transition if $A_i$ is the maximal orthant such that $\bv_i, \bv_{i + 1} \in A_i$, and then we write that $A_{0}$, \ldots, $A_{{n-1}}$ is the witnessing sequence of orthants.

We remark that it would be possible to drop the additional restriction of pausing at the boundaries. One would have to require $\delta_i\ge 1$ (otherwise, the vector $\delta_i(\bv_{i+1} - \bv_i)$ essentially becomes available in $T_{A_{i+1}}$). Moreover, with the restriction of pausing at the boundaries the behaviour within an orthant is similar to the standard continuous VAS model~\cite{BlondinFHH17}. This will be convenient in \cref{sec:coverability}, in particular to invoke \cref{prop:cvass}.

\begin{remark}\label{remark:runs}
A continuous run, where $\delta_i = 1$ for all $i$, is also a run.
\end{remark}

The \emph{universal continuous coverability} problem is defined as the universal coverability problem, where $\bv \steps \bw$ is replaced with $\bv \stepsc \bw$. Similarly, we will say \eg that $\V$ is a positive instance of universal continuous coverability. In this subsection we will prove the following theorem.

\begin{theorem}\label{theorem:continuous_semantics}
Let $\V = (T_A)_{A \in \orthants}$ be a $d$-OVAS. Then $\V$ is a positive instance of universal coverability if and only if it is a positive instance of universal continuous coverability.
\end{theorem}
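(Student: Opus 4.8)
The plan is to prove the two implications separately, with the ``continuous $\Rightarrow$ discrete'' direction being the substantial one. The easy direction is immediate: by \cref{remark:runs} every (discrete) run is a continuous run with all $\delta_i = 1$, so if $\V$ is a positive instance of universal coverability it is a positive instance of universal continuous coverability. For the converse, assume $\V$ is a positive instance of universal continuous coverability; we must show that from an arbitrary start vector $\bv$ there is a \emph{discrete} run reaching the positive orthant. Fix $\bv$ and a continuous run $\bv = \bv_0, \bv_1, \ldots, \bv_n = \bw$ with $\bw \in \R^d_{\ge 0}$, together with its witnessing sequence of orthants $A_0, \ldots, A_{n-1}$ and scaling factors $\delta_i \in \Rpos$ with $\delta_i(\bv_{i+1}-\bv_i) \in T_{A_i}$.

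The key idea is that we do not need to land \emph{on} the point $\bw$; we only need to cover the positive orthant, so we have freedom to overshoot. First I would replace each continuous step by a bundle of discrete steps in the same direction. For a step using orthant $A_i$ and vector $\bc_i := \delta_i(\bv_{i+1}-\bv_i) \in T_{A_i}$, note that $\bc_i$ points in direction $\bv_{i+1}-\bv_i$, and since $\bv_i,\bv_{i+1}\in A_i$ the whole segment between them lies in $A_i$; moreover by monotonicity $\bc_i \in T_B$ for every $B \succeq A_i$, hence in particular $\bc_i \in T_{\orthmax{\bx}}$ for every $\bx$ strictly in the interior of $A_i$ along that segment. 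The plan is therefore: start from a point $\bv'$ ``deep inside'' the starting orthant (obtained by first moving $\bv$ by a large multiple of some vector available at $\bv$, or by arguing that it suffices to prove universal coverability from interior points and translating), and then simulate the continuous run using integer numbers of copies of each $\bc_i$, scaling the whole configuration up by a large common factor $N$ so that (a) all intermediate points of the simulated run stay within the interiors of the relevant orthants — this is where we exploit that we paused only at boundaries in the continuous run, so each segment has a definite ``thickness'' we can stay inside of after scaling — and (b) the endpoint, a point $\ge N\bw$ roughly, still lies in $\R^d_{\ge 0}$. The boundary crossings of the continuous run (where $\bv_i$ sits on a face of $A_i$) are the delicate spots: there one has to perturb slightly into the interior before applying $\bc_i$; monotonicity guarantees $\bc_i$ remains available at the perturbed point, and taking $N$ large makes the perturbation negligible relative to the step length so the simulation stays on track.

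The main obstacle I expect is handling these boundary crossings coherently along the whole run: a single point $\bv_i$ may lie simultaneously on several coordinate hyperplanes and be shared between two segments lying in \emph{incomparable} maximal orthants, so one cannot uniformly ``push inward''. The fix is to process the run segment by segment, at each boundary point choosing a perturbation direction that lies in the interior of the \emph{upcoming} orthant $A_i$ (possible because $A_i$ has nonempty interior and $\bv_i\in A_i$), making it small enough not to disturb the previous segment's final position up to the allowed slack, and absorbing the accumulated error into the overshoot budget at the positive orthant. Since $\bw\in\R^d_{\ge0}$ and we are free to overshoot, a finite accumulated perturbation bounded in norm can always be compensated by scaling up, giving a genuine discrete run from (a translate/scaling of) $\bv$ into $\R^d_{\ge 0}$; a final bookkeeping step checks that the same construction works starting from the original $\bv$ rather than a perturbed one, using one preliminary discrete move available at $\bv$ to enter the interior of $\orthmax{\bv}$ (and if no such move exists, $\bv$ already lies in a position where the continuous run's first step is itself a valid discrete step after scaling).
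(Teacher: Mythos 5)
The ``easy'' direction is genuinely wrong as stated. A discrete step from $\bv_i$ to $\bv_{i+1}$ only requires $\bv_{i+1}-\bv_i \in T_{\orthmax{\bv_i}}$; it does \emph{not} require $\bv_i$ and $\bv_{i+1}$ to lie in a common orthant. For example, in dimension $2$ a single step can go from $(1,-1)$ to $(-1,1)$, which lie in incomparable orthants so no $A_i$ contains both, and the segment passes through a region where the transition may not be available. \cref{remark:runs} is the converse implication (a continuous run with all $\delta_i = 1$ is a discrete run), not the one you invoked. The paper's proof of this direction is not trivial: it takes a discrete run from $\bv - \bu$ with $\bu = (\norm{\V},\ldots,\norm{\V})$, shifts the whole run up by $\bu$, and shows each shifted segment stays high enough that every intermediate point dominates $\bv_i$, so the transition remains available by monotonicity and \cref{lemma:runs_continous} applies. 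You need a comparable argument; the implication does not come for free.

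For the substantive direction, the plan has the right rough shape (scale up, simulate with integer numbers of copies, exploit monotonicity and the overshoot budget), but several specifics are off. First, the useful move is not to start ``deep inside the starting orthant''---interior points of an orthant actually sit in a \emph{smaller} (not larger) maximal orthant than boundary points, so this does not buy you extra transitions. What works, and what the paper does, is to invoke universal \emph{continuous} coverability at a point $\bv_0$ that is coordinate-wise strictly \emph{below} $\bv$ (in $\R^d_{<0}$, with $\bv_0+{\bf 1}\le\bv$), and then build the discrete run from $\bv$ itself; since every point of the simulated run dominates the corresponding scaled continuous-run point, monotonicity makes the needed transitions available. Second, ``scaling by $N$ makes the perturbation negligible relative to step length'' is misleading: the continuous run pauses on orthant boundaries, those boundary points stay on boundaries under scaling, so the distance to the boundary that you need to respect does not grow. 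What actually controls the error is how close the scaled coefficients $m/\delta_i$ are to integers; this is where the paper invokes Dirichlet's simultaneous approximation (\cref{lemma:kreals}) to choose $m$ with $\lvert m/\delta_i - n_i\rvert < 1/(n\norm{\V})$ for all $i$, so the cumulative drift is at most ${\bf 1}$ and is absorbed by the $\bv_0 + {\bf 1} \le \bv$ slack. Your sketch silently assumes you can take integer numbers of copies without quantifying the rounding error. One could in principle replace Dirichlet by plain rounding at the cost of choosing $\bv_0$ about $n\norm{\V}$ below $\bv$, but you would have to say this explicitly and verify the inequalities; as written the argument has a real gap here. Finally, the per-boundary ``perturbation into the interior of the upcoming orthant'' idea is not needed once you set things up with the downward-shifted starting point and segment-by-segment scaling (\cref{lemma:runs_scaling}, \cref{lemma:continuous_shift}); trying to do the perturbation ad hoc at each boundary, as you propose, runs into exactly the coherence problem you flag and does not obviously resolve it.
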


To prove the theorem we require several auxiliary lemmas about continuous runs over a $d$-OVAS $\V = (T_A)_{A \in \orthants}$. \cref{fig:fourpicturesaboutvass} shows geometric intuitions. In the following we assume that all vectors are over $\R^d$, unless specified otherwise.

\begin{figure}[!ht]
\centering
\includegraphics[width=0.48\linewidth]{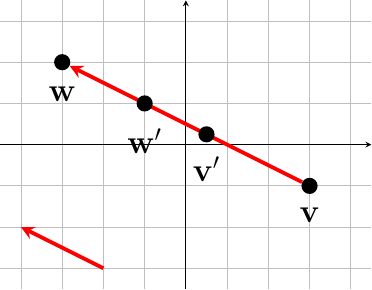}
\includegraphics[width=0.48\linewidth]{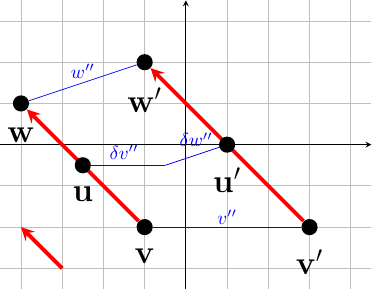}\\
\includegraphics[width=0.48\linewidth]{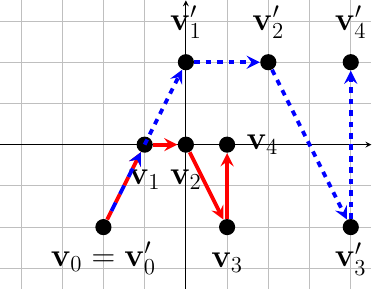}
\includegraphics[width=0.48\linewidth]{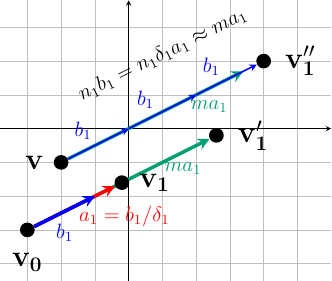}
\Description{A single line starting a v, ending at w and marked with v prime and w prime along the way.}
\Description{A single line starting at v, ending at w and marked with u along the way. Then v prime, translated by v prime prime starts a new parallel line to w prime with u prime marked along the way.}
\Description{A path starting at v zero is shown in red. Another path starting in the same place is then shown with every edge length increased, creating an expanded path with the same shape.}
\Description{A path from v zero to v one prime is converted into a path from v prime to v one prime prime.}
\caption{Examples are in dimension $2$. 
Top left: a picture for \cref{lemma:runs_continous}. The transition $(-2,1)$ is available in some orthants, where $\bv$ and $\bw$ belong. Since $\bw - \bv = 3(-2,1)$ the lemma says that for all points $\bv'$, $\bw'$ on the line from $\bv$ to $\bw$ there is a continuous run $\bv' \stepsc \bw'$ (provided $\bv'$ is closer than $\bw'$ to $\bv$). 
\\ Top right: a picture for \cref{lemma:continuous_shift}. The transition $(-1,1)$ is available in some orthants, where $\bv$ and $\bw$ belong. Since $\bw - \bv = 3(-1,1)$, $\bw' - \bv' = 4(-1,1)$, $\bv' \ge \bv$ and $\bw' \ge \bw$ there is a continuous run $\bv' \stepsc \bw'$. 
\\ Bottom left: a picture for \cref{lemma:runs_scaling}. The sequence $\bv_0,\ldots,\bv_4$ is a continuous run and $\bv_0$ has both coordinates negative. The sequence $\bv_0',\ldots,\bv_4'$ mimics the first sequence but the distances between nodes are scaled by $m = 2$. While $\bv_0',\ldots,\bv_4'$ is not a continuous run (\eg there is no orthant that contains both $\bv_2'$ and $\bv_3'$), there is a continuous run between each pair of consecutive nodes $\bv_i' \stepsc \bv_{i+1}'$ (and thus $\bv_0' \stepsc \bv_{n}'$).
\\ Bottom right: a picture for proof of \cref{theorem:continuous_semantics}, the reduction from continuous semantics to discrete semantics,  explaining $\bv,\bv_i,\bv_i',\bv_i''$ for first step.
}

\label{fig:fourpicturesaboutvass}
\end{figure}

Given two vectors $\bv$ and $\bw$, we define the set of maximal orthants on the path from $\bv$ to $\bw$:
$$
\orthantpath{\bv,\bw} = \set{\orthmax{\delta\bv + (1-\delta)\bw} \mid \delta\in[0,1]}.
$$

\begin{restatable}{lemma}{LemRunsCont}\label{lemma:runs_continous}
Fix some vectors $\bv$ and $\bw$. Suppose that, for all $C\in\orthantpath{\bv,\bw}$, there exists $\delta_C \in \R_{>0}$ such that $\delta_C(\bw - \bv) \in T_{C}$. Then $\bv' \stepsc \bw'$ for every $\bv' =  \lambda_1\bv + (1 - \lambda_1)\bw$, $\bw' = \lambda_2\bv + (1 - \lambda_2)\bw$, where $0 \le \lambda_2 \le \lambda_1 \le 1$.
\end{restatable}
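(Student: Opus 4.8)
The plan is to build the continuous run from $\bv'$ to $\bw'$ directly as a sequence of collinear points lying on the segment $[\bv',\bw'] \subseteq [\bv,\bw]$, using the hypothesis that a scaled copy of the direction $\bw-\bv$ is available in every maximal orthant met along $[\bv,\bw]$. First I would record the elementary geometric fact that as $\delta$ ranges over $[0,1]$, the point $\bx_\delta = \delta\bv + (1-\delta)\bw$ passes through finitely many maximal orthants, say $C_1, \ldots, C_m$ in the order encountered starting from $\bv$, with the transition from $C_j$ to $C_{j+1}$ happening at a single point $\bp_j$ lying on the common boundary $C_j \cap C_{j+1}$ (this is where some coordinate changes sign or reaches $0$; the segment crosses each coordinate hyperplane at most once since the coordinates are affine in $\delta$). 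The points $\bv'$ and $\bw'$ lie on this segment with $\bw'$ no farther from $\bv$ than $\bv'$, so the relevant portion of the segment is $[\bv', \bw']$ oriented from $\bv'$ towards $\bw'$, i.e. towards $\bw$.

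Next I would take as the nodes of the candidate run the point $\bv'$, followed by those breakpoints $\bp_j$ that fall strictly inside the open segment $(\bw', \bv')$ (ordered from $\bv'$ towards $\bw'$), and finally $\bw'$. Write this sequence as $\bu_0 = \bv', \bu_1, \ldots, \bu_n = \bw'$. For each consecutive pair $\bu_i, \bu_{i+1}$, both lie in the closure of one of the maximal orthants $C_j \in \orthantpath{\bv,\bw}$ — indeed the whole subsegment $[\bu_i,\bu_{i+1}]$ does, by construction of the breakpoints — and the hypothesis gives $\delta_{C_j}(\bw - \bv) \in T_{C_j}$. Since $\bu_{i+1} - \bu_i$ is a nonnegative multiple $\mu_i (\bw - \bv)$ of the direction vector (here $\mu_i \ge 0$ because we move from $\bv'$ towards $\bw$, and $\lambda_2 \le \lambda_1$ guarantees the orientation), we may set $\delta_i = \delta_{C_j}/\mu_i > 0$ when $\mu_i > 0$ so that $\delta_i(\bu_{i+1} - \bu_i) = \delta_{C_j}(\bw - \bv) \in T_{C_j}$; and $C_j$ (or a maximal orthant containing it, but $C_j$ is already maximal for all interior points) contains both $\bu_i$ and $\bu_{i+1}$. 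If $\mu_i = 0$, i.e. $\bu_i = \bu_{i+1}$, that step is vacuous and can be dropped. Thus each step meets the definition of a continuous run step, so $\bv' \stepsc \bw'$.

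I expect the main technical care — rather than a deep obstacle — to be the bookkeeping about which orthant \emph{witnesses} each step and the degenerate cases: points where several coordinates vanish simultaneously (so the "breakpoint" is a single point through which the segment passes between non-adjacent-looking orthants, but still $\bu_i$ and $\bu_{i+1}$ share a common orthant), and the endpoint cases $\lambda_1 = \lambda_2$ (trivial, $\bv' = \bw'$, empty run) or $\bv = \bw$ (direction vector is $\mathbf{0}$, also trivial). One must also check that each intermediate subsegment $[\bu_i,\bu_{i+1}]$ really does lie in a single $C_j$ and that this $C_j$ is among the orthants $\orthantpath{\bv,\bw}$ for which the hypothesis supplies a scaled transition — this is immediate because $\orthantpath{\bv,\bw}$ is defined precisely as the set of $\orthmax{\bx_\delta}$ over the whole segment, and $[\bu_i,\bu_{i+1}] \subseteq [\bv',\bw'] \subseteq [\bv,\bw]$. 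Finally I would note that the restriction $0 \le \lambda_2 \le \lambda_1 \le 1$ is exactly what is needed so that travelling from $\bv'$ to $\bw'$ moves in the $+(\bw-\bv)$ direction, making all scalars $\mu_i$ nonnegative and all $\delta_i$ positive.
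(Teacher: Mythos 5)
Your proposal matches the paper's proof almost exactly: both decompose the segment $[\bv',\bw']$ into subsegments at the points where the maximal orthant changes, observe that each subsegment lies in a single orthant of $\orthantpath{\bv,\bw}$, and then rescale the hypothesized transition $\delta_C(\bw-\bv)$ to witness each step. Your treatment of the degenerate cases (repeated points, $\bv=\bw$) is a bit more explicit than the paper's, but the underlying argument is the same.
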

\begin{restatable}{lemma}{LemContShift}\label{lemma:continuous_shift}
Let $\bv \stepsc \bw$. Suppose, for all $C\in\orthantpath{\bv,\bw}$, there exists $\delta_C \in \R_{>0}$ such that $\delta_C(\bw - \bv) \in T_{C}$. Let $\bv' \ge \bv$ and $\bw' \ge \bw$ be such that $\bw' - \bv' = \delta' (\bw - \bv)$ for some $\delta' \in \R_{>0}$. Then $\bv' \stepsc \bw'$.
\end{restatable}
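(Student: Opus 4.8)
The plan is to deduce \cref{lemma:continuous_shift} from \cref{lemma:runs_continous} applied to the pair $(\bv',\bw')$ in place of $(\bv,\bw)$. By that lemma, instantiated with the two free parameters equal to $1$ and $0$, it suffices to check its hypothesis for $(\bv',\bw')$: that for every orthant $C' \in \orthantpath{\bv',\bw'}$ there exists $\delta_{C'} \in \Rpos$ with $\delta_{C'}(\bw'-\bv') \in T_{C'}$. (If $\bv' = \bw'$ the empty run already witnesses $\bv' \stepsc \bw'$, so we may assume $\bw \neq \bv$, hence $\bw'-\bv' = \delta'(\bw-\bv)\neq 0$ and $\bv'\neq\bw'$.)

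The heart of the argument is a small geometric claim: for each $t \in [0,1]$, writing $\by' = (1-t)\bv' + t\bw'$ for the corresponding point of the segment from $\bv'$ to $\bw'$, and $\by = (1-s)\bv + s\bw$ for the point of the segment from $\bv$ to $\bw$ with parameter $s := \min(t\delta', 1) \in [0,1]$, one has $\by \le \by'$ coordinatewise. I would prove this by a short case split. If $t\delta' \le 1$ then $s = t\delta'$, so $\by' - \by = \bv' - \bv \ge 0$. If $t\delta' > 1$ then $s = 1$, i.e.\ $\by = \bw$, and for each coordinate $j$ we have $\by'[j] = \bv'[j] + t\delta'(\bw[j]-\bv[j]) \ge \bw[j]$: when $\bw[j] \ge \bv[j]$ this uses $t\delta' \ge 1$ and $\bv'[j] \ge \bv[j]$, and when $\bw[j] < \bv[j]$ it uses $t\delta' \le \delta'$ together with $\bv'[j] + \delta'(\bw[j]-\bv[j]) = \bw'[j] \ge \bw[j]$. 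Since $\orthmax{\bx}$ is the orthant carrying a $+1$ in coordinate $j$ precisely when $\bx[j] \ge 0$, the inequality $\by \le \by'$ gives $\orthmax{\by} \preceq \orthmax{\by'}$; and $\orthmax{\by} \in \orthantpath{\bv,\bw}$ by definition.

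It then remains to assemble the pieces. Fix $C' \in \orthantpath{\bv',\bw'}$, write $C' = \orthmax{\by'}$ for a suitable $t \in [0,1]$, and set $C := \orthmax{\by} \in \orthantpath{\bv,\bw}$ as above, so $C \preceq C'$. By the hypothesis of \cref{lemma:continuous_shift} there is $\delta_C \in \Rpos$ with $\delta_C(\bw-\bv) \in T_C$, and monotonicity of $\V$ gives $T_C \subseteq T_{C'}$, hence $\delta_C(\bw-\bv) \in T_{C'}$; since $\bw'-\bv' = \delta'(\bw-\bv)$ this reads $(\delta_C/\delta')(\bw'-\bv') \in T_{C'}$, so $\delta_{C'} := \delta_C/\delta' \in \Rpos$ works. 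This verifies the hypothesis of \cref{lemma:runs_continous} for $(\bv',\bw')$, which then yields $\bv' \stepsc \bw'$.

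I expect the geometric claim — specifically the choice $s = \min(t\delta', 1)$ and the sub-case $t\delta' > 1$ — to be the only genuinely delicate point: that sub-case is exactly where one must use both hypotheses $\bv' \ge \bv$ and $\bw' \ge \bw$ (and that $\bw'-\bv'$ is a \emph{positive} multiple of $\bw-\bv$) rather than just one of them. Everything else is routine manipulation around \cref{lemma:runs_continous} and the monotonicity of $\V$.
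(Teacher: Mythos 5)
Your proof is correct, and the overall strategy coincides with the paper's: both reduce to \cref{lemma:runs_continous} applied to the pair $(\bv',\bw')$, and both verify its hypothesis by showing that every orthant $C'\in\orthantpath{\bv',\bw'}$ dominates (w.r.t.\ $\preceq$) some $C\in\orthantpath{\bv,\bw}$, then using monotonicity of $\V$ together with $\bw'-\bv' = \delta'(\bw-\bv)$.

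Where you diverge is in how the corresponding point on the segment $[\bv,\bw]$ is chosen, and here the paper is noticeably lighter. It simply uses the \emph{same} convex-combination parameter on both segments: for $\delta\in[0,1]$ it sets $\bu' = \delta\bv' + (1-\delta)\bw'$ and $\bu = \delta\bv + (1-\delta)\bw$, so that $\bu' - \bu = \delta(\bv'-\bv) + (1-\delta)(\bw'-\bw)\geq 0$ holds coordinatewise by inspection. That one line replaces your reparametrization $s = \min(t\delta',1)$ and its two-case analysis. Your choice is driven by the (natural but unnecessary) wish to make the difference $\by'-\by$ a constant shift; once $t\delta'>1$ this is impossible and you are forced into the second case, which you handle correctly but which you flagged yourself as the only delicate point. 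The paper's pairing avoids the delicacy entirely: there is no need for the shift to be constant along the segment, only for it to stay nonnegative, and a nonnegative convex combination of $\bv'-\bv\geq0$ and $\bw'-\bw\geq0$ gives that for free. Both arguments are valid; the paper's is the one to internalize, since it scales cleanly and makes visible that the hypotheses $\bv'\geq\bv$ and $\bw'\geq\bw$ are used symmetrically.
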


\begin{restatable}{lemma}{LemRunsScale}\label{lemma:runs_scaling}
Let $\bv_0,\bv_1,\ldots,\bv_n$ be a continuous run such that $\bv_0 \in \R_{< 0}^d$ and let $\ba_i = \bv_i - \bv_{i-1}$ for $i \in \set{1,\ldots,n}$. Let $m \ge 1$ and consider the sequence $\bv_0',\bv_1',\ldots,\bv_n'$, defined by $\bv_0' = \bv_0$, $\bv_{i}' = \bv_{i-1}' + m\ba_i$. Then $\bv_{i}' \stepsc \bv_{i+1}'$ for all $i \in \set{0,\ldots,n-1}$.
\end{restatable}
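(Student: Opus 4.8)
The plan is to derive $\bv_i' \stepsc \bv_{i+1}'$ for each $i$ from a single application of \cref{lemma:continuous_shift}, after two preparatory rewritings. First, because orthants are invariant under multiplication by a positive scalar, the scaled sequence $m\bv_0, m\bv_1, \ldots, m\bv_n$ is again a continuous run: if the $i$-th step of the original run uses an orthant with $\bv_i,\bv_{i+1}$ in it on which $\delta_i(\bv_{i+1}-\bv_i)$ is available, then by monotonicity we may take this orthant to be the maximal orthant $A_i$ containing $\bv_i$ and $\bv_{i+1}$, and then $m\bv_i,m\bv_{i+1}\in A_i$ with $\tfrac{\delta_i}{m}(m\bv_{i+1}-m\bv_i)=\delta_i\ba_{i+1}\in T_{A_i}$, so $m\bv_i \stepsc m\bv_{i+1}$ in one step. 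Second, since $\bv_0\in\R_{<0}^d$ and $m\ge 1$, the vector $\bc \coloneqq -(m-1)\bv_0$ is nonnegative, and a one-line induction shows $\bv_i' = m\bv_i + \bc$ for every $i$. So it suffices, for each $i$, to shift the straight one-step run from $m\bv_i$ to $m\bv_{i+1}$ upwards by the fixed nonnegative vector $\bc$.

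The crucial point is that \cref{lemma:continuous_shift} applies with $\bv = m\bv_i$ and $\bw = m\bv_{i+1}$, which requires the direction $\bw-\bv$ to be available, up to a positive scaling, in \emph{every} orthant met on the segment $[m\bv_i, m\bv_{i+1}]$, not just in $A_i$. I would establish that every $C\in\orthantpath{m\bv_i,m\bv_{i+1}}$ satisfies $A_i \preceq C$: for a coordinate $k$ with $\epsilon_k^{A_i}=+1$ we have $\bv_i[k],\bv_{i+1}[k]\ge 0$, hence $\bx[k]\ge 0$ for every $\bx$ on the segment, so the $k$-th sign of $\orthmax{\bx}$ is also $+1$; for coordinates with $\epsilon_k^{A_i}=-1$ there is nothing to verify. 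Monotonicity of $\V$ then gives $T_{A_i}\subseteq T_C$, so $\delta_i\ba_{i+1}\in T_C$; since $\delta_i\ba_{i+1}=\tfrac{\delta_i}{m}(m\bv_{i+1}-m\bv_i)=\tfrac{\delta_i}{m}(\bw-\bv)$, the scalar $\delta_C=\tfrac{\delta_i}{m}>0$ witnesses the hypothesis of \cref{lemma:continuous_shift} for that $C$.

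With the hypothesis in place I would invoke \cref{lemma:continuous_shift} with $\bv = m\bv_i$, $\bw = m\bv_{i+1}$, $\bv' = \bv_i' = m\bv_i + \bc$ and $\bw' = \bv_{i+1}' = m\bv_{i+1} + \bc$: here $\bv'\ge\bv$ and $\bw'\ge\bw$ hold because $\bc\ge 0$, and $\bw'-\bv' = m\bv_{i+1} - m\bv_i = 1\cdot(\bw-\bv)$, so one may take $\delta'=1$. The lemma then yields $\bv_i'\stepsc\bv_{i+1}'$, which is exactly the statement; concatenating these runs also gives $\bv_0'\stepsc\bv_n'$, matching the picture in \cref{fig:fourpicturesaboutvass}.

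The main obstacle I expect is choosing the right reference point for the shift. Applying \cref{lemma:continuous_shift} naively to $\bv_i,\bv_{i+1}$ fails, because $\bv_i'\ge\bv_i$ need not hold when the original run decreases some coordinate below its starting value $\bv_0$; the fix is to compare against $m\bv_i$ instead, relative to which $\bv_i' = m\bv_i+\bc$ is always at least as large, and this is precisely where the hypothesis $\bv_0\in\R_{<0}^d$ is used (to make $\bc$ nonnegative). The only other nontrivial ingredient is the orthant-domination observation of the second paragraph, which is exactly where monotonicity of the OVAS enters; the rest is routine.
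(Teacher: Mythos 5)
Your proof is correct and takes essentially the same route as the paper's: scale the run by $m$ to get $m\bv_i \stepsc m\bv_{i+1}$, observe $\bv_i' = m\bv_i + \Delta$ with $\Delta = \bv_0 - m\bv_0 = -(m-1)\bv_0 \in \R_{\ge 0}^d$, and then shift each scaled step by $\Delta$. The only difference is presentational: the paper simply cites its auxiliary \cref{lemma:rescaling} for the scaling step and its \cref{corollary:run_simple_shift} for the shift step, whereas you re-derive the scaling argument inline and apply \cref{lemma:continuous_shift} directly, explicitly checking the orthant-availability hypothesis via the observation that $A_i \preceq C$ for every $C$ met on the segment; this is exactly what a proof of the paper's (unproven) \cref{corollary:run_simple_shift} would need to do, so your write-up is somewhat more self-contained but not a genuinely different argument.
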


We also need the following technical lemma, which is a direct consequence of the simultaneous version of the Dirichlet's approximation theorem. Intuitively, it says that given a finite set of reals we can multiply them all with the same natural number so that all resulting numbers are arbitrarily close to integers.

\begin{restatable}{lemma}{LemKReals}\label{lemma:kreals} \cite[Theorem 1A]{Schmidt1980}
Let $r_1,\ldots r_k \in \R$. For every $\epsilon > 0$ there exists $m \in \Npos$ such that for every $i \in \set{1,\ldots,k}$ there exists $z_i \in \Z$, where $|mr_i - z_i| < \epsilon$.
\end{restatable}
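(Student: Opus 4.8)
The plan is to prove \cref{lemma:kreals} from scratch by the pigeonhole principle; this is exactly the classical route to the simultaneous Dirichlet approximation theorem, and since we only need the existence of a suitable $m$ (not the quantitative bound $m \le N$ of \cite[Theorem~1A]{Schmidt1980}), the argument is short and self-contained.

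First I would fix $N = \lceil 1/\epsilon\rceil \in \Npos$, so that $1/N \le \epsilon$, and partition the half-open cube $[0,1)^k$ into the $N^k$ boxes $[a_1/N,(a_1+1)/N) \times \cdots \times [a_k/N,(a_k+1)/N)$ indexed by $(a_1,\ldots,a_k) \in \set{0,\ldots,N-1}^k$. Then I would consider the $N^k+1$ points $P_n = (\,n r_1 - \lfloor n r_1\rfloor,\ \ldots,\ n r_k - \lfloor n r_k\rfloor\,) \in [0,1)^k$ for $n \in \set{0,1,\ldots,N^k}$. Since there are more points than boxes, two of them, say $P_{n_1}$ and $P_{n_2}$ with $0 \le n_1 < n_2 \le N^k$, lie in the same box.

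Next I would take $m = n_2 - n_1$, which lies in $\Npos$ because $n_1 < n_2$, and $z_i = \lfloor n_2 r_i\rfloor - \lfloor n_1 r_i\rfloor \in \Z$. Because $P_{n_1}$ and $P_{n_2}$ share a box of side $1/N$, their $i$-th coordinates differ by strictly less than $1/N$; and this difference equals precisely $(\,n_2 r_i - \lfloor n_2 r_i\rfloor\,) - (\,n_1 r_i - \lfloor n_1 r_i\rfloor\,) = m r_i - z_i$. Hence $|m r_i - z_i| < 1/N \le \epsilon$ for every $i$, which is the claim.

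There is essentially no obstacle here; the only points that deserve a little care are checking that $m \ge 1$ (immediate from $n_1 < n_2$), keeping the bound strict (the pigeonhole yields two distinct points inside a single half-open box, so each coordinate difference is strictly below $1/N$), and observing that rationality of the $r_i$ plays no role. If a fully self-contained argument is not desired, one can alternatively just invoke \cite[Theorem~1A]{Schmidt1980} verbatim, since our statement is a weakening of it.
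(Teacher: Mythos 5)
Your proof is correct; it is the classical pigeonhole argument for the simultaneous Dirichlet approximation theorem, and all the small points you flag (that $m \ge 1$, that the bound is strict because two points in the same half-open box of side $1/N$ differ by strictly less than $1/N$ in each coordinate, and that rationality of the $r_i$ plays no role) check out. Note also that $m r_i - z_i = (n_2 r_i - \lfloor n_2 r_i\rfloor) - (n_1 r_i - \lfloor n_1 r_i\rfloor)$ exactly, as you claim, so the inequality $|m r_i - z_i| < 1/N \le \epsilon$ follows. The paper itself gives no proof: it simply cites \cite[Theorem~1A]{Schmidt1980}, which is the quantitative form (with an explicit upper bound on $m$) established by the very same pigeonhole argument. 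So your write-up is a self-contained expansion of what the paper delegates to the reference rather than a genuinely different route.
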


\begin{proof}[Proof of \cref{theorem:continuous_semantics}]
We show how to find a continuous witnessing run from a discrete run, and visa versa. 
A discrete run is not a continuous run, because a continuous run must pause at the boundary, and can only go further into the adjacent orthant if this direction remains available. We take a run from a sufficiently smaller starting point and shift it to our desired starting point. After the shift, if we cross orthants then directions will be always available due to monotonicity. 

Conversely, when converting a continuous run to a discrete run, we face a continuous run using non-integer multiples of vectors. We use \cref{lemma:runs_scaling} to scale-up the run, using \cref{lemma:kreals} to find a multiple so that the scaled real coefficients are sufficiently close to an integer. 

Let $\V = (T_A)_{A \in \orthants}$ be a $d$-OVAS.
We start with the implication that if $\V = (T_A)_{A \in \orthants}$ is a positive instance of universal coverability then it is also a positive instance of universal continuous coverability. Thus, fix any $\bv \in \R^d$. We aim to prove that there is a continuous run $\bv \stepsc \bw$ for some $\bw \in \R^d_{\ge 0}$. 

Let $\bu = (\norm{\V},\ldots,\norm{\V}) \in \R^d$. By assumption there is a run $\bv_0,\ldots, \bv_n$, where $\bv_0 = (\bv - \bu)$ and $\bv_n \in \R_{\ge 0}$. Let $\ba_{i+1} = \bv_{i+1} - \bv_{i}$ for $i \in \set{0,\ldots,n-1}$, \ie the differences between consecutive elements in the run. Since $\ba_{i+1}$ is a transition, $\norm{\ba_{i+1}} \le \norm{\bu}$. 
To conclude the proof we need to show that $\bv_{i} + \bu \stepsc \bv_{i+1} + \bu$ for every $i \in \set{0,\ldots,n-1}$; indeed, $\bv_0 + \bu = \bv$ and $\bv_n + \bu \in \R_{\ge 0}^d$. 

Since $\bv_0,\dots,\bv_n$ is a run, then $(\bv_{i+1}- \bv_{i})\in T_{\orthmax{\bv_{i}}}$ for every $i\in \set{0,\ldots,n-1}$.
Let $\bv_{i,\epsilon}= \epsilon(\bv_{i+1}+\bu) + (1-\epsilon)(\bv_{i}+\bu)$, for $\epsilon\in[0,1)$. We argue that
$\delta(\bv_{i+1}- \bv_{i})=\delta((\bv_{i+1}+\bu)-(\bv_i+\bu))\in T_{\orthmax{\bv_{i,\epsilon}}}$
for some $\delta$. Indeed
\begin{multline*}
\bv_{i,\epsilon}= \epsilon(\bv_{i+1}+\bu) + (1-\epsilon)(\bv_{i}+\bu)\\ = \epsilon(\bv_{i} + \ba_{i+1}) + (1-\epsilon)(\bv_{i})+\bu = \bv_{i}+\bu+\epsilon{\ba_{i+1}}.
\end{multline*}
Since $\norm{\bu} \ge \norm{\ba_{i+1}}\ge \norm{\epsilon\ba_{i+1}}$, we have $\bv_{i,\epsilon} \ge \bv_{i}$, thus, by monotonicity $\delta(\bv_{i+1}- \bv_{i})\in T_{\orthmax{\bv_{i,\epsilon}}}$.
Hence $\bv_{i}+\bu\stepsc \bv_{i+1} + \bu$ by \cref{lemma:runs_continous}.

Conversely, we fix $\bv \in \R^d$ and aim to prove that there is a run $\bv \steps \bw$ for some $\bw \in \R^d_{\ge 0}$. Take any vector $\bv_0 \in \R_{<0}$ such that $\bv_0 + \bf{1} \le \bv $, where ${\bf 1} = (1,\ldots,1)$. By assumption there is a continuous run $\bv_0,\bv_1,\ldots,\bv_n$ for some $\bv_n \in \R_{\ge 0}$. Let $A_0,\ldots, A_{n-1}$ be the corresponding witnessing sequence of orthants; $\ba_{i+1} = \bv_{i+1} - \bv_i$; $\delta_{i+1} \in \R_{>0}$ be such that $\delta_{i+1}\ba_{i+1} \in T_{A_i}$ for all $i \in \set{0,\ldots,n-1}$; and $\bb_{i+1}= \delta_{i+1}\ba_{i+1}$.

By \cref{lemma:kreals} there exists $m \in \N \setminus \set{0}$ such that for every $i \in \set{0,\ldots,n-1}$ there exist $n_{i+1} \in \N$ such that $|\frac{m}{\delta_{i+1}} - n_{i+1}| < \frac{1}{n\norm{\V}} $. 
Consider the scaled run $\bv_0',\ldots,\bv_n'$ defined by $\bv_0' = \bv_0$, $\bv_i' = \bv_{i-1}' + m\ba_i$. By \cref{lemma:runs_scaling} $\bv_{i}' \stepsc \bv_{i+1}'$ for all $i \in \set{0,\ldots,n-1}$. Moreover, it is easy to see that $\bv_n' \in \R_{\ge 0}$.
We define the sequence of points $\bv_0'',\ldots,\bv_n''$ as follows: $\bv_0'' = \bv$ and $\bv_{i}'' = \bv_{i-1}'' + n_i \bb_i$. Intuitively, this is like the $\bv_{i}'$ run approximated to integers, and shifted by $\bf{1}$ to compensate for errors. The rest of the proof is dedicated to show first that $\bv_0'' \stepsc \bv_n''$ and then that $\bv_0'' \steps \bv_n''$, which will conclude the proof. The first step is depicted in \cref{fig:fourpicturesaboutvass}.

We prove that $\bv_{i}'' \stepsc \bv_{i+1}''$ for every $i \in \set{0,\ldots,n-1}$. Notice that $\frac{1}{n_i}(\bv_{i+1}'' - \bv_{i}'') = \bb_{i+1} \in T_{A_i}$. Moreover, 

\begin{align*}
\bv_i'' \; = \; & \bv + \sum_{j = 1}^{i} n_j \bb_j \; \ge \; \bv_0' + {\bf 1} + \sum_{j = 1}^{i} n_j \bb_j \\[-0.1cm] &\qquad\qquad\qquad\qquad\qquad \text{by the choice of } \bv_0 = \bv_0' \\[-0.2cm]
\ge \; & \bv_0' + {\bf 1} + \sum_{j = 1}^{i}\left(\frac{m}{\delta_{j}} \bb_j - \frac{1}{n\norm{\V}} \norm{\bb_j}\right)\\[-0.1cm] &\qquad\qquad\qquad\qquad\qquad \text{by the choice of } m \\[-0.15cm]
\ge \; & \bv_0' + {\bf 1} + \sum_{j = 1}^{i}\left(m\ba_j - \frac{1}{n}{\bf 1} \right)\\[-0.1cm]&\qquad\qquad\qquad\qquad\qquad \bb_j = \delta_j\ba_j \text{ and } \norm{\bb_j} \le \norm{\V} \\[-0.3cm]
= \; & \frac{n-i}{n} {\bf 1} + \bv_0' + \sum_{j = 1}^{i} m\ba_j \; \ge \; \bv_i'. 
\end{align*}
Since $\bv_{i}' \stepsc \bv_{i+1}'$ by \cref{lemma:continuous_shift} we get $\bv_{i}'' \stepsc \bv_{i+1}''$.

To conclude we need to prove that $\bv_{i}'' \steps \bv_{i+1}''$ for every $i \in \set{0,\ldots,n-1}$. Recall that $\bv_{i+1}'' - \bv_i'' = n_{i+1}\bb_{i+1}$ and that $n_{i+1}$ is a natural number. Let $\bu_0, \ldots, \bu_{n_{i+1}}$ be defined as $\bu_0 = \bv_i''$ and $\bu_{j+1} = \bu_j + \bb_{i+1}$. Notice that $\bu_{n_{i+1}} = \bv_{i+1}''$. By \cref{lemma:runs_continous} and \cref{remark:runs} $\bu_{j} \steps \bu_{j+1}$ for every $j \in \set{0,\ldots,n_{i+1} - 1}$. Therefore, $\bv_{i}'' \steps \bv_{i+1}''$, which concludes the proof. 
\end{proof}
 
\section{Coverability for OVAS}
\label{sec:coverability}
In this section we present our undecidability and decidability results for coverability and universal coverability, respectively.
In the definition of $d$-OVAS for each orthant $A$ the set of transitions $T_A$ is a subset of $\R^d$.
For a set $S \subseteq \R$ an OVAS \emph{over S} is an OVAS using numbers only from
$S$ in its transitions, namely for each orthant $A$ we have $T_A \subseteq S^d$.

\begin{restatable}{theorem}{ThmCoverUndec}\label{theorem:coverability}
The coverability problem for OVAS over $\Z$ is undecidable.
\end{restatable}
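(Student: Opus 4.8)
The plan is to reduce the halting problem for two-counter (Minsky) machines to coverability for $3$-OVAS over $\Z$. Recall that a $2$-counter machine has two nonnegative counters $c_1,c_2$, finitely many control states, and instructions of the form ``increment $c_j$ and go to state $q'$'', ``if $c_j>0$ then decrement $c_j$ and go to $q'$ else go to $q''$''. I would encode a configuration $(q,n_1,n_2)$ of the machine, where $q$ is one of $|Q|$ control states, as the OVAS vector $(\,q,\,n_1,\,n_2\,) \in \Z^3$, using the first coordinate to carry the control state as an integer in $[1,|Q|]$ — or, since we only have orthant-sensitivity to play with and no finite control, better to keep the first coordinate as a running ``state counter'' whose value determines which transitions are enabled via the orthant partition combined with carefully chosen offsets. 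The key structural feature we must exploit is that OVAS transitions are orthant-sensitive: a vector $\bv$ is available at a point $\bx$ iff $\bv \in T_{\orthmax{\bx}}$, so the sign pattern of the current point — in particular, whether each $c_j$ is $0$ (on the boundary) or strictly positive — is visible to the machine. This is exactly the information a Minsky machine needs for its zero-tests: ``$c_j = 0$'' corresponds to being on the hyperplane $x_{j}=0$, i.e. in a smaller orthant, while ``$c_j>0$'' corresponds to the interior. Monotonicity ($A \preceq B \Rightarrow T_A \subseteq T_B$) is a real constraint here: a transition available on the boundary $x_j=0$ is automatically available in the interior $x_j>0$, so I cannot make a transition ``fire only when $c_j=0$'' directly. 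The standard trick is to make the zero-test transition also decrease some other coordinate into negative territory, rendering the configuration ``dead'' (no run can then cover), so that an illegitimate use of the zero-test branch on a nonzero counter leads to a non-covering dead end; legitimacy of a run is then enforced by the coverability target rather than by the transition relation alone.

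Concretely I would proceed as follows. First, fix the dimension $d=3$; coordinate $1$ encodes the control state, coordinates $2,3$ the two counters. Second, for each instruction of the machine, introduce one or a bounded number of OVAS vectors whose counter-component implements the increment/decrement and whose state-component implements the control-state change; the orthant membership of each vector (the minimal orthants $A$ with $\bv\in T_A$) is chosen to reflect which counter-signs the instruction is allowed to read. Because of monotonicity I must be careful: an instruction testing ``$c_1=0$'' gets a vector available in all orthants with $x_1\ge 0$ (both the $x_1=0$ boundary and the $x_1>0$ interior), but the vector simultaneously perturbs coordinate $1$ (the state) into an ``error state'' region whenever it was applied with $c_1>0$; since error states have no outgoing transitions toward the target, such runs cannot cover. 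Third, the initial point of the coverability instance is the encoding of the initial configuration $(q_0,0,0)$, shifted if necessary so that the relevant coordinates start in the intended orthant; the target orthant for coverability is $\R^3_{\ge 0}$, but to make covering equivalent to halting I would add a final ``flush'' gadget reachable only from the halting control state that drives all three coordinates to be simultaneously nonnegative. So a run covers iff it faithfully simulates the machine and reaches the halt state. Finally, I would verify the two directions: a halting computation yields a covering run by playing the corresponding transition sequence; conversely any covering run must, by the dead-end arguments above, use transitions in a way consistent with the machine semantics, hence projects to a halting computation. Undecidability of the halting problem for $2$-counter machines then gives undecidability of coverability for $3$-OVAS over $\Z$, and a fortiori for OVAS over $\Z$ in general.

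The main obstacle I expect is the monotonicity constraint interacting with zero-tests, i.e. simulating a faithful zero-test without a finite control to fall back on. Unlike in standard VAS — where one can enforce correctness by reachability to a precise target configuration — here coverability only gives us an ``upward-closed'' target, so I must ensure that any ``cheating'' run (one that applies a zero-test branch while the counter is positive, or an increment-by-wrong-amount) is forced into a configuration from which the positive orthant is genuinely unreachable. The device for this is to let cheating push the control coordinate (or a dedicated sign coordinate) strictly negative in a way that is irreversible, because monotonicity only adds transitions as we move to larger orthants, never removes them, so I need the ``trap'' to be an orthant whose transition set is empty or only leads further away. Getting this trap construction to interact correctly with the three coordinates — state, $c_1$, $c_2$ — within dimension $3$ (no room for an auxiliary coordinate) is the delicate part; if three dimensions prove too tight for a clean trap, one can instead multiplex the control state and a sign flag into the first coordinate using a large enough integer radix, since the numbers in transitions are allowed to be arbitrary integers. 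I would also double-check that the continuous-semantics subtleties (\cref{theorem:continuous_semantics}) are irrelevant here: the statement is about plain (discrete) coverability over $\Z$, so integer runs suffice and no approximation argument is needed.
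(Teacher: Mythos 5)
Your overall strategy — reduce from the halting problem for two-counter machines, exploiting the fact that orthant membership (i.e. the sign pattern of the current point) gives the only ``read'' access to a configuration and monotonicity adds transitions as one moves to larger orthants — is exactly the approach the paper takes. But the concrete construction you propose contains a genuine gap, and the gap is the heart of the matter.

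The fatal step is encoding the control state as an integer in $[1,|Q|]$ on coordinate~$1$, or later as a ``large enough integer radix.'' In OVAS the transition set at a point depends only on $\orthmax{\bv}$, i.e.\ on the sign pattern of the coordinates, never on magnitudes; so any packing of $|Q|$ many control states (or a state plus a sign flag) into the magnitude of a single nonnegative coordinate is completely invisible to the transition relation, and all states collapse to one. This is not a delicacy that can be patched with a clever offset; the model simply cannot distinguish $x_1=3$ from $x_1=7$. The paper's fix is to spend one coordinate per control state and encode the current state in the sign pattern: $x_i>0$ for the active state, $x_k<0$ for $k\neq i$, so that only transitions of $q_i$ are available in the orthant $\orthmax{\corr(C)}$. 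Already this costs $n$ coordinates, not one.

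The second gap is the counter encoding. With a single coordinate $x_2=c_1$, both $c_1=0$ and $c_1>0$ sit in orthants with $\epsilon_2=+1$, so your zero-test transition is available in both, and being a fixed vector it has the same effect in both; there is no mechanism by which ``the vector perturbs coordinate~1 into an error region whenever $c_1>0$.'' A fixed transition vector cannot behave conditionally on a magnitude. If you instead try $x_2=-c_1$, the zero-test is fixed (available exactly when $\epsilon_2=+1$, i.e.\ $c_1=0$), but now the decrement needs to be available when $c_1>0$, i.e.\ $\epsilon_2=-1$, and by monotonicity it is then also available when $\epsilon_2=+1$, so a decrement can illegally fire at $c_1=0$ and produce $x_2=1$, which decodes as the nonsensical $c_1=-1$ and is indistinguishable (by sign) from $c_1\le 0$. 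The paper's resolution is the two-coordinate encoding $(c_\ell-1,\,-c_\ell)$ per counter: the sign pattern is $(+,-)$ when $c_\ell>0$ and $(-,+)$ when $c_\ell=0$, and these two orthants are $\preceq$-incomparable, so monotonicity never leaks a decrement into the $c_\ell=0$ case nor a zero-test into the $c_\ell>0$ case. No trap is needed; illegal transitions simply never become enabled. This costs another four coordinates, for $n+4$ in total, and the paper proves the invariant $\reach(\V_M)=\{\corr(C)\mid C\in\reach(M)\}$ outright rather than arguing via dead ends. Your $3$-OVAS claim is therefore not merely unproved but, with the devices you describe, unachievable; the statement you're proving only asks for undecidability in some dimension, so you should drop the dimension~$3$ ambition and instead adopt the sign-per-state and complementary-pair-per-counter encodings.
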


Our two decidability results are the following.

\begin{theorem}\label{theorem:q-universal}
The universal continuous coverability problem for $3$-OVAS over $\Q$ is decidable in \exptime.
\end{theorem}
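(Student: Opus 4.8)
The goal is to decide universal continuous coverability for a $3$-OVAS $\V = (T_A)_{A \in \orthants}$ over $\Q$. The high-level strategy is to show that $\V$ is a \emph{negative} instance if and only if there exists a \emph{separator}: a surface that isolates the positive orthant $\R^3_{\ge 0}$ from some point's continuous-reachability set, and that such a separator (when one exists) can always be chosen within a class definable in the first-order theory of the reals. Then the problem reduces to checking satisfiability of a first-order formula over $(\R,+,\cdot,\le)$, decidable by Tarski's theorem (and with appropriate encoding of the rationals appearing in $\V$, the formula stays purely polynomial, giving \exptime by \cite{DBLP:journals/jsc/Grigorev88}).

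\textbf{Step 1: reduce to a finite orthant-by-orthant analysis.} Fix the $2^3 = 8$ orthants. Within each orthant $A$, by \cref{theorem:continuous_semantics} and the remark preceding \cref{prop:cvass}, the continuous behaviour using $T_A$ is exactly that of a standard continuous VAS, so by \cite{BlondinFHH17} the set of directions reachable inside $A$ from a point is (locally) the cone generated by $T_A$ together with the "currently enabled" subset; crucially, the continuous reachability relation restricted to $A$ is definable by a first-order formula over the reals of bounded size (a linear arithmetic / cone-membership statement). Because transitions may only leave $A$ by pausing on a boundary face and entering an adjacent orthant where the direction remains available (monotonicity), the global reachability relation $\stepsc$ is obtained by composing at most $|\orthants|\cdot(\text{bounded})$ such local steps — in dimension $3$ one can bound the number of distinct orthants visited along a "useful" run, so the full relation $\bv \stepsc \bw$ is expressible as a first-order formula $\Phi(\bv,\bw)$ over the reals.

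\textbf{Step 2: characterise negative instances by a separator.} $\V$ is a negative instance of universal continuous coverability iff there is a starting point $\bv$ with no continuous run into $\R^3_{\ge0}$, i.e.\ $\exists \bv\, \forall \bw\, (\bw\in\R^3_{\ge 0} \Rightarrow \lnot\Phi(\bv,\bw))$. Rather than use $\Phi$ directly (whose size depends on the orthant-composition bound, which I would want to keep controlled), the cleaner route is geometric: show that non-coverability of $\bv$ is equivalent to the existence of a \emph{separator} $S$ — a closed, downward-closed-in-a-suitable-sense region containing $\bv$, disjoint from $\R^3_{\ge0}$, and \emph{inductive} for $\V$ (for every $A$, every point of $S\cap A$, and every $\bv' \in T_A$ with $\bv'$ enabled, a small step stays in $S$). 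Using the continuous-VAS geometry in each orthant, inductivity of $S$ is a local condition on $S$'s boundary normals versus the cones $\mathrm{cone}(T_A)$. The key quantitative lemma I would prove is that if any inductive separator exists, then one exists that is a finite union of polyhedra with a bounded number of facets whose defining coefficients are polynomial in $\norm{\V}$ — this is where dimension $3$ is used (the arrangement of $8$ cones in $\R^3$ has bounded complexity). This makes "$\exists$ inductive separator" a first-order sentence over the reals of size polynomial in $\V$.

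\textbf{Main obstacle, and conclusion.} The hard part is Step 2: proving the separator characterisation is \emph{complete}, i.e.\ that whenever $\V$ is a negative instance there really is a \emph{definable} (bounded-complexity, polynomial-coefficient) inductive separator, not merely some abstract downward-closed invariant. This requires a compactness/limit argument combined with a careful case analysis of how the reachability cone can "bend" as a run moves between the $8$ orthants — one must rule out separators that spiral or require infinitely many facets, and the restriction to $d=3$ is what keeps the cone arrangement, and hence the needed separator complexity, finite and uniformly bounded. Continuous-VAS reachability results \cite{BlondinFHH17} supply the per-orthant building block (reachability inside a fixed orthant is "flat" — achievable in boundedly many phases), which is what lets the global separator be taken piecewise-polyhedral. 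Once the characterisation is established, the algorithm is: construct the first-order sentence $\exists\,(\text{separator data})\ \big[\,\text{it contains a point}\ \wedge\ \text{it avoids}\ \R^3_{\ge0}\ \wedge\ \text{it is inductive}\,\big]$, which holds iff $\V$ is a \emph{negative} instance; since the coefficients are rationals from $\V$, the sentence is in the existential theory of real closed fields with polynomial constraints, decidable in \exptime by \cite{DBLP:journals/jsc/Grigorev88}. Negating the answer gives the decision for universal continuous coverability, establishing \cref{theorem:q-universal}.
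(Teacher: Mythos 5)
Your high-level plan — characterise negative instances by a first-order–definable \emph{separator}, then invoke Grigoriev's \exptime\ bound for Tarski's theorem — matches the paper's strategy, and you correctly identify the per-orthant continuous-VAS reachability formula from~\cite{BlondinFHH17} as the building block. But there is a real gap in Step~2, and your proposed route to closing it is not the one that actually works.

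The missing idea is the restriction of the separator to the \emph{walls set} $W = \{\bv : \exists i,\ \bv[i]=0\}$. The paper first observes (Claim~\ref{cl:rwd}) that one may replace $\reach(V)$ by $\rwd = \reach(V)^\downarrow \cap W$: the positive orthant is hit if and only if $\rwd$ hits it, because any continuous run from $\R^3_{<0}$ into $\R^3_{\ge 0}$ must pause on a boundary face. A separator is then \emph{defined} as a subset of $W$, not of $\R^3$ (Definition~\ref{definition:separator}), and this drops the effective dimension from $3$ to $2$. The crucial structural lemma (Claim~\ref{claim:line}) is then elementary: in each $2$-dimensional quarter of $W_3$ with opposite signs, any set that is closed under positive scaling and downward closed inside $W$ is exactly the set of points below a single line through the origin; in quarters with equal signs it is either empty or everything. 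This yields a canonical $18$-parameter description of \emph{every} candidate separator, with no compactness argument, no bound on the number of facets to be ruled out, and no claim that the coefficients are polynomial in $\norm{\V}$ — the parameters are simply $18$ real variables to be quantified over in the FO sentence. Your claim that a separator can always be taken \emph{piecewise-polyhedral with polynomial coefficients} is both stronger than needed and unsubstantiated; you acknowledge that the completeness of this representation is the hard part, and the compactness-plus-case-analysis argument you sketch for it is exactly what the wall-restriction trick makes unnecessary.

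Two smaller points. First, the paper never needs the \emph{global} relation $\stepsc$ to be FO-expressible, only the single-orthant relation $\stepsca$ (Proposition~\ref{prop:cvass}); the separator's condition~3 is closure under $\stepsca$, and this suffices because a run from $\R^3_{<0}$ to $W$ decomposes at wall crossings into single-orthant segments. Your Step~1 claim that one can bound the number of orthant visits to express global $\stepsc$ in FO is not needed and not obviously true. Second, the resulting sentence is not existential: it has the form $\exists\,(\text{$18$ parameters})\ \forall \bx\,\exists \by\,\varphi_A(\bx,\by)$, i.e.\ quantifier alternation two. Decidability in \exptime\ by~\cite{DBLP:journals/jsc/Grigorev88} still applies because the alternation depth is a fixed constant, but calling it the existential theory is inaccurate.
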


\begin{theorem}\label{theorem:logq-universal}
Assuming Schanuel's conjecture the universal continuous coverability problem for $3$-OVAS over $\LQ$ is decidable.
\end{theorem}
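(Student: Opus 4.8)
The plan is to adapt the proof of \cref{theorem:q-universal} almost verbatim, the only difference being the arithmetic theory in which the final decision procedure runs. By \cref{theorem:continuous_semantics} it suffices to work with continuous runs, so I would reuse whatever separator characterisation is developed for the $\Q$-case: namely, that a $3$-OVAS $\V$ is \emph{not} a positive instance of universal continuous coverability if and only if there exists some point $\bv$ together with a certificate — a ``separator'' — witnessing that the positive orthant is unreachable from $\bv$ via continuous runs. The key structural facts (that within a fixed orthant the continuous semantics coincides with the standard continuous VAS model, so that \cref{prop:cvass} applies, and that the reachability set in each orthant is a convex polyhedron definable by a bounded first-order formula over the reals using the transition vectors as parameters) do not care whether the entries of the transition vectors come from $\Q$ or from $\LQ$; they only care that the ambient field is $\R$ and that there are finitely many transitions. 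So the entire combinatorial/geometric argument goes through unchanged, producing a first-order sentence $\varphi_\V$ over $(\R,+,\cdot,\le,0,1)$, with the transition entries plugged in as constants, such that $\varphi_\V$ holds iff $\V$ is a positive instance.

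The one place where the move from $\Q$ to $\LQ$ matters is the final step: deciding the truth of $\varphi_\V$. When the entries are rationals, $\varphi_\V$ is a sentence of the first-order theory of real closed fields and its truth is decidable (and in fact in \exptime) by Tarski's theorem together with the complexity bounds of \cite{DBLP:journals/jsc/Grigorev88}, giving \cref{theorem:q-universal}. When the entries are elements of $\LQ$, each such entry equals $\log q$ for some positive rational $q$; writing $\log q = \ln q / \ln 2$ we can replace every constant $\log q_j$ appearing in $\varphi_\V$ by a fresh existentially quantified variable $y_j$ constrained by $\exp(y_j \cdot \ln 2) = q_j$, i.e. $2^{y_j} = q_j$. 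This turns $\varphi_\V$ into a sentence of the first-order theory of the real exponential field $(\R,+,\cdot,\exp,\le,0,1)$. By Macintyre and Wilkie's result, the theory of the real exponential field is decidable assuming Schanuel's conjecture \cite{schanuelsconj}; feeding $\varphi_\V$ (so modified) to that decision procedure yields \cref{theorem:logq-universal}.

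Concretely the steps I would carry out are: (1) invoke \cref{theorem:continuous_semantics} to reduce to continuous coverability; (2) replay the orthant-by-orthant analysis of the $\Q$-proof, using \cref{prop:cvass} inside each orthant and gluing the finitely many orthant-reachability polyhedra, to extract the same separator/first-order characterisation and hence the sentence $\varphi_\V$ — observing at each stage that nothing used rationality of the entries, only that $T$ is finite and the field is $\R$; (3) express each occurring constant $\log q$ via a defining equation $2^{y}=q$ with $y$ a new variable, obtaining a sentence in the first-order theory of $(\R,\exp)$; (4) conclude by decidability of that theory under Schanuel's conjecture. The main obstacle — and the reason this is stated as a separate theorem rather than a one-line corollary — is making sure that step (2) genuinely is field-generic, i.e. that the construction of $\varphi_\V$ in the $\Q$-proof never silently exploits that rational polyhedra are finitely presented in a way that fails over $\LQ$, and in particular that the separator can still be taken to be definable by a bounded formula with the (now irrational) transition entries as parameters. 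Once that is checked, the passage to the exponential theory is routine, and the only price paid is the reliance on Schanuel's conjecture in place of the unconditional Tarski--Grigor'ev bound.
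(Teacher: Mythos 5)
Your proposal follows exactly the paper's strategy: reduce to continuous coverability via \cref{theorem:continuous_semantics}, characterise negative instances by the existence of a separator (\cref{lem:separator}), observe that \cref{lem:fo} expresses separator existence in first-order logic over the reals (using \cref{prop:cvass} inside each orthant), and finally note that over $\LQ$ the formula lands in $FO(+,\cdot,\exp,<)$ whose decidability is Macintyre--Wilkie modulo Schanuel. The one point worth flagging is that in the paper the separator is not a per-starting-point certificate but a single global downward-closed, scaling-closed set $S \subseteq W$ over-approximating $\reach(V)^{\downarrow}\cap W$, and its defining lemmas (\cref{lem:separator}, \cref{lem:fo}) are already stated and proved uniformly for both $\Q$ and $\LQ$; so the ``field-genericity'' you worry about is not an issue you have to re-verify but is built into how the paper sets up the characterisation — the only genuinely $\LQ$-specific work is exactly your step (3), replacing each constant $\log(p/q)$ by a fresh variable $y$ with defining constraint $\exp(y)=p/q$.
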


Together with \cref{thm:rational_to_tropical}, \cref{theorem:logq-universal} completes the proof of \cref{theorem:boundedness_min}.

\begin{remark}
Computations with elements from $\LQ$ require considerable care. Whilst they are easy to represent (\eg by storing $2^x\in \Q$ in place of  $x\in\LQ$), note that $(\LQ,+,\cdot)$ is not a semiring.  In particular, the product of two elements, \eg $\log_2(a)\log_2(b)$, is not necessarily an element of $\LQ$. In general, such computations are curiously difficult; for example (unconditionally) deciding whether $\log_2(a)\log_2(b)\le\log_2(c)\log_2(d)$ for $a, b, c, d \in \Q_{>0}$ is, to the best of our knowledge, open and related to the four exponential conjecture which asks if they can ever be equal (see e.g.,\cite[Sec. 1.3 and 1.4]{Waldschmidt00}). However, Schanuel's conjecture implies decidability of the first order theory of the reals with exponential function $FO(+,\cdot,\exp,<)$~\cite{schanuelsconj}. In particular, this allows arithmetic operations between elements of $\LQ$. 
\end{remark}

The rest of this section is devoted to the proofs of Theorems~\ref{theorem:q-universal}~and~\ref{theorem:logq-universal}.
We slowly introduce required notions and at the end we show how the developed techniques allow to prove both theorems.
Most of our steps will work for a $d$-OVAS in any dimension $d$.

We define a notion of a \emph{separator} with a property
that an OVAS $V$ is a negative instance of the universal coverability problem if and only if there exist a separator for $V$.
Finally we show that the existence of a separator in $3$-OVAS over $\Q$ can be expressed in the first order logic $FO(+,\cdot,<)$
with bounded quantifier alternation which is decidable in \exptime due to Tarski's theorem~\cite{DBLP:journals/jsc/Grigorev88}.
The existence of a separator in $3$-OVAS over $\LQ$
can be expressed in the first order logic $FO(+,\cdot,\exp,<)$ which is decidable, subject to Schanuel's conjecture~\cite{schanuelsconj}.

Given a $d$-OVAS we define the \emph{walls set} $W = \big\{\bv \in \R^d \mid \exists i \in [1,d] : \bv_i = 0\big\}$, \ie
vectors in $\R^d$ with some coordinate equal to zero. The set $W$ contains all of the faces of $d$-dimensional
orthants. Recall that the negative orthant is defined as $\R_{\leq 0}^d$. We define the \emph{strictly negative orthant}
as $\R_{< 0}^d = \R_{\le 0} \setminus W$.
For a $d$-OVAS $V$ let 
$$
\reach(V) = \set{\bv \mid \exists \bu \in \R_{<0}^d:  \bu \stepsc \bv}
$$
be the set of all
vectors reachable from the strictly negative orthant. 
We observe the following.

\begin{restatable}{claim}{ClaimReachabilitySet}\label{cl:reachability-set}
A $d$-OVAS $V$ is positive instance of universal continuous coverability problem if and only if
$\reach(V) \cap \R_{\ge 0}^d \neq \emptyset$.
\end{restatable}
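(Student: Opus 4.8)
The plan is to prove \cref{cl:reachability-set} by a double implication, essentially unfolding the definitions of universal continuous coverability and of $\reach(V)$.

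\textbf{($\Leftarrow$).} Suppose $\reach(V) \cap \R^d_{\ge 0} \neq \emptyset$, say $\bw \in \reach(V)$ with $\bw \ge \mathbf{0}$, witnessed by a continuous run $\bu \stepsc \bw$ with $\bu \in \R^d_{<0}$. To conclude that $V$ is a positive instance, I must show that \emph{every} starting vector $\bv \in \R^d$ can continuously cover. Given an arbitrary $\bv$, pick a scalar $\lambda$ large enough that $\lambda \bu \le \bv$ coordinatewise and $\lambda \bu \in \R^d_{<0}$ (this is possible since $\bu$ is strictly negative). First run $\bv \stepsc \lambda\bu$ is \emph{not} what we want; instead I would transport the run $\bu \stepsc \bw$ to a run starting at $\bv$. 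Concretely, apply \cref{lemma:continuous_shift}/\cref{lemma:runs_continous}-style shifting: since $\bv \ge \lambda \bu$ (after rescaling $\bu$ down so that it dominates below $\bv$) and all the transitions used along $\bu \stepsc \bw$ remain available at the shifted, pointwise-larger configurations by monotonicity of $V$, we obtain a continuous run from $\bv$ to some $\bw' \ge \bw \ge \mathbf 0$. Hence $\bv$ continuously covers. Since $\bv$ was arbitrary, $V$ is a positive instance. (A mild technical point: the shifting lemmas are stated per-step for a single repeated transition direction, so I would apply the shift argument step by step along the run, maintaining the invariant that the current shifted configuration pointwise dominates the original one; monotonicity of $T_A$ keeps every step legal.)

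\textbf{($\Rightarrow$).} Suppose $V$ is a positive instance of universal continuous coverability. Then in particular it holds for some (indeed any) strictly negative starting point $\bu \in \R^d_{<0}$: there is a continuous run $\bu \stepsc \bw$ with $\bw \in \R^d_{\ge 0}$. By definition of $\reach(V)$, this $\bw$ lies in $\reach(V)$, and it also lies in $\R^d_{\ge 0}$, so $\reach(V) \cap \R^d_{\ge 0} \neq \emptyset$. This direction is essentially immediate from the definitions.

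\textbf{Main obstacle.} The forward direction is trivial; the real content is the backward direction, and there the subtlety is making the "transport a covering run from a lower starting point to an arbitrary starting point" argument rigorous. One has to be careful that (i) rescaling $\bu$ to sit strictly below $\bv$ keeps it in the strictly negative orthant, and (ii) when the shifted run crosses orthant boundaries, the needed transition directions are still present — this is exactly where monotonicity ($A \preceq B \implies T_A \subseteq T_B$) is used, together with the fact that the shifted configurations are pointwise $\ge$ the original ones (so they live in orthants $\succeq$ the original witnessing orthants). I would phrase this as an explicit induction on the length of the run, reusing \cref{lemma:runs_continous} and \cref{lemma:continuous_shift} for the per-step shifts, mirroring the reasoning already carried out in the proof of \cref{theorem:continuous_semantics}.
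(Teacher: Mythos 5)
Your proof is correct and takes essentially the same route as the paper: both directions are handled identically, with the forward direction immediate and the backward direction obtained by scaling the witness run from a strictly negative point and then shifting it up by a nonnegative vector to reach an arbitrary start, relying on monotonicity. The only difference is packaging: the paper invokes the already-established Lemma (rescaling: $\bu \stepsc \bv$ implies $\lambda\bu \stepsc \lambda\bv$) and Corollary (run\_simple\_shift: $\bu \stepsc \bv$ implies $\Delta+\bu \stepsc \Delta+\bv$ for $\Delta \ge 0$) in one line each, whereas you propose re-deriving the shift by a step-by-step induction using the per-segment lemmas — that induction is exactly how those two facts are proved, so you would just be reproving them inline.
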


By \cref{cl:reachability-set} it is enough to focus on deciding whether $\reach(V)$ intersects the positive orthant. 
For $S \subseteq \R^d$ we define its \emph{downward closure} as $S^\downarrow = \set{\bv \mid \exists \bs \in S : \bv \le \bs}$.
Suppose $S \subseteq T \subseteq \R^d$. We say that $S$ is \emph{downward closed inside $T$} if $S = S^\downarrow \cap T$.
Notice that $\reach(V)$ intersects the positive orthant if and only if $\reach(V)^\downarrow$ intersects the positive orthant.
\newcommand{\rwd}{\reach^\downarrow_W(V)}
Furthermore, let
\[
\rwd = \reach(V)^\downarrow \cap W.
\]

\begin{restatable}{claim}{ClaimRWD}\label{cl:rwd}
For every $d$-OVAS $V$ the following are equivalent:
$\rwd \cap \R_{\ge 0}^d \neq \emptyset$ if and only if
$\reach(V) \cap \R_{\ge 0}^d \neq \emptyset$.
\end{restatable}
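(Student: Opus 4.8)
The plan is to prove both directions of \cref{cl:rwd} by exploiting that $\reach(V)$ is closed under continuous runs and that $W$ is ``between'' the strictly negative orthant and the positive orthant in a suitable sense.

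First I would observe the easy direction. Suppose $\rwd \cap \R_{\ge 0}^d \neq \emptyset$. Since $\rwd = \reach(V)^\downarrow \cap W \subseteq \reach(V)^\downarrow$, we immediately get a point $\bv \in \reach(V)^\downarrow \cap \R_{\ge 0}^d$, so some $\bs \in \reach(V)$ satisfies $\bs \ge \bv \ge \mathbf{0}$, hence $\bs \in \reach(V) \cap \R_{\ge 0}^d$. Thus $\reach(V) \cap \R_{\ge 0}^d \neq \emptyset$. (As already remarked before the claim, $\reach(V)$ intersects the positive orthant iff $\reach(V)^\downarrow$ does, so this direction is essentially bookkeeping.)

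For the converse, assume $\reach(V) \cap \R_{\ge 0}^d \neq \emptyset$, say $\bw \in \reach(V)$ with $\bw \ge \mathbf{0}$. By definition of $\reach$ there is a strictly negative vector $\bu \in \R_{<0}^d$ and a continuous run $\bu = \bv_0, \bv_1, \ldots, \bv_n = \bw$. The idea is to follow this run until the first moment it leaves the strictly negative orthant: let $j$ be minimal such that $\bv_j \notin \R_{<0}^d$ (this exists since $\bv_n = \bw \ge \mathbf{0}$ is not strictly negative, assuming $d\ge 1$; if $\bw$ itself has a zero coordinate we may already be on $W$, but then the same argument with $j=n$ works, and if $\bw \in \R_{>0}^d$ strictly, we can first note $\mathbf{0}\in\reach(V)^\downarrow$ lies on $W$). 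The key point is that the continuous run restricted to $\bv_{j-1},\bv_j$ uses a single witnessing orthant $A_{j-1}$ containing both endpoints; since $\bv_{j-1} \in \R_{<0}^d$ and $\bv_j \notin \R_{<0}^d$, the segment between them must touch a coordinate hyperplane. More carefully: continuous runs pause at the boundary when crossing orthants, so either $\bv_j \in W$ already, or $\bv_j$ lies strictly inside an orthant and then some earlier point on this segment lies on $W$; but segments within a continuous-run step need not be subdivided at $W$, so I would instead argue as follows. Since $\bv_{j-1}$ has all coordinates strictly negative and $\bv_j$ has at least one coordinate $\ge 0$, by the intermediate value theorem the segment $\delta \bv_{j-1} + (1-\delta)\bv_j$ crosses a point $\bv^*$ with some coordinate equal to $0$ and no coordinate positive; in particular $\bv^* \in W$ and $\bv^* \le \bv_{j-1}$ coordinatewise is false in general, so instead I use $\bv^* \le \bv_j$ is also false. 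The cleanest route: take the point $\bv^\dagger$ obtained from $\bv_{j-1}$ by the smallest scaling along $\bv_j - \bv_{j-1}$ that makes some coordinate hit $0$; then $\bv^\dagger \in W$, $\bv^\dagger \le \bv_{j-1} + \text{(partial move)}$, and crucially $\bv^\dagger$ is reachable since $\bv_{j-1} \stepsc \bv^\dagger$ (same orthant, a sub-fraction of the same transition direction, using $\delta$ scaled accordingly). Hence $\bv^\dagger \in \reach(V) \cap W \subseteq \reach(V)^\downarrow \cap W = \rwd$. It remains to see $\bv^\dagger \ge \mathbf{0}$ is false — indeed $\bv^\dagger$ need not be nonnegative. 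So the final step is to also move all the way to a nonnegative point: since $\bw \in \reach(V)$ and $\bw \ge \mathbf 0$, and $\reach(V)^\downarrow \cap W$ contains points, I instead pick the point where the original run \emph{first} reaches $W$ travelling toward $\bw$, and then continue the run from there to $\bw \ge \mathbf 0$; downward-closing $\bw$ to the point $\bv^\dagger$'s ``shadow'' on $W$...

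Let me restate the converse cleanly. Given $\bw \in \reach(V)$, $\bw \ge \mathbf 0$, consider the reachability witness $\bv_0 \stepsc \cdots \stepsc \bv_n = \bw$ with $\bv_0 \in \R_{<0}^d$. Let $\bv^*$ be the first point along this continuous run lying in $W$ (it exists because $\bv_n = \bw \in W^{\ge 0} \cup \R_{>0}^d$; if $\bw \in \R_{>0}^d$ strictly, take instead the point where coordinate-$1$ first hits $0$, which occurs strictly before reaching $\bw$). Splitting the step containing $\bv^*$ and using that a sub-segment of one continuous-run step within its witnessing orthant is again realizable (scale $\delta$ by the same fraction), we get $\bv_0 \stepsc \bv^*$, so $\bv^* \in \reach(V)$. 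Now take the suffix: $\bv^* \stepsc \bw$ along the same run. Since $\bw \ge \mathbf 0$, we have $\bw \in \reach(V)$; downward-closing, the coordinatewise maximum argument gives that $\reach(V)^\downarrow$ contains $\bw$ and hence contains the point $\bw'$ obtained by keeping $\bw$'s value on the coordinate that $\bv^*$ zeroed out set to $0$ and all others kept at $\bw$'s (nonnegative) values — wait, that increases nothing. Simpler: the point $\mathbf 0 \le \bw$ lies in $\reach(V)^\downarrow$, and $\mathbf 0 \in W$, so $\mathbf 0 \in \rwd \cap \R_{\ge 0}^d$. Done. Thus the whole converse collapses to: if $\bw \in \reach(V)$ with $\bw \ge \mathbf 0$, then $\mathbf 0 \le \bw$ gives $\mathbf 0 \in \reach(V)^\downarrow$, and $\mathbf 0 \in W$, so $\mathbf 0 \in \rwd$, and trivially $\mathbf 0 \in \R_{\ge 0}^d$.

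The main obstacle I anticipate is not the logic but making precise that a downward-closed-looking object behaves well: specifically, verifying that $\mathbf 0 \in W$ and that $\mathbf 0 \le \bw$ indeed place $\mathbf 0$ into $\reach(V)^\downarrow \cap W$ without any reachability work — which is immediate from the definitions — so in fact \cref{cl:rwd} reduces entirely to the already-noted equivalence ``$\reach(V)$ intersects $\R_{\ge 0}^d$ iff $\reach(V)^\downarrow$ does'' plus the trivial observation $\mathbf 0 \in W \cap \R_{\ge 0}^d$. If instead the intended content is the stronger statement that $\rwd$ alone (without the positive orthant being hit) already captures the answer, the obstacle becomes the ``first crossing of $W$'' argument above, where the delicate point is that a continuous-run step may pass through $W$ in its interior without pausing, and one must invoke the sub-segment realizability (a fraction of the same transition in the same witnessing orthant) to conclude $\bv^* \in \reach(V)$.

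\begin{proof}[Proof of \cref{cl:rwd}]
($\Leftarrow$) Suppose $\reach(V) \cap \R_{\ge 0}^d \neq \emptyset$ and pick $\bw \in \reach(V)$ with $\bw \ge \mathbf 0$. Then $\mathbf 0 \le \bw$, so $\mathbf 0 \in \reach(V)^\downarrow$. Moreover $\mathbf 0 \in W$, since $\mathbf 0$ has all coordinates equal to $0$. Hence $\mathbf 0 \in \reach(V)^\downarrow \cap W = \rwd$, and clearly $\mathbf 0 \in \R_{\ge 0}^d$, so $\rwd \cap \R_{\ge 0}^d \neq \emptyset$.

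($\Rightarrow$) Suppose $\rwd \cap \R_{\ge 0}^d \neq \emptyset$ and pick $\bv \in \rwd \cap \R_{\ge 0}^d$. Since $\bv \in \rwd = \reach(V)^\downarrow \cap W \subseteq \reach(V)^\downarrow$, there exists $\bs \in \reach(V)$ with $\bv \le \bs$. As $\bv \ge \mathbf 0$ we get $\bs \ge \bv \ge \mathbf 0$, hence $\bs \in \reach(V) \cap \R_{\ge 0}^d$, so $\reach(V) \cap \R_{\ge 0}^d \neq \emptyset$.
\end{proof}
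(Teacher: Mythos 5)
Your final proof block is correct, and it is simpler than the paper's. The paper proves the nontrivial direction via a ``first wall-crossing'' argument: it takes the continuous run $\bu_0,\dots,\bu_n$ from $\R_{<0}^d$ to $\R_{\ge0}^d$, locates the first index $i$ with $\bu_i \in \R_{\ge 0}^d$, and uses the fact that $\bu_{i-1}$ and $\bu_i$ share an orthant to conclude $\bu_i\in W$, so $\bu_i \in \reach(V)\cap W\cap\R_{\ge 0}^d \subseteq \rwd$. You instead observe that $\mathbf 0$ always lies in $W$ (all its coordinates vanish) and that $\mathbf 0\in\reach(V)^\downarrow$ as soon as $\reach(V)$ contains any nonnegative point, so $\mathbf 0$ directly witnesses $\rwd\cap\R_{\ge 0}^d\ne\emptyset$. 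That is an entirely elementary argument that never touches the run structure, and it suffices because $\rwd$ is defined via the downward closure $\reach(V)^\downarrow$, not via $\reach(V)$ itself.

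What the paper's argument buys that yours does not is the stronger conclusion $\reach(V)\cap W\cap\R_{\ge 0}^d\ne\emptyset$, i.e.\ an actually-reachable nonnegative wall point rather than merely a downward-shadow of one. The claim as stated does not require this stronger fact, so both proofs are valid; but the crossing argument is the one that carries over to the separator machinery of \cref{lem:separator}, where actual reachability into $W$ (via $\stepsca$) matters. Your long exploratory preamble largely wanders before arriving at this realization; the final boxed proof is clean and should be the whole submission.
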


We will focus on deciding whether $\rwd$ intersects the positive orthant.
For each orthant $A$ and $\bu_0, \bu_n \in A$ we write $\bu_0 \stepsca \bu_n$ if there is a continuous run $\bu_0, \ldots, \bu_n$ such that
$\bu_i \in A$ for all $0 \le i \le n$.

\begin{definition}\label{definition:separator}
Given a $d$-OVAS $V$
we say that $S \subseteq W$ is a \emph{separator for $V$} if the following conditions are satisfied:
\begin{enumerate}[leftmargin=*]
  \item $S$ is closed under scaling, namely for every $\lambda > 0$ and $\bs \in S$ we have $\lambda \cdot \bs \in S$;
  \item $S$ is downward closed inside $W$, namely $S = S^\downarrow \cap W$;
\item for every orthant $A$ if $u \in S$, $v \in W$ and $u \stepsca v$ then $v \in S$;
  \item $S \cap \R_{\ge 0}^d = \emptyset$.
\end{enumerate}
\end{definition}

\begin{restatable}{lemma}{LemmaSeparator}\label{lem:separator}
For every OVAS $V$:
$\rwd \cap \R_{\ge 0}^d = \emptyset$ if and only if
there exists a separator for $V$.
\end{restatable}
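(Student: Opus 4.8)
The plan is to prove both directions directly. For the easy direction, suppose a separator $S$ exists; I claim $\reach(V)^\downarrow \cap W \subseteq S$, which together with condition (4) of \cref{definition:separator} gives $\rwd \cap \R_{\ge 0}^d = \emptyset$. Take $\bv \in \rwd$, so there is $\bw \ge \bv$ with $\bw \in \reach(V)$, witnessed by a continuous run $\bu \stepsc \bw$ starting from $\bu \in \R^d_{<0}$. I would decompose this run at the points where it crosses the walls set $W$: since crossing between orthants in a continuous run happens only by pausing on a boundary, the run visits a sequence of points in $W$, say $\bp_0, \bp_1, \ldots, \bp_m$, with $\bp_{j} \stepsca \bp_{j+1}$ for a suitable orthant $A_j$ containing the relevant segment; the first point $\bp_0$ can be taken on a wall just after leaving the strictly negative orthant (or one applies condition (1), scaling, and the fact that $S$ being downward closed inside $W$ absorbs the initial excursion — I would phrase this carefully, possibly by first noting that some point of $W$ below $\bu$'s first wall crossing lies in $S$ because every negative wall point is $\le$ some point obtained from scaling, hmm, this needs care). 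By induction on $j$ using condition (3), every $\bp_j \in S$, so $\bw$'s last wall point, and hence by condition (2) also $\bv \in W$ with $\bv \le \bw$, lies in $S$. Then $\bv \notin \R^d_{\ge 0}$ by (4).

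For the converse, suppose $\rwd \cap \R^d_{\ge 0} = \emptyset$; I would exhibit an explicit separator, the natural candidate being $S = \big(\rwd\big)^{\text{sc}}$, the closure of $\rwd$ under positive scaling, i.e. $S = \set{\lambda \bv \mid \lambda > 0,\ \bv \in \rwd}$. Conditions (1) and (2): scaling-closure is immediate by construction; downward-closedness inside $W$ requires checking that scaling a downward-closed-inside-$W$ set keeps it downward closed inside $W$, using that scaling by $\lambda > 0$ and the order $\le$ interact well within a fixed orthant (if $\bx \le \lambda \bv$ and $\bx \in W$, write $\bx$ as $\lambda \bx'$ with $\bx' \le \bv$ and $\bx' \in W$, which holds since $\lambda>0$; then $\bx' \in \rwd$ because $\rwd$ is downward closed inside $W$, so $\bx \in S$). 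Condition (4): if some $\lambda \bv \in S \cap \R^d_{\ge 0}$ with $\bv \in \rwd$, then $\bv \in \R^d_{\ge 0}$ too (scaling by a positive constant preserves the orthant), contradicting the hypothesis. Condition (3) is the substantive one: given $\bu \in S$, $\bv \in W$, an orthant $A$ with $\bu \stepsca \bv$, I must show $\bv \in S$. Write $\bu = \lambda \bu'$ with $\bu' \in \rwd$; then $\bu' \in \reach(V)^\downarrow$, so there is $\bw' \ge \bu'$ with $\bw' \in \reach(V)$. I would like to "transport" the run $\bu \stepsca \bv$ down to start from $\bu'$ (or from $\bw'$), concluding $\bv/\lambda$ or something $\ge$-comparable is reachable, hence in $\rwd$, hence $\bv \in S$. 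Concretely: scaling the whole run $\bu \stepsca \bv$ by $1/\lambda$ gives, by \cref{lemma:runs_scaling}-type reasoning (or rather its within-an-orthant specialisation — within a single orthant $A$, $\stepsca$ is exactly the continuous-VAS reachability and is closed under positive scaling of the whole run, since $T_A$ is a cone-generating set under the $\delta$'s), a run $\bu' \stepsca \bv/\lambda$; combining $\bw' \ge \bu'$ via \cref{lemma:continuous_shift} / monotonicity to get a run from $\bw'$ reaching some point $\ge \bv/\lambda$ in $W$; this point lies in $\reach(V)^\downarrow \cap W = \rwd$, hence $\bv \in S$.

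**Main obstacle.** The delicate points are (a) handling the very first segment of the run in the easy direction — getting from "starts in the strictly negative orthant" to "first wall-crossing point is in $S$" — which I expect to handle by observing that $\R^d_{\le 0} \cap W \subseteq S^\downarrow \cap W$ forces those points into $S$ once we know $S$ contains something, or more cleanly by augmenting the run's start, and (b) in the hard direction, verifying that $\stepsca$ within a fixed orthant $A$ is closed under positive rescaling of the run and plays well with the downward shift; this is essentially because within $A$ the semantics coincide with standard continuous VAS, where such manipulations are standard, but it must be stated without appealing to \cref{lemma:runs_scaling} verbatim (that lemma starts from $\R^d_{<0}$ and produces only segment-wise $\stepsc$, not a single $\stepsca$). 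I would isolate a small auxiliary claim: for a fixed orthant $A$, if $\bu \stepsca \bv$ and $\mu > 0$ then $\mu \bu \stepsca \mu \bv$, and if additionally $\bu'' \ge \bu$ with $\bu'' \in A$ then $\bu'' \stepsca \bv''$ for some $\bv'' \ge \bv$, $\bv'' \in A$ — both following from the definition of continuous runs and monotonicity — and then the converse direction assembles cleanly from this claim plus \cref{cl:rwd}.
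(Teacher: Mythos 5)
Your overall strategy mirrors the paper's proof: the same wall-point decomposition for one direction, and essentially the same candidate separator for the other. In the ``hard'' direction your scaling closure $(\rwd)^{\mathrm{sc}}$ is a no-op, because the paper deduces from \cref{lemma:rescaling} that $\reach(V)$, and hence $\rwd$, is already closed under scaling; and your auxiliary scale-and-shift claim is what the paper gets from \cref{lemma:rescaling} together with \cref{corollary:run_simple_shift}.

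The genuine gap is the base case you flag but do not resolve in the ``easy'' direction. Condition~(3) of \cref{definition:separator} requires the source $u$ to already lie in $S$; but the first wall-crossing is reached from some $\bu_0 \in \R_{<0}^d$, and $\R_{<0}^d \cap W = \emptyset$, so $\bu_0 \notin S$ and condition~(3) gives nothing at the first step. Your attempted patch---that every wall point in $\R_{\le 0}^d$ lies below $\lambda\bs$ for some $\bs \in S$ and $\lambda > 0$---is false: by condition~(4) any $\bs \in S$ has a coordinate $\bs_j < 0$, and then a wall point $\bv$ with $\bv_j = 0$ satisfies $\bv_j = 0 > \lambda\bs_j$, so $\bv \not\le \lambda\bs$. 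More bluntly, $S = \emptyset$ satisfies conditions~1--4 of \cref{definition:separator} verbatim, so those conditions alone cannot force any wall point into $S$. The paper's own proof of this direction also starts the chain at $\bu_{i_0} = \bu_0 \in \R_{<0}^d$ and applies condition~3 without remark, so you have put your finger on a real lacuna in the argument as written; a natural repair is to strengthen condition~(3) (or add a separate clause) so that it also covers sources $u \in \R_{<0}^d$, forcing into $S$ the wall points reached within the negative orthant and letting the induction begin.
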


We aim to show that the existence of a separator in $3$-OVAS can be expressed in appropriate first order logics.
It is helpful to use the following observation about continuous VASes from~\cite{BlondinFHH17}, which helps us
to construct the needed first order sentences.

\goodbreak
\begin{proposition}[Reformulation of Proposition~3.2 in~\cite{BlondinFHH17}]\label{prop:cvass}
Fix a $d$-OVAS $V$ and an orthant $A \in \orthants$. Consider two vectors of variables $\bx = (x_1,\ldots,x_d)$ and $\by = (y_1,\ldots,y_d)$.
There is an existential formula $\varphi_{A}(\bx,\by)$ such that
$$
\sem{\varphi_{A}} = \set{(\bv,\bw) \in A^2 \mid \bv \stepsca \bw}.
$$
If $V$ is over $\Q$ then $\varphi_{A} \in FO(+,\cdot,<)$, and if $V$ is over $\LQ$ then $\varphi_{A} \in FO(+,\cdot,\exp,<)$.
\end{proposition}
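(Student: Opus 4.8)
The goal is to reduce Proposition~\ref{prop:cvass} to the known result about continuous VAS (Blondin--Finkel--Haase--Haddad, Proposition~3.2 of~\cite{BlondinFHH17}). The plan is to observe that, \emph{inside a fixed orthant $A$}, the relation $\bu_0 \stepsca \bu_n$ is exactly reachability in a suitable continuous VAS. Recall that $\bu_0 \stepsca \bu_n$ means there is a continuous run $\bu_0,\ldots,\bu_n$ with every $\bu_i \in A$; by definition of continuous run this means each step $\bu_{i+1}-\bu_i$ is a \emph{positive rational multiple} of some vector in $T_{A_i}$, where $A_i$ is the witnessing orthant of the step. Since all $\bu_i$ lie in $A$, for every step the maximal orthant containing $\bu_i$ and $\bu_{i+1}$ contains $A$, hence by monotonicity $T_{A} \subseteq T_{A_i}$; conversely $T_{A_i}\subseteq T_A$ would not hold in general, but what matters is that every vector usable on a step with both endpoints in $A$ may be replaced: the \emph{available directions} are exactly $\bigcup_{A \preceq B,\ A\cap B \text{ touches the segment}} T_B$. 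The clean way is to define a single continuous VAS whose transition set is $T_A$ itself, restricted to the cone $A$, and to check that its reachability relation on $A$ coincides with $\stepsca$.

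Concretely, I would proceed as follows. First, fix the orthant $A = \{\bx : \epsilon_1 x_1 \ge 0,\ldots,\epsilon_d x_d\ge 0\}$. Perform the linear change of coordinates $x_i \mapsto \epsilon_i x_i$, which maps $A$ bijectively onto the nonnegative orthant $\R^d_{\ge 0}$ and maps $T_A$ to a finite set $T_A' \subseteq \R^d$. Now $(\R^d_{\ge 0}, T_A')$ is precisely a continuous VAS in the sense of~\cite{BlondinFHH17} (its transitions are over $\Q$ if $V$ is over $\Q$, and over $\LQ$ if $V$ is over $\LQ$, since the coordinate change only flips signs). I would then argue that under this identification, $\bv \stepsca \bw$ holds in $V$ if and only if $\bv$ reaches $\bw$ in the continuous VAS $(\R^d_{\ge 0}, T_A')$: the ``only if'' direction uses monotonicity to see that any transition usable on a step staying in $A$ can be taken from $T_A$ (a step with endpoints $\bu_i,\bu_{i+1}\in A$ has witnessing orthant $A_i \succeq A$, so $T_{A}\subseteq T_{A_i}$, and — since we may always reorganise a continuous run so that within $A$ only directions of $T_A$ are used, or more simply since $T_A$ is the \emph{common} transition set available everywhere in $A$ — it suffices to see that using a larger set $T_{A_i}$ does not help when we are only asked whether \emph{some} run exists; here care is needed, see below), and the ``if'' direction is immediate since $T_A \subseteq T_{A_i}$ for the relevant $A_i$. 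Finally, apply Proposition~3.2 of~\cite{BlondinFHH17}, which gives an existential first-order formula $\psi_A(\bx,\by)$ over $(+,\cdot,<)$ (resp. $(+,\cdot,\exp,<)$) defining continuous reachability in $(\R^d_{\ge0},T_A')$, and set $\varphi_A(\bx,\by) = (\bx,\by \in A) \wedge \psi_A(\epsilon_1 x_1,\ldots,\epsilon_d x_d,\epsilon_1 y_1,\ldots,\epsilon_d y_d)$; the conjunct $\bx,\by\in A$ is a conjunction of $d$ sign constraints, hence existential, so the whole formula stays in the required fragment.

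The main obstacle is the ``only if'' direction of the equivalence $\bv \stepsca \bw \iff$ continuous-reachability in $(\R^d_{\ge0}, T_A')$: a continuous run witnessing $\stepsca$ may, on some intermediate step, legitimately use a transition from $T_{A_i}$ with $A_i \succneqq A$ (this happens precisely when the step touches a face of $A$ where a larger orthant is adjacent), and such a transition need not lie in $T_A$. So the two relations are \emph{not} literally equal, and I need to show they are nonetheless equal for the purpose of the formula. The resolution is to enlarge the VAS: instead of $T_A'$ alone, take $T'_A := \bigcup_{B \succeq A} (T_B \text{ restricted to the face of } A \text{ it lives on})$ — but this breaks the clean ``single continuous VAS'' picture because different transitions are available on different faces. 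The cleaner fix, which I expect to be the intended one, is to further decompose: a continuous run staying in $A$ visits a sequence of \emph{open faces} of $A$ (determined by which coordinates are pinned to $0$), and on each maximal sub-run staying in one open face $F$, the available directions are exactly $T_B$ for the unique $B$ with $F \subseteq \partial B$, $B \succeq A$ and $B$ minimal; each such sub-run is continuous reachability in the continuous VAS $(F \cong \R^{d-|F\text{-pinned}|}_{\ge 0}, T_B|_F)$, expressible by~\cite{BlondinFHH17}; and one existentially quantifies over the (boundedly many, since $d$ is fixed — indeed for $d=3$ there are at most a constant number of faces) sequences of faces and the junction points. This gives $\varphi_A$ as a finite existential combination of the BFHH formulas, still in the right fragment. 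I would present this face-decomposition carefully, as it is the only genuinely technical point; everything else is bookkeeping with the sign-flip change of coordinates.
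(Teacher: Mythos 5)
You correctly identify the crux: the plain single-orthant reduction to a continuous VAS $(A,T_A)$ underapproximates $\stepsca$, because a step with both endpoints on a face of $A$ where $A$ takes sign $-1$ may use a transition available only in the $\preceq$-larger orthant obtained by flipping those signs. A minimal counterexample in dimension $2$: take $A=\R_{\le 0}^2$, set $T_A=\emptyset$, and put $(0,-1)$ into the transition sets of the orthant $\{x_1\ge 0,\,x_2\le 0\}$ and of the positive orthant but nowhere else (this is monotone). Then $(0,-1)\stepsca(0,-2)$ --- the single step has witnessing orthant $\{x_1\ge 0,\,x_2\le 0\}$ --- yet $(0,-1)$ cannot reach $(0,-2)$ in $(A,T_A)$. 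So your caution is warranted: $\sem{\varphi_A}=\set{(\bv,\bw)\in A^2 \mid \bv\stepsca\bw}$ does not hold for the sign-flip formula alone.

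Your face-decomposition fix does not obviously close the gap, though. The \emph{sequence} of open faces a $\stepsca$-run visits can be arbitrarily long (it may leave a face, return to the interior, re-enter the same face, and so on), so ``existentially quantify over the sequences of faces and the junction points'' is not a formula of bounded size; your parenthetical ``boundedly many, since $d$ is fixed'' counts distinct faces, not phase changes. Closing this would require a bounded-composition lemma for the transitive closure of the per-face reachability relations, which you do not supply. (Also, the orthant associated to a face should be the $\preceq$-\emph{maximal} one containing it, not a minimal one.) More to the point, the paper never proves this proposition --- it is labelled a ``reformulation'' --- and the only place $\varphi_A$ is used (Lemma~\ref{lem:fo}, to encode condition~3 of Definition~\ref{definition:separator}) the plain formula $\varphi_A$ = ``$\bv$ reaches $\bw$ in $(A,T_A)$'' already suffices. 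For a separator $S\subseteq W$, the condition ``for all $A$, $u\in S$, $v\in W$, $u\stepsca v$ implies $v\in S$'' is equivalent to the same condition with $\stepsca$ replaced by $(A,T_A)$-reachability: decompose a $\stepsca$-run from $u$ to $v$ at its intermediate $W$-points; sub-runs with interior intermediates have every witnessing orthant equal to $A$ (an interior point of $A$ lies in no other orthant) and so use only $T_A$, while a single step between two $W$-points has witnessing orthant $B\succeq A$ with both endpoints in $B$ and is therefore a $(B,T_B)$-step, caught by instantiating the condition at $A:=B$; an induction along the decomposition transports $S$-membership from $u$ to $v$. So the proposition as stated over-claims slightly relative to what the reformulation of~\cite{BlondinFHH17} delivers, but nothing downstream depends on the excess, and the simple sign-flip formula is almost certainly what the authors intend.
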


Notice that in our setting we work over reals, while in~\cite{BlondinFHH17} they work over rationals.
The results in~\cite{BlondinFHH17} are stated for the logic $FO(+,<)$ over $\Q$, but it is easy to see that the same formulas work for our logics over $\R$. The reason why we need to consider irrational numbers is that whenever we deal with a number of the form $\log(\frac{p}{q})$ then we express this by $\exp(x) = \frac{p}{q}$.

The following lemma concludes the proofs in this section.

\begin{lemma}\label{lem:fo}
The existence of a separator in $3$-OVAS:
\begin{enumerate}[leftmargin=*]
  \item over $\Q$ is expressible in $FO(+, \cdot, <)$ over reals with fixed number of quantifier alternations;
  \item over $\LQ$ is expressible in $FO(+, \cdot, \exp, <)$ over reals.
\end{enumerate}
\end{lemma}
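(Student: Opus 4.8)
\textbf{Proof plan for \cref{lem:fo}.}
The plan is to express each of the four conditions of \cref{definition:separator} as a first-order formula over the reals whose free variables describe the candidate set $S$, and then take the existential closure over that description. The key issue is that $S$ is a \emph{set}, so we cannot quantify over it directly in $FO$; instead, the plan is to give a \emph{finite parametrisation} of all sets $S$ that could possibly be separators, so that ``there exists a separator'' becomes a genuine first-order sentence in dimension $3$. First I would observe that conditions (1) and (2) force $S\subseteq W$ to be scaling-closed and downward closed inside $W$. In dimension $3$, the set $W$ is a union of three coordinate planes; each such plane is effectively $2$-dimensional, and a scaling-closed downward-closed subset of a $2$-dimensional orthant-stratified space is determined by finitely many ``rays'' (its boundary in each of the finitely many orthants of $W$). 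Concretely, I would show that a scaling-closed, downward-closed-inside-$W$ set $S$ is, on each face of $W$, a union of at most a bounded number of cones, and the whole of $S$ can be specified by a bounded tuple $\bar{p}$ of real parameters (the slopes/directions of the defining half-lines, one set of parameters per face). This is where we genuinely use $d = 3$.

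Once $S$ is parametrised by $\bar p$, I would write a formula $\mathrm{Sep}(\bar p)$ as the conjunction of:
\begin{itemize}
\item $\psi_{\mathrm{wall}}(\bar p)$: the set $S_{\bar p}$ described by $\bar p$ is contained in $W$, scaling-closed, and downward closed inside $W$ --- this is a direct $FO(+,\cdot,<)$ condition on $\bar p$ by the parametrisation;
\item $\psi_{\mathrm{empty}}(\bar p)$: $S_{\bar p}\cap\R_{\ge 0}^3=\emptyset$, i.e.\ $\forall \bx\, \big(\bx\in S_{\bar p} \to \bx\notin\R_{\ge0}^3\big)$, again directly $FO$;
\item $\psi_{\mathrm{inv}}(\bar p)$: for every orthant $A\in\orthants$, $\forall \bx\forall\by\,\big(\bx\in S_{\bar p}\wedge \by\in W\wedge \bx,\by\in A \wedge \varphi_A(\bx,\by) \to \by\in S_{\bar p}\big)$, where $\varphi_A$ is the formula from \cref{prop:cvass}.
\end{itemize}
Then the existence of a separator is $\exists\bar p\,\mathrm{Sep}(\bar p)$. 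For condition~(3) it is crucial that $\varphi_A$ from \cref{prop:cvass} is \emph{existential}; the formula $\psi_{\mathrm{inv}}$ then has the shape $\forall\bx\forall\by(\ldots\exists\bz\,\varphi_A'\ldots)$, so after prenexing $\mathrm{Sep}(\bar p)$ has a \emph{bounded} number of quantifier blocks, and hence so does $\exists\bar p\,\mathrm{Sep}(\bar p)$. Over $\Q$ all the ingredient formulas lie in $FO(+,\cdot,<)$, giving part~(1) with a fixed number of quantifier alternations; over $\LQ$ they lie in $FO(+,\cdot,\exp,<)$, because (a) the transitions are real numbers of the form $\log q$, encoded via $\exp(x)=q$, and (b) \cref{prop:cvass} already places $\varphi_A$ in $FO(+,\cdot,\exp,<)$ for OVAS over $\LQ$, giving part~(2). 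Correctness then follows from \cref{lem:separator}: $\rwd\cap\R^d_{\ge0}=\emptyset$ iff a separator exists iff $\exists\bar p\,\mathrm{Sep}(\bar p)$ holds.

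The main obstacle I expect is the finite parametrisation step: showing that in dimension $3$ every candidate separator --- a scaling-closed set that is downward closed inside $W$ \emph{and} closed under the reachability relation $\stepsca$ within each orthant --- can be described, without loss of generality, by a bounded tuple of real parameters, and doing so in a way that keeps $\psi_{\mathrm{wall}}$, $\psi_{\mathrm{empty}}$, $\psi_{\mathrm{inv}}$ all first-order. The delicate point is that the reachability relation can drag the set around in complicated ways, so one should not try to parametrise the \emph{minimal} separator directly; rather, I would argue that \emph{if any separator exists then one of the finitely-parametrised form exists} --- e.g.\ because the intersection (or a suitable closure) of separators is again a separator, and the relevant ``simple'' separators suffice. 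One should also double-check that closing under $\stepsca$ preserves membership in $W$ in the statement of condition~(3) (the definition demands $\by\in W$ on the left of the implication, which is why $\psi_{\mathrm{inv}}$ only needs to talk about wall-to-wall moves). The remaining steps --- translating each condition into $FO$ and counting quantifier alternations --- are then routine given \cref{prop:cvass}.
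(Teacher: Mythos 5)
Your plan matches the paper's proof in all essentials: parametrise candidate separators $S\subseteq W_3$ by a bounded tuple of reals (here is where $d=3$ is used), encode Conditions 1--4 of Definition~\ref{definition:separator} as first-order formulas over these parameters, plug in $\varphi_A$ from Proposition~\ref{prop:cvass} for the reachability step, and observe bounded quantifier alternation. One small simplification you can make: the worry about ``if any separator exists then one of the finitely-parametrised form exists'' is unnecessary, because the paper's Claim~\ref{claim:line} shows that \emph{every} scaling-closed, downward-closed-inside-$W$ subset of $W_3$ is of the $18$-parameter form (each of the $12$ quarters of $W_3$ is either all-or-nothing when its two free signs agree, or a half-plane bounded by a line through the origin when they disagree), so closure under $\stepsca$ does not restrict the shape of $S$ at all --- it only constrains which parameter tuples are admitted.
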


\begin{proof}
The key observation is that a set $S \subseteq W_3$, which is downward closed and closed under scaling,
can be described by at most $18$ real numbers.
Notice first that the set $W_3$ is a union of $12$ quarters of a plane.
Indeed, $W_3$ consists of three planes (defined by $x[1] = 0$, $x[2] = 0$ and $x[3] = 0$). Each of the three
planes is divided into exactly four quarters. Thus a quarter $Q$ is described by the choice of $i \in \{1,2,3\}$ such that
$\bx \in Q$ if $\bx[i] = 0$ and two signs for the other coordinates. For example consider a quarter $Q$ such that if $\bx \in Q$ then $\bx[3] = 0$.
Then $Q$ is determined by $\eps_1, \eps_2 \in \set{-1,1}$, defining $Q = \set{x \in \R^3 \mid x[1] \cdot \eps_1 \geq 0,
x[2] \cdot \eps_2 \geq 0, x[3] = 0}$. So every quarter is determined by a triple $(s_1, s_2, s_3) \in \set{+,-,0}^3$
such that there is exactly one zero among $s_1$, $s_2$ and $s_3$.
We will show that for every quarter $Q$ the set $S \cap Q$
can be described using either one or two real numbers. To simplify the notation whenever $Q$ is fixed we will think of quarters $Q$ and $S$ as subsets of $\R^2$ (projecting on the coordinates that are not fixed to $0$ in $Q$).

Given a quarter $Q$ consider two cases: 1) $\eps_1 = \eps_2$, 2) $\eps_1 \neq \eps_2$.
We show that in the first case if $S \cap Q \neq \emptyset$ then $Q \subseteq S$.
Indeed, assume $\bx \in S \cap Q$ and let $\by \in Q$. We aim to show that $\by \in S$.
Since $\eps_1 = \eps_2$, there are $\lambda \in \R_{\ge 0}$ and $\bv \in \R_{\le0}^2$
such that $\by = \lambda \bx + \bv$. As $\bx \in S$ and $S$ is closed under scaling we have $\lambda \bx \in S$.
As $S$ is downward closed and $\bv \in \R_{\le0}^2$ we have $\by = \lambda \bx + \bv \in S$.
To conclude either $S \cap Q$ is empty or it is the full quarter. Thus such quarters can be described by one variable $b \in \set{0,1}$
(one bit of information: $0$ for empty set, $1$ for full set).

Consider the second case when $\eps_1 \neq \eps_2$ and assume without loss of generality that $\eps_1 = 1$ and $\eps_2 = -1$. Thus
$Q = \set{(x_1, x_2) \mid x_1 \geq 0, x_2 \leq 0}$.
We observe that $Q \cap S$ is actually a part of $Q$ which is below some line $\alpha$. This will be the only step where we use the assumption that our OVAS is in $3$-dimensions (which implies that $Q$ is in $2$ dimensions).
Formally, a line $\set{(x_1, x_2) \mid \alpha_1 x_1 = \alpha_2 x_2}$ in $Q$ is described by  $\alpha = (\alpha_1,\alpha_2)$  with  $\alpha_1,\in \R_{\geq 0}$, $\alpha_2,\in \R_{\le 0}$, such that at least one $\alpha_i \neq 0$ (the sign of $\alpha_i$ comes from $\eps_i$).

\begin{restatable}{claim}{ClaimLine}\label{claim:line}
$
Q \cap S = \set{(x_1, x_2) \in Q \mid \alpha_1 x_1 \vartriangleleft \alpha_2 x_2},
$
for some line $\alpha$ in $Q$, where $\vartriangleleft$ is either $\le$ or $<$.
\end{restatable}

We note that this claim is not true in higher dimensions, and it is the main obstacle for the proof to work in general.
By \cref{claim:line} the set $Q \cap S$ is described by two real numbers (recall that $\alpha_i$ also determines the value of $\eps_i$).
Summarising: $6$ quarters need $1$ bit of description, and $6$ quarters need $2$ real numbers. In total $18$ numbers
are needed to describe the downward closed set $S$, which is also closed under scaling.

Note that our description will satisfy conditions 1 and 2 in \cref{definition:separator}. In order to check that given $18$ real numbers describe a separator it remains to check that:
\begin{itemize}[leftmargin=*]
  \item the descriptions of quarters are consistent on the intersections of quarters;
  \item conditions 3 and 4 in \cref{definition:separator} are satisfied.
\end{itemize}
It is not hard to see that the first item can be described in $FO(+,\cdot,<)$, we just need to guarantee that quarters are consistent
on the intersecting lines. In order to check condition 4 it is enough to guarantee that the quarters $(+,+,0)$,
$(+,0,+)$ and $(0,+,+)$ are all described by the bit $0$.
The most involved part is to check condition 3. 
Here, we invoke \cref{prop:cvass} that defines the reachability formula $\varphi_{A}(\bx,\by)$.

We express condition 3 as follows:
for every orthant $A$ if $\bx \in S$ and $\varphi_A(\bx, \by)$ then $\by \in S$.
It is easy to transform the above description to sentences of first order logic.
Moreover, observe that these sentences have quantifier alternation at most two: there exists
a separator $S$, such that for all $\bx \in S$ there exists $\by \in S$ fulfilling $\varphi_A(\bx,\by)$, where $\varphi_A$ has only
existential quantifiers. We have proved \cref{lem:fo}.
\end{proof}

\begin{remark}
Notice that in $d$-OVAS for $d > 3$ it is not clear whether a separator can be described by a bounded number of real numbers as \cref{claim:line} is no longer true.
A natural generalisation of techniques used in the proof of Lemma~\ref{lem:fo}
would result in expressing the existence of a separator in monadic second-order logic ${MSO(+,\cdot,<)}$ over the reals.
However, the validation problem for ${MSO(+,\cdot,<)}$ is undecidable as it is easy to express natural numbers in $MSO(+,\cdot)$ as follows:
$\N$ is the smallest set of numbers containing $1$ and closed under adding $1$. Thus decidability of ${MSO(+,\cdot,<)}$ over $\R$
would imply decidability of ${MSO(+,\cdot,<)}$ over $\N$ and in particular decidability of $FO(+,\cdot,<)$ over $\N$,
which is well known to be undecidable~\cite{Goedel1931}.
Extending the techniques used in the proof of Lemma~\ref{lem:fo} would probably require showing that we can describe separators in higher dimensional spaces using a bounded number of real number.
\end{remark}

\begin{acks}
This work was partially supported by the ERC grant INFSYS, agreement no. 950398.
We thank the anonymous reviewers for their helpful comments. In particular for the pointer to Dirichlet's approximation theorem (\cref{lemma:kreals}) and the reference to \simpleCRA~\cite{DaviaudJRV17}.
\end{acks}

\bibliographystyle{ACM-Reference-Format}

\newpage
\onecolumn

\newgeometry{margin=1.5in
     }\pagestyle{plain}
\appendix

\section{Additional material for Section~\ref{sec:preliminaries}} \label{app:preliminaries}

\begin{remark*}\label{remark:affine}
Originally, linear CRA are defined with linear updates (rather than affine)---giving rise to the name linear CRA.
Formally, to simulate affine updates with linear updates we extend the set of registers $\X$ with one additional register $y$.
This register is: initialised to one, \ie $I(y) = \one$; updated trivially, \ie if $\delta(q,a) = (q',\sigma)$ then $\sigma(y) = y$; and does not contribute to the output, \ie $F(q,y) = \zero$
for all $a \in \Sigma$ and $q,q'\in Q$.
Then every constant $c$ in affine expressions is replaced by $c \odot y$ (making it linear).
\end{remark*}

\section{Additional material for Section~\ref{sec:boundedness}} \label{app:boundedness}

Before proving the main result (\cref{theorem:boundedness_rational}), let us introduce some additional claims and prove the claims from \cref{sec:boundedness}.

We define $\valpref(\rho) = I(q_0) \cdot \val_\leftrightarrow(\rho)$.
Thus using those notations $\val(\rho) = \valpref(\rho) \cdot F(q_n)$.
We say that $\rho$ is an \emph{active run} if $\valpref(\rho) \neq \zero$.

\begin{remark}\label{remark:active}
Given a word $w$ consider the vector $\bv = I \cdot M_w$.
Notice that for every $q$ the value $\bv[q]$ is equal to the sum $\sum_\rho\valpref(\rho)$,  
ranging over all active runs $\rho$ over $w$ that end in state $q$.
\end{remark}

\begin{claim}\label{claim:monotonicity_active}
Let $\Aa = (\Sigma, I, F, (M_a)_{a\in \Sigma})$ be a WA over $\Qpos$.
Consider two words $w,w' \in \Sigma^*$ and suppose that $I \cdot M_w \le I \cdot M_{w'}$.
Then $\Aa(ww'') \le \Aa(w'w'')$ for every word $w''\in \Sigma^*$.
\end{claim}

\begin{claimproof}
It suffices to observe that $\Aa(wv) = I \cdot M_w \cdot M_v \cdot F$ and $\Aa(w'v) = I \cdot M_{w'} \cdot M_v \cdot F$.
The proof follows from $I \cdot M_w \le I \cdot M_{w'}$ and that the values in $\Aa$ are selected over $\Qpos$ and are thus nonnegative.
\end{claimproof}

\LemSimpleLinearAmb*

\begin{proof}
Let $\Aa = (\Sigma, Q, q_0, I, F, \X, \delta)$ be a copyless linear CRA.
We assume that all of its transitions are accessible (non-accessible transitions of $\Aa$ can be removed).
We construct the \ssla WA $\Bb= (\Sigma_\Bb, I_\Bb, F_\Bb, (M_b)_{b\in \Sigma_\Bb})$
where the alphabet $\Sigma_\Bb \subseteq \Sigma \times Q \times Q$ 
comprises the triples $(a,q,q')$ for which $\delta(q,a) = (q',\sigma)$ for some affine map
$\sigma \colon \Q^{|\X|} \to \Q$ (with $|\X|$ the number of registers in $\X$),
the states are $Q_\Bb = (\mathcal{X} \times Q) \cup \set{p}$
and the entries of the matrices $M_b$ satisfy (unlisted entries are set to $0$):
\begin{itemize}[leftmargin=*]
\item $p\xrightarrow{a|1} p$ for all $a\in\Sigma_\Bb$.
\item for every transition $\delta(q,a) = (q', \sigma_{q,a})$, denoting $\sigma_{q,a}(x) = \sum_{y \in \mathcal{X}} s_{x,y}\cdot y + c_x$ for every $x \in \X$ we have:
\begin{itemize}
\item $(y, q) \xrightarrow{(a,q,q')|s_{x,y}}(x, q')$ for every $y \in \mathcal{X}$,
\item $p \xrightarrow{(a,q,q')|c_{x}} (x, q')$.
\end{itemize}
\end{itemize}
Finally, the initial weights $I_\Bb$ are such that $I_\Bb((x_i,q_0))=I(x_i)$, $I_{\Bb}(p)= 1$, and all other weights are set to $0$.
Similarly, the final weights $F_\Bb$ are such that $F_\Bb((x, q)) = F(q,x)$ and $F_{\Bb}(p) = 0$.
That $\Bb$ is a \ssla WA follows easily from the assumption that $\Aa$ is copyless.
Observe that $\Bb$ restricted to $Q_{\Bb}\setminus \{p\}$ is deterministic: there can only be one successor state of $(y,q)$ for any character. Indeed, on any given character, the state transitions of $\Aa$ are deterministic, so there is only one successor state, say $q'$. Register updates are copyless, so there is only one register, say $x$, `receiving' the value of $y$. The only successor state then is $(x,q')$.

It remains to prove that
$
\sup_{w\in\Sigma^*}\! \Aa(w) = \sup_{\mathfrak{w}\in\Sigma_\Bb^*} \!\Bb(\mathfrak{w}).
$

We first show that $\sup_{w\in\Sigma^*}\Aa(w) \le \sup_{\mathfrak{w}\in\Sigma_\Bb^*}\Bb(\mathfrak{w})$.
Fix $w=w_1\cdots w_k \in \Sigma^*$ and let \\$(q_0,\sigma_0),\ldots, (q_{|w|}, \sigma_{|w|})$ be the unique corresponding run on $\Aa$.
By a simple induction we have
\begin{align}\label{eq:a=b}
\Aa(w)  = \Bb((w_1,q_0,q_1)(w_2,q_1,q_2)\cdots(w_{k},q_{k-1},q_{k})),
\end{align}
from which the claim follows.

However, for the converse inequality, there are input words for $\Bb$ that do not correspond to 
a valid run on $\Aa$.
Such unfaithful runs will not over-approximate $\sup_{w\in\Sigma_\Aa^*}\Aa(w)$. 
Thus for every
\[\mathfrak{w} = (a_1,p_1,p_1')(a_2,p_2,p_2')\cdots(a_{k},p_{k},p_{k}') \in \Sigma_\Bb^*\]
we show that $\Bb(\mathfrak{w}) \le \sup_{w \in \Sigma_\Aa^*}\! \Aa(w)$.

We consider three cases depending on the input word $\mathfrak{w}$.
First, suppose that $p_1 = q_0$ and $p_{i-1}' = p_i$ for all $i \in \set{2,\ldots,k}$.
Then defining $w = a_1 \cdots a_{k}$ a simple induction as in \Cref{eq:a=b} shows that $\Aa(w) = \Bb(\mathfrak{w})$.

In the second case suppose that $p_1 \neq q_0$. Since $(a_1,p_1,p_1')\in \Sigma_{\Bb}$ the transition $p_1\trans{a_1} p_1'$ exists in $\Aa$. Now there
exists a word $u = u_0u_1\dots u_\ell$ such that $q_0 \trans{u} p_1$ in $\Aa$. This is because we assume all states of $\Aa$ are accessible. Let
$\mathfrak{u} = (u_0,q_0,q_1)(u_1,q_1,q_2)\dots(u_\ell,q_\ell,q_{\ell+1})$, with $q_{\ell+1} = p_1$, be the word simulating the run. 
By definition of $\Bb$ it is straightforward to verify that $I_{\Bb}\cdot M_{(a_1,p_1,p_1')} \le I_{\Bb} \cdot M_{\mathfrak{u} (a_1,p_1,p_1')}$: indeed, $I_{\Bb}[(x,p_1)] = 0$ for all $x\in \X$, so the only positive contributions to
$I_{\Bb}\cdot M_{(a_1,p_1,p_1')}[(x,p_1')]$ come from transitions of the form
$p \trans{a_1} p_1'$. Such contributions are also present when reading $ M_{\mathfrak{u} (a_1,p_1,p_1')}$, thus the inequality.
We conclude by \cref{claim:monotonicity_active}: $\Bb(\mathfrak{w}) \le \Bb(\mathfrak{u}\mathfrak{w})$.
This case therefore reduces to one of the other two cases.

For the last case let $i$ be the largest index such that $p_{i-1}' \neq p_i$.
We aim to replace the prefix that that does not simulate the original automaton with another one that does.
Notice that after reading $(a_1,p_1,p_1') \cdots (a_{i-1},p_{i-1},p_{i-1}')$ all active runs in $\Bb$ end in states of the form $(x,p_{i-1}')$ or $p$.
This is because, when reading a letter of the form $(\cdot,\cdot,p_{i-1}')$,
all transitions with nonzero values go to either $p$ (from $p$) or to states of the form 
$(\cdot,p_{i-1}')$ and $p$.
Since $p_{i-1}' \neq p_i$ after reading $(a_i,p_{i},p_i')$ all active runs that previously ended in $(x,p_{i-1}')$ become runs
of value zero by construction.
The only positive weights in the system will come either from transitions $p \xrightarrow{(a_i,p_{i},p_i')|c_{x}} (x,p_i')$ or the $p \xrightarrow{a|1} p$ transition.
The former only occurs if $\delta(p_i,a_i) = (p_i',\sigma)$ with $\sigma(x) = \sum_{y\in \mathcal{X}} s_{x,y}\cdot y + c_{x}$ and $c_{x} \neq 0$.
The value of such an active run is $c_{x}$ (recall that $p$ loops with value $1$).

Since all transitions in $\Aa$ are assumed accessible, there exists a word $w' = b_1\cdots b_{\ell} \in \Sigma_{\Aa}^*$ such that
$(q_0,I)
	\xrightarrow{w'}
		(q,\sigma)
			\xrightarrow{a_i}
			(p_i,\sigma')$ for some valuations $\sigma$ and $\sigma'$.
Let now $\mathfrak{w}' = (b_1,q_0,q_1)\cdots (b_{\ell},q_{\ell},p_i) \in \Sigma_{\Bb}^*$, with $q_{\ell} = q$, be the word which simulates $\Aa$ on $b_1\cdots b_{\ell}$. We observe that, after reading the word
$\mathfrak{w}'\cdot (a_{i},p_{i},p_{i}')$, the automaton $\Bb$ also has active runs of values $c_{x}$ that end in states $(x,p_i')$ corresponding to the aforementioned runs in $\Aa$.
Thus by \cref{remark:active} and \cref{claim:monotonicity_active}
$\Bb(\mathfrak{w}) \le \Bb(\mathfrak{w}'(a_{i},p_{i},p_{i}')\ldots (a_{k},p_{k},p_{k}'))$.
Notice now that 
$w'$ is
such that $\Bb(\mathfrak{w}') = \Aa(w')$ as in \cref{eq:a=b}.
Then, by the choice of $i$,
\begin{equation*}
\Bb(\mathfrak{w}) \le \Bb(\mathfrak{w}'(a_{i},p_{i},p_{i}')\cdots (a_{k},p_{k},p_{k}'))= \Aa(w' a_{i} \ldots a_{k}),
\end{equation*}
which concludes the proof.
\end{proof}

\begin{restatable}{claim}{ClaimUpper}\label{claim:upper}
The admissible weights are upper bounded by a computable $a \in \Q$.
\end{restatable}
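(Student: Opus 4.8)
The plan is to produce an explicit computable number and then verify it bounds every admissible weight by a cycle-removal argument. First I would let $c_{\max}$ denote the largest value occurring as an entry of any of the matrices defining $\Bb$, and set $a = \max(1,c_{\max})^{|Q|-1}$ -- a rational that is clearly computable from $\Bb$. It then remains to show that $\valinf(\rho) \le a$ for every run $\rho$ of $\Bb$.

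Fix such a run $\rho = q_0,v_1,q_1,\dots,v_n,q_n$. The idea is to delete cycles from $\rho$ one at a time, never decreasing $\valinf$, until no state repeats. Whenever some state occurs twice, say $q_i = q_j$ with $i < j$, the subrun between these positions is a $q_i$-cycle $\rho'$, and (as observed in \cref{sec:boundedness}) deleting $\rho'$ yields again a run $\rho''$ with $\valinf(\rho) = \valinf(\rho'')\cdot\valinf(\rho')$. The point is that $\valinf(\rho') \le 1$ in every case: if $q_i \neq p$ this is precisely the standing assumption of \cref{lemma:boundednesslemma} that every $q$-cycle with $q \neq p$ has $\valinf \le 1$; and if $q_i = p$, then, since the only transitions entering $p$ are its value-$1$ self-loops, the cycle $\rho'$ can consist only of those self-loops, whence $\valinf(\rho') = 1$. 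Thus $\valinf(\rho) \le \valinf(\rho'')$ and $\rho''$ is strictly shorter, so iterating this step terminates in a run $\sigma$ all of whose states are distinct.

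Such a $\sigma$ visits at most $|Q|$ states and therefore uses at most $|Q|-1$ transitions, each of value at most $c_{\max}$; hence $\valinf(\sigma) \le \max(1,c_{\max})^{|Q|-1} = a$, and so $\valinf(\rho) \le \valinf(\sigma) \le a$, as desired. I do not expect a genuine obstacle in this argument; the only points requiring a little care are that the assumption of \cref{lemma:boundednesslemma} does not mention $p$-cycles (handled by the structural fact that nothing but the self-loops enters $p$), that a cycle-free run may still pass through $p$ (at most once, causing no difficulty), and that one should take $\max(1,c_{\max})$ rather than $c_{\max}$ in order to also cover the case $c_{\max} < 1$.
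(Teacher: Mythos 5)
Your proof is correct and takes essentially the same approach as the paper's: remove cycles (each of value at most~$1$ by the hypothesis of \cref{lemma:boundednesslemma}) until the run is cycle-free and hence of length at most~$|Q|$, then bound its value. You are a bit more explicit than the paper in two places — you give a closed-form $a = \max(1,c_{\max})^{|Q|-1}$ rather than taking a maximum over the finitely many short runs, and you spell out why a $p$-cycle also has value~$1$ (it can only consist of the unit self-loops since no other transition enters~$p$), a case the paper's sketch silently subsumes under ``$\Bb$ is as in \cref{lemma:boundednesslemma}''.
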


\begin{claimproof}
Consider a run $\rho$ over $w$ such that $\valinf(\rho) = r > 0$.
If $|w| > |Q|$ then there is a subrun $\rho'$ of $\rho$ such that $\rho'$ is a cycle.
Since $\Bb$ is as in \cref{lemma:boundednesslemma} we have $\valinf(\rho') \le 1$.
Thus by removing $\rho'$ from $\rho$, we do not decrease the value of the run.
We conclude since the number of words of length at most $|Q|$
is finite, whence $a$ can be chosen to be the maximum of values
of runs labelled by such words.
\end{claimproof}

\ClaimLower*

\begin{claimproof}
We have $s_q, e_q \le 1$ since the value of runs over the empty word is $1$. 
The proof follows by the same arguments as the proof of Claim~\ref{claim:upper}.
\end{claimproof}

\ClaimFinitely*

\begin{claimproof}
Let $a$ be the upper bound of admissible weights as in \cref{claim:upper}.
Consider a run $\rho$ over $w$ such that $\valinf(\rho) = r$.
We can assume that $\valinf(\rho') < 1$ for all subruns $\rho'$ of $\rho$ that are cycles.
Indeed, recall that $\valinf(\rho')\le 1$ by the assumptions on $\Bb$ in \cref{lemma:boundednesslemma}. Thus the condition can be violated only by cycles such that $\valinf(\rho') = 1$. However, note that removing cycles of value $1$ does not change the value of the run, thus we can assume that they do not appear.

Since there are finitely many simple cycles we can define $\alpha < 1$ to be larger than the value of any simple cycle among those of value smaller than $1$.
Consider $T$ such that $\alpha^T a < x$.
Then if there are more than $T$ disjoint cycles in 
$\rho$, $\valinf(\rho) < x$. 
Therefore, $r \ge x$ for runs of bounded length, hence for finitely many runs.
\end{claimproof}

\ClaimEps*

\begin{claimproof}
The second item follows immediately from the definition.
For the first item recall from \cref{eq:matrix} that $\overline{M}_w(q,q') = \varepsilon$ if and only if $M_{w}(q,q') < e_q \cdot s_{q'}$.
We have $\rununique{q,w,q'}$ is a subrun of $\rho$ if and only if $\valinf(\rho) = \valinf(\rho') \cdot \valinf\rununique{q,w,q'} \cdot \valinf(\rho'')$ for some runs: $\rho'$ ending in $q$, and $\rho''$ starting in $q'$.
By definition $\valinf(\rho') \cdot \valinf(\rho'') \le \frac{1}{e_q \cdot s_{q'}}$.
We conclude since $\valinf\rununique{q,w,q'} = M_{w}(q,q')$.
\end{claimproof}

\ClaimMonoid*

\begin{claimproof}
The claim that the set is finite follows from \cref{claim:finitely}.
Indeed, if $M_w[q,q'] \not \in \set{0,\varepsilon}$ then $q,q' \neq p$, $\valinf\rununique{q,w,q'} \ge e_{q} \cdot s_{q'}$ and there are finitely many of such admissible weights.
We prove that $\overline{M}_{wu}[q,q'] = \overline{M}_w \otimes \overline{M}_u [q,q']$ for every $w,u\in \Sigma^*$ and $q,q' \in Q$.
Notice that if $q = p$ or $q' = p$ then this is trivial, as in that case only $0$ and $\varepsilon$ can occur in the matrix entries.
Thus we assume that $q,q' \neq p$.

Recall that $M_{wu} = M_w M_u$. By construction,
$\overline{M}[q,q'] = 0$ if and only if $M(q,q') = 0$ for any
matrix $M$. Therefore it is clear that
$\overline{M}_{wu}[q,q'] = 0$ if and only if $M_{wu}[q,q'] = 0$
if and only if $\overline{M}_w \overline{M}_u[q,q'] = 0$ if and
only if $\overline{M}_w \otimes \overline{M}_u [q,q'] = 0$. Thus
we may assume that $\overline{M}_{wu}[q,q'] > 0$ and
$\overline{M}_w  \otimes \overline{M}_u [q,q'] > 0$. Since the
states restricted to $Q \setminus \set{p}$ are deterministic
there is a unique state $q''$ such that
$\overline{M}_w [q,q''] > 0$, $\overline{M}_u[q'',q'] > 0$, and  $\overline{M}_w\overline{M}_u[q,q'] = \overline{M}_w [q,q''] \cdot \overline{M}_u[q'',q']$.

Suppose first that $\overline{M}_{wu} [q,q'] = \varepsilon$.
If either $\overline{M}_w [q,q''] = \varepsilon$ or $\overline{M}_u[q'',q'] = \varepsilon$ then we are done.
Suppose otherwise, then $\overline{M}_w [q,q''] = M_{w}[q,q'']$ and $\overline{M}_u[q'',q'] = M_u[q'',q']$ and so
\begin{equation*}
0 < \overline{M}_w [q,q''] \cdot  \overline{M}_u[q'',q'] = M_{wu}[q,q'] < e_{q} \cdot s_{q'}.
\end{equation*}
Thus $\overline{M}_w \otimes \overline{M}_u [q,q'] = \varepsilon$ as desired.

Conversely, suppose that $\overline{M}_w \otimes \overline{M}_u [q,q'] = \varepsilon$. It follows that at least one of $\overline{M}_w [q,q'']$ and $\overline{M}_u[q'',q']$ equals $\varepsilon$, or $\varepsilon < \overline{M}_w [q,q''] \cdot \overline{M}_u[q'',q'] < e_{q} \cdot s_{q'}$. Assuming first the latter, we have
\[
0 < \overline{M}_w [q,q''] \cdot \overline{M}_u[q'',q'] = M_{wu} [q,q'] < e_{q} \cdot s_{q'}
\]
and thus $\overline{M}_{wu} [q,q'] = \varepsilon$ as desired.

Assume then that $\overline{M}_w [q,q''] = \varepsilon$
(resp., $\overline{M}_u[q'',q'] = \varepsilon$).
By \cref{claim:epsilon}, all runs $\rho$ which contain  $\rununique{q,w,q''}$ (resp., $\rununique{q'',u,q'}$) as a subrun have $\valinf(\rho) < 1$.
In particular, every run $\rho'$ containing $\rununique{q,wu,q'}$ has $\rununique{q,w,q''}$ and $\rununique{q'',u,q'}$ as subruns, and so $\valinf(\rho') < 1$. Consequently, applying \cref{claim:epsilon} again we get $\overline{M}_{wu} [q,q'] = \varepsilon$. 

Finally, if both $\overline{M}_{wu} [q,q'] > \varepsilon$ and $\overline{M}_w \otimes \overline{M}_u [q,q'] > \varepsilon$ then $\overline{M}_w [q,q''] = M_w[q,q'']$ and $\overline{M}_u[q'',q'] = M_u[q'',q']$.
Thus $s_q \cdot e_{q'} \le \overline{M}_{wu}[q,q'] = \overline{M}_w [q,q''] \cdot \overline{M}_u [q'',q']  = \overline{M}_w \otimes \overline{M}_u [q,q']$.
\end{claimproof}

\ClaimIdemp*

\begin{claimproof}
First, recall that the only nonzero transitions that end in $p$ are from $p$.
Thus we define $i$ as the smallest index such that $q_i \neq p$.
We prove that $q_{i+1} = q_{i+2} \ldots = q_m$.
Suppose $i < m$, otherwise, the claim is trivial.
Since $M[q_i,q_{i+1}] > 0$ and $M$ is idempotent $M[q_{i+1},q_{i+1}] > 0$.
We conclude because the transitions are deterministic on $Q \setminus \set{p}$.
Now, since $M$ is idempotent $M[q_{i+1},q_{i+1}] = \varepsilon$ or $M[q_{i+1},q_{i+1}] = 1$.
However, in the latter case we would obtain an excluded pattern in \cref{pattern2} of \cref{lemma:boundednesslemma}.
Indeed, since $\overline{M}_{w^1\ldots w^m} = \overline{M}_{w^i} = M$ are all the same idempotents $\overline{M}_{w^1, \ldots, w^m}[p,q_{i+1}] = \varepsilon > 0$ and thus $M_{w^1, \ldots, w^m}[p,q_{i+1}] > 0$.
\end{claimproof}

We now recall and prove the key theorem of \cref{sec:boundedness}:

\ThmBoundednessRat*

\begin{proof}[Proof of \cref{theorem:boundedness_rational}]
By \cref{lemma:stupidlysimplylinearlyambiguous} we can assume that the automaton $\Bb$ is a \ssla WA.
Without loss of generality we assume that $\Bb$ is trimmed, \ie every state appears in at least one accepting run.
We claim that $\Bb$ is bounded if and only if it does not satisfy the assumptions of \cref{lemma:boundednesslemma}: either it has a $q$-cycle of weight strictly larger than $1$ (violating assumption~\ref{pattern1}), or
it has a $q$-cycle of weight $1$ over the word $u$, where $q \neq p$ is a state reachable from $p$ with $u$ (violating assumption~\ref{pattern2}).

(Only if) If either of the assumptions~\ref{pattern1} or \ref{pattern2} is violated, then it suffices to take an accepting run that contains $q$.
When violating assumption~\ref{pattern1} with a cycle of weight strictly larger than $1$, we find runs of arbitrarily large value straightforwardly.
When violating pattern~\ref{pattern2} but not \ref{pattern1}, consider the words $(u^{n+1})_{n\in \N}$: we have the runs
$p\trans{u^i} p \trans{u} q \trans{u^{n-1}} q$, $i=0,\ldots,n$, each contributing the weight of $p \trans{u} q$. Completing the runs to accepting ones, we deduce that $\Bb$ is unbounded.

(If) Let $a$ be such that $\val(\rho) \le a$ for all accepting runs $\rho$ ($a$ exists \eg because of \cref{claim:finitely}).
Without loss of generality we can assume that
$\val(\rho) \le 1$, dividing the initial values of states by $a$,
if necessary.
Similarly, we can assume that $\valinf(\rho) \ge \val(\rho)$ for all accepting runs.
This allows us to focus on $\Runs(w)$ and $\valinf(\rho)$ instead of accepting runs and their values.

We define the constant $c$ with the following property. 
For every word $w$ of length $n$ and for all $\rho \in \Runs(w)$ we have $\frac{1}{c^n} < \valinf(\rho)$.
It suffices to choose $c$ such that $\frac{1}{c}$ is smaller than all non-zero weights that occur in $\Bb$.
We can further assume that $c \ge 2$.

Fix a word $w$ of length $n$.
We define a partition $\Runs(w) = \bigcup_{i = 1}^{n} P_i$ as follows: $P_i = \{\rho \mid \frac{1}{c^{i}} < \valinf(\rho) \le \frac{1}{c^{i-1}}\}$ for $i \in \set{1,\ldots,n}$.
Notice that $P_i \subseteq \Runs[\frac{1}{c^i}](w)$.
Since $c^i \ge 2$ by \cref{lemma:boundednesslemma} we have $|P_i| \le \poly(\log c^i ) = \poly(i\log c) =  \poly(i)$.
Let $W(i)$ be a polynomial that upper bounds $|P_i|$.
We obtain
\begin{multline*}
\Bb(w) \le \sum_{\rho \in \Runs(w)}  \valinf(\rho)  \le  \sum_{i=1}^n \frac{|P_i|}{c^{i-1}}   \le   \sum_{i=1}^n \frac{W(i)}{c^{i-1}}  \le  \sum_{i=1}^\infty \frac{W(i)}{c^{i-1}}. \end{multline*}
Notice that the series converges; we conclude since its value
does not depend on $w$.

It remains to show that deciding whether assumptions~\ref{pattern1} and~\ref{pattern2} hold can be detected.
\begin{claim}
Whether assumptions~\ref{pattern1} and~\ref{pattern2} hold can be detected in polynomial time.
\end{claim}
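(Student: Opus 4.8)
The plan is to turn the detection of each of conditions~\ref{pattern1} and~\ref{pattern2} into a reachability question on polynomial-size objects derived from $\Bb$. Write $G$ for the weighted digraph obtained from $\Bb$ by deleting the distinguished state $p$: by the definition of a \ssla WA its transitions among $Q\setminus\{p\}$ are deterministic, so for each state and letter $G$ has at most one outgoing (positive-weight) edge; moreover, since the only transition into $p$ is its self-loop, every $q$-cycle of $\Bb$ with $q\neq p$ and positive value is a closed walk of $G$ whose $\valinf$ is the product of the traversed edge weights, and $M_u[q,q]$ equals $0$ unless the unique $G$-walk from $q$ reading $u$ returns to $q$, in which case it equals that walk's weight. (Cycles of value $0$ are irrelevant: they satisfy~\ref{pattern1} automatically and are not concerned by~\ref{pattern2}.)

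Detecting a violation of~\ref{pattern1} is then just the question: does $G$ contain a cycle of weight-product $>1$? By repeating a heavy cycle, resp. decomposing any closed walk into simple cycles, this is equivalent to the existence of a \emph{simple} cycle of weight-product $>1$, i.e., after taking logarithms, to a negative-cost cycle in the digraph with costs $-\log w$. This is classical negative-cycle detection, solved by Bellman--Ford in polynomial time; every comparison it performs is between two sums of logs, i.e., between products of at most $|Q|$ of the original rational weights, hence between rationals of polynomial bit-length, so the test is polynomial. If~\ref{pattern1} fails we stop; otherwise assume every closed walk of $G$ has weight-product $\le 1$.

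The subtle part is~\ref{pattern2}, whose violation demands a \emph{single} word $u$ and a state $q\neq p$ with $M_u[q,q]=1$ \emph{and} $M_u[p,q]>0$, so the two requirements cannot be treated independently. First I would make ``$M_u[q,q]=1$'' a regular condition. Since $G$ has no weight-$>1$ cycle, fix a feasible potential $\phi$ (Johnson's algorithm) with $w(a\to b)\le 2^{\phi(a)-\phi(b)}$ for every edge; call an edge \emph{tight} when equality holds and let $G_1$ be the subgraph of tight edges. The key fact is that a closed $G$-walk at $q$ has weight-product $1$ iff it uses only tight edges, i.e., iff it is a closed walk of $G_1$: telescoping around the walk gives ``$\Leftarrow$'', while for ``$\Rightarrow$'' one decomposes the walk into simple cycles, each necessarily of weight $1$ (all are $\le 1$ and their product is $1$), and a weight-$1$ simple cycle has all edges tight by the same extremality argument ($\prod w \le \prod 2^{\phi-\phi}=1$ forces termwise equality). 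Hence $M_u[q,q]=1$ iff $u$ is accepted by the (polynomial-size) DFA that follows $G_1$ from $q$ and accepts on return to $q$. Second, since nothing enters $p$ but its self-loop, a run of $\Bb$ from $p$ to $q$ idles at $p$, then takes one positive transition $p\to s$ with $s\neq p$, then runs in $G$; so $M_u[p,q]>0$ iff $u$ is accepted by the NFA $N_p$ with state set $Q$, keeping $p$'s self-loops, $p$'s positive transitions, and all edges of $G$, with accepting state $q$. Combining, \ref{pattern2} is violated iff for some $q\neq p$ these two automata have non-empty intersection (and since $q\neq p$ excludes the empty word from $L(N_p)$, a non-empty intersection really yields a genuine witness word). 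For each $q$ this is intersection-emptiness between polynomial-size automata, decided by a reachability search in their product; iterating over the $\le|Q|$ choices of $q$ gives a polynomial-time test for~\ref{pattern2}, and hence a polynomial-time test for both conditions.

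The main obstacle is exactly this synchronisation inside~\ref{pattern2}: because the loop at $q$ and the path from $p$ to $q$ must spell the \emph{same} word, checking merely ``there is a weight-$1$ simple cycle at $q$ reachable from $p$'' would be insufficient. The device that resolves it is identifying weight-$1$ closed walks with closed walks of the tight subgraph $G_1$, which converts the ``weight $=1$'' requirement into a finite-automaton condition and lets a product-automaton reachability argument finish the job.
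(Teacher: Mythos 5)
Your handling of pattern~(i) coincides with the paper's: negative-cycle detection via Bellman--Ford on the $-\log$-weighted graph, performed multiplicatively so all comparisons are between rationals. (One small inaccuracy: before a negative cycle is flagged, intermediate Bellman--Ford distances can correspond to walks of more than $|Q|$ edges, so you cannot strictly bound each compared quantity by a product of $|Q|$ weights; but since each relaxation multiplies by a fixed weight of $\Bb$ and there are only polynomially many relaxations, the bit-sizes remain polynomial, which is the paper's more careful justification.) For pattern~(ii) you take a genuinely different route. The paper forms the product of an NFA $\mathcal{C}$ (recognising words with a positive run from $p$ to $q$) with the weighted automaton $\Bb$, places $-\log$-weights (taken from $\Bb$) on the product's edges, and runs a second Bellman--Ford to test whether the shortest path from $(p,q)$ to $(q,q)$ equals $0$; since pattern~(i) is already excluded, all cycles are non-negative and the test is well defined. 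You instead precompute multiplicative Johnson potentials on the graph $G$, prove the characterisation that a weight-$1$ closed walk is exactly a closed walk of the tight subgraph $G_1$, turn ``$M_u[q,q]=1$'' into acceptance by the $G_1$-DFA from $q$ back to $q$, and decide pattern~(ii) by unweighted intersection-emptiness against $N_p$. Both are correct and polynomial, and both resolve the synchronisation you rightly flag as the crux (the same word must spell the weight-$1$ cycle and the positive $p$-to-$q$ path). Your route decouples the numerical constraint from the regular one, converting ``weight exactly $1$'' once and for all into a polynomial-size DFA at the price of the tight-cycle lemma; the paper avoids that lemma and keeps weights, computing a weighted shortest path in the (quadratically larger) product directly.
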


\begin{claimproof}
We claim both patterns can be detected by modifications of the Bellman-Ford shortest path algorithm. Assumption~\ref{pattern1} requires detecting a cycle of multiplicative weight greater than $1$. This is directly an instance of the \emph{currency-arbitrage problem} (so named for the case in which the weights represent currency exchange rates, and we ask whether infinite profit could be generated by repeatedly switching currency according to the cycle).

The problem can be solved using an adapted version of the Bellman-Ford algorithm, which can be used to detect negative-cycles in polynomial time (under summation of edge weights).
Searching for negative-cycles after taking ${-}\log$ of each transition weight detects cycles with  product strictly larger than $1$ in the original graph.
A naïve implementation of $-\log$ could lead to approximation error, but one can instead store the exponentiation of the log. In such a representation, instead of taking the sum one should take the product of the exponentiations.

We rely upon the fact that when the Bellman-Ford algorithm runs in polynomially many steps in the size of the graph (not the edge weights). A priori polynomially many multiplications can grow to doubly-exponential size. However, whenever the Bellman-Ford algorithm takes addition (multiplication in our setting), this is always with one side as a weight from the automaton. In polynomial many steps the size of the numbers remain exponential (thus represented in polynomial space).

Assumption~\ref{pattern2} requires detecting a cycle from $q$ to $q$ of (multiplicative) weight $1$ on a word $w$, for which there is a path in $w$ from $p$ to $q$ (of non-zero weight). Let $\mathcal{C}$ be a non-deterministic finite automaton which represents the language of words starting from $p$ and ending in $q$ with non-zero weight (this is found by taking the weighted automaton, and keeping all edges with non-zero weight). We now consider the product automaton $\mathcal{C}\times\mathcal{B}$, where $\mathcal{B}$ is the weighted automaton: on each edge from $(s,s')$ to $(t,t')$  we take the weight to be $-\log(v_i)$ where $v_i$ is the weight of the edge from $s'$ to $t'$ in $\mathcal{B}$, providing there is an edge from $s$ to $t$ in $\mathcal{C}$. The weight is $\infty$ if either there is no edge from $s$ to $t$ in $\mathcal{C}$ or the weight from $s'$ to $t'$ is $0$ in $\mathcal{B}$. Since we can assume patterns from assumption~\ref{pattern1} have been ruled out in the previous step, we know there is no cycle of (multiplicative) weight greater than $1$ in $\mathcal{B}$, so every path from $(p,q)$ to $(q,q')$ in $\mathcal{C}$ has weight at least  $0$ due to $-\log(1)= 0$. We run the Bellman-Ford algorithm to detect whether the shortest path from $(p,q)$ to $(q,q)$ is $0$, which is the case if and only if there is a word from $p$ to $q$ which has weight $1$ from $q$ to $q$. The procedure is repeated for each candidate $q$.
\end{claimproof}

\end{proof}

\section{Additional material for Section~\ref{0-isolation}} \label{app:zero}

\subsection{Additional notation}

For \simpleCRA $\Bb = (\Sigma, I, F, \X, \delta)$ it is easy to define an equivalent weighted automaton $\Aa = (\Sigma, I', F', (M_a)_{a\in \Sigma})$. Indeed, let $Q = \X \cup \set{y}$ for some fresh register $y$. Then $I'(x) = I(x)$, $F'(x) = F(x)$ for $x \in \X$ and $I'(y) = 1$, $F'(y) = 0$. Suppose $\sigma_a(x) = (c_{a,x}\odot x) \oplus d_{a,x}$. Then $M_a(x,x) = c_{a,x}$, $M_a(y,x) = d_{a,x}$, and $M_a(y,y)= 1$ for all $x \in \X$. The remaining entries of $M_a$ are $0$.
For example this construction applied to $\Bb$ in \cref{example:CRA} results exactly in $\Bb$ from \cref{example:weightedn}.

By summing over all accepting runs in the equivalent weighted automaton, we get for every \simpleCRA $\Bb$:
\begin{align}\label{eq:simple}
 \Bb(w) = \bigoplus_{x \in \X} F(x) \odot \Bb_x(w),
\end{align}
where
\begin{align}\label{eq:simple2}
\Bb_x(w) =  \bigoplus_{i \in \set{1,\ldots,n+1}} \left( d_{w_{i-1},x} \odot \bigodot_{j \in \set{i,\ldots,n}}c_{w_j,x}\right).
\end{align}
To ease the notation we assume that $d_{w_{0},x} = I(x)$.
For $i = n+1$ the product is empty and it evaluates to $\one$.
Assuming implicitly the translation of $\Bb$ to weighted automata, we consider accepting runs on $\Bb$.
Fix a word $w = w_1\dots w_m$ and for every $i \in [1,m]$ let $\sigma_{w_i}(x) = (c_{i,x}\odot x) \oplus d_{i,x}$. A run on $w$ to is a sequence $\rho = x_0,c_1,x_1,c_2,\dots,c_{m},x_m \in \Acc(\Bb,w)$, where:
$x_i \in \X \cup \set{y}$, and for every $i \in [1,m]$: either $x_{i-1} = x_i$ and $c_i = c_{i,x_i}$; $x_{i-1} = x_i = y$ and $c_i = 1$; or $x_{i-1} = y \neq x_i$ and $c_i = d_{i,x_i}$.
Notice that the sequence of registers $x_0,\ldots, x_m$ will always be of the form $y,\ldots,y,x,\ldots,x$, \ie the register changes at most once (from the additional register $y$ to one of the registers $x \in \X$).

For every $x \in \X$ (in particular $x \neq y$) we denote the set $\runs_{x}(w)$ of runs terminating in register $x$ on word $w$.
Notice that $|\runs_{x}(w)| \le |w|+1$.
Thus \Cref{eq:simple} and \Cref{eq:simple2} are essentially equivalent to $\Bb(w) = \bigoplus_{\rho \in \runs(w)}\val(\rho)$.

Over the semiring $\semiringZR$ the Equations~\eqref{eq:simple} and~\eqref{eq:simple2}
become
\begin{align}\label{eq:simple_tropical}
\begin{split}
&\Aa(w) = \min_{x \in \X} \set{F(x) + \Aa_x(w)}, \\
&\Aa_x(w) =  \min_{1 \le i \le n+1}\set{d_{a_{i-1},x} \; + \sum_{j \in \set{i,\ldots,n}}c_{a_j,x}}. 
\end{split}
\end{align}

\subsection{Zero isolation over rationals and boundedness over the min-plus semiring}
This subsection formally develops the relationship between zero isolation over the rationals and the boundedness problem over  $\semiringZR$ . In particular, we formalise exactly what we mean by \cref{thm:rational_to_tropical} by proving \cref{thm:equivalence}.

In this section we will need the \emph{max-product semiring} $\semiringQmax$ over nonnegative rationals, where the plus operation is $\max$ and product is the standard product.

Let $\Aa = (\Sigma, I, F, \X, (\delta_a)_{a \in \Sigma})$ be an \simpleCRA{} over a semiring $\semiringS$. We say that $\Aa$ is in \emph{normal form} if:
\begin{enumerate}[label=(\arabic*)]
\item $\delta_a(x) = (c_{a,x} \odot x) \oplus \one$ or $\delta_a(x) = c_{a,x} \odot x$ for all $x \in \X$ and $a \in \Sigma$ (\ie $d_{a,x} \in \set{\zero, \one}$); and
\item $I(x) = F(x) = \one$ for all $x \in \X$. 
\end{enumerate}

Let $\mathcal{A} = (\Sigma, I, F , \X, \delta)$ be an \simpleCRA.
We write $\mathcal{A}_{(+,\cdot)}$ and $\mathcal{A}_{(\max,\cdot)}$ to emphasise that $\Aa$ is evaluated over the semiring $\semiringQ$ and over the semiring $\semiringQmax$, respectively.

Let $\Aa_{\log}$ be the automaton $\Aa$, where every constant $c$ (appearing in $I$, $F$ or $\delta$) is replaced by $-\logbr{c}$. In the special case where $c = 0$ we replace it with $+\infty$. Since the automaton $\Aa_{\log}$ will be considered only over the semiring $\semiringZR$ we write $\Aa_{\log}$ without indicating the semiring operations.

The goal of this section is to establish links between some of the problems we consider. 
In particular, we first show the equivalence of the
zero isolation for $\mathcal{A}_{(+,\cdot)}$ with the boundedness
of $\Aa_{\log}$ (\cref{thm:equivalence}). Then we show interreducibility of the later and 
the universal coverability problem for orthant vector addition systems 
(\cref{theorem:boundedness_ovass}). Finally, we show how to move to a continuous
setting in such systems (\cref{theorem:continuous_semantics}). Studying the 
universal coverability problem in a continuous system will be done in the next section.

Before we proceed we discuss some notations on \simpleCRA. Recall that $\runs_{x}(w)$ is the set of runs terminating in register $x$ on word $w$. For every $\rho_x = x_0,c_1,x_1,c_2,\dots,c_{m},x_m \in \runs_{x}(w)$ we use similar notation as for runs in \cref{sec:boundedness}. Namely, $\valinf(\rho_x) = \bigodot_{i=1}^m c_i$, $\val(\rho_x) = I(x_0) \odot \valinf(\rho_x) \odot F(x_m)$. Notice that $\val(\rho_x)$ evaluates in the same way in $\Aa_{(+,\cdot)}$ and $\Aa_{(\max,\cdot)}$, but in $\Aa_{\log}$ the product is $+$. Then:
\begin{itemize}
 \item $\Aa_{(+,\cdot)}(w) = \sum_{x \in \X} \sum_{\rho_x \in \runs_x(w)} \val(\rho_x)$;
 \item $\Aa_{(\max,\cdot)}(w) = \max_{x \in \X} \max_{\rho_x \in \runs_x(w)} \val(\rho_x)$;
 \item $\Aa_{\log}(w) = \min_{x \in \X} \min_{\rho_x \in \runs_x(w)} \val(\rho_x)$.
\end{itemize}

We define the constant $R_\Aa$ (or just $R$ when $\Aa$ is clear from context) as the maximum ratio between two elements appearing in transitions of $\mathcal{A}$.
Formally, let
$$
\mathcal{C} = \left(\{1\}\cup \{c_{a,x} ,d_{a,x} \mid a \in \Sigma,x \in \X\} \right)
\setminus\{0\},
$$
where $c_{a,x}$ and $d_{a,x}$ come from \cref{eq:simple2}.
Then $R = \frac{\max\mathcal{C}}{\min\mathcal{C}}$ ($1$ is included in $\Cc$ to make sure $R$ is well-defined). Observe that: $R \ge 1$; and for every $c,d\in \mathcal{C}$, $cR \ge d$ and $cR \ge 1$.

We also define $S_\Aa$ (or just $S$) as the maximal product (potentially inverted)
of initial and final weights, \ie
$$
S = \max_{x,y \in \X} \set{I(x) \cdot F(y), \frac{1}{I(x)\cdot F(y)} \mid  I(x), F(y) \neq 0}.
$$
Then either $\val(\rho_x) = 0$ or
\begin{align}\label{eq:S}
\frac{1}{S} \cdot \valinf(\rho_x) \le \val(\rho_x) \le S \cdot \valinf(\rho_x)
\end{align}
for all runs $\rho_x$.

\begin{restatable}{theorem}{ThmCraOvas}\label{thm:equivalence}
Let $\Aa$ be an \simpleCRA. The following statements are equivalent:
\begin{enumerate}[label={(\roman*)}]
\item\label{eq:1} $\forall \epsilon >0$  $\exists w_i \in \Sigma^* $ such that $\mathcal{A}_{(+,\cdot)}(w_i) < \epsilon$.
\item\label{eq:2}  $\exists w $ such that $\mathcal{A}_{(\max,\cdot)}(w) < \frac{1}{RS}$.
\item\label{eq:3} $\exists  \tau < 1$, $\forall i$ $\exists w_i$ such that $\mathcal{A}_{(\max,\cdot)}(w_i) <  \tau^{i} \cdot S$.
\item\label{eq:4}  $\exists w $ such that $\Aa_{\log}(w) > -\logbr{\frac{1}{RS}}$.
\item\label{eq:5} $ \Aa_{\log}$ is unbounded: $\forall i > 0$ $\exists w_i$ s.t. $ \Aa_{\log}(w_i) > i$
	\end{enumerate}
\end{restatable}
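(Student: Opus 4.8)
\textbf{Proof plan for \cref{thm:equivalence}.}
The plan is to establish the cycle of implications $\ref{eq:1}\Rightarrow\ref{eq:2}\Rightarrow\ref{eq:3}\Rightarrow\ref{eq:1}$ together with the trivial logarithmic translations $\ref{eq:2}\Leftrightarrow\ref{eq:4}$ and $\ref{eq:3}\Leftrightarrow\ref{eq:5}$. The latter two equivalences are immediate: replacing every constant $c$ by $-\log c$ turns the max-product automaton into $\Aa_{\log}$ (with $\min$ in place of $\max$ and $+$ in place of $\cdot$), the condition $\Aa_{(\max,\cdot)}(w)<\frac{1}{RS}$ becomes $\Aa_{\log}(w)>-\log\frac{1}{RS}$, and an upper bound $\tau^i\cdot S$ becomes the lower bound $-\log(\tau^i S)=i\log\frac1\tau-\log S\to+\infty$, so $\ref{eq:3}$ is exactly unboundedness of $\Aa_{\log}$. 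The substance therefore lies entirely in the first three implications, which I would carry out using the run decomposition $\Aa_{(+,\cdot)}(w)=\sum_{x}\sum_{\rho_x}\val(\rho_x)$ and $\Aa_{(\max,\cdot)}(w)=\max_x\max_{\rho_x}\val(\rho_x)$, and the inequalities $\frac1S\valinf(\rho_x)\le\val(\rho_x)\le S\valinf(\rho_x)$ from \cref{eq:S}.

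For $\ref{eq:1}\Rightarrow\ref{eq:2}$: since $|\runs_x(w)|\le|w|+1$, the sum defining $\Aa_{(+,\cdot)}(w)$ has at most $|\X|(|w|+1)$ terms, so $\Aa_{(\max,\cdot)}(w)\le\Aa_{(+,\cdot)}(w)$; but this alone does not give a bound independent of $|w|$. The real point is that the witness from $\ref{eq:1}$ can be taken with bounded length: I would argue that if some run $\rho_x$ over $w$ contains a sub-cycle of (inf-)value $\ge1$, then deleting the cycle does not increase $\Aa_{(+,\cdot)}$ too much, or rather, I would instead argue contrapositively — if \emph{every} word $w$ has $\Aa_{(\max,\cdot)}(w)\ge\frac1{RS}$, then every run has $\val(\rho_x)\ge\frac1{RS}$, hence every \emph{cycle} in the automaton has $\valinf\ge1$ (otherwise pumping the complement drives $\Aa_{(\max,\cdot)}$ below any threshold, contradiction), which in turn gives $\valinf(\rho_x)\ge$ a positive constant for all runs via the usual cycle-removal argument bounding admissible weights from below, so $\Aa_{(+,\cdot)}(w)\ge\frac1S\valinf(\rho_x)$ is bounded away from $0$, contradicting $\ref{eq:1}$.

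For $\ref{eq:2}\Rightarrow\ref{eq:3}$: given $w$ with $\Aa_{(\max,\cdot)}(w)<\frac1{RS}$, I would concatenate copies $w^i$ and track a single best run. Reading $w^i$, one can follow the register-$x$ that realizes the small value in each block; the extra factor incurred when switching register at a block boundary is at most $R$ (this is exactly why $R$, the max ratio of transition constants, appears), and the initial/final weights contribute a factor at most $S$. Concretely, $\valinf$ of the concatenated best run is at most $R\cdot(R\cdot\Aa_{(\max,\cdot)}(w)/\text{(something)})^i\le\tau^i$ for $\tau=R\cdot\Aa_{(\max,\cdot)}(w)\cdot S<1$ by assumption; then $\val\le S\cdot\valinf\le\tau^i\cdot S$. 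I would set $\tau:=RS\cdot\Aa_{(\max,\cdot)}(w)<1$ and $w_i:=w^i$. Finally $\ref{eq:3}\Rightarrow\ref{eq:1}$ is easy: from $\Aa_{(\max,\cdot)}(w_i)<\tau^i S$ and the at-most-$|\X|(|w_i|+1)$-term sum bounded by $|\X|(|w_i|+1)\cdot\Aa_{(\max,\cdot)}(w_i)$, one would want $|w_i|$ not to grow too fast; taking $w_i=w^i$ as above, $|w_i|=i|w|$, so $\Aa_{(+,\cdot)}(w_i)\le |\X|(i|w|+1)\tau^i S\to0$.

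The main obstacle I anticipate is the direction $\ref{eq:1}\Rightarrow\ref{eq:2}$, specifically making precise that unboundedly small values of the \emph{sum} force the existence of a single bounded-length word whose \emph{max}-run is already tiny. This requires the cycle-removal / admissible-weight lower-bound machinery (analogous to \cref{claim:upper}, \cref{claim:lower}, \cref{claim:finitely} in \cref{sec:boundedness}, but for the simpler \simpleCRA run structure where the register changes at most once), and care that removing a high-value cycle from one run does not inadvertently create a new low-value configuration elsewhere. The other directions are essentially bookkeeping with the constants $R$ and $S$; the only subtlety there is choosing the witness words as powers $w^i$ so that their lengths grow only linearly in $i$, which keeps the polynomial prefactor $|\X|(|w_i|+1)$ harmless against the geometric decay $\tau^i$.
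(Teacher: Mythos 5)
Your overall architecture — a cycle $\ref{eq:1}\Rightarrow\ref{eq:2}\Rightarrow\ref{eq:3}\Rightarrow\ref{eq:1}$ plus logarithmic bookkeeping for $\ref{eq:4}$ and $\ref{eq:5}$ — is reasonable, and the $\ref{eq:2}\Leftrightarrow\ref{eq:4}$, $\ref{eq:3}\Leftrightarrow\ref{eq:5}$ translations are fine. But there are two genuine problems in the middle.

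First, you have overcomplicated $\ref{eq:1}\Rightarrow\ref{eq:2}$. This implication is immediate: in a max-product semiring over $\Qpos$ every term is nonnegative, so $\Aa_{(\max,\cdot)}(w)\le\Aa_{(+,\cdot)}(w)$ for \emph{every} $w$; instantiating $\ref{eq:1}$ with $\epsilon=\tfrac1{RS}$ hands you the witness for $\ref{eq:2}$ outright. There is no issue of ``bound independent of $|w|$,'' because $\ref{eq:2}$ only asserts existence of a single word. The whole contrapositive detour through cycle removal and admissible weights is unnecessary here (the paper states this implication is ``straightforward'').

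Second, and more seriously, your choice of witness $w_i=w^i$ for $\ref{eq:2}\Rightarrow\ref{eq:3}$ does not work. Recall that in an \simpleCRA{} a run (in the WA view with states $\X\cup\{y\}$) changes register at most once, from the auxiliary $y$ to some $x\in\X$. So on $w^i$ there is, for every $j$ with $d_{a_j,x}\neq 0$, a run that stays on $y$ through the first $i-1$ copies of $w$ and switches to $x$ at position $j$ of the final copy; its value equals that of the corresponding run on $w$ alone, hence is bounded below by a positive constant independent of $i$. Consequently $\Aa_{(\max,\cdot)}(w^i)$ does not decay, and no $\tau<1$ makes $\Aa_{(\max,\cdot)}(w^i)<\tau^i S$ hold. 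Relatedly, the heuristic you invoke — ``the extra factor incurred when switching register at a block boundary is at most $R$'' — presupposes a run can change register at each block, which the model forbids. The paper circumvents both problems by using a different witness: if $w=a_1\cdots a_m$, it takes $w_i=a_1^i a_2^i\cdots a_m^i$ (each \emph{letter} repeated $i$ times, not the whole word), and proves by induction (Claim~\ref{claim:technical}) that every run $\gamma$ on $w_i$ satisfies $\val(\gamma)\le(\valinf(\rho_\gamma)R)^i S$ for some run $\rho_\gamma$ on $w$. Since $\ref{eq:2}$ forces $\valinf(\rho)<\tfrac1R$ for every run $\rho$ on $w$, this gives the geometric decay you need (the factor $R$ precisely absorbs the mismatch between $d_{a_j,x}$ and $c_{a_j,x}$ when the switch lands strictly inside a block of repeated letters). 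With that corrected witness, your $\ref{eq:3}\Rightarrow\ref{eq:1}$ step (polynomial run count $\cdot$ geometric decay) goes through as you envisaged.
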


Notice that~\ref{eq:1} is the complement of the zero isolation problem over $\semiringQ$ and that~\ref{eq:5} is the complement of the boundedness problem over $\semiringZR$ (assuming the domain is $\log(\Qposs) \cup \set{\infty}$).
Hence \cref{thm:rational_to_tropical} is a corollary of this theorem.

\begin{proof}[Proof of \cref{thm:equivalence}]
It is straightforward that \ref{eq:1} $\implies$ \ref{eq:2},
\ref{eq:3} $\implies$ \ref{eq:2}, and \ref{eq:5} $\implies$ \ref{eq:4}.
To complete the proof we show that: \ref{eq:2} $\implies$ \ref{eq:1}, \ref{eq:2} $\implies$ \ref{eq:3}, \ref{eq:2} $\iff$ \ref{eq:4}, and \ref{eq:3} $\implies$ \ref{eq:5}.

We start with \ref{eq:2} $\implies$ \ref{eq:1} and \ref{eq:2}$\implies$ \ref{eq:3}. We assume \ref{eq:2} and prove both \ref{eq:1} and \ref{eq:3}.
Let $w = a_1\dots a_m$ be such that $\mathcal{A}_{(\max,\cdot)}(w) < \frac{1}{RS}$.

Recall that $\set{\rho_1,\dots,\rho_k} = \runs(w)$ is the set of accepting runs of $\Aa_{(+,\cdot)}$ over $w$. Then by the assumption we have $\val(\rho_j) < \frac{1}{RS}$, and thus by \cref{eq:S}
\begin{align}\label{eq:R}
\valinf(\rho_j) < \frac{1}{R} 
\end{align}
for all $1 \le j \le k$.

\begin{claim}\label{claim:technical}
For all $i\in \mathbb{N}$, $x \in \X$, $1 \le t \le m$, and $\gamma\in \runs_{x}(a_1^i \cdots a_t^i)$ there exists $\rho_\gamma\in \runs_{x}(a_1\cdots a_t)$ such that $\val(\gamma) \le (\valinf(\rho_\gamma)R)^i S$.
\end{claim}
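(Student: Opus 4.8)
The plan is to exploit the very rigid shape of runs in an \simpleCRA. Recall that a run $\gamma\in\runs_x(u)$ on a word $u=u_1\cdots u_n$ is completely determined by the single step at which it abandons the auxiliary register $y$ for $x$: either it sits in $x$ throughout, and then $\valinf(\gamma)=\prod_{l=1}^{n}c_{u_l,x}$, or it switches while reading $u_j$ for some $j$, and then $\valinf(\gamma)=d_{u_j,x}\prod_{l=j+1}^{n}c_{u_l,x}$. In either case \eqref{eq:S} gives $\val(\gamma)\le S\cdot\valinf(\gamma)$ for accepting $\gamma$. Hence it suffices to produce $\rho_\gamma\in\runs_x(a_1\cdots a_t)$ with $\valinf(\gamma)\le(\valinf(\rho_\gamma)R)^i$, and then multiply through by $S$.

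So, given $\gamma\in\runs_x(a_1^i a_2^i\cdots a_t^i)$, I would first locate its switch inside that word: it is either absent (the run stays in $x$), or it occurs at the $r$-th copy of $a_s$ for some block $s\in\{1,\dots,t\}$ and offset $r\in\{1,\dots,i\}$. In the first case take $\rho_\gamma$ to be the run on $a_1\cdots a_t$ that stays in $x$; then $\valinf(\gamma)=\big(\prod_{u=1}^{t}c_{a_u,x}\big)^i=\valinf(\rho_\gamma)^i\le(\valinf(\rho_\gamma)R)^i$ since $R\ge1$. In the second case take $\rho_\gamma$ to be the run on $a_1\cdots a_t$ that switches while reading $a_s$; a direct count of the letters read after the switch gives
\[
\valinf(\gamma)=d_{a_s,x}\,c_{a_s,x}^{\,i-r}\prod_{u=s+1}^{t} c_{a_u,x}^{\,i},
\qquad
(\valinf(\rho_\gamma)R)^i=d_{a_s,x}^{\,i}R^i\prod_{u=s+1}^{t} c_{a_u,x}^{\,i}.
\]
Note that $\rho_\gamma$ is a genuine accepting run whenever $\gamma$ is, because every weight occurring on $\rho_\gamma$ already occurs among the weights of $\gamma$; in particular, since $\valinf(\gamma)\ne0$ all entries appearing above with a positive exponent are nonzero, hence lie in $\mathcal{C}$. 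After cancelling the common factor $\prod_{u=s+1}^{t}c_{a_u,x}^{\,i}$ (which is nonzero), the goal reduces to $c_{a_s,x}^{\,i-r}\le d_{a_s,x}^{\,i-1}R^i$.

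The hard part — though it is really only careful bookkeeping with the exponents and side conditions — is this last inequality, for which I would invoke the recorded properties of $R$ ($cR\ge d$ and $cR\ge1$ for all $c,d\in\mathcal{C}$, and $R\ge1$). If $i=r$ the left side equals $1$ and $d_{a_s,x}^{\,i-1}R^i=(d_{a_s,x}R)^{i-1}R\ge R\ge1$. If $i-r\ge1$ then $c_{a_s,x}\in\mathcal{C}$, so $c_{a_s,x}^{\,i-r}\le(d_{a_s,x}R)^{i-r}=d_{a_s,x}^{\,i-r}R^{\,i-r}$, while $R^{r}d_{a_s,x}^{\,r-1}=(d_{a_s,x}R)^{r-1}R\ge R\ge1$; multiplying these two inequalities yields exactly $c_{a_s,x}^{\,i-r}\le d_{a_s,x}^{\,i-1}R^i$. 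Chaining with \eqref{eq:S} gives $\val(\gamma)\le S\cdot\valinf(\gamma)\le(\valinf(\rho_\gamma)R)^i S$, as desired. Finally, the degenerate case $i=0$ (if it is intended to be covered) is immediate, since then $\valinf(\gamma)=1$, $\val(\gamma)\le S$, and $(\valinf(\rho_\gamma)R)^0 S=S$ for any choice of $\rho_\gamma$.
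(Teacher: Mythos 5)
Your proof is correct, and it is essentially a non-inductive unrolling of the paper's own argument. The paper inducts on $t$: its base case is the run on the empty word (never leaving $x$), and the inductive step peels the last block $a_{t+1}^i$, splitting on whether the switch into $x$ occurs there (their Case~1) or earlier (their Case~2, which recurses). You instead locate the unique switch directly, at block $s$ and offset $r$ (or note its absence), and write $\valinf(\gamma)$ and $\valinf(\rho_\gamma)$ in closed form. The decisive arithmetic is the same in both: writing $v=i-r$, the paper's Case~1 chain $dc^{v}\le d(dR)^{v}R\le(dR)^{i}$ is exactly your $d\,c^{i-r}\le d^{i}R^{i}$, which you verify by the equivalent factorisation $c^{i-r}\le(dR)^{i-r}$ and $1\le(dR)^{r-1}R$. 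Your direct presentation is a bit shorter and, as a side benefit, is cleaner than the paper's write-up (which has a few $\val$/$\valinf$ slips in the induction step); the paper's version avoids naming $(s,r)$ explicitly. One small caveat on the degenerate case $i=0$ you flag: there the word $a_1^i\cdots a_t^i$ is empty, but to pick \emph{any} $\rho_\gamma$ you still need $\runs_x(a_1\cdots a_t)\neq\emptyset$, which can fail; however, $i=0$ is never used downstream and the paper's own induction (whose base case is $t=0$, not $i=0$) does not cleanly cover it either, so this is a wrinkle in the statement rather than a defect of your argument.
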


\begin{claimproof}We fix $i$ and $x\in\mathcal{X}$ and proceed by induction on $t$. We assume that $\val(\gamma) > 0$, otherwise the claim is trivial.
For the base case let $\gamma\in\runs_x(\epsilon)$. Then 
$\val(\gamma) \le S = (\rho_\gamma R)^0 S$ for
$\rho_\gamma = \gamma\in \runs_x(\epsilon)$.
For the induction step let $0 \le t < m$. We assume that for all $\gamma\in \runs_x(a_1^i \cdots a_t^i)$ there exists $\rho_\gamma \in \runs_x(a_1 \cdots a_t)$ such that $\val(\gamma) \le (\valinf(\rho_\gamma)R)^i S$. Let us prove the induction step for $t+1$.

Consider $\gamma\in \runs_x(a_1^i \cdots a_{t+1}^i)$ and let $\delta_{a_{t+1}}(x)= cx + d$. Then $\gamma$ can be characterised as one of the following cases:
\begin{enumerate}
	\item\label{case:1} $\valinf(\gamma) = dc^{v}$ for some $0\le v\le i-1$;
	\item\label{case:2} $\valinf(\gamma) = \val(\gamma')c^i$ for some $\gamma' \in \runs_x(a_1^i \cdots a_{t}^i)$.
\end{enumerate}

Case~\ref{case:1}.
Note that there exists $\rho_\gamma \in\runs_x(a_1 \cdots a_{t+1})$ such that $\val(\rho_\gamma) = d$. Thus
\begin{align*}
\frac{1}{S}\val(\gamma) & \le \valinf(\gamma) &
\\ &= dc^{v} \le d (dR)^v &\text{as } c \le d R
\\ &\le d (dR)^vR = (dR)^{v+1} &\text{as } R \ge 1
\\ &\le  (dR)^{i} = (\val(\rho_\gamma) R)^{i}   &\text{as } v+1\le i \text{ and } dR \ge 1.
\end{align*}

Case~\ref{case:2}. 
By induction we have $\rho_{\gamma'}\in\runs_x(a_1 \cdots a_t)$
such that $\val(\gamma') \le (\valinf(\rho_{\gamma'})R)^i S$.
Moreover, there exists
$\rho_{\gamma} \in \runs_x(a_1 \cdots a_{t+1})$ with
$\valinf(\rho_{\gamma}) =\valinf(\rho_{\gamma'}) \cdot c$. Thus
\begin{align*}
\frac{1}{S} \val(\gamma) &= \frac{1}{S} \val(\gamma') \cdot c^i \le (\valinf(\rho_{\gamma'})R)^i \cdot c^i &\text{by induction}
\\&= (\valinf(\rho_{\gamma'}) \cdot cR)^i = (\valinf(\rho_{\gamma}) \cdot R)^i .
\end{align*}
This completes the proof of \cref{claim:technical}.
\end{claimproof}

Let $w_i = a_1^i \ldots a_m^i$. To prove~\ref{eq:1} we show that $\mathcal{A}_{(+,\cdot)}(w_i)\to 0$.
Let $c = \max \{\valinf(\bar\gamma) \cdot R \mid \bar\gamma \in \runs(w)\}$
and let $\gamma \in \runs_x(w_i)$. \cref{claim:technical} implies that $\val(\gamma) \le c^i S$ (because there exists $\rho_\gamma \in  \runs_x(a_1 \cdots a_m)$ such that $\val(\gamma) \le (\valinf(\rho_\gamma) \cdot R)^iS$, and, by definition, $\valinf(\rho_\gamma) \cdot R \le c$). Thus
\begin{align}\label{eq:sumplustimes}
\mathcal{A}_{(+,\cdot)}(w_i) = \sum_{\gamma\in \runs(a_1^i \cdots a_m^i)} \val(\gamma) &\le
|\runs(a_1^i \cdots a_m^i)| \cdot c^i \cdot S.
\end{align}
Notice that $|\runs(w_i)| \le m |\X| i$, where $m$ and $|\X|$ are fixed. By \eqref{eq:R} $c < 1$, and
so $\mathcal{A}_{(+,\cdot)}(w_i) \le m|\X|i c^i S \xrightarrow{i\to\infty} 0$.

To prove~\ref{eq:3} we use the same sequence of words $w_i$. Since the sum becomes the maximum,  \cref{eq:sumplustimes} becomes
\begin{align*}
\mathcal{A}_{(\max,\cdot)}(w_i) = \max_{\gamma\in \runs(a_1^i \cdots a_m^i)} \val(\gamma) &\le
c^i \cdot S.
\end{align*} Then it suffices to define $\tau = c$.

We now show \ref{eq:2} $\iff$ \ref{eq:4}.
Observe that for every $C_1, C_2, c_1,\ldots, c_m \in \Qposs$ we have the following
\begin{equation}\label{eq:equivalence}C_1 >  \prod_{i = 1}^m c_i \ge C_2 \iff 
-\logbr{C_1} <\sum_{i = 1}^m -\logbr{c_i} \le -\logbr{C_2}.
\end{equation}
Notice also that for every run
$$
\rho = x_0,c_1,x_1,\ldots, c_m, x_m
$$
in $\mathcal{A}_{(\max,\cdot)}(w)$ over $w$ there is a unique corresponding run 
$$
\rho_{\log} = x_0,\logbr{c_1},x_1,\ldots, \logbr{c_m}, x_m
$$
in $\Aa_{\log}$ over $w$ (recall that by convention $\log(0) = +\infty$).

First assume \ref{eq:2}, \ie there is $w$ such that $\mathcal{A}_{(\max,\cdot)}(w) < \frac{1}{RS}$. By \cref{eq:equivalence} with $C_1 = \frac{1}{RS}$ we have $\mathcal{A}_{\log}(w) > -\log(\frac{1}{RS})$, which is \ref{eq:4}. Similarly, assume \ref{eq:2} does not hold. Then for all $w$: $\mathcal{A}_{(\max,\cdot)}(w) \ge \frac{1}{RS}$, hence by \cref{eq:equivalence} with $C_2 = \frac{1}{RS}$ for all $w$ we have $\mathcal{A}_{\log}(w) \le-\log(\frac{1}{RS})$, which is the complement of \ref{eq:4}.

Finally, we show \ref{eq:3} $\implies$ \ref{eq:5}. From \ref{eq:3} we get $\exists  \tau < 1$, $\forall i$ $\exists w_i$ such that $\mathcal{A}_{(\max,\cdot)}(w_i) <  \tau^{i} \cdot S$. This entails that 
$\forall \epsilon > 0$ $\exists w$ $\mathcal{A}_{(\max,\cdot)}(w)< \epsilon$. Hence by \cref{eq:equivalence} with $C_1 = \epsilon$ we have $\Aa_{\log}(w) > -\log(\epsilon)$. We conclude since $-\log(\epsilon)\to\infty$ for $\epsilon\to 0$.
\end{proof}

\subsection{Other proofs}

\begin{restatable}{lemma}{LemCraNormal}\label{lemma:normal}
 Let $\Aa$ be an \simpleCRA{} over $\semiringZR$. There exists an \simpleCRA{} in normal form $\Bb$ such that $\Aa$ is bounded if and only if $\Bb$ is bounded. Moreover, the only possible constant that appears in $\Bb$ and not in $\Aa$ is $0$.
The construction of $\Bb$ is effectively polynomial in $|\Aa|$.
\end{restatable}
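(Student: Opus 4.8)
The plan is to reach normal form through a short sequence of transformations, each of which changes $\Aa(w)$ by an amount bounded \emph{uniformly in $w$} (so that (un)boundedness over $\semiringZR$ is preserved), except for one place where a more careful argument is needed. Throughout I use the closed form \eqref{eq:simple_tropical}: writing $\delta_a(x)=\min(x+c_{a,x},d_{a,x})$ and $B_x=\{a\in\Sigma : d_{a,x}\neq\zero\}$, the contribution of register $x$ to $\Aa(w)$ for $w=w_1\cdots w_n$ equals $F(x)+\min\bigl\{\,I(x)+\Delta_x(w),\ \min_{b\in B_x}\bigl(d_{b,x}+m^x_b(w)\bigr)\bigr\}$, where $\Delta_x(w)=\sum_{j=1}^n c_{w_j,x}$ is the total $x$-drift of $w$ and $m^x_b(w)=\min\{\sum_{j\ge i}c_{w_j,x}\;:\;2\le i\le n+1,\ w_{i-1}=b\}$ is the least $x$-drift over suffixes of $w$ beginning right after an occurrence of $b$.

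First I would clean up $F$ and trivial registers: registers with $F(x)=\zero$ never affect the output minimum and can be deleted, as can registers with $I(x)=\zero$ and $B_x=\emptyset$ (for these $\sigma_w(x)\equiv\zero$); if all registers disappear then $\Aa\equiv\zero$ is unbounded and any fixed unbounded normal-form automaton works. After this every $F(x)$ is finite, so replacing all $F(x)$ by $\one$ shifts each register's contribution by at most $\max_x|F(x)|$. A register with $B_x=\emptyset$ is then essentially already normal (its updates are $\delta_a(x)=c_{a,x}\odot x$); I keep it and set $I(x):=\one$, again a uniformly bounded shift.

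The core step handles resets. A register $x$ with $B_x\neq\emptyset$ is replaced by fresh registers $z_{b,x}$ for $b\in B_x$, together with one more register $z_{\star,x}$ when $I(x)$ is finite, all with $I=F=\one$. Register $z_{b,x}$ resets to $\one$ on reading $b$, i.e.\ $\delta_b(z_{b,x})=(c_{b,x}\odot z_{b,x})\oplus\one$, and drifts as before on every other letter, $\delta_a(z_{b,x})=c_{a,x}\odot z_{b,x}$; register $z_{\star,x}$ just drifts, $\delta_a(z_{\star,x})=c_{a,x}\odot z_{\star,x}$. Then $\sigma_w(z_{b,x})=\min(\Delta_x(w),m^x_b(w))$ and $\sigma_w(z_{\star,x})=\Delta_x(w)$, so $\min_{b\in B_x}\sigma_w(z_{b,x})$ (resp. together with $z_{\star,x}$) reproduces $\min_{b\in B_x}(d_{b,x}+m^x_b(w))$ (resp. also $I(x)+\Delta_x(w)$) up to the uniformly bounded discrepancy $\max_b|F(x)+d_{b,x}|$ (resp. $|F(x)+I(x)|$). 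Crucially this uses only the drift constants $c_{a,x}$ of $\Aa$ and the constant $\one=0$, and introduces $O(|\X|\cdot|\Sigma|)$ registers, so the construction is polynomial; the resulting $\Bb$ is in normal form.

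For correctness one collects the above to get $\Bb(w)=\min\bigl(\Aa(w)+e(w),\ \min_{x}\Delta_x(w)\bigr)$, where $|e(w)|\le C$ for a constant $C$ and the last minimum ranges over the registers $x$ of $\Aa$ with $I(x)=\zero$ and $B_x\neq\emptyset$ — for such $x$, forcing the new initial value to be $\one$ illegitimately switches on a from-start run of value $\Delta_x(w)$, and, since in a normal-form \simpleCRA every register unavoidably carries a from-start term, this phantom cannot be avoided by the approach. The bound $\Bb(w)\le\Aa(w)+C$ immediately yields ``$\Aa$ bounded $\Rightarrow$ $\Bb$ bounded'' and ``$\Bb$ unbounded $\Rightarrow$ $\Aa$ unbounded''. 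The remaining implication, ``$\Aa$ unbounded $\Rightarrow$ $\Bb$ unbounded'', is the real obstacle: one must show the phantom terms $\Delta_x(w)$ cannot keep $\Bb$ bounded while $\Aa$ is not. I would argue this by showing that a witness of unboundedness of $\Aa$ can be modified (by pre- or post-composing with a suitable pumped block) so that in addition all the finitely many affine forms $\Delta_x$ become simultaneously large, distinguishing according to whether the system ``$p\ge0,\ \langle\bc_x,p\rangle>0$ for all relevant $x$'' is feasible — if it is, such a block exists; if it is not, then $\min_x\Delta_x\le 0$ on every word and one reduces to showing that $\Aa$ was in fact bounded. Making this last case analysis watertight, and more generally controlling the phantom runs forced upon us by the ``no new constants'' requirement (which forbids storing the reset values anywhere but in the drifts), is the part I expect to be the main difficulty; the rest is bookkeeping.
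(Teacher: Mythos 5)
Your approach is genuinely different from the paper's, and it contains a real gap that you yourself flag but do not close. First, a simplification: the paper normalises the reset constants $d_{a,x}$ by directly replacing every finite $d_{a,x}$ with $0$, which shifts $\Aa(w)$ by at most $\max_{a,x}|d_{a,x}|$ uniformly in $w$; your register-splitting into $z_{b,x}$ and $z_{\star,x}$ achieves the same effect but is an unnecessary detour.

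The real problem is the one you identify: registers $x$ with $I(x)=\zero$ and $B_x\neq\emptyset$. Forcing all new registers to have initial value $\one$ does introduce the phantom term $\Delta_x(w)$. Your sketched fix --- modifying a witness of unboundedness by pre- or post-composing a pumped block with simultaneously positive drifts, and otherwise arguing that $\Aa$ was bounded --- is not carried out, and I do not think it can be made rigorous as stated. Appending or prepending $v^k$ changes the suffix-sums $m^y_b(\cdot)$ of every other register $y$ in ways that can destroy the original witness. In the infeasible case, the conclusion you actually reach is ``$\min_x\Delta_x(w)\le 0$ for every $w$'', which says that your $\Bb$ is bounded --- not that $\Aa$ is. You would then have to prove $\Aa$ bounded independently. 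Your Farkas/Gordan certificate gives $\sum_x\lambda_x\Delta_x(v)\le 0$ for every word $v$, but the contribution of register $x$ to $\Aa(w)$ involves suffix drifts after occurrences of reset letters that may differ from register to register, so the desired contradiction does not fall out of the certificate.

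The paper sidesteps all of this with a device you did not consider: enlarge the alphabet with a fresh letter $\$$ whose update resets each register to the value it should start from. After the $d_{a,x}$-truncation, this reset value can be taken in $\{\zero,\one\}$ (i.e.\ $+\infty$ if $I(x)=\zero$, else $0$), so the only new constant is $0$ and the automaton is in normal form. Then for $w\in\Sigma^*$, $\Bb(\$w)$ equals $\Aa(w)$ up to a uniformly bounded shift (the $\$$ undoes the phantom), and for $w$ without $\$$, $\Bb(w)\le\Bb(\$w)$ because lowering the initial values can only lower the output in min-plus. Both implications of the boundedness equivalence follow immediately, with no pumping argument at all. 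I would look for this kind of ``extra reset letter'' whenever a direct replacement of constants threatens to create spurious low-valued runs.
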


\begin{proof}
Let $\Aa = (\Sigma, I, F, \X, (\delta_a)_{a \in \Sigma})$. Let $\Aa'$ be a copy of the automaton $\Aa$ with transitions $\delta'_a$ redefined as follows. Let $\delta_a(x) = \min(c_{a,x} + x, d_{a,x})$. If $d_{a,x} = + \infty$ then $\delta'_a(x) = \delta_a(x)$, otherwise $\delta'_a(x) = \min(c_{a,x} + x, 0)$. Then $\Aa$ is bounded iff $\Aa'$ is bounded. Indeed, let $d = \max \set{|d_{a,x}| \mid a \in \Sigma, x \in \X}$, we claim that $\Aa'(w) + d \ge \Aa(w)$ for all $w \in \Sigma^*$, which will conclude the proof. Indeed, it follows from \cref{eq:simple_tropical} and the fact that $d_{a,x}$ was changed only if $d_{a,x} \neq + \infty$.

Now from $\Aa'$ we construct the following automaton
$\Bb = (\Sigma \cup \set{\$}, I'', F'', \X'', (\delta_a'')_{a \in \Sigma})$. The set of registers is $\X'' = \set{x \in \X \mid F(x) \neq +\infty}$. In the special case where $\X''$ is empty, notice that $\Aa$ always outputs $+\infty$ and it is easy to define $\Bb$. Thus we assume $\X'' \neq \emptyset$ and we define $I''(x) = F''(x) = 0$ for all $x \in\X''$. Finally, for every $a \in \Sigma$ the transitions $\delta_a''$ are defined as $\delta_a$, where every $x \not \in \X$ is substituted with $+\infty$. The transitions for $\$$ are defined as follows: $\delta_a''(x) = \min(+\infty + x, I(x))$, \ie they reset the registers to the initial values.

We prove that $\Bb$ satisfies the claim.
First, notice that $\Aa'(w) = \Bb(v\$w)$ for every $w \in \Sigma^*$ and $v \in (\Sigma \cup \set{\$})^*$. Thus we understand the behaviour of $\Bb$ on inputs with at least one $\$$. For the remaining cases we show that $\Bb(w) \le \Bb(\$w) + b$, for all $w \in \Sigma^*$, where $b = \max \set{|I(x)| \mid x \in \X''}$.
Indeed, recall that $I(x)$ affects only $d_{a_{i-1},x}$ in \cref{eq:simple_tropical}. Thus
\begin{align*}
\Bb(w) = \min_{x \in \X}\min_{1 \le i \le n+1} \set{F(x) + d_{a_{i-1},x} \; + \sum_{j \in \set{i,\ldots,n}}c_{a_j,x}} =
\\ b + \min_{x \in \X}\min_{1 \le i \le n+1} \set{F(x) + (d_{a_{i-1},x} -b) \; + \sum_{j \in \set{i,\ldots,n}}c_{a_j,x}} \le
\\ b + \Bb(\$w).
\end{align*}
Thus $\Aa$ is unbounded if and only if $\Bb$ is unbounded.
\end{proof}

\ThmBoundOvas*

\begin{proof}First, we show syntactic translations between the models. Then in the second part we will prove the reductions. Let $\Aa = (\Sigma, I, F, \X, (\delta_a)_{a \in \Sigma})$ be an \simpleCRA. By \cref{lemma:normal} for the boundedness problem we can assume that $\Aa$ is in normal form. Let $|\X| = d$. 
We will define a $d$-OVAS $\V_{\Aa} = (T_A)_{A \in \orthants}$. Recall the notation $T = \bigcup_{A \in \orthants} T_A$.
For convenience we will treat $T$ as a subset of $\R^{\X}$ instead of $\R^d$.

Recall that $\delta_a(x) = \min(x + c_{a,x}, d_{a,x})$ for every $a \in \Sigma$, $x \in \X$, where $d_{a,x} \in \set{0,+\infty}$. For every $a \in \Sigma$ we define the vector $\bc_a \in \R^\X$, where $\bc_a[x] = c_{a,x}$. Then $T = \set{\bc_a \mid a \in \Sigma}$. For every $\bc_a \in T$ we define the orthant
\begin{align}
\begin{split}\label{eq:vector_transition}
 A_a = \{\bv \in \R^\X \mid \text{ if } d_{a,x} = +\infty \text{ then } \bv[x] \le 0, \\ \text{ otherwise } \bv[x] \ge 0\}.
 \end{split}
\end{align}
Note that more than one letter can define the same vector $\bc_a$. 
Then we need to store only the minimal orthants defining~$\bc_a$.

The conversion to an \simpleCRAlong $\Aa_\V = (\Sigma, I, F, \X, (\delta_a)_{a \in \Sigma})$ from a given a $d$-OVAS $\V = (T_A)_{A \in \orthants}$ with $T = \bigcup_{A \in \orthants}T_A$ is analogous. We define $\Sigma = T$ and $\X = \set{x_1,\ldots,x_d}$. The transitions are defined by $\delta_{\bv}(x_i) = \min(x_i + \bv[i],  d_{\bv,x_i})$, where $d_{\bv,x_i} = +\infty$ if the orthant $\orthmin{\bv}$ is defined by $x_i \le 0$ and $d_{\bv,x_i} = 0$ otherwise.
The vectors $I$ and $F$ are defined to be the zero vector $\bf{0}$ as required by the normal form.

Notice that given a $d$-OVAS $\V$ the $d$-OVAS $\V_{\Aa_{\V}}$ (\ie obtained from $\Aa_\V)$ is the same as $\V$ up to reordering the coordinates in vectors $\R^d$. To prove \cref{theorem:boundedness_ovass} it suffices to prove that $\Aa$ is unbounded if and only if $\V_\Aa$ is a positive instance of universal coverability. Indeed, for the converse reduction one would have to prove that $\V$ is a positive instance of universal coverability if and only if $\Aa_\V$ is unbounded. But this will follow from $\V$ being equivalent to $\V_{\Aa_\V}$.

Fix an \simpleCRA $\Aa$ and the $d$-OVAS $\V_\Aa$ obtained using the construction above.
We prove the following claim, that concludes the theorem.
\begin{restatable}{claim}{ClaimOvasCra} \label{claim:ovass_weighted}
For every $k \in \N \setminus \set{0}$ and every $w = a_1 \ldots a_n$:
$\Aa(w) \ge k$ if and only if  there is a run $\bv_0,\ldots,\bv_n$, where
$\bv_0 = (-k,\ldots,-k)$, $\bv_n \in \R_{\ge 0}^d$, and
$\bv_{i+1} = \bv_i + \bc_{a_{n-i}}$
for all $i \in \set{0,\ldots,n-1}$.
\end{restatable}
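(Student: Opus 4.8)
First I would unfold $\Aa(w)$ over $\semiringZR$ through its run formula, exhibit the (uniquely determined) candidate vector sequence as partial sums of the transition vectors $\bc_a$, and then check that ``being a run of $\V_{\Aa}$ that ends in $\R_{\ge 0}^d$'' matches ``$\Aa(w)\ge k$'' coordinate by coordinate, with the reversal of the word accounting for an index shift.

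\noindent\emph{The arithmetic side.} Since $\Aa$ is in normal form, $I(x)=F(x)=\one=0$ and $d_{a,x}\in\set{0,+\infty}$ for all $a\in\Sigma$, $x\in\X$. Writing $w=a_1\cdots a_n$, \cref{eq:simple_tropical} gives
\begin{equation*}
\Aa(w)\;=\;\min_{x\in\X}\ \min_{1\le i\le n+1}\set{d_{a_{i-1},x}+\sum_{j=i}^{n}c_{a_j,x}},
\end{equation*}
with the convention $d_{a_0,x}=I(x)=0$. As each $d_{a_{i-1},x}$ equals $0$ or $+\infty$, for $k\in\N\setminus\set{0}$ we obtain
\begin{equation*}
\Aa(w)\ge k\iff\ \text{for all }x\in\X\text{ and all }i\in[1,n+1]\text{ with }d_{a_{i-1},x}=0:\quad \sum_{j=i}^{n}c_{a_j,x}\ge k.\tag{$\star$}
\end{equation*}

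\noindent\emph{The OVAS side.} The constraints $\bv_0=(-k,\dots,-k)$ and $\bv_{\ell+1}=\bv_\ell+\bc_{a_{n-\ell}}$ determine the sequence uniquely: $\bv_\ell=(-k,\dots,-k)+\sum_{j=n-\ell+1}^{n}\bc_{a_j}$, so $\bv_\ell[x]=-k+\sum_{j=n-\ell+1}^{n}c_{a_j,x}$. The map $\ell\mapsto i:=n-\ell+1$ is a bijection $[0,n]\to[1,n+1]$, and under it $\bv_\ell[x]\ge 0\iff\sum_{j=i}^{n}c_{a_j,x}\ge k$; in particular $\bv_n\in\R_{\ge 0}^d$ is exactly the case $i=1$ of $(\star)$ (recall $d_{a_0,x}=0$ always). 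It remains to recognise when the $\ell$-th step, which adds $\bc_{a_{n-\ell}}$ at $\bv_\ell$, is a legal step of $\V_{\Aa}$. From \cref{eq:vector_transition} and monotonicity, for an orthant $B$ one has $\bc_a\in T_B$ iff $A_a\preceq B$, i.e.\ iff $B$ carries sign $+1$ on every coordinate $x$ with $d_{a,x}=0$; on the other hand $\orthmax{\bv_\ell}$ carries sign $+1$ on $x$ iff $\bv_\ell[x]\ge 0$. Hence $\bc_{a_{n-\ell}}\in T_{\orthmax{\bv_\ell}}$ iff $\bv_\ell[x]\ge 0$ for every $x$ with $d_{a_{n-\ell},x}=0$, which, through the bijection $\ell\mapsto i$, is precisely the conjunct of $(\star)$ for $i=n-\ell+1$ (note $i-1=n-\ell$, so $i\ge 2$). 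Letting $\ell$ range over $[0,n-1]$ (so $i$ over $[2,n+1]$) and adding the endpoint requirement ($i=1$), I conclude: $\bv_0,\dots,\bv_n$ is a run of $\V_{\Aa}$ ending in $\R_{\ge 0}^d$ iff every conjunct of $(\star)$ holds iff $\Aa(w)\ge k$.

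\noindent\emph{Main obstacle.} The delicate point is the legality characterisation of the last paragraph: one must unwind \cref{eq:vector_transition}, the ``store only the minimal orthants'' convention, and the definition of $\orthmax{\cdot}$ to see that, in the reversed reading, a finite cap $d_{a,x}=0$ on register $x$ behaves exactly as the constraint ``$\bv[x]\ge 0$'', whereas $d_{a,x}=+\infty$ imposes nothing --- this is the precise content of the intuition sketched after \cref{theorem:boundedness_ovass}. Care is also needed to keep the two index conventions ($i$ for the \simpleCRA{} formula, $\ell$ for the OVAS run) synchronised through the word reversal, and, when several letters of $\Sigma$ produce the same transition vector, to associate each step of a run with a letter whose availability orthant contains the current point; the extreme orthants (the negative orthant at $\bv_0=(-k,\dots,-k)$, and points lying on walls) are then subsumed by the same sign bookkeeping.
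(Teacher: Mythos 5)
Your proof follows essentially the same approach as the paper's: unfold the normal-form run formula over $\semiringZR$ into coordinate-wise inequalities indexed over $i\in[1,n+1]$, express $\bv_\ell$ as $(-k,\dots,-k)$ plus a suffix sum of the vectors $\bc_{a_j}$, and then match the endpoint condition $\bv_n \in \R_{\ge 0}^d$ and the step-legality conditions to these inequalities through the reversal bijection $\ell\mapsto n-\ell+1$. The only difference is presentational --- you bundle everything into the single condition $(\star)$, whereas the paper separates the $i=1$ case (its item (1), matching $\bv_n\ge 0$) from $i\ge 2$ (its item (2), matching the run-legality at each step); your ``main obstacle'' remarks about aliasing of $\bc_a$ across letters and about synchronising the two index conventions are exactly the bookkeeping that the paper handles implicitly.
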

Intuitively, a word $w = a_1\ldots a_n$ witnesses a big value
for $\Aa$ if and only if the corresponding transitions in
$\V_\Aa$ applied in the reverse order witness a run from a low
starting point.

\begin{claimproof}
Recall that since $\Aa$ is in
normal form and by \cref{eq:simple_tropical} given a word
$w = a_1,\ldots a_n$
\begin{align*}
\Aa(w) = \min_{x \in \X} \min_{1 \le i  \le n+1}\set{d_{a_{i-1},x} \; + \sum_{j \in \set{i,\ldots,n}}c_{a_j,x}},
\end{align*}
where $d_{a_0,x} =0$ for all $x \in \X$.

Since $d_{a_i,x} \in \set{0,+\infty}$ we get that $\Aa(w) \ge k$ if and only if both items hold:
\begin{enumerate}
\item\label{eq:wa1} $\sum_{j \in \set{1,\ldots,n}}c_{a_j,x} \ge k$ for all $x \in \X$. 
 \item\label{eq:wa2} $\sum_{j \in \set{i+1,\ldots,n}}c_{a_j,x} \ge k$ for all $i \in \set{1,\ldots,n}$ and for all $x \in \X$ such that $d_{a_i,x} = 0$.
\end{enumerate}
We assume that an empty sum evaluates to $0$. Then, since $k > 0$, for $i = n$ \cref{eq:wa2} is valid only if $d_{a_n,x} = +\infty$ for all $x \in \X$.

Now, let $\bv_0 = (-k,\ldots,-k)$ and let $\bv_{i+1} = \bv_i + \bc_{a_{n-i}}$ for all $i \in \set{0,\ldots,n-1}$.
We prove that $\bv_n \in \R^d_{\ge 0}$ if and only if \cref{eq:wa1} holds and that $\bv_0,\ldots,\bv_n$ is a run if and only if \cref{eq:wa2} holds.

For the first part notice that $\bv_n \in \R_{\ge 0}^d$ if and only if $\sum_{j = 1}^n \bc_{a_j} \ge k$, which is equivalent to \cref{eq:wa1}. For the second part it suffices to prove that $\bv_{n-i} \in A_i$, for some orthant $A_i$ such that $A_{a_{i}} \preceq A_i$ for all $i \in \set{1,\ldots,n}$. Recall that $\bv_{n-i} = \bv_0 + \sum_{j=i+1}^{n}\bc_{j}$.
Since $\bv_0 = (-k,\ldots,-k)$ and by \cref{eq:vector_transition} this is equivalent to $\sum_{j=i+1}^{n}\bc_{j}(x) \ge k$ for all $x \in \X$ such that $d_{a_i,x} = 0$. This is equivalent to \cref{eq:wa2}.
\end{claimproof}
\end{proof}

\subsection{OVAS with continuous semantics}

\LemRunsCont*

\begin{proof}

Fix some $\bv', \bw' \in L$ as in the statement of the lemma. Let $A_0,\ldots,A_k$ be the sequence of elements in $\orthantpath{\bv', \bw'}$ ordered starting from the closest to $\bv'$. In particular $A_0 = \orthmax{\bv'}$ and $A_k = \orthmax{\bw'}$. Let $\bu_1,\ldots,\bu_{k} \in L$ be the unique sequence of points such that $\bu_i \in A_{i-1} \cap A_{i}$ for all $i \in \set{1,\ldots,k-1}$. For simplicity we also denote $ \bv' = \bu_0$ and $\bw' = \bu_{k+1}$. It remains to prove that $\bu_0,\bu_1,\ldots,\bu_k, \bu_{k+1}$ is a continuous run, where $A_0,\ldots, A_k$ are the corresponding witnessing orthants. Indeed, $\bu_{i-1}, \bu_{i} \in A_{i-1}$ for every $i \in \set{1,\ldots,k+1}$. Since  both $\bu_{i-1}$ and $\bu_{i}$ are on the line from $\bv$ to $\bw$ we have $A_{i-1}\in\orthantpath{\bv,\bw}$. Hence, we have that  there exists $\delta_i \in \R_0$ such that $\delta_i(\bu_i - \bu_{i-1}) = \delta_{A_{i-1}}(\bw - \bv) \in T_{A_{i-1}}$.
\end{proof}

\LemContShift*

\begin{proof}
The direction of the straight line route from $\bv$ to $\bw$ is the same direction as the straight line route from $\bv'$ to $\bw'$. We show that this direction is available in the transition set of all orthants on the route from $\bv'$ to $\bw'$. We are able to conclude this since the direction was available in all orthants on the route from $\bv$ to $\bw$ and every point between $\bv'$ and $\bw'$ is in some  orthant at least as large (w.r.t. $\preceq$) as on the route from $\bv$ to $\bw$.

We start by proving that for each $C'\in \orthantpath{\bv',\bw'}$ there exists $C\in\orthantpath{\bv,\bw}$ such that $C \preceq C'$. Indeed, by definition $\bv' = \bv + \bv''$ and $\bw' = \bw + \bw''$, where $\bv''$ and $\bw''$ have only nonnegative coefficients. Let
$$
\bu' = \delta\bv' + (1-\delta)\bw' = \delta\bv + (1-\delta)\bw + \delta\bv'' + (1-\delta)\bw'',
$$
for some $\delta \in [0,1]$ and denote $\bu = \delta\bv + (1-\delta)\bw$. By definition $\bu' \ge \bu$ and $\orthmax{\bu} \in \orthantpath{\bv,\bw}$. We conclude since $\bu'$ was defined for any $\delta \in [0,1]$ and $\orthmax{\bu} \preceq \orthmax{\bu'}$.

To finish the proof recall that by monotonicity $C \preceq C'$ implies $T_{C} \subseteq T_{C'}$. Since $\frac{\delta_{C}}{\delta'}(\bw' - \bv') \in T_C$
the conclusion follows from \cref{lemma:runs_continous}.
\end{proof}

\begin{corollary}
    \label{corollary:run_simple_shift}
    Suppose $\bu \stepsc \bv$ then for any $\Delta\in \R_{\ge 0}$, we have $\Delta+ \bu\stepsc \Delta+\bv$.
\end{corollary}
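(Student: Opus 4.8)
The plan is to reduce the statement to \cref{lemma:continuous_shift}, applied one step at a time along a witnessing continuous run. First I would fix a continuous run $\bv_0,\bv_1,\ldots,\bv_n$ realising $\bu \stepsc \bv$, so that $\bv_0 = \bu$, $\bv_n = \bv$, and for each $i \in \set{0,\ldots,n-1}$ there is an orthant $A_i$ with $\bv_i,\bv_{i+1} \in A_i$ together with some $\delta_i \in \R_{>0}$ such that $\delta_i(\bv_{i+1}-\bv_i) \in T_{A_i}$. Writing $\Delta \ge \mathbf{0}$ for the shift, the goal reduces to showing $\bv_i + \Delta \stepsc \bv_{i+1} + \Delta$ for every $i$; once this holds for all $i$, concatenating the corresponding continuous runs (continuous reachability is closed under concatenation) yields a continuous run from $\bv_0 + \Delta = \Delta + \bu$ to $\bv_n + \Delta = \Delta + \bv$, which is the claim.

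For a fixed step $i$, I would check the hypotheses of \cref{lemma:continuous_shift} with $\bv := \bv_i$ and $\bw := \bv_{i+1}$. The single pair $(\bv_i,\bv_{i+1})$ is already a length-one continuous run, so $\bv_i \stepsc \bv_{i+1}$. Since orthants are convex and $\bv_i,\bv_{i+1} \in A_i$, the whole segment between them lies in $A_i$; hence every maximal orthant $C \in \orthantpath{\bv_i,\bv_{i+1}}$ satisfies $A_i \preceq C$, and by monotonicity of the OVAS we get $\delta_i(\bv_{i+1}-\bv_i) \in T_{A_i} \subseteq T_C$, i.e.\ the required direction is available along the whole path. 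I would then take $\bv' := \bv_i + \Delta \ge \bv_i$ and $\bw' := \bv_{i+1} + \Delta \ge \bv_{i+1}$; since $\bw' - \bv' = \bv_{i+1} - \bv_i = 1 \cdot (\bv_{i+1}-\bv_i)$, \cref{lemma:continuous_shift} with $\delta' = 1$ gives $\bv_i + \Delta \stepsc \bv_{i+1} + \Delta$, as wanted.

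The one corner case to handle is a trivial step $\bv_i = \bv_{i+1}$ (which can occur only if $\mathbf{0}$ is itself a transition), where \cref{lemma:continuous_shift} is not directly applicable; there I would simply invoke reflexivity $\bx \stepsc \bx$. I do not expect a genuine obstacle: all the real work already sits inside \cref{lemma:continuous_shift} (and, through it, \cref{lemma:runs_continous}), and this corollary amounts to the per-step invocation plus a concatenation. The only point needing a line of care is the observation "segment in $A_i$ $\implies$ $A_i \preceq C$ for all $C$ on the path", which is the same monotonicity argument already used in the proof of \cref{lemma:continuous_shift} and so causes no difficulty.
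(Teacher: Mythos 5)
Your proof is correct and follows the same route the paper intends: the corollary is stated without a dedicated proof precisely because it follows from \cref{lemma:continuous_shift} applied to each step $\bv_i \to \bv_{i+1}$ of a witnessing continuous run (with $\delta'=1$), using convexity of $A_i$ and monotonicity to check the hypothesis on $\orthantpath{\bv_i,\bv_{i+1}}$, and then concatenating. The corner-case remark about $\bv_i = \bv_{i+1}$ is slightly over-cautious — \cref{lemma:continuous_shift} still applies there since $\bw' - \bv' = \mathbf{0} = \delta'(\bw-\bv)$ holds for any $\delta' > 0$ and $\mathbf{0}\in T_{A_i}$ is given — but invoking reflexivity is equally valid and costs nothing.
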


The following lemma is similar to a standard lemma for continuous VASS (\eg see \cite[Lemma 12]{FracaH15}).
\begin{restatable}{lemma}{LemScaling}\label{lemma:rescaling}
Suppose $\bu \stepsc \bv$ then for any $\lambda> 0$, we have $\lambda \bu\stepsc \lambda\bv$.
\end{restatable}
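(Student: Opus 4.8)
\textbf{Proof plan for \cref{lemma:rescaling}.}
The plan is to show that continuous runs are preserved under uniform scaling by a positive factor $\lambda$. The key observation is that the definition of a continuous run is entirely expressed in terms of (i) membership of consecutive points in a common orthant, and (ii) the existence of a positive scalar $\delta_i$ witnessing that a scaled difference lies in the appropriate transition set. Both of these conditions are stable under multiplying every point of the run by the same positive constant: orthants are cones (closed under multiplication by $\lambda > 0$), and if $\delta_i(\bv_{i+1}-\bv_i) \in T_{A_i}$ then $\frac{\delta_i}{\lambda}\big(\lambda\bv_{i+1}-\lambda\bv_i\big) \in T_{A_i}$, so $\frac{\delta_i}{\lambda}$ is a valid witness for the scaled step.

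Concretely, suppose $\bu \stepsc \bv$ via the continuous run $\bv_0, \bv_1, \ldots, \bv_n$ with $\bv_0 = \bu$, $\bv_n = \bv$, witnessing orthants $A_0, \ldots, A_{n-1}$, and scalars $\delta_0, \ldots, \delta_{n-1} \in \Rpos$ such that $\bv_i, \bv_{i+1} \in A_i$ and $\delta_i(\bv_{i+1} - \bv_i) \in T_{A_i}$ for each $i$. Consider the sequence $\lambda\bv_0, \lambda\bv_1, \ldots, \lambda\bv_n$. First I would note that each $A_i$, being an orthant, is closed under scaling by $\lambda > 0$, so $\lambda\bv_i, \lambda\bv_{i+1} \in A_i$; in particular consecutive points still share the orthant $A_i$, and the boundary-pausing requirement is maintained because a coordinate of $\bv_i$ is zero if and only if the corresponding coordinate of $\lambda\bv_i$ is zero. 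Second, setting $\delta_i' = \delta_i / \lambda \in \Rpos$, we have $\delta_i'(\lambda\bv_{i+1} - \lambda\bv_i) = \delta_i(\bv_{i+1} - \bv_i) \in T_{A_i}$. Hence $\lambda\bv_0, \ldots, \lambda\bv_n$ is a continuous run, giving $\lambda\bu \stepsc \lambda\bv$.

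There is no real obstacle here: the statement is essentially a one-line consequence of the conical nature of orthants and the $\Rpos$-quantified scalar in the definition of a continuous transition, mirroring the analogous lemma for continuous VASS cited from \cite{FracaH15}. The only mild subtlety is to remember that the witnessing orthant in the definition is required to be one containing both endpoints (and that we may take the maximal such), so one should check that after scaling the same orthant $A_i$ still works — which it does, since scaling by a positive constant neither creates nor destroys zero coordinates, and therefore $\orthmax{\lambda\bv_i} = \orthmax{\bv_i}$. With that remark the proof is complete.
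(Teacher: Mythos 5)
Your proof is correct and uses essentially the same argument as the paper: orthants are cones, so they are stable under positive scaling, and the witness scalar $\delta_i$ can be replaced by $\delta_i/\lambda$. Your version is slightly more explicit (noting that $\orthmax{\lambda\bv_i}=\orthmax{\bv_i}$), but there is no substantive difference in approach.
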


\begin{proof}
    Let $\bu=\bv_1, \bv_2,\dots, \bv_n = \bv$ be a run witnessing $\bu \stepsc \bv$  such that $\bv_i, \bv_{i+1} \in A_i$ for some orthant $A_i$.
    Hence $ \delta_i(\bv_{i+1} - \bv_i) \in T_{A_i}$ for some $\delta$. 

    Observe $\lambda \bv_{i+1}$ and $\lambda \bv_i$ are also both in $A_i$, and hence we have  $\delta_i/\lambda(\lambda \bv_{i+1} - \lambda \bv_i)\in T_{A_i}$, so $\lambda \bv_i\stepsc \lambda \bv_{i+1}$.
\end{proof}

\LemRunsScale*

\begin{proof}By \cref{lemma:rescaling} we have  $\bv_{i} \stepsc \bv_{i+1}$ then $m\bv_{i} \stepsc m\bv_{i+1}$.
    
    Let $\Delta = \bv_0 - m \bv_0$. Observe $\Delta \in \mathbb{R}_{\ge 0}^d$ since $\bv_0 \in \mathbb{R}_{< 0}^d$ and $m \ge 1$.

    Observe that $\bv_{i}' = \bv_{0} + ma_1 +\dots+ma_i = \bv_{0} -m\bv_{0}+m\bv_{0}+ ma_1 +\dots+ma_i = \bv_{0} -m\bv_{0} + m\bv_{i} = m\bv_{i} +\Delta$.

    Therefore $\bv_{i}' = m\bv_{i} + \Delta \stepsc m\bv_{i+1} + \Delta =\bv_{i+1}'$ by \cref{corollary:run_simple_shift}.
\end{proof}

\section{Additional material for Section~\ref{sec:coverability}} \label{app:cover}

\subsection{Undecidability of the OVAS coverability problem over \texorpdfstring{$\mathbb{Z}$}{Z}}
\ThmCoverUndec*

In order to prove \cref{theorem:coverability}, we provide a reduction from the halting problem for two-counter machines, which is an undecidable problem~\cite{minsky1967computation}, to the coverability problem for OVAS over $\Z$. Let us first define two-counter machines and the halting problem.

\begin{definition}\label{defn:two-counter-machine}
A \emph{two-counter machine} $M$ is a tuple\\ $(Q, q_1, q_{halt}, T)$.
Here $Q$ is a finite set of states, with dedicated initial and halting states $q_1,q_{halt}\in Q$. The transitions $T\subseteq \mathsf{Type}\times Q\times Q$, where $\mathsf{Type}$ refers to the counter actions; \\$\mathsf{Type} = \{ \mathsf{inc}_{c_1}, \mathsf{inc}_{c_2}, \mathsf{dec}_{c_1}, \mathsf{dec}_{c_2}, \mathsf{zero}_{c_1}, \mathsf{zero}_{c_2} \}$. We assume w.l.o.g.\ that each transition of $M$ changes each counter by at most one.

A \emph{configuration} of a run is an element of $Q\times \mathbb{N}\times \mathbb{N}$ representing the current state, and the value of each of the two counters. We write $(p,c_1,c_2) \in Q\times \mathbb{N}\times \mathbb{N}$ as  $p(c_1,c_2)$.

We write $\mathsf{inc}_{c_i}(q,q')$ to denote that the machine should move from state $q$ to $q'$ incrementing counter $c_i$. Therefore $\mathsf{inc}_{c_1}(p,q)$ maps $p(c_1,c_2)$ to $q(c_1+1,c_2)$ and similarly $\mathsf{inc}_{c_2}(p,q)$ maps $p(c_1,c_2)$ to $q(c_1,c_2+1)$.

The transition $\mathsf{dec}_{c_i}(q,q')$ should decrement $c_i$, provided the value of $c_i >0$.
A transition $\mathsf{dec}_{c_1}(p,q)$ maps $p(c_1,c_2)$ to $q(c_1-1,c_2)$ if $c_1 \ge 1$. Similarly $\mathsf{dec}_{c_2}(p,q)$ maps $p(c_1,c_2)$ to $q(c_1,c_2-1)$ if $c_2 \ge 1$.

Finally $\mathsf{zero}_{c_i}(q,q')$ should move to state $q'$, only if the value of $c_i = 0$, without affecting the value. A transition $\mathsf{zero}_{c_1}(p,q)$ maps $p(c_1,c_2)$ to $q(c_1,c_2)$ if $c_1 = 0$ and  $\mathsf{zero}_{c_2}(p,q)$ maps $p(c_1,c_2)$ to $q(c_1,c_2)$ if $c_2 = 0$.

Let $\sfstart(t)$ and $\sfend(t)$ denote the starting and ending state of $t$ respectively, for example $\sfstart(\incsf_{c_1}(q,q')) = q$ and $\sfend(\incsf_{c_1}(q,q')) = q'$.

We say the two-counter machine is \emph{deterministic} if for every configuration $p(c_1,c_2)$ there exists at most one transition in $T$ that can be applied. In particular, for every state there should either be only one $\mathsf{inc}_{c_i}$ transition or at most one of each $\mathsf{dec}_{c_i}$ and $\mathsf{zero}_{c_i}$ transition with the same $c_i$.

A \emph{run} of a two-counter machine is encoded by a sequence of configurations, interleaved by the transition between them:
$q_1(0,0)\xrightarrow{t_1} q^{1}(c_1^{(1)},c_2^{(1)})\xrightarrow{t_2} q^{2}(c_1^{(2)},c_2^{(2)})\xrightarrow{t_3}  \dots \xrightarrow{t_n} q^{n}(c_1^{(n)},c_2^{(n)})$ encodes a run of length $n$.
Without loss of generality, the state information is redundant since it is encoded in the choice of transition and the previous counter value.

The two-counter machine $M$ \emph{halts} if there exists a run from $q_1(0,0)$ to $q_{halt}(c_1,c_2)$ for any $c_1,c_2\in \mathbb{N}$. The \emph{halting problem} asks if $M$ halts.
\end{definition}

\begin{proof}[Proof of \cref{theorem:coverability}]
Fix a two-counter machine $M$ with the set of states $Q = \{q_1, \ldots, q_n\}$ of size $n$, initial state $q_1$, halting state $q_n$ and with two counters $c_1$ and $c_2$.
We define an $(n+4)$-OVAS $\V_M$ over $\Z$ with a distinguished vector $v_I \in \Z^{n+4}$ such that
there is a halting run in $M$ if and only if there is a run $v_I \steps v_P$ for some $v_P \in \R_{\geq 0}^{n+4}$.

Given a $d$-OVAS we number its coordinates: $1, \ldots, d$. We name the counter on coordinate number $i$
as counter $x_i$.
For a set of coordinates $S \subseteq [1,n+4]$ we define the vector $\1_S \in \set{0,1}^{n+4}$ by $\1_S[i] = 1$ if and only if $i \in S$. If $S = \{i\}$ then we write for simplicity $\1_i$. For an orthant $A = \set{\bx \mid \epsilon_1 x_1 \ge 0, \ldots, \epsilon_d x_d \ge 0}$
let $\pos(A)$ be the set of coordinates on which vectors from this orthant are nonnegative, \ie $\pos(A) = \set{i \in \set{1,\ldots,d} \mid \epsilon_i = 1}$.

The idea behind the construction is that for every state $q_i \in Q$ there is a distinguished counter $x_i$ in $\V_M$.
As an invariant we will keep that exactly one counter among $x_1,\ldots,x_n$ is nonnegative (indicating the state of the configuration).
Moreover, each counter $c_\ell$ for $\ell \in \{1,2\}$ in $M$ is simulated by two counters $x_{n+\ell}$ and $x_{n+\ell+2}$ in $\V_M$.
Intuitively, the counter $x_{n+\ell}$ stores the value of $c_\ell-1$, while $x_{n+\ell+2}$ stores $-c_\ell$.
Notice that always exactly one of $x_{n+\ell}$ and $x_{n+\ell+2}$ is nonnegative and it depends on whether $c_\ell > 0$.
This will allow us to simulate zero-tests, as the transitions checking if a certain counter is zero will be available only in the corresponding orthants.

Intuitively, for every configuration $C$ in $M$ there is a corresponding configuration in $\V_M$, which simulates $C$
using $n+4$ counters of the OVAS.
For a configuration $C = q_i(a_1, a_2)$ of $M$ let the \emph{vector corresponding to $C$}, denoted $\corr(C)$, be as follows:
\[
- \1_{[1,n]} + 2 \cdot \1_i + (a_1-1) \cdot \1_{n+1} +  (a_2-1) \cdot \1_{n+2} - a_1\cdot \1_{n+3} - a_2 \cdot \1_{n+4}.
\]
Notice that if $a_\ell > 0$ then $\corr(C)[n+\ell] \geq 0$ and $\corr(C)[n+\ell+2] < 0$,
but if $a_\ell = 0$ then $\corr(C)[n+\ell] = -1 < 0$ and $\corr(C)[n+\ell+2] = 0 \geq 0$.
In the first case $\corr(C) \in A$ for some $A$ with $n+\ell \in \pos(A)$,
while in the second case $\corr(C) \in A$ for some $A$ with $n+\ell+2 \in \pos(A)$.

The transitions of $\V_M$ are defined as follows:
\begin{itemize}[leftmargin=*]
  \item for each transition $t$ of $M$ which maps $q_i(c_1,c_2) \trans{t} q_j(c_1+v_1, c_2+v_2)$ and
  for each orthant $A$ with $i \in \pos(A)$ we add $2 \cdot (\1_{j} - \1_{i}) + v_1 \cdot (\1_{n+1} - \1_{n+3})
  + v_2 \cdot (\1_{n+2} - \1_{n+4})$ to the set $T_A$ of transitions available at orthant $A$ under the following conditions:
  \begin{enumerate}
    \item if $t = \zerosf_{c_\ell}(q_i,q_j)$ for $\ell \in \set{1,2}$ then we demand that $n+\ell+2 \in \pos(A)$
    (checking that in a faithful simulation this transition can be fired only if $c_\ell = 0$)
    \item if $t= \decsf_{c_\ell}(q_i,q_j)$ for $\ell \in \set{1,2}$ then we demand that $n+\ell \in \pos(A)$
    (checking that in a faithful simulation this transition can be fired only if $c_\ell > 0$)
  \end{enumerate}
  \item for each orthant $A$ with $n \in \pos(A)$ we add a vector $\1_{[1,n+4]} \in \N^{n+4}$ to the set $T_A$.
\end{itemize}
The aim of adding the last transition is to assure that reaching a vector $v$ in $\V_M$, which is positive
in the $n$-th counter (the one corresponding to final state $q_n$) guarantees reaching the positive orthant.
The initial vector $v_I$ in $\V_M$ is the one corresponding to $q_1(0,0)$, namely $\corr(q_1(0,0))$.

Let us first observe that $\V_M$ is a correctly defined OVAS, namely it is monotonic. Indeed, it is easy to see
that in $\V_M$ whenever a transition belongs to $T_A$ for some orthant $A$ then it also belongs to $T_B$
for any orthant $B$ satisfying $B \succeq A$. Moreover $\V_M$ uses only integers in its transitions, so it is an OVAS over $\Z$.

Therefore in order to finish the proof it is enough to argue that indeed there is a run of two-counter machine $M$
from the initial configuration $q_1(0,0)$ to a configuration with final state $q_n$
if and only if there is a run of OVAS $\V_M$ from the initial vector $v_I$ to the positive orthant.

Let $\reach(M)$ be the set of configurations $q_n(a_1, a_2)$ for $a_1, a_2 \in \N$ reachable
by a run of $M$ from $q_1(0,0)$. Let us denote by $\reach(\V_M)$ the set of vectors reachable in $\V_M$ from $v_I$
by a run which do not use the last transition $\1_{[1,n+4]}$.
We first show that
\begin{equation}\label{eq:correspondence}
\reach(\V_M) = \set{\corr(C) \mid C \in \reach(M)}.
\end{equation}

For the ''$\supseteq$'' direction assume that there is a run of $M$ from $q_1(0,0)$ to $q_n(a_1,a_2)$ for some $a_1, a_2 \in \N$.
We show by induction on the length of run in $M$ that if there is a run in $M$ reaching a configuration $C$
then there is a run in $\V_M$ reaching a vector $\corr(C)$. As the initial vector $v_I$ in $\V_M$ is defined as
vector corresponding to the initial configuration of $M$ the statement is trivial for runs of length $0$. To show an induction
step consider a run of $M$ reaching configuration $C$. Let the second to the last configuration on this run be $C'$.
By induction assumption $\corr(C')$ is reachable in $\V_M$. Let us now consider the transition $t$ of $M$ applied in $C'$ to reach $C$.
Let $t$ be a transition which maps $q_i(c_1,c_2) \trans{t} q_j(c_1+v_1, c_2+v_2)$.
We know that if $t$ zero-tests counter $c_\ell$ for $\ell \in \set{1,2}$ then necessarily $C'[c_\ell] = 0$
and thus $\corr(C')[n+\ell] = -1$ and $\corr(C')[n+\ell+2] = 0$. On the other hand if $t$ decreases counter $c_\ell$
(necessarily by at most one) then the counter value of $c_\ell$ in $C'$ is at least $1$, so $\corr(C')[n+\ell] \geq 1-1 = 0$.
Thus due to the definition of $\V_M$ vector $\corr(C')$ belongs to some orthant $A$ for which
\[
u_t = 2 \cdot (\1_{j} - \1_{i}) + v_1 \cdot (\1_{n+1} - \1_{n+3})  + v_2 \cdot (\1_{n+2} - \1_{n+4}) \in T_A.
\]
It is easy to verify that if $C \trans{t} C'$ then $\corr(C') + u_t = \corr(C)$, which finishes the induction step.

The proof of ''$\subseteq$'' direction proceeds similarly. We prove it by induction of the length of the run in $\V_M$.
For the length $0$ we trivially see that $v_I \in \set{\corr(C) \mid C \in \reach(M)}$ as $v_I = \corr(q_1(0,0))$.
For the induction step consider a run of $\V_M$, which does not use $\1_{[1,n+4]}$,
reaching some vector $v$. Let the previous vector on the run
to $v$ be $v'$, by induction assumption we know that $v' = \corr(C')$ for some $C' = q_i(a_1, a_2) \in \reach(M)$.
By definition of $\V_M$ we know that $v = v' +  2 \cdot (\1_{j} - \1_{i}) + v_1 \cdot (\1_{n+1} - \1_{n+3})
 + v_2 \cdot (\1_{n+2} - \1_{n+4})$ for some transition $t$ of $M$  which maps $q_k(c_1,c_2) \trans{t} q_j(c_1+v_1, c_2+v_2)$.
First notice that necessarily $q_i = q_k$ as we require that $k \in \pos(A)$ for the orthant $A$ such that
$\corr(C') \in A$, which means that necessarily $C' = q_k(b_1, b_2)$ for some $b_1, b_2 \in \Z$.
As this transition was possible in $\V_M$ we know that necessarily $a_\ell = 0$ if $t$ is a zero-test on counter $c_\ell$
and $a_\ell > 0$ if $t$ decreases counter $c_\ell$.
It is now easy to see that $C' \trans{t} C$ such that $v = \corr(C)$, which finishes the proof of the ''$\subseteq$'' direction.

Now we are ready to show that there is a run of $M$ from $q_1(0,0)$ to a configuration with final state $q_n$
if and only if there is a run of OVAS $\V_M$ from $v_I$ to the positive orthant.
For the only if implication notice that by~\eqref{eq:correspondence} if $C_F = q_n(a_1,a_2)$ is reachable in $M$ 
then in $\V_M$ vector $\corr(C_F)$ is reachable.
We have $\corr(C_F)[n] \geq 0$ thus $\corr(C_F)$ belongs to an orthant $A$ for which vector $\1_{[1,n+4]} \in T_A$.
Thus adding vector $\1_{[1,n+4]}$ sufficient number of times we can reach the positive orthant in $\V_M$, which finishes
this implication.
For the if implication notice first that each vector of the form $\corr(C)$ for $C = q_i(a_1, a_2) \in \reach(M)$ 
on coordinates from the set $[1,n] \setminus \set{i}$ has negative entries. Therefore if the positive orthant is reachable
by some run in $\V_M$ this run has to use the transition $\1_{[1,n+4]}$. Let $v'$ be the first vector where such a transition
is used. By~\eqref{eq:correspondence} we know that $v' = \corr(C')$ for some $C' \in \reach(M)$.
However as transition $\1_{[1,n+4]}$ is available in $v'$ we know that for the orthant $A$ such $v' \in A$
we have $n \in pos(A)$. This means that $C' = q_n(a_1, a_2)$, which finishes the proof of the if implication.
\end{proof}

\subsection{Decidability results for universal coverability}

\ClaimReachabilitySet*

\begin{claimproof}
The ``only if'' implication is immediate. As $V$ is a positive instance then in particular there is a run, \eg from $-\1_{[1,d]}$ to the positive orthant. 

For the ``if'' implication assume that $\reach(V)$ intersects the positive orthant. Thus there is a run $\bu \stepsc \bv$
for some $\bu \in \R_{<0}^d$ and $\bv \in \R_{\geq 0}^d$. Let $\bu' \in \R^d$. It is easy to see that there are some $\lambda \in \R_{>0}$
and $\Delta \in \R_{\geq 0}^d$ such that $\bu' = \lambda \bu + \Delta$. By
\cref{corollary:run_simple_shift} and \cref{lemma:rescaling}
we have
$\bu' = \lambda \bu + \Delta \stepsc \lambda \bv + \Delta = \bv'$. Notice that if $\bv \in \R_{\geq 0}^d$ then $\bv' \in \R_{\geq 0}^d$,
which concludes the proof. \end{claimproof}

\ClaimRWD*

\begin{claimproof}
The only if implication is trivial as $\rwd \subseteq \reach(V)^\downarrow$.
For the if implication, consider a run $\bu_0, \ldots \bu_n$ from $\bu_0 \in \R_{<0}^d$ to $\bu_n \in \R_{\geq 0}^d$. Let $1 \le i \le n$ be the first $i$ such that $\bu_i \in \R_{\geq 0}^d$. Since $\bu_{i-1} \not \in \R_{\geq 0}^d$ and there exists an orthant $A$ such that $\bu_{i-1}, \bu_{i} \in A$, we conclude that $\bu_i \in W$. This concludes the proof as $\bu_0 \stepsc \bu_i$.
\end{claimproof}

\LemmaSeparator*

\begin{proof}
For the only if implication we observe that if $\rwd$ does not intersect the positive orthant then $\rwd$ is a separator for $V$. Clearly, Condition 4. is satisfied.
Conditions 2. and 3. immediately follow from the definition of $\rwd$.
To see that Condition 1. is satisfied observe that
by \cref{lemma:rescaling} set $\reach(V)$ is closed under scaling. Then $\rwd$ is also closed under scaling.

For the if implication assume that there is a set $S$, which is a separator for $V$.
It suffices to show that $\rwd \subseteq S$.
Let $\bu_0, \ldots \bu_n$ be a continuous run such that $\bu_0 \in \R_{<0}$ and $\bu_n \in W$. Since $S$ is downward closed it suffices to show that $\bu_n \in S$. Let $\bu_{i_1}, \ldots, \bu_{i_m}$ be a subsequence of the continuous run of all elements in $W$ and let $\bu_{i_0} = \bu_0$. Since $\bu_n \in W$ we know that $m \ge 1$. Notice that $\bu_{i_j} \stepsca \bu_{i_{j+1}}$ for some orthant $A$ and for all $0 \le j \le m-1$. This is because $\bu_{j}, \bu_{j+1} \in A$ for some orthant $A$ for all $0 \le j \le n-1$ and an intersection of two different orthants is contained in $W$.
Thus by condition 3. we have
that $\bu_{i_j} \in S$ for all $1 \le j \le m$, which concludes the proof.
\end{proof}

\ClaimLine*

\begin{claimproof}
For every nonzero point $\bx = (x_1,x_2) \in Q$ let $f(\bx) = \frac{x_1}{-x_2}$ (if $x_2 = 0$ then it is $+\infty$).
Since $S$ is closed under scaling and downward closed, we get that if $f(\bx) \in S$ then $f(\by) < f(\bx)$ implies $f(\by) \in S$. Let $c = \sup_{\bx \in Q \cap S} f(\bx)$. If $0 < c < +\infty$ then $\alpha = (c,-1)$; if $c = 0$ then $\alpha = (1,0)$; and if $c = +\infty$ then $\alpha=(0,1)$. That $\vartriangleleft$ is strict or not depends on whether $(c,-1)\in S$, $(0,1) \in S$, and $(1,0) \in S$, respectively.
\end{claimproof}

\section{The zero isolation problem for Copyless CRA is undecidable}
\label{sec:copylessundecidable}

In this section we prove the negative side of \cref{theorem:isolation}. We assume that the semiring is fixed to $\semiringQ$.

\begin{theorem}\label{thm:zerocopylessundecidable}
The zero isolation problem is undecidable for copyless cost register automata. 
\end{theorem}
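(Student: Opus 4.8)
The plan is to reduce from the (complement of the) zero isolation problem, or more directly from an undecidable problem for probabilistic automata, exploiting the extra power of copyless (nonlinear) CRA over the linear case — namely the ability to multiply two registers. The key idea: a copyless CRA can compute a product $\Aa(w_1)\cdot\Aa(w_2)\cdots\Aa(w_n)$ over blocks $w_1\#w_2\#\cdots\#w_n$ in the \emph{copyless} fashion, because after reading a block it can move the accumulated partial product into one register and the freshly computed block value into another, then multiply them into a single register (each source register used once). This is exactly the construction sketched for $\semiringQ$ in the folklore boundedness argument in \cref{sec:results}, but now we run it "downward": starting from the undecidable strict threshold problem for probabilistic automata, given $\Aa$ with all transition/initial/final weights in $[0,1]$, we want to know whether $\Aa(w) < 1$ for all $w$ (equivalently whether $\sup_w \Aa(w) = 1$ is not attained, or the appropriate undecidable variant of the value/threshold problem of~\cite{paz71,GimbertO10}).

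First I would fix the precise undecidable source problem. Since copyless CRA over $\semiringQ$ can simulate any probabilistic automaton $\Aa$ on a single word (every probabilistic automaton is in particular a weighted automaton over $\semiringQ$, and we only need a bounded-ambiguity-free simulation of $\Aa(w) = I^\transpose M_w F$, which a copyless CRA can do by the same encoding of states into registers as for linear CRA, since on a single block the linear update suffices), the remaining work is the block composition. Concretely, build a copyless CRA $\Bb$ over alphabet $\Sigma\cup\{\#\}$ with registers: the register block $\X_\Aa$ simulating $\Aa$ on the current $\#$-free block, plus one extra register $z$ holding the running product of completed block values. On reading $\#$, $\Bb$ reads off the current value $\Aa(w_i)$ of the $\Aa$-simulation (an affine combination of $\X_\Aa$), multiplies it into $z$ (copyless: $z$ and the $\X_\Aa$-registers are each consumed once into the new $z$, and the $\X_\Aa$-registers are simultaneously reset to the initial distribution of $\Aa$, which is a constant, hence copyless), and continues. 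At the end, $\Bb$ outputs $z$ times the final value of the current $\Aa$-simulation. Then $\Bb(w_1\#\cdots\#w_n) = \prod_{i=1}^n \Aa(w_i)$, so $\inf \Bb = 0$ (zero is not isolated) if and only if $\sup_w \Aa(w)$ is not bounded away from $1$ from below — more precisely iff there exist words with $\Aa(w_i)$ arbitrarily close to, or at least some fixed $c<1$ infinitely often… so I would instead reduce from the exact undecidable statement "$\exists w:\Aa(w)\ge \tfrac12$" by instead composing $2\Aa$-blocks as in \cref{sec:results}: set the block value to $2\Aa(w_i)$ (an affine expression, still copyless), so $\Bb(w_1\#\cdots\#w_n)=\prod 2\Aa(w_i)$, which is unbounded below towards $0$ iff some $\Aa(w_i)<\tfrac12$, i.e.\ iff $\Aa(w)<\tfrac12$ fails to hold for… wait — here the direction flips, so the cleanest is: $\Bb$ has non-isolated zero iff $\exists w\ \Aa(w)<\tfrac12$, and taking the complement against the undecidable problem "$\forall w\ \Aa(w)\le\tfrac12$" (undecidable by~\cite{paz71}, even with $\le$ vs $<$ handled by standard $\varepsilon$-gap-free arguments for probabilistic automata) gives undecidability of zero isolation.

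The main obstacle I expect is twofold. (1) Getting the \emph{copyless} bookkeeping exactly right across the $\#$ boundary: one must verify that resetting the $\Aa$-simulation registers to constants while folding their current affine combination together with $z$ into the single new register $z$ respects the copyless restriction (each register appears in at most one update expression, with at most one register-coefficient nonzero per expression — the product $z\cdot(\text{affine in }\X_\Aa)$ is \emph{not} affine but it \emph{is} a legal copyless CRA update only if we are careful, since copyless CRA allow multiplication of distinct registers). This is the step where the \emph{nonlinearity} of copyless CRA is essential and is precisely why the linear case is open while this case is undecidable; I would spell out the product update as a composition using an auxiliary register holding the block value so that the final multiply is register-times-register. (2) Ensuring that the threshold value $\tfrac12$ (or whichever constant) in the source probabilistic-automaton problem stays undecidable for the strict vs.\ non-strict inequality we actually get; this is standard (e.g.\ by the PFA emptiness undecidability with either inequality, cf.\ \cite{GimbertO10,DaviaudJLMP021}) but must be invoked with the correct direction. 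Once these are in place, the reduction is immediate and polynomial, completing the proof of \cref{thm:zerocopylessundecidable} and hence the negative half of \cref{theorem:isolation}.
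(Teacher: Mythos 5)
Your block-composition step (building a copyless CRA $\Bb$ with $\Bb(w_1\#\cdots\#w_n\#)=\prod_i \Aa(w_i)/C$ via an accumulator register $z$) is correct and is exactly the paper's \cref{lemma:lethreshtozeroiso}: it reduces the $<$-threshold problem for copyless CRA to zero isolation for copyless CRA, and the copyless bookkeeping works because you multiply $z$ by an affine combination of the block registers while resetting those registers to constants.

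The gap is in your source problem. You claim a copyless CRA can simulate an arbitrary probabilistic automaton $\Aa$ on a single $\#$-free block ``by the same encoding of states into registers as for linear CRA, since on a single block the linear update suffices.'' This is false. The standard state-to-register encoding of a weighted automaton produces a \emph{linear} CRA which is in general \emph{not copyless}: the update is $\sigma_a(x_{q'}) = \sum_q M_a[q,q']\,x_q$, so the register $x_q$ appears in $\sigma_a(x_{q'})$ for \emph{every} $q'$ with $M_a[q,q']\neq 0$. Copylessness therefore forces each row of $M_a$ to have at most one nonzero entry, i.e.\ a deterministic successor structure, which probabilistic automata do not have. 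Allowing nonlinear (product) updates does not help here either; the paper's lattice (Figure~\ref{fig:lattice}) makes explicit that copyless CRA is strictly contained in linear CRA and incomparable with polynomially-ambiguous weighted automata, so arbitrary probabilistic automata are simply not expressible by copyless CRA. Consequently your reduction never gets off the ground: you have correctly reduced ``$<$-threshold for copyless CRA'' to ``zero isolation for copyless CRA'', but you have not shown that the former is undecidable, and your proposed shortcut of importing undecidability from probabilistic automata cannot be made to work within the copyless class.

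The paper fills this hole with a genuinely different (and necessary) second step, \cref{lemma:halttolethresh}: a direct reduction from the Minsky two-counter-machine halting problem to the $<$-threshold problem for copyless CRA. The construction encodes a candidate run as a word $\vdash(\#_t a^{c_1}b^{c_2})^*\dashv$, uses states to check the regular well-formedness conditions, and verifies counter arithmetic (e.g.\ $n_2=n_1\pm 1$) with eight auxiliary registers whose values are combined multiplicatively into a check register via updates of the form $x \leftarrow x\cdot\tfrac12(r'_1+r'_2)\cdot\tfrac12(r'_3+r'_4)$; this factor equals $1$ exactly when the counters match and is at least $1.25$ otherwise. That is where the nonlinearity of copyless CRA is actually exploited — not to simulate a probabilistic automaton, which remains impossible, but to build these multiplicative verification gadgets. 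You should replace your appeal to probabilistic-automaton undecidability with a Minsky-machine reduction of this kind.
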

We first start by showing that it is enough to show that the $<$-threshold problem is undecidable.
\begin{lemma}\label{lemma:lethreshtozeroiso}
The $<$-threshold problem for copyless cost register automata reduces to the zero isolation problem for copyless cost register automata.
\end{lemma}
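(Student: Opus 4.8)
The plan is to give an effective (indeed polynomial-time) reduction, turning an instance of the $<$-threshold problem — a copyless CRA $\Aa=(\Sigma,Q,q_0,I,F,\X,\delta)$ over $\semiringQ$ together with a rational $c>0$, asking whether $\Aa(w)<c$ for some word $w$ — into a copyless CRA $\Bb$ whose zero is isolated exactly when $\Aa(w)\geq c$ for all $w$. Concretely, $\Bb$ works over $\Sigma\cup\set{\#}$ with $\#$ fresh, and on an input of the shape $u_0\#u_1\#\cdots\#u_{m-1}\#$ with each $u_i\in\Sigma^*$ it should output $\prod_{i=0}^{m-1}\bigl(\Aa(u_i)/c\bigr)$, while on any input not ending in $\#$ it should output $1$.

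To build $\Bb$ I would take all registers of $\Aa$, add an accumulator register $z$ initialised to $1$ and a register $\mathsf{one}$ fixed to $1$, and let the states be $Q\times\set{\mathrm{mid},\mathrm{fresh}}$, recording the state of $\Aa$ together with a bit saying whether the last letter was $\#$ (initial state $(q_0,\mathrm{mid})$). On a letter $a\in\Sigma$, $\Bb$ updates the $\Aa$-registers as $\delta$ does, keeps $z$ and $\mathsf{one}$, and sets the bit to $\mathrm{mid}$. On $\#$ from a state $(q,\cdot)$, $\Bb$ performs the simultaneous update $z\mapsto z\odot\bigoplus_{x\in\X}\bigl((1/c)\odot F(q,x)\odot x\bigr)$ — that is, multiply $z$ by $1/c$ times the current output of $\Aa$ — resets every $\Aa$-register $x$ to $I(x)$, keeps $\mathsf{one}$, and goes to $(q_0,\mathrm{fresh})$. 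Finally, $F_\Bb((q,\mathrm{fresh}),z)=1$ and $F_\Bb((q,\mathrm{mid}),\mathsf{one})=1$, all other output coefficients being $0$.

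First I would check that $\Bb$ is copyless: on $\Sigma$-letters this is inherited from $\Aa$ (and $z,\mathsf{one}$ are each read once), and on $\#$ the expression defining the new value of $z$ reads $z$ once and each $x\in\X$ once, while the new $\Aa$-registers are the constants $I(x)$ reading nothing and $\mathsf{one}$ reads only itself — so every register is read at most once. Note that this update uses a genuine product of registers, so $\Bb$ lives in the copyless (nonlinear) class, as required. Next, a routine induction on the number of $\#$'s shows that after reading $u_0\#\cdots\#u_{m-1}\#$ the register $z$ holds $\prod_{i=0}^{m-1}(\Aa(u_i)/c)$, since each $\#$ multiplies $z$ by $1/c$ times $\bigoplus_x F(q,x)\odot\sigma(x)=\Aa(u_i)$ for the configuration $(q,\sigma)$ that $\Aa$ reaches on $u_i$, and then restarts $\Aa$; and if the input does not end in $\#$, the terminal bit is $\mathrm{mid}$ and $\Bb$ outputs $\mathsf{one}=1$.

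Finally I would prove the correspondence. If $\Aa(w)\geq c$ for every $w$, then every factor $\Aa(u_i)/c\geq 1$, so $\Bb(v)\geq 1$ for every input $v$ and the zero of $\Bb$ is isolated. Conversely, if some $w_0$ has $\Aa(w_0)<c$, put $r=\Aa(w_0)/c\in[0,1)$; then $\Bb\bigl((w_0\#)^n\bigr)=r^n\to 0$ (and $=0$ already for $n\geq 1$ if $r=0$), so the zero of $\Bb$ is not isolated. Hence $(\Aa,c)$ is a positive instance of the $<$-threshold problem if and only if $\Bb$ is a negative instance of zero isolation; as decidability is closed under complement, this is the sought reduction. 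I expect the only delicate point to be the copyless bookkeeping at the $\#$-transition — multiplying the accumulator by a linear functional of the registers of $\Aa$ while simultaneously resetting those registers — the handling of ill-formed inputs via the register $\mathsf{one}$ and the $\set{\mathrm{mid},\mathrm{fresh}}$ bit being routine, and the degenerate case $\Aa(u)=0$ harmless since it merely makes one factor vanish.
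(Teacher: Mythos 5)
Your proof is correct and takes essentially the same approach as the paper: introduce a fresh accumulator register $z$, multiply it by $\Aa(u_i)/c$ and reset the $\Aa$-registers on each $\#$, and observe that some $\Aa(w)<c$ makes $\Bb\bigl((w\#)^n\bigr)\to 0$ while $\Aa(w)\geq c$ everywhere keeps all outputs $\geq 1$. The only difference is bookkeeping: you add a register $\mathsf{one}$ and a $\set{\mathrm{mid},\mathrm{fresh}}$ state bit so that inputs not ending in $\#$ output $1$, whereas the paper simply makes $z$ the output everywhere — both handle that edge case, and your copyless check of the nonlinear $\#$-update matches the paper's.
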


\begin{proof}
Let $\Aa= (\Sigma, Q, q_1, I, F , \X, \delta)$, for which we ask whether there exists a word $w$ such that $\Aa(w) < C$. 
We define $\Bb = (\Sigma', Q, q_1, I', F' , \X\cup\{z\}, \delta')$ over $\Sigma' = \Sigma \cup \{\#\}$ such that $\Bb(w_1 \# w_2 \dots \# w_n\#) = \frac{\Aa(w_1)}{C}\cdot \ldots \cdot \frac{\Aa(w_n)}{C}$. 
Hence, if there is a word $w\in\Sigma^*$ such that $\Aa(w) < C$, then $\Bb((w\#)^i) \xrightarrow{i\to\infty} 0$. 
If $\Aa(w) \ge C$ for all $w\in\Sigma^*$ then $\Bb(w_1 \# w_2 \dots \# w_n\#)\ge 1$ for all $n\in\mathbb{N},w_i\in\Sigma^*$.

It remains to show that we can implement $\Bb$ as a copyless cost register automaton.
 Introduce a new register $z$ and let $I(z) = 1$.
When $\#$ is read from state $q$, we move to state $q_1$ and we let
$z \leftarrow \frac{1}{C} \cdot z\cdot \sum_{x\in\mathcal{X}} F(q,x)\cdot x$. 
All other registers are updated such that $x \leftarrow I(x) $ for $x\in \mathcal{X}$.  To conclude, $\Bb$ outputs $z$, that is $F'(q_1, z) = 1$, and $F'(q,x) = 0$ for all $x\in\mathcal{X}$.
\end{proof}

Recall the definition of a two-counter machine from~\cref{defn:two-counter-machine}.
Now we reduce the halting problem for \emph{deterministic} two-counter machines, which is an undecidable problem~\cite{minsky1967computation}, to the $<$-threshold problem.

\begin{lemma} \label{lemma:halttolethresh}
The halting problem for two-counter machines reduces to the $<$-threshold problem for copyless CRA.
\end{lemma}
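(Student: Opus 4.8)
The plan is to reduce from the halting problem for deterministic two-counter machines (\cref{defn:two-counter-machine}): given such a machine $M$ with initial state $q_1$ and halting state $q_{halt}$, I will build a copyless CRA $\Aa$ over $\semiringQ$ for which $M$ halts if and only if there is a word $w$ with $\Aa(w) < 1$; this is exactly a reduction of halting to the $<$-threshold problem with threshold $C=1$. The words fed to $\Aa$ will encode \emph{candidate runs} of $M$ written out configuration by configuration as $\#C_0\#C_1\#\cdots\#C_N\#$, where each $C_i$ consists of one letter naming the current control state followed by the two counter values written in unary over dedicated letters, and $\#$ is a separator. Because $M$ is deterministic, the run (if any) that reaches $q_{halt}$ is unique and finite, and its encoding will turn out to be the only word on which $\Aa$ outputs a value below $1$.

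First, $\Aa$ performs in its finite control all the purely regular parts of the verification: that the input has the right shape; that $C_0$ encodes $q_1(0,0)$; that the state named in $C_{i+1}$ is the target of the \emph{unique} transition of $M$ applicable at the state named in $C_i$ --- which transition this is depends only on the state and on whether each counter is currently zero, and ``currently zero'' is detectable in the control while scanning the unary block of $C_i$; and finally that $C_N$ names $q_{halt}$, with nothing after the closing $\#$. If any of these checks fails, $\Aa$ moves to a sink and ultimately outputs a fixed constant $\ge 1$.

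Second, $\Aa$ maintains a monotone \emph{error} register $e$, initialised to $\tfrac12$, plus a bounded family of auxiliary registers (their number independent of $M$'s run length). While scanning the unary blocks of two consecutive configurations $C_i,C_{i+1}$, two auxiliary registers accumulate the quantity $2^{c_1^{(i)}-c_1^{(i+1)}}$ by doubling/halving per symbol, and two more accumulate its reciprocal $2^{c_1^{(i+1)}-c_1^{(i)}}$; analogously for $c_2$. At the separator closing this pair, $\Aa$ adds to $e$ a penalty of the form $\lambda P + \lambda^{-1}\bar P - 2$, where $P,\bar P$ are the two reciprocal registers and $\lambda\in\{1,2,\tfrac12\}$ is chosen so that the summand vanishes exactly when $c_1$ changes by the amount prescribed by the applicable transition (and a second summand, built likewise, handles $c_2$), and then resets those auxiliary registers. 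Since $P\bar P = 1$ by construction, the AM--GM inequality gives $\lambda P + \lambda^{-1}\bar P \ge 2$, so each summand is $\ge 0$, is exactly $0$ precisely when the counter step recorded in $w$ matches the one prescribed by $M$, and is $\ge \tfrac12$ whenever it does not (the mismatch being by an integer, so $P$ lands on a ``wrong'' power of two). Hence $e$ never decreases, stays $\tfrac12$ along a faithful encoding, and becomes $\ge 1$ as soon as $w$ records one counter step inconsistent with $M$. On successfully reaching $q_{halt}$, $\Aa$ outputs $e$; otherwise it outputs the fixed constant $\ge 1$.

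The remaining work is routine. For copylessness: every register update is either a constant reset or uses only that register itself, except the single update of $e$, which is affine in $e$ and in the auxiliary registers with each of them occurring once; keeping separate ``opening'' and ``closing'' copies of the auxiliary registers, alternated by a parity bit in the control, avoids any fan-out between the comparison of $(C_i,C_{i+1})$ and that of $(C_{i+1},C_{i+2})$. For correctness: if $M$ halts, the encoding $w^*$ of its unique halting run passes all regular checks and keeps $e=\tfrac12$, so $\Aa(w^*)=\tfrac12<1$; conversely, if $M$ does not halt, any word passing the regular checks and ending at $q_{halt}$ cannot encode a faithful run, so it must record an inconsistent counter step, forcing $e\ge 1$, while every other word yields the fixed constant $\ge 1$; thus $\Aa(w)\ge 1$ for all $w$. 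The main obstacle is the design of the quantitative layer under the two simultaneous constraints of copylessness (no register may be read twice, so all consistency checks must be funneled through one register without fan-out) and of working over $\semiringQ=\Qpos$ (register values must stay nonnegative, so a discrepancy cannot simply be added as a signed quantity); the device resolving both is to carry each tested quantity together with its reciprocal in two independent, multiplicatively-updated registers, whose combination $\lambda P+\lambda^{-1}\bar P-2$ is an additive, always-nonnegative penalty vanishing exactly on a match.
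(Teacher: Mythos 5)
Your construction mirrors the paper's almost step for step --- same encoding of candidate runs as unary counter blocks separated by transition markers, same split into a regular layer handled by the finite control and a quantitative layer handled by registers, same doubling/halving registers carrying $2^{c^{(i)}-c^{(i+1)}}$ together with its reciprocal, same AM--GM observation that $\lambda P+\lambda^{-1}\bar P\ge 2$ with equality exactly on a match, and the same alternation of ``opening'' and ``closing'' register copies to preserve copylessness. The one place where you diverge is fatal as written: you accumulate the penalties \emph{additively}, via $e\leftarrow e+\bigl(\lambda P+\lambda^{-1}\bar P-2\bigr)$. The update expressions of a CRA over $\semiringQ$ must be built from the semiring operations $+$ and $\cdot$ with constants from $\Qpos$ only; there is no subtraction in this semiring, so the term $-2$ cannot appear in an update, regardless of the fact that the overall numerical value happens to be nonnegative. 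You cannot repair this by dropping the $-2$ and correcting later either: the baseline then accumulates to $2N$ over a run of length $N$, which is unbounded, and neither the updates nor the output function can subtract it off.

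The fix is exactly what the paper does: make the error register multiplicative rather than additive. Set $x\leftarrow x\cdot\tfrac12\bigl(\lambda P+\lambda^{-1}\bar P\bigr)\cdot\tfrac12\bigl(\mu Q+\mu^{-1}\bar Q\bigr)$ at each separator (one factor per counter), so that a faithful step multiplies $x$ by $1$ and a mismatched step multiplies it by at least $\tfrac12(2^{k}+2^{-k})\ge 1.25$ for some nonzero integer $k$; then output $0.9\cdot x$ on words passing the regular checks and a fixed constant $\ge 1$ otherwise. This update is a product of sums with nonnegative constants, hence a legitimate (non-linear but copyless) CRA update, and the rest of your argument goes through unchanged.
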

This lemma and \cref{lemma:lethreshtozeroiso} together show undecidability of zero isolation (\cref{thm:zerocopylessundecidable}).
Our proof resembles the proof of \cite[Proposition 27]{DaviaudJLMP021}. 
Both use automata to associate values to encodings of potential runs of a two-counter machine. 
However, the value of the weighted automaton will depend on whether the given run faithfully implements the proper run of the two-counter machine. 
In doing so the automaton must test for errors in the given run, some of these are regular properties which can be encoded in the state transitions. 
However, some of the tests rely on using test gadgets which change the weight of the run. 
The key difference between our proof and \cite[Proposition 27]{DaviaudJLMP021} is to show that the tests can be implemented using (non-linear) copyless transition, rather than the approach of weighted automata which introduce extra runs.

\begin{proof}[Proof of \cref{lemma:halttolethresh}]
Let $M =(Q, q_1, q_{halt}, T)$ be an arbitrary two-counter machine.
We define a copyless CRA $\Aa$, for which the words  over $\Sigma$ encode the run of a two-counter machine.
The CRA $\Aa$ will output $0.9$ on valid, halting runs and a value of at least $1.1$ on runs which either do not faithfully encode a run, or do not halt.
Thus there exists a word $w\in\Sigma^*$ such that $\Aa(w) < 1$ if and only if the two-counter machine halts.

The automaton $\Aa = (\Sigma, Q', q_1, I, F , \X, \delta)$ will read encodings of such runs.
Each counter configuration $(c_1,c_2)$ is encoded by the substring $a^{c_1}b^{c_2}$.
Thus we let $a,b\in \Sigma$.
These configurations will be interleaved by a symbol that represents the appropriate transition.
So for each $t\in T$, let $\#_t$ be a symbol in $\Sigma$. 
Finally we have characters  $\vdash,\dashv$ representing the start of the word and the end of the word.
Every accepted word must be of the form $\vdash(\#_t a^*b^*)^*\dashv$. 

An encoding of a run is \emph{faithful} if it correctly encodes a reachable sequence of transitions, correctly encodes the counter at each step, and is \emph{halting} if the final transition reaches the state $q_{halt}$.

We separate these conditions into two categories, those that can be verified with a regular language and those which require verifying the counter valuations. We start with the former category:

\begin{itemize}[leftmargin=*]
\item Correctness of state when reading $a^{n_1}b^{m_1} \#_t a^{n_2}b^{m_2} \#_{t'}$. 
\begin{itemize}
  \item If $t = \incsf_{c_i}(p,q)$ then $\sfstart(t') = q$.

  \item If $t = \zerosf_{c_1}(p,q)$ then $n_1 = 0$ and $\sfstart(t') = q$
  \item If $t = \zerosf_{c_2}(p,q)$ then $m_1 = 0$ and $\sfstart(t') = q$

  \item If $t = \decsf_{c_1}(p,q)$ then $n_1 \ge 1$ and $\sfstart(t') = q$
  \item If $t = \decsf_{c_2}(p,q)$ then $m_1 \ge 1$ and $\sfstart(t') = q$
\end{itemize}
\item Correctness of state when reading $ a^{n_1}b^{m_1}\#_t a^{n_2}b^{m_2} \dashv$. 
\begin{itemize}
  \item If $t = \incsf_{c_i}(p,q)$ then $q = q_{halt}$
  \item If $t = \zerosf_{c_1}(p,q)$ then $n_1 = 0$ and $q = q_{halt}$
  \item If $t = \zerosf_{c_2}(p,q)$ then $m_1 = 0$ and $q = q_{halt}$
  \item If $t = \decsf_{c_1}(p,q)$ then $n_1 \ge 1$ and $q = q_{halt}$
  \item If $t = \decsf_{c_2}(p,q)$ then $m_1 \ge 1$ and $q = q_{halt}$
\end{itemize}
\item Correctness of state when reading $ \vdash a^{n_1}b^{m_1} \#_t$. 
\begin{itemize}
  \item $n_1 = m_1 = 0$
  \item $t = \incsf_{c_1}(q_1,q)$ or $t = \zerosf_{c_i}(q_1,q)$ for some $q\in Q$ and $i\in\{1,2\}$.

\end{itemize}

\end{itemize}
Each of these conditions is recognized by a non-deterministic automaton that is plain to define.
In particular, the automaton can verify that adjacent transitions $\#_t$ and $\#_{t'}$ are legal, and the automaton can distinguish between $n_1 = 0$ or $n_1 \ge 1$ (and similarly $m_1$).
Hence, there exists a deterministic finite automaton $\Bb = (Q', \Sigma, q'_0, Q'_F\subseteq Q')$ recognising the complement of this NFA.

The states, and state transitions of $\Bb$ are the states and state transitions of our automaton $\Aa$.
Finally we encode the correctness of the counter using the registers of $\Aa$.

Let there be two distinguished output registers $x,y\in\X$.
The value of $y$ is fixed to $1.25$ throughout the run, that is $I(y) =1.25$ and every transition sets the value of $y\leftarrow y$.
Whenever $\Bb$ detects a violation the output is $y$, that is $F(q,y) = 1$ for $q\in Q_F$ (the accepting states of $\Bb$).
Whenever $\Bb$ does not detect a violation, the output is $0.9\cdot x$, that is $F(q,y)= 0.9\cdot x$ for $y\in Q\setminus Q_F$.
The role of $x$ is to verify the counter correctness.
Initially set $x = 1$, \ie $I(x) = 1$.

We now consider how to verify the counter valuations are correct.  If a counter valuation is correct, then we will allow $x$ to remain at what it was.
If there is a failure, the register value increases to at least $1.25$.
Thus $x = 1$ at the end of the run only if the run faithfully encodes the counters.

Note that every block $a^nb^m$ takes part in two verification procedures; checking it conforms with the block before it, and the block after it. We now consider the required properties when reading the block $X a^{n_1}b^{m_1} \#_t a^{n_2}b^{m_2}Y$ for $X,Y \in\Sigma\setminus\{a,b\}$.

We use 8 additional registers $r_1,r_2,r_3,r_4,r'_1,r'_2,r'_3,r'_4$. Registers $r_1,r_2,r'_1,r'_2$ are used to verify relations of the form $n_1 = n_2 \pm \{0,1,-1\}$ and $r_3,r_4,r'_3,r'_4$ for relations of the form $m_1 = m_2 \pm \{0,1,-1\}$.

 Assume any register is updated to itself where unspecified (for example, $y\leftarrow y$ is assumed for every transition). Observe each transition is copyless (although the update for $x$ is not linear).

The register updates when reading character $a$ and $b$ are:\\ $a:\begin{cases}
r_1 \leftarrow 2r_1 \\
r_2 \leftarrow \frac{1}{2} r_2 \\
r'_1 \leftarrow \frac{1}{2}r'_1 \\
r'_2 \leftarrow 2 r'_2 \\
\end{cases} \quad \quad b: \begin{cases}
r_3 \leftarrow 2r_3 \\
r_4 \leftarrow \frac{1}{2} r_4 \\
r'_3 \leftarrow \frac{1}{2}r'_3 \\
r'_4 \leftarrow 2 r'_4 \\
\end{cases}$

We formally specify the requirements and the register updates:

\begin{itemize}[leftmargin=*]
  \item If $t = \incsf_{c_1}(p,q)$ then we require that $n_2 = n_1 + 1$ and $m_1 = m_2$. The register updates for character $\#_{t}$ is:\\
$\begin{cases}
r_1,r_2,r_3,r_4 \leftarrow 1\\
r'_1 \leftarrow 2r_1 \\
r'_2 \leftarrow \frac{1}{2}r_2 \\
r'_3 \leftarrow r_3 \\
r'_4 \leftarrow r_4 \\
x \leftarrow x \cdot \frac{1}{2}( r'_1+ r'_2)\cdot \frac{1}{2}( r'_3+ r'_4)
\end{cases}$
  \item If $t = \incsf_{c_2}(p,q)$ then we require that $m_2 = m_1 + 1$ and $n_1 = n_2$
  The register updates for character $\#_{t}$ is:\\
$ \begin{cases}
r_1,r_2,r_3,r_4 \leftarrow 1\\
r'_1 \leftarrow r_1 \\
r'_2 \leftarrow r_2 \\
r'_3 \leftarrow 2r_3 \\
r'_4 \leftarrow \frac{1}{2}r_4 \\
x \leftarrow x \cdot \frac{1}{2}( r'_1+ r'_2)\cdot \frac{1}{2}( r'_3+ r'_4)
\end{cases}$

  \item If $t = \decsf_{c_1}(p,q)$ and $n_1 \ge 1$ then we require that $n_2 = n_1 - 1$ and $m_1 = m_2$
  The register updates for character $\#_{t}$ is:\\
$ \begin{cases}
r_1,r_2,r_3,r_4 \leftarrow 1\\
r'_1 \leftarrow \frac{1}{2}r_1 \\
r'_2 \leftarrow 2r_2 \\
r'_3 \leftarrow r_3 \\
r'_4 \leftarrow r_4 \\
x \leftarrow x \cdot \frac{1}{2}( r'_1+ r'_2)\cdot \frac{1}{2}( r'_3+ r'_4)
\end{cases}$

  \item If $t = \decsf_{c_2}(p,q)$ and $m_1 \ge 1$ then we require that $m_2 = m_1 - 1$ and $n_1 = n_2$. The register updates for character $\#_{t}$ is:\\
$ \begin{cases}
r_1,r_2,r_3,r_4 \leftarrow 1\\
r'_1 \leftarrow r_1 \\
r'_2 \leftarrow r_2 \\
r'_3 \leftarrow \frac{1}{2}r_3 \\
r'_4 \leftarrow 2r_4 \\
x \leftarrow x \cdot \frac{1}{2}( r'_1+ r'_2)\cdot \frac{1}{2}( r'_3+ r'_4)
\end{cases}$
\item If $t = \zerosf_{c_1}(p,q)$ or $t = \zerosf_{c_2}(p,q)$  then we require $n_1 = n_2$  and $m_1= m_2$. Additionally, we require $n_1=0$ or $m_1 =0$ respectively, these restrictions can additionally be added to the regular automaton $\Bb$.
The register updates for character $\#_{t}$ is:\\
$ \begin{cases}
r_1,r_2,r_3,r_4 \leftarrow 1\\
r'_1 \leftarrow r_1 \\
r'_2 \leftarrow r_2 \\
r'_3 \leftarrow r_3 \\
r'_4 \leftarrow r_4 \\
x \leftarrow x \cdot \frac{1}{2}( r'_1+ r'_2)\cdot \frac{1}{2}( r'_3+ r'_4)
\end{cases}$

\end{itemize}

Let us first argue that this correctly verifies $n_1 = n_2$ by focusing our attention on the registers $r_1,r_2,r'_2,r'_2$. Verifications on $m_1,m_2$ are identical. 

\begin{itemize}[leftmargin=*]
\item When reading $X$ registers $r_1,r_2$ are set to $1$.
\item Until $\#_t$ is read, reading $a$ updates $r_1 = 2r_1$ and $r_2 = \frac{1}{2}r_2$.
\item Upon reading $\#_{t}$, the values $r_1,r_2$ are moved to $r'_1$ and $r'_2$. The registers $r_1,r_2$ are reset to $1$ and are no longer used for verifying $\#_t$ (rather are used when verifying $Y$).
\item After reading $\#_t$, we update $r'_1 = \frac{1}{2}r'_1$ and $r'_2 = 2r'_2$ on reading $a$. Whenever $b$ is read we do not change the counters associated with verifying $n_1=n_2$. 
\end{itemize}
Then if $n_1 = n_2$ we have $r'_1 = r'_2 = 1$ and so $\frac{1}{2}(r_1+r_2) = 1 $.
On the other hand, if $n_1 \ne n_2$ then we have $r'_1 = 2^{n_1-n_2}$ and $r'_2 = 1/r'_1$, observe that $\frac{1}{2}(2^{n_1-n_2}+ 2^{n_2-n_1}) \ge 1.25$ whenever $n_1 \ne n_2$.

Finally, when reading $Y$ (indicating the verification of $t$ is complete) we set $x = x \cdot \frac{1}{2}( r'_1+ r'_2)\cdot \frac{1}{2}( r'_3+ r'_4)$. If the verification is successful, this is equivalent to leaving $x= x$, and if there is an error (\eg $n_1\ne n_2$), then $x \ge 1.25$. Thus finally, if there is a violation the weight of $\Aa(w) \ge0.9\cdot  1.25 $.

Similarly, we can verify $n_2 = n_1 +1$ by simulating reading an additional character in the first block on the $\#_t$ transition. That is updating $r'_1 = 2 r_1$ and $r'_2 = \frac{1}{2}r_2$ on $\#_t$. Finally verifying  $n_2 = n_1 -1$ is the same, but simulating an additional character from the second block whilst reading $\#_t$ by updating $r'_1 = \frac{1}{2}r_1$ and $r'_2 = 2r_2$.

Symmetrically, we  verify $m_2 = m_1 +\{-1,0,1\}$ on registers $r_3,r_4,r'_3,r'_4$ which behave exactly as $r_1,r_2,r'_1,r'_2$ but update when reading $b$ and stay constant when reading $a$.
\end{proof}

\begin{remark}
It is also undecidable to decide the $\le$-threshold problem. If the machine halts, then there exists a word of weight $0.9$. If the machine terminates then all words have weight at least $1.1$. Therefore there exists a $w$ s.t.  $A(w)\le 1 \iff A(w) < 1 \iff M \text{ halts}$.

On the other hand, it is not clear how to extend the proof to the $\ge$-threshold or $>$-threshold problems, and thus the boundedness problem. 
\end{remark}

\end{document}